\definecolor{darkgreen}{rgb}{0.0, 0.5, 0.0}
\def\@fnsymbol#1{\ensuremath{\ifcase#1\or *\or 1 \or 2 \or 3 \or 4 \or 5 \else\@ctrerr\fi}}
\title{Testing for Outliers with Conformal p-values}
\author{Stephen Bates\footnote{Authors listed alphabetically.} \thanks{Departments of Statistics and of EECS, UC Berkeley.}, Emmanuel Cand{\`e}s\thanks{Departments of Statistics and of Mathematics, Stanford University.}, Lihua Lei\thanks{Department of Statistics, Stanford University.}, Yaniv Romano\thanks{Departments of Electrical Engineering and of Computer Science, Technion---Israel Institute of Technology.}, Matteo Sesia\thanks{Department of Data Sciences and Operations, University of Southern California.}}
\date{\today}
\newtheorem{thm}{Theorem} 
\newtheorem{prop}{Proposition}
\newtheorem{lemma}{Lemma}
\newtheorem{cor}{Corollary}
\newtheorem{defn}{Definition}
\newtheorem{remark}{Remark}
\newcommand{\defeq}{\vcentcolon=}
\def\I{\mathbb{I}}
\def\betadist{\textsc{Beta}}
\newcommand{\Unif}{\textnormal{Unif}}
\newcommand{\Dcal}{\mathcal{D}^\textnormal{cal}}
\newcommand{\Var}{\textnormal{Var}}
\def\R{\mathbb{R}}
\def\E{\mathbb{E}}
\def\C{\mathcal{C}}
\def\Chat{\hat{\mathcal{C}}}
\def\marg{\textnormal{(marg)}}
\def\cond{\textnormal{(ccv)}}
\def\dkwm{\textnormal{d}}
\def\simes{\textnormal{s}}
\def\asym{\textnormal{a}}
\renewcommand{\P}{\mathbb{P}}
\newcommand{\D}{\mathcal{D}}
\newcommand{\event}{\mathcal{E}}
\newcommand{\lb}{\left(}
\newcommand{\rb}{\right)}
\newcommand{\T}{\mathcal{T}}
\providecommand{\keywords}[1]
{
  \small	
  \textbf{\textit{Keywords---}} #1
}
\begin{document}

\maketitle

\begin{abstract}
This paper studies the construction of p-values for nonparametric outlier detection, taking a multiple-testing perspective. The goal is to test whether new independent samples belong to the same distribution as a reference data set or are outliers. We propose a solution based on conformal inference, a broadly applicable framework which yields p-values that are marginally valid but mutually dependent for different test points. We prove these p-values are positively dependent and enable exact false discovery rate control, although in a relatively weak marginal sense. We then introduce a new method to compute p-values that are both valid conditionally on the training data and independent of each other for different test points; this paves the way to stronger type-I error guarantees. Our results depart from classical conformal inference as we leverage concentration inequalities rather than combinatorial arguments to establish our finite-sample guarantees. Furthermore, our techniques also yield a uniform confidence bound for the false positive rate of any outlier detection algorithm, as a function of the threshold applied to its raw statistics. Finally, the relevance of our results is demonstrated by numerical experiments on real and simulated data.
\end{abstract}

\keywords{Conformal inference, out-of-distribution testing, false discovery rate, positive dependence.}

\section{Introduction}

\subsection{Problem statement and motivation}

We consider an outlier detection problem in which one observes a data set $\mathcal{D}=\{X_i\}_{i=1}^{2n}$ containing $2n$ independent and identically distributed points $X_i \in \mathbb{R}^{d}$ drawn from an unknown distribution $P_{X}$ {(which may be continuous, discrete, or mixed)}.
The goal is to test which among a new set of $n_{\text{test}} \geq 1$ independent observations $\mathcal{D}^{\mathrm{test}} = \{X_{2n+i}\}_{i=1}^{n_{\text{test}}}$ are \emph{outliers}, in the sense that they were not drawn from the same distribution $P_{X}$. By contrast, we refer to points drawn from $P_X$ as \emph{inliers}.
This problem has applications in many domains, including medical diagnostics~\cite{tarassenko1995novelty}, spotting frauds or intrusions~\cite{patcha2007overview}, forensic analysis~\cite{Fortunato2020}, monitoring engineering systems for failures~\cite{Tarassenko2009},
%It will sometimes be convenient to single out one coordinate of each data point as the \emph{response variable}, so we decompose our observations as $X_i = (X_i, Y_i)$ where $X_i \in \mathbb{R}^{d-1}$ and $Y_i \in \mathbb{R}$ for $i=1,\dots,n$.
and \emph{out-of-distribution} detection in machine learning~\cite{hendrycks2016baseline, liang2017enhancing, lee2018simple, lee2018training}.
A variety of machine-learning tools have been developed to address this classification task, which is sometimes referred to as \emph{one-class classification}~\cite{moya1993one,pimentel2014review} because the data in $\mathcal{D}$ do not contain any outliers. 
However, such algorithms are often complex and their outputs are not directly covered by any precise statistical guarantees. Fortunately, {conformal} inference~\cite{vovk1999machine, vovk2005algorithmic} allows one to practically convert the output {of any} one-class classifier ({if} it is invariant to the ordering of the training observations) into a provably valid p-value for the null hypothesis $\mathcal{H}_{0,i}: X_{i} \sim P_X$, for any $X_i \in \mathcal{D}^{\mathrm{test}}$. 
%Conformal p-values are valid in finite samples, regardless of the form of $P_X$, and under no assumptions other than the exchangeability of $\{X_i\}_{i=1}^{2n}$.
%However, their validity is only marginal, in the sense that they satisfy
%\begin{align*}
%    \mathbb{P}\left[ \hat{\phi}(X_i; \mathcal{D}) \leq \alpha \right] \leq \alpha, \qquad \forall \alpha \in [0,1], \quad \forall i \in \{n+1, \ldots, n_{\text{test}}\},
%\end{align*}
%with the expectation taken over $X_i$ and the data set $\mathcal{D}$ utilized to learn the p-value function $\hat{\phi} : \mathbb{R} \mapsto [0,1]$. 

In many applications, the number of outlier tests, $n_{\text{test}}$, is large and, therefore, it may be necessary to account for multiple comparisons to avoid making an excessive number of false discoveries.
A meaningful error rate in this setting is the false discovery rate (FDR)~\cite{benjamini1995controlling}: the expected proportion of true inliers among the test points reported as outliers. For example, if a particular financial transaction is labeled by an automated system as likely to be fraudulent (i.e., unusual, or out-of-distribution compared to a data set of normal transactions), someone may then need to review it manually, and possibly contact the involved customer.
Since these follow-up procedures have a cost, controlling the FDR may be a sensible solution to ensure resources are allocated efficiently.
From a statistical perspective, multiple testing in this setting requires some care because classical conformal p-values corresponding to different values of $i>2n$ are independent of each other only conditional on $\mathcal{D}$, although they are valid only marginally over $\mathcal{D}$.
This situation is delicate because FDR control typically requires p-values that either are mutually independent or follow certain patterns of dependence~\cite{benjamini2001control, clarke2009robustness}.
Similarly, global testing (i.e., aggregating evidence from multiple observations to test weaker batch-level hypotheses) may also require independent p-values.
This paper addresses the above issues by carefully studying the theoretical properties of some standard multiple testing procedures applied to conformal p-values, and by developing new methods to compute p-values with stronger validity properties.

The conformal inference methods studied in this paper are statistical wrappers for one-class classifiers. The latter are algorithms trained on data clean of any outliers to compute a score function $\hat{s} : \mathbb{R}^d \to \mathbb{R}$ assigning a scalar value to any future data point, so that smaller (for example) values of $\hat{s}(X)$ provide evidence that $X$ may be an outlier. By design, the classifier attempts to construct scores that separate outliers from inliers effectively, by learning from the data what inliers typically look like, and it may be based on sophisticated black-box models to maximize power. While often effective in practice, these machine-learning algorithms have the drawback of not offering any clear guarantees about the quality of their output. For example, they do not directly provide a null distribution for the classification scores $\hat{s}$ evaluated on true inliers, or any particular threshold to limit the rate of false positives. 
This is where conformal inference comes to help.
After training $\hat{s}$ on a subset of the observations in $\mathcal{D}$, namely those in $\mathcal{D}^{\text{train}} = \{X_1, \ldots, X_n\}$, the scores are evaluated on the remaining $n$ hold-out samples in $\mathcal{D}^{\text{cal}} = \{X_{n+1}, \ldots, X_{2n}\}$.
(Note that $\mathcal{D}^{\text{train}}$ and $\mathcal{D}^{\text{cal}}$ do not need to contain the same number of observations, although the current choice simplifies the notation without loss of generality).
Let us assume, for simplicity, that $\hat{s}(X)$ has a continuous distribution if $X \sim P_X$ is independent of the data used to train $\hat{s}$, although this assumption could be relaxed at the cost of some additional technical details.
Then, define $F$ as the cumulative distribution function (CDF) of $\hat{s}(X)$.
If we knew $F$, we could utilize $F(\hat{s}(X_i))$ as an exact p-value for the null hypothesis $\mathcal{H}_{0,i}: X_i \sim P_{X}$, for any $X_i \in \mathcal{D}^{\mathrm{test}}$, in the sense that $F(\hat{s}(X_i))$ would be uniformly distributed if $\mathcal{H}_{0,i}$ is true.
In practice, however, we do not have direct access to $F$ because $P_X$ is unknown and the machine-learning algorithm upon which $\hat{s}$ depends is assumed to be a black-box.
Instead, we can evaluate the empirical CDF of $\hat{s}(X_i)$ for all $X_i \in \mathcal{D}^{\text{cal}}$, which we denote as $\hat{F}_n$.
In the following, we will discuss how to construct provably valid conformal p-values for a future observation $X_{2n+1}$ by evaluating
\begin{equation} \label{eq:pval_form}
    \hat{u}(X_{2n+1}) = \left(g \circ \hat{F}_n \circ \hat{s} \right)(X_{2n+1}),
\end{equation}
where $g$ is a suitable \emph{adjustment function}, and the symbol $\circ$ denotes a composition; i.e., $(f \circ g)(x) = f(g(x))$.
Note that, hereafter, we will treat the observations in $\mathcal{D}^{\text{train}}$ as fixed and focus on the randomness in the calibration ($\mathcal{D}^{\text{cal}}$) and test ($\mathcal{D}^{\mathrm{test}}$) data, upon which conformal inferences are generally based.

\subsection{Preview of contributions}
\label{subsec:preview}

In Section~\ref{sec:marginal_pvals}, we will focus on the classical conformal inference methods, which produce \emph{marginally super-uniform} (conservative) p-values $\hat{u}^\marg(X_{2n+1})$ satisfying
\begin{equation}  \label{eq:marg_validity_def}
    \mathbb{P}\left[ \hat{u}^\marg(X_{2n+1}) \le t \right] \le t,
\end{equation}
for any $t \in (0,1)$, whenever $X_{2n+1}$ is an inlier.
We say these p-values are marginally valid because they depend on the calibration data in $\mathcal{D}^{\text{cal}}$, and both $\mathcal{D}^{\text{cal}}$ and $X_{2n+1}$ are random in~\eqref{eq:marg_validity_def}.
In particular, the classical $\hat{u}^\marg$ is computed by applying the adjustment function $g^\marg(x) =  (nx + 1) / (n + 1)$ to~\eqref{eq:pval_form}, i.e., 
\begin{equation} \label{eq:marginal-pvals-def}
    \hat{u}^\marg(x) = \frac{1 + |\{i \in \mathcal{D}^{\text{cal}} : \hat{s}(X_i) \le \hat{s}(x)\}|}{n+1}.
\end{equation}
Note that~\eqref{eq:marg_validity_def} {is implied by}~\eqref{eq:marginal-pvals-def} because {when $\hat{s}(X)$ follows a continuous distribution,} $\hat{u}^\marg(X)$ is uniformly distributed on $\{1/(n+1), 2/(n+1), \dots, 1\}$ if $X \sim P_X$ independently of the data in {$\mathcal{D}^{\text{train}}$~\cite{vovk1999machine, vovk2005algorithmic}.} (If $\hat{s}(X)$ is not continuous, one can still {verify} that $\hat{u}^\marg(X)$ is super-uniform in distribution.)
However, this is not necessarily true if one conditions on $\mathcal{D} = \mathcal{D}^{\text{train}} \cup \mathcal{D}^{\text{cal}}$, in which case $\hat{u}^\marg(X)$ may become anti-conservative due to random fluctuations in the distribution of scores within $\mathcal{D}^{\text{cal}}$.
Intuitively, this means the marginal p-values in~\eqref{eq:marginal-pvals-def} are only valid {\it on average} if data in $\mathcal{D}^{\text{cal}}$ are treated as random. Unfortunately, this guarantee may be too weak to be satisfactory for a practitioner who wants to compute p-values for a large number of test points but is constrained to working with a single calibration data set. Indeed, the numerical experiments presented in Section~\ref{sec:exp_simulated} will show that inferences based on marginal conformal p-values may be systematically invalid for a large fraction of practitioners working with ``unlucky'' calibration data sets.

Furthermore, marginal p-values corresponding to different test points,  $\{ \hat{u}^\marg(X)\}_{X \in \mathcal{D}^{\mathrm{test}}}$, are not mutually independent because they are all affected by $\mathcal{D}^{\text{cal}}$; see Figure~\ref{fig:dependence_viz} for a visualization of this dependence.
This should be taken into account when adjusting for multiplicity in outlier detection applications because some common testing procedures are not generally valid for dependent p-values.
For example, we will prove in Section~\ref{sec:marginal_pvals} that the dependence among marginal p-values invalidates Fisher's combination test~\cite{fisher1925statistical} for the global null that there are no outliers in $\mathcal{D}^{\mathrm{test}}$, even if the calibration data in $\mathcal{D}^{\text{cal}}$ are treated as random, although this can be easily fixed by suitably adjusting the critical value. By contrast, we can prove the dependence between conformal p-values does not break the {\it average} FDR control of the Benjamini-Hochberg (BH) procedure~\cite{benjamini1995controlling}, even if the latter is applied with Storey's correction~\cite{storey2004strong}. The behaviours of additional multiple testing procedures, such as the harmonic mean~\cite{wilson2019harmonic}, Simes method~\cite{simes1986improved}, and Stouffer's method~\cite{stouffer1949american}, applied to conformal p-values will be investigated empirically in Section~\ref{sec:exp}.

\begin{figure}[!tb]
    \centering
    \includegraphics[width = .9\textwidth, trim = 0 950 750 0, clip]{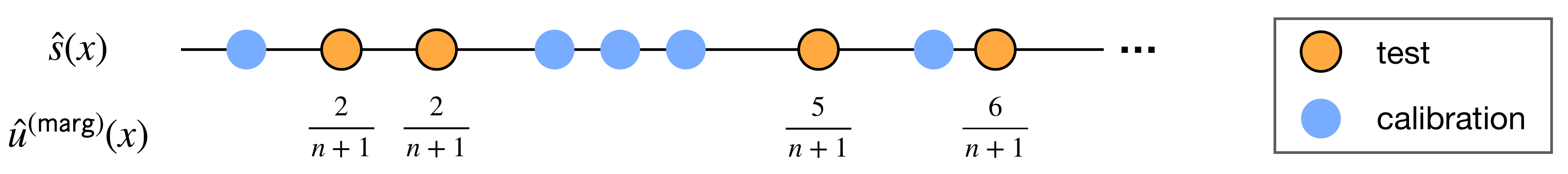}
    \caption{Visualization of the joint distribution of the conformal p-values. The distribution of $\hat{s}(x)$ is the same for calibration and inlier test points. The conformal p-value for each test point is the number of calibration points to its left, divided by the total number of calibration points plus one, as in \eqref{eq:marginal-pvals-def}.}
    \label{fig:dependence_viz}
\end{figure}

In any case, regardless of whether the mutual dependence among marginal p-values theoretically invalidates the average inferences of a particular multiple-testing procedure, one may sometimes be interested in obtaining stronger guarantees conditional on the calibration data.
Consider for instance the following prototypical scenario. 
A researcher, or a company, acquires an expensive data set $\mathcal{D}$ containing clean examples of some variable $X$ of interest, and wishes to leverage that information to construct a system {to} detect outliers in future test points, while avoiding an excess of false positives. Assuming the stakes in this application are sufficiently high, the researcher may need clear statistical guarantees about the output of such procedure (as opposed to blindly trusting a black-box model), and thus decides to employ conformal inference. Unfortunately, the marginal validity property in~\eqref{eq:marg_validity_def} tells us very little about how this outlier detection system may perform in the future for {\em this particular researcher relying on this particular data set $\mathcal{D}$}. Instead, marginal validity suggests the system will work {\em on average} for different researchers starting from different data sets; of course, that may not feel fully satisfactory for any one of them.

Therefore, we will construct in Section~\ref{sec:simult_conformal} conformal p-values satisfying a stronger property, which we call \emph{calibration-conditional validity} (CCV). Formally, the novel p-values $\hat{u}^\cond(x)$ will satisfy
\begin{equation}  \label{eq:cond_validity_def}
    \mathbb{P}\left[ \mathbb{P}\left[ \hat{u}^\cond(X_{2n+1}) \leq t \mid \mathcal{D} \right] \leq t \text{\; for all \,} t \in (0,1) \right] \geq 1-\delta,
\end{equation}
if $X_{2n+1} \sim P_X$, for any value of $\delta \in (0,1)$ pre-specified by the user.
The crucial difference between~\eqref{eq:cond_validity_def} and~\eqref{eq:marg_validity_def} is that the latter intuitively guarantees the p-values are valid for at least a fraction $1 - \delta$ of researchers; this can give a precise measure of confidence {to each one of them.}
Furthermore, calibration-conditional p-values have the advantage of making multiple testing straightforward. In fact, these p-values are still trivially independent of one another conditional on the calibration data, so their high-probability guarantee of validity will immediately extend to the output of any downstream multiple-testing procedure that assumes independence.

While most of this paper focuses on the validity of conformal p-values from a multiple-testing perspective, we will see in Section~\ref{sec:extensions} that our high-probability results can also be utilized to construct a uniform upper confidence bound for the false positive rate of any machine-learning algorithm for outlier detection, as a function of the threshold applied to its raw output scores. This may help practitioners interpret the output of black-box methods directly, without necessarily operating in terms of p-values. (However, as statisticians, we prefer the p-value approach because it is more versatile.)
Furthermore, our results can be easily leveraged to obtain predictive sets with stronger coverage guarantees compared to existing conformal methods.

Finally, in Section~\ref{sec:exp}, we will compare the performance of marginal and calibration-conditional conformal p-values on simulated as well as real data, in combination with different multiple testing procedures. These numerical experiments will provide an empirical confirmation of our theoretical results, and also highlight how stronger guarantees sometimes come at the cost of lower power.

\subsection{Related work}

The outlier detection problem considered in this paper is fully non-parametric, in the sense that we leverage the information contained in an external clean data set, and nothing else, to infer whether a future test point may be an outlier. 
This is in contrast with the more classical problem of multivariate outlier detection within a single data set, leveraging modeling assumptions rather than clean external samples~\cite{wilks1963multivariate,hawkins1980identification,riani2009finding,cerioli2010multivariate}.
A wealth of data mining and machine-learning methods have been developed to address our non-parametric task~\cite{khan2014one,agrawal2015survey,aggarwal2015outlier,sabokrou2018adversarially,chalapathy2019deep}; these do not provide precise finite-sample guarantees on their own, but we can leverage them to compute scoring functions that powerfully separate outliers from inliers.

Our paper is based on conformal inference~\cite{vovk1999machine, vovk2005algorithmic}, which has been applied before in the context of outlier detection~\cite{laxhammar2015inductive,smith2015conformal,ishimtsev2017conformal,guan2019prediction,cai2020real,haroush2021statistical}. 
However, previous works did not study the implications of marginal p-values on the validity of multiple outlier testing procedures, nor did they seek the conditional guarantees obtained here. Another line of work applied conformal inference to test the global null for streaming data \citep{vovk2003, fedorova2012plug, vovk2019testing, vovk2020testing, vovk2021retrain}. However, the guarantee no longer holds in the offline setting or beyond the global null.
The most closely related work is that of~\cite{vovk2012conditional}, which extends conformal inference to provide a form of calibration-conditional coverage. That paper focused explicitly on the prediction setting  rather than on outlier detection, but is also directly relevant in our context, as discussed in Section~\ref{sec:warmup-fpr}.
The main difference is that our novel high-probability bounds in Section~\ref{sec:simult_conformal} hold simultaneously for all possible coverage levels (in the language of~\cite{vovk2012conditional}) not just for a pre-specified one---this feature being necessary to obtain conditionally valid p-values for multiple outlier testing.

Other works on conformal inference focused on different types of conditional coverage.
For example,~\cite{barber2019limits} studied the difficulty of computing valid conformal predictions (in a supervised setting) conditional on the features of a new test point, while we are interested in conditioning on the calibration data (in an outlier detection setting).
Other works have focused on seeking approximate feature-conditional coverage in multi-class classification~\cite{hechtlinger2018cautious, romano2020classification, cauchois2020knowing,angelopoulos2020sets} or in regression problems~\cite{romano2019conformalized, izbicki2019flexible, chernozhukov2019distributional, kivaranovic2019adaptive, gupta2020nested}.
This paper is orthogonal, in the sense that our results could be applied to strengthen their coverage guarantees by conditioning on the calibration data.
It should be noted that, although conformal inference can be based on different data hold-out strategies~\cite{vovk2015cross, barber2019predictive, kim2020predictive}, our paper focuses on {sample} splitting~\cite{papadopoulos2002inductive, lei2013conformal}. 
The latter has the advantage of being the most computationally efficient option, and is necessary for us in theory because our high-probability bounds require the independence of the data points in addition to their exchangeability.

Further, the problem we consider is related to classical two-sample testing~\cite{wilcoxon1992individual}, although we take a different perspective. Two-sample testing {compares} two data sets to determine whether they were sampled from the same distribution, while our goal is to {contrast} many independent test points (or batches thereof) to the same reference set accounting for {multiplicity}. In any case, several recent works have explored the use of machine-learning and data hold-out methods for two-sample testing~{\cite{friedman2004multivariate,lopez2017revisiting,kuchibhotla2020exchangeability,hu2020distributionfree,kim2021classification}}, which reinforces the connection with our work.

Finally, the duality between hypothesis testing and confidence intervals connects our conditionally calibrated p-values to the classical statistical topic of \emph{tolerance regions}, which goes back to Wilks~\cite{wilks1941, wilks1942}, Wald~\cite{wald1943}, and Tukey~\cite{tukey1947}. {See~\cite{krishnamoorthy2009statistical} for a overview of the subject, \cite{vovk2012conditional} for a discussion of their connection with conformal inference, and~\cite{Park2020PAC,bates2021distributionfree} for modern examples using tolerance regions for predictive inference with neural networks.} (Tolerance regions are predictive sets with a high-probability guarantee to contain the desired fraction of the population. For example, one can generate a tolerance region guaranteed to contain at least 80\% of the population with probability 99\%.{)}
{The construction of predictive intervals with (asymptotic) conditional validity in the aforementioned sense was also recently studied in~\cite{zhang2020bootstrap} with bootstrap rather than conformal inference methods.}

%predictive distributions
% To provide a richer measure of predictive uncertainty than predictive sets of a pre-determined coverage level, methods that return predictive distributions have been developed in the statistics literature~\cite[e.g.][]{barndorff1996prediction}---see
%~\cite{lawless2005frequentist, Gneiting2014probabilistic} for overviews and~\cite{shen2017prediction} with the references therein for recent developments. The approach taken in these works relies on the assumption of a parametric model, but~\cite{vovk2017nonparametric} builds instead on ideas from conformal prediction to handle the nonparametric case. The resulting technique yields predictive distributions that achieve a certain calibration property with no assumptions other than i.i.d.~data. This approach is further developed in~\cite{vovk2019conformal, vovk2019universally}. The present work builds on this technique, introducing an new a high-probability calibration guarantee on the resulting distribution in the fully nonparametric setting.

%estimating distributions from data
% In order to create nonparametric predictive sets and predictive distributions, we rely on nonparametric distribution estimation techniques, such as quantile regression ... \textcolor{blue}{[cite quantile regression, quantile random forests, and anything else we are using here]}

\section{Marginal conformal inference for outlier detection}
\label{sec:marginal_pvals}

Before turning to calibration-conditional inferences, we carefully study the marginal validity of multiple tests based on split-conformal outlier detection p-values. The conformal p-values defined in~\eqref{eq:marginal-pvals-def} are marginally valid for the hypothesis that a single test point follows the distribution $P_X$, see \eqref{eq:marg_validity_def}, but they are not independent of each other when considering multiple test points. Consequently, we show they cannot be naively used to test a global null hypothesis that no points in a particular test set are outliers, with Fisher's combination test~\cite{fisher1925statistical} for example.
The failure of Fisher's test is caused by the particular dependence induced by the shared calibration data set, although other procedures turn out to be robust to such dependence. In particular, we then prove conformal p-values are \emph{positive regression dependent on a subset} (PRDS), which combined with the results of~\cite{benjamini2001control}, implies the BH procedure will control the FDR.

\subsection{A negative result: global testing with conformal p-values can fail} \label{sec:fisher}

Fisher's combination test~\citep{fisher1925statistical} is a widely-used method to test the global null, in our case
$$
H_{0}: X_{2n+1}, \ldots, X_{2n + m}\stackrel{\mathrm{i.i.d.}}{\sim} P_{X}.
$$
% which is most powerful for dense and weak signals in some asymptotic regimes ~\citep{littell1971asymptotic}. 
The idea is to aggregate the evidence from the individual tests, as follows. Given a p-value $p_i$ for each null hypothesis $i$, Fisher's test rejects the global null at level $\alpha$ if
\[-2 \sum_{i=1}^{m} \log p_{i} \ge \chi^2(2m; 1 - \alpha),\]
where $\chi^2(2m; 1 - \alpha)$ is the $(1 - \alpha)$-th quantile of the chi-square distribution with $2m$ degrees of freedom.
This test is valid if the p-values stochastically dominate $\Unif([0, 1])$ and are independent of each other. However, we prove in the following lemma that the standard (marginal) conformal p-values are positively correlated under arbitrary transformations, suggesting an inflation of the variance of the combination statistics.

\begin{lemma}\label{lem:pairwise_correlation}
Assume that $\hat{s}(X)$ is continuously distributed. Then, for any finite-valued function $G: [0, 1]\mapsto \R$, and for any pair of nulls $(i, j)$,
\[\mathrm{Cor}\left[ G(\hat{u}^\marg(X_{2n+i})), G(\hat{u}^\marg(X_{2n+j})) \right] = \frac{1}{n+2}.\]
\end{lemma}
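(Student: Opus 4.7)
The key simplification is that the training data is fixed, $\hat{s}(X)$ is continuous, and we only care about an inlier pair $(X_{2n+i}, X_{2n+j})$, so the $n+2$ scores $\hat{s}(X_{n+1}),\dots,\hat{s}(X_{2n}),\hat{s}(X_{2n+i}),\hat{s}(X_{2n+j})$ are i.i.d.\ continuous, and thus a.s.\ ties-free and exchangeable. I would start by rewriting the marginal p-values in~\eqref{eq:marginal-pvals-def} as $\hat{u}^\marg(X_{2n+i}) = \tilde R_i/(n{+}1)$, where $\tilde R_i$ is the rank of $\hat{s}(X_{2n+i})$ among itself and the $n$ calibration scores (and likewise for $j$). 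Note that $\tilde R_i$ ignores $X_{2n+j}$, so the dependence between $\tilde R_i$ and $\tilde R_j$ comes entirely from the shared calibration scores.

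The main technical step is to compute the joint law of $(\tilde R_i,\tilde R_j)$. Let $S_i, S_j$ be the ranks of $\hat{s}(X_{2n+i})$ and $\hat{s}(X_{2n+j})$ among all $n+2$ scores; by exchangeability $(S_i,S_j)$ is uniform on the $(n{+}2)(n{+}1)$ ordered pairs with $S_i\neq S_j$. Removing $X_{2n+j}$ from the reference pool lowers $\tilde R_i$ by one exactly when $S_j<S_i$, so
\[
\tilde R_i = S_i - \mathbb{1}\{S_j<S_i\}, \qquad \tilde R_j = S_j - \mathbb{1}\{S_i<S_j\}.
\]
A short case analysis over $a<b$, $a>b$, and $a=b$ then yields
\[
\P(\tilde R_i=a,\tilde R_j=b) = \frac{1}{(n+1)(n+2)} \quad\text{if } a\neq b, \qquad \P(\tilde R_i=a,\tilde R_j=a) = \frac{2}{(n+1)(n+2)},
\]
for $a,b\in\{1,\dots,n+1\}$. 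As a sanity check, summing in $b$ recovers the uniform marginal $\P(\tilde R_i=a)=1/(n+1)$.

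Writing $g_r\defeq G(r/(n+1))$, $A\defeq\sum_{r=1}^{n+1}g_r$, $B\defeq\sum_{r=1}^{n+1}g_r^2$, the uniform marginal gives $\E[G(\hat{u}^\marg(X_{2n+i}))]=A/(n+1)$ and $\Var[G(\hat{u}^\marg(X_{2n+i}))]=\bigl((n+1)B-A^2\bigr)/(n+1)^2$, while the joint law above gives
\[
\E\bigl[G(\hat{u}^\marg(X_{2n+i}))\,G(\hat{u}^\marg(X_{2n+j}))\bigr] = \frac{A^2+B}{(n+1)(n+2)}.
\]
Subtracting $(A/(n+1))^2$ from this and forming the ratio with the variance, the $G$-dependent quantity $(n+1)B-A^2$ cancels exactly and leaves $1/(n+2)$. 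The only mild obstacle is the rank-adjustment bookkeeping in the second paragraph; everything else is linear algebra. The argument implicitly requires $(n+1)B\neq A^2$, i.e.\ that $G(\hat{u}^\marg(X_{2n+i}))$ is not a.s.\ constant, which is the standing nondegeneracy assumption for the correlation to be defined.
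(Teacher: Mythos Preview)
Your proposal is correct and follows essentially the same route as the paper: both arguments use exchangeability of the $n+2$ i.i.d.\ continuous scores to show that the pair of rank-based p-values $(\tilde R_i,\tilde R_j)$ has the joint law you state (uniform off the diagonal, doubled on the diagonal), and then compute $\E[G(p_i)G(p_j)] = (A^2+B)/((n+1)(n+2))$ directly to obtain $\mathrm{Cov}=\mathrm{Var}/(n+2)$. The only cosmetic difference is notation---the paper writes the rank adjustment as $p_1=(R_{n+1}-I(S_{n+2}\le S_{n+1}))/(n+1)$ rather than your $\tilde R_i=S_i-\mathbb{1}\{S_j<S_i\}$---and you are slightly more explicit about the nondegeneracy caveat needed for the correlation to be defined.
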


Motivated by Lemma~\ref{lem:pairwise_correlation} (see Appendix~\ref{app:correlation-structure} for a detailed discussion), we obtain the following result which shows Fisher's combination test becomes invalid when applied to marginal conformal p-values. In particular, we characterize its type-I error in the asymptotic regime where $|\mathcal{D}^{\textnormal{test}}|$ is proportional to $|\mathcal{D}^{\textnormal{cal}}|$.
\begin{thm}[Type-I error of Fisher's combination test]\label{thm:fisher}
Assume that $\hat{s}(X)$ is continuously distributed. Then, under the global null, if $m = \lfloor\gamma n\rfloor$ for some $\gamma \in (0, \infty)$, as $n$ tends to infinity, 
\begin{equation*}
\P\left[ -2 \sum_{i=1}^{m} \log \left[ \hat{u}^\marg(X_{2n+i}) \right] \ge \chi^2(2m; 1 - \alpha) \right] \rightarrow  \bar{\Phi}\left(\frac{z_{1 - \alpha}}{\sqrt{1 + \gamma}}\right),
\end{equation*}
where $z_{1 - \alpha}$ and $\bar{\Phi}$ denote the $(1 - \alpha)$-th quantile and survival function of the standard normal distribution, respectively. Furthermore, under the same asymptotic regime, for $W\sim N(0, 1)$,
\begin{equation}\label{eq:conditional_typeI}
\P\left[ -2 \sum_{i=1}^{m} \log \left[\hat{u}^\marg(X_{2n+i})\right] \ge \chi^2(2m; 1 - \alpha)\mid \mathcal{D} \right] \stackrel{d}{\rightarrow}\bar{\Phi}(z_{1 - \alpha} + \sqrt{\gamma} W).
\end{equation}
\end{thm}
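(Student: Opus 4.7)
The plan is to condition on $\mathcal{D}$, so the variables $Z_j := -2\log \hat u^\marg(X_{2n+j})$ for $j=1,\ldots,m$ become i.i.d.\ with conditional mean $\mu_n(\mathcal{D})$ and variance $\sigma_n^2(\mathcal{D})$, and then apply a conditional Berry--Esseen bound. Since each $Z_j$ takes only the $n+1$ values $-2\log((k+1)/(n+1))$, its conditional third moment is $O((\log n)^3)$; combined with the elementary expansion $\chi^2(2m;1-\alpha) = 2m + 2\sqrt m\, z_{1-\alpha} + o(\sqrt m)$ this should give
\[
\P\Bigl[\sum_{j=1}^m Z_j \ge \chi^2(2m;1-\alpha) \,\Big|\, \mathcal{D}\Bigr] = \bar\Phi\!\Bigl(\frac{\sqrt m\,(2-\mu_n) + 2 z_{1-\alpha}}{\sigma_n}\Bigr) + o_P(1).
\]

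The heart of the argument is the asymptotic analysis of $\mu_n$ and $\sigma_n$ as functionals of the calibration empirical CDF. Letting $V_{(k)} := F(\hat s(X_{n+k}))$ denote the sorted calibration-score quantiles with $V_{(0)}:=0$ and $V_{(n+1)}:=1$, the uniform-spacings representation gives
\[
\mu_n = -2 \sum_{k=0}^n a_k \,(V_{(k+1)} - V_{(k)}), \qquad a_k := \log \frac{k+1}{n+1}.
\]
Writing the spacings as $V_{(k+1)} - V_{(k)} = E_{k+1}/S$ for i.i.d.\ $E_i \sim \mathrm{Exp}(1)$ and $S := \sum_{i=1}^{n+1} E_i$, a short algebraic manipulation yields
\[
\mu_n - \tilde \mu_n = -\frac{2}{S}\sum_{j=1}^{n+1}(a_{j-1} - \bar a)(E_j - 1), \qquad \tilde \mu_n := -\frac{2}{n+1}\sum_k a_k = 2 + O(\log n /n),
\]
with $\bar a := (n+1)^{-1}\sum_k a_k$, the rate on $\tilde\mu_n$ coming from Stirling. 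A Lindeberg--Feller CLT for this weighted sum (weights bounded by $\log(n+1)+1$), combined with $S/n \to 1$ in probability and the limit $\frac{1}{n}\sum_k(a_k-\bar a)^2 \to \int_0^1(\log u+1)^2\,du = 1$, will yield $\sqrt n\,(\mu_n - 2)\stackrel{d}{\to} -2\,W$ with $W \sim N(0,1)$. A parallel computation of $\E[Z_1^2\mid\mathcal{D}] = 4\sum_k a_k^2(V_{(k+1)}-V_{(k)})$, whose expectation tends to $4\int_0^1\log^2 u\, du = 8$, combined with $\mu_n^2 \to 4$, gives $\sigma_n^2 \stackrel{P}{\to} 4$.

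Combining via Slutsky and the continuous mapping theorem applied to $\bar\Phi$, with $m = \lfloor\gamma n\rfloor$ so $\sqrt m\,(2-\mu_n)/\sigma_n = \tfrac{1}{2}\sqrt\gamma\,\sqrt n\,(2-\mu_n) + o_P(1) \stackrel{d}{\to} \sqrt\gamma\, W$, yields $\P[\cdot\mid \mathcal{D}] \stackrel{d}{\to} \bar\Phi(z_{1-\alpha} + \sqrt\gamma\, W)$, which is~\eqref{eq:conditional_typeI}. The marginal statement then follows by bounded convergence and a standard Gaussian convolution: for independent standard normals $Z, W$, $\E[\bar\Phi(z_{1-\alpha}+\sqrt\gamma\, W)] = \P[Z - \sqrt\gamma\, W \ge z_{1-\alpha}] = \bar\Phi(z_{1-\alpha}/\sqrt{1+\gamma})$. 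This is consistent with Lemma~\ref{lem:pairwise_correlation}: $m$ diagonal terms of conditional variance $\approx 4$ plus $m(m-1)$ pairwise covariances $\approx 4/(n+2)$ sum to $\approx 4m(1+\gamma)$, so Fisher's statistic has asymptotic variance $4m(1+\gamma)$ rather than the $4m$ predicted under independence. The main technical obstacle I anticipate is the divergence of $\log u$ at $u=0$: the exponential representation regularizes this, but verifying Lindeberg at small indices (where $|a_0 - \bar a| \sim \log n$) and bounding the remainder of the $1/S$ Taylor expansion will likely require truncating out the few largest-$|a_k|$ indices and controlling them separately using exponential tails of $E_j$.
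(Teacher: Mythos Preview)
Your proposal is correct and follows essentially the same approach as the paper: condition on $\mathcal{D}$, apply a conditional Berry--Esseen bound, use the exponential representation of uniform spacings (Moran) to obtain a CLT for the conditional mean $\mu_n$ and convergence in probability of $\sigma_n^2$, then combine via Slutsky and bounded convergence for the marginal statement. The paper actually proves a more general result for arbitrary combination functions $G$ (with Fisher as the case $G(u)=-2\log u$), and handles the small-index issue you anticipate by invoking Petrov's Berry--Esseen bound with $g(x)=|x|$ rather than Lindeberg--Feller, so that the bound on $\max_k|a_k|=O(\log n)$ yields a third-moment ratio of order $\log n/\sqrt{n}$ directly, without any truncation.
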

Note that the above asymptotic limits are independent of the distribution of $\hat{s}(X)$. In Appendix~\ref{app:proofs}, we prove that Theorem~\ref{thm:fisher} holds for a broad class of combination tests based on $\sum_{i=1}^{n}G(\hat{u}^\marg(X_{2n+i}))$, provided that $G(U)$ has finite moments for $U\sim \Unif([0, 1])$; Fisher's combination test is a special case with $G(u) = -2\log u$ and $G(U)\sim \chi^2(2)$. 

 Since $\gamma > 0$, the marginal type-I error is always larger than $\alpha$ whenever $\alpha < 0.5$. For illustration, consider $\alpha = 5\%$. When $\gamma = 3$, the marginal type-I error is as large as $20.5\%$; when $\gamma \rightarrow\infty$, the marginal type-I error is approaching $50\%$. Similarly, by \eqref{eq:conditional_typeI}, the $90$-th percentile of the conditional type-I error converges to the $90$-th percentile of $\bar{\Phi}(z_{1 - q} + \sqrt{\gamma} W)$, which is $\bar{\Phi}(z_{0.95} + \sqrt{\gamma}z_{0.1})$. When $\gamma = 3$, the limit is $71.7\%$; when $\gamma \rightarrow \infty$, the limit is approaching $100\%$. This demonstrates the substantial adverse effect of dependence among marginal conformal p-values for Fisher's combination test.

Corrections of Fisher's combination test are possible for some dependence structures. 
% The idea was to account for the variance inflation due to the positive pairwise correlation~\cite{brown1975400, kost2002combining}.
By Lemma~\ref{lem:pairwise_correlation}, the variance of the combination statistic is inflated by a factor $(1 + \gamma)$ compared to that of the $\chi^2(2m; 1 - \alpha)$ distribution (see Appendix~\ref{app:correlation-structure} for details). This yields an intuitive correction which divides the combination statistic by $\sqrt{1 + \gamma}$. Surprisingly, this correction is asymptotically too conservative for marginal conformal p-values. 
We prove in Appendix~\ref{app:global_testing} (Theorem~\ref{thm:combination_test}) that a valid correction rejects the global null if 
\begin{equation}\label{eq:fisher_corrected}
\frac{-2 \sum_{i=1}^{m} \log \left[\hat{u}^\marg(X_{2n+i})\right] + 2(\sqrt{1 + \gamma} - 1)m}{\sqrt{1 + \gamma}}\ge \chi^2(2m; 1 - \alpha).
\end{equation}
In Appendix~\ref{app:global_testing}, we also confirm the validity of \eqref{eq:fisher_corrected} via Monte-Carlo simulations and show this is asymptotically equivalent to the correction proposed by~\cite{brown1975400, kost2002combining} to address p-value dependence in more general contexts.
% This suggests that the dependence structure of marginal conformal p-values are more complicated than what it appears to be.

% resulting from the shared calibration data.

% \subsection{A positive result: adjusting Fisher's combination test}

% The failure of Fisher's combination test highlighted by Theorem~\ref{thm:fisher} can be remedied by adjusting the critical value. Specifically, we consider a modified Fisher combination test that rejects the global null at level $\alpha$ if 
% \begin{equation}\label{eq:fisher_reject_adjust}
%     -2 \sum_{i=1}^{m} \log \left[ \hat{u}^\marg(X_{2n+i}) \right] \ge \xi\cdot\chi^2(2m; 1 - \alpha),    
% \end{equation}
% for some positive constant $\xi$ that can be determined as follows.
% \begin{cor}
% \label{cor:fisher_fix}
% Under the same setting of Theorem~\ref{thm:fisher}, the modified Fisher combination test controls the marginal type-I error asymptotically at level $\alpha$ if $\xi = \sqrt{1 + \gamma}$.
% %and controls the conditional type-I error at level $q$ with probability at least $1 - \delta$ asymptotically if $\xi = 1 + \sqrt{\gamma}z_{1 - \delta} / z_{1 - q}$.
% %^We would need to define conditional type-I error more carefully before including this statement.
% \end{cor}
% The idea here was to account for the variance inflation due to the positive pairwise correlation~\cite{brown1975400, kost2002combining} resulting from the shared calibration data. The same approach can also be used to the validity of FDR control for the Benjamini-Hochberg procedure applied to the conformal p-values, as we shall see next.

\subsection{A positive result: conformal p-values are positively dependent} \label{sec:positive-fdr}

Certain multiple testing methods, such as the BH procedure, are known to be robust to a particular type of mutual p-value dependence called \emph{positive regression dependent on a subset} (PRDS)~\cite{benjamini2001control}. 
\begin{defn} \label{def:PRDS}
A random vector $X = (X_1,\dots,X_m)$ is PRDS on a set $I_0\subset \{1, \ldots, m\}$ if for any $i \in I_0$ and any increasing set $A$, the probability $\P[X \in A \mid X_i = x]$ is increasing in $x$. 
\end{defn}
In the multiple testing literature, $X$ is often said to be PRDS if it is PRDS on the set of nulls. Above, for vectors $a$ and $b$ of equal dimension, we say $a \succeq b$ if every coordinate of $a$ is no smaller than the corresponding coordinate of $b$, and a set $A \subset \mathbb{R}^m$ is \emph{increasing} if $a \in A$ and $b \succeq a$ implies $b \in A$.
The PRDS property is a demanding form of positive dependence which can be interpreted, loosely speaking, as saying all pairwise correlations are positive. 
In view of the definition of marginal p-values in~\eqref{eq:marginal-pvals-def} and the result in Lemma~\ref{lem:pairwise_correlation}, it should be intuitive that larger scores in the calibration set make the p-values for all test points simultaneously smaller, and vice-versa. This idea is formalized by the following result proving marginal conformal p-values are PRDS.

\begin{thm}[Conformal p-values are PRDS] \label{thm:prds}
% {Assume that $\hat{s}(X)$ is continuously distributed.} Consider $m$ test points $X_{2n+1},\dots,X_{2n+m}$ such that the first $m' \leq m$ of them are inliers, jointly independent of each other and of the data in $\mathcal{D}$. Then, the marginal conformal p-values
% $(\hat{u}^\marg(X_{2n+1}),\dots,\hat{u}^\marg(X_{2n+m'}))$ are PRDS. 
{Assume that $\hat{s}(X)$ is continuously distributed.} Consider $m$ test points $X_{2n+1},\dots,X_{2n+m}$ such that the inliers are jointly independent of each other and of the data in $\mathcal{D}$. Then, the marginal conformal p-values $(\hat{u}^\marg(X_{2n+1}),\dots,\hat{u}^\marg(X_{2n+m}))$ are PRDS on the set of inliers.
\end{thm}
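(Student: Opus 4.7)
Write $V_i \defeq \hat{s}(X_{n+i})$ for the calibration scores and $T_j \defeq \hat{s}(X_{2n+j})$ for the test scores, so $\hat{u}^\marg(X_{2n+j}) = (1 + |\{i : V_i \leq T_j\}|)/(n+1)$. By Strassen's theorem, PRDS on the inlier set is equivalent to showing that, for every inlier index $j \in I_0$ and every pair of ranks $k_1 < k_2$ in $\{1,\dots,n+1\}$, the conditional laws of the full p-value vector given $\hat{u}_j^\marg = k_1/(n+1)$ and $\hat{u}_j^\marg = k_2/(n+1)$ admit a coupling under which every coordinate is almost surely weakly larger in the second. My plan is to exhibit such a monotone coupling explicitly, using the exchangeability of $T_j$ with the calibration block as the only distributional input.

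\textbf{The coupling.} Fix $j \in I_0$. Since $T_j, V_1, \ldots, V_n$ are iid from the score distribution, a standard exchangeability computation shows that the conditional law of $(T_j, V_1, \ldots, V_n)$ given $\hat{u}_j^\marg = k/(n+1)$ is that of the order statistics $W_{(1)} < \cdots < W_{(n+1)}$ of $n+1$ iid draws, with $T_j$ identified as $W_{(k)}$ and $(V_1, \ldots, V_n)$ a uniform permutation of $\{W_{(i)} : i \neq k\}$, independent of those order statistics. On a single probability space I draw $W_1, \ldots, W_{n+1}$ iid, a uniform permutation $\sigma$ of $\{1,\ldots,n\}$, and the remaining test scores $\{T_{j'}\}_{j' \neq j}$ from their joint law, with the three blocks independent; for each $k$ I then set $T_j^{(k)} \defeq W_{(k)}$ and $(V_1^{(k)}, \ldots, V_n^{(k)}) \defeq \sigma\bigl((W_{(i)})_{i \neq k}\bigr)$. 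The marginal at each $k$ reproduces the target conditional law, using that by the standing independence of test points from $\mathcal{D}^\textnormal{cal}$, the joint distribution of the other test scores is unaffected by conditioning on $\hat{u}_j^\marg$.

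\textbf{Monotonicity and conclusion.} For any $j' \neq j$, since the permutation does not affect the count,
\[
\hat{u}_{j'}^{\marg,(k)} = \frac{1 + |\{i : W_{(i)} \leq T_{j'}\}| - \mathbb{I}[W_{(k)} \leq T_{j'}]}{n+1},
\]
which is weakly nondecreasing in $k$ because $k \mapsto W_{(k)}$ is (almost surely strictly) nondecreasing, so $k \mapsto \mathbb{I}[W_{(k)} \leq T_{j'}]$ is weakly nonincreasing. Combined with $\hat{u}_j^{\marg,(k)} = k/(n+1)$, every coordinate of the p-value vector is, under the coupling, a weakly nondecreasing function of $k$. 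For any increasing set $A$ and any $k_1 \leq k_2$ this forces $\mathbb{I}[\hat{u}^{\marg,(k_1)} \in A] \leq \mathbb{I}[\hat{u}^{\marg,(k_2)} \in A]$ almost surely; taking expectations delivers PRDS on $I_0$.

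\textbf{Main obstacle.} The real content is in the second step: recognizing that the ``hard-looking'' conditional law is nothing but an order-statistic rearrangement in which $T_j$ is placed at position $k$. Once that is in hand, the monotonicity of the other coordinates collapses to the one-line inequality $W_{(k_1)} \leq W_{(k_2)}$ inside an indicator. Two minor points to handle cleanly: continuity of $\hat{s}(X)$ is used only to eliminate ties, making $k \mapsto W_{(k)}$ strictly increasing and the conditioning $\hat{u}_j^\marg = k/(n+1)$ unambiguous; and outlier test scores are accommodated provided they are independent of $\mathcal{D}^\textnormal{cal}$ (the standard setup), otherwise one simply conditions on the outlier scores at the outset of the coupling.
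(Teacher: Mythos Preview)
Your proposal is correct and uses essentially the same key insight as the paper: conditional on $\hat{u}^\marg_j = k/(n+1)$, the inlier test score $T_j$ is the $k$-th order statistic of the pooled scores $(T_j,V_1,\dots,V_n)$ and the calibration scores are the remaining $n$ order statistics, with the rank independent of the order statistics by continuity and exchangeability. The paper packages this as two lemmas---first that p-values decrease coordinatewise in the calibration score vector, then that the calibration score vector given $Y_i=y$ is stochastically decreasing in $y$ via exactly your order-statistic coupling---whereas you fold both steps into a single direct coupling of the full p-value vector and read off the monotonicity from the explicit formula $\hat{u}_{j'}^{\marg,(k)} = \bigl(1 + |\{i : W_{(i)} \le T_{j'}\}| - \mathbb{I}[W_{(k)} \le T_{j'}]\bigr)/(n+1)$; the content is the same.
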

{When $\hat{s}(X)$ is not continuous, we can also prove the PRDS property by modifying the definition \eqref{eq:marginal-pvals-def} of marginal conformal p-values; see Appendix \ref{app:PRDS} for details. }It follows from Theorem~\ref{thm:prds} that marginal conformal p-values can be used to control the FDR with the BH procedure for the null hypotheses
\begin{equation*}
    H_{0,i} : X_i \sim P_X, \qquad i \in \{2n+1,\dots,2n+m\}.
\end{equation*}
\begin{cor}[Benjamini and Yekutieli~\cite{benjamini2001control}]\label{cor:BH_FDR}
In the setting of Theorem~\ref{thm:prds}, the BH procedure applied at level $\alpha \in (0,1)$ to $(\hat{u}^\marg(X_{2n+1}),\dots,\hat{u}^\marg(X_{2n+m}))$ controls the FDR at level $\pi_0\alpha$, where $\pi_0$ is the proportion of true nulls. That is,
\begin{equation}
    \E\left[\frac{|\mathcal{R} \cap \mathcal{H}_0|}{\max\{1, |\mathcal{R}|\} } \right] \le \pi_0\alpha\le \alpha,
\end{equation}
where $\mathcal{H}_0 = \{i : H_{0,i} \text{ holds}\} \subseteq \{2n+1,\dots,2n+m\}$ is the subset of true inliers in the test set, and $\mathcal{R} \subseteq \{2n+1,\dots,2n+m\}$ is the subset of test points reported as likely outliers. 
\end{cor}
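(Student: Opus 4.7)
The plan is to invoke the classical result of Benjamini and Yekutieli directly, since all the hypotheses needed for it are already in place thanks to Theorem~\ref{thm:prds} and the marginal validity~\eqref{eq:marg_validity_def} of split-conformal p-values. Recall that the BY theorem states: if $(p_1,\dots,p_m)$ are p-values such that (i) each null $p_i$ (for $i$ in the null set $\mathcal{H}_0$) is super-uniformly distributed, and (ii) the vector is PRDS on $\mathcal{H}_0$, then the BH procedure at level $\alpha$ controls the FDR at level $(|\mathcal{H}_0|/m)\alpha = \pi_0 \alpha$. So the work reduces to matching our setting to (i) and (ii).

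First, I would verify (i): for each inlier index $i \in \mathcal{H}_0$, the point $X_i$ has distribution $P_X$ and is independent of $\mathcal{D}^{\text{cal}}$, so~\eqref{eq:marg_validity_def} (which is a consequence of~\eqref{eq:marginal-pvals-def} and the exchangeability argument of~\cite{vovk1999machine, vovk2005algorithmic} referenced right after~\eqref{eq:marginal-pvals-def}) gives $\P[\hat{u}^\marg(X_i)\le t]\le t$ for all $t\in(0,1)$. This is exactly super-uniformity of null p-values. Next I would verify (ii): Theorem~\ref{thm:prds}, applied under the stated hypothesis that the inlier test points are jointly independent of each other and of $\mathcal{D}$, asserts precisely that $(\hat{u}^\marg(X_{2n+1}),\dots,\hat{u}^\marg(X_{2n+m}))$ is PRDS on the set of inlier indices, which is $\mathcal{H}_0$ here.

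Given (i) and (ii), the BY theorem yields
\begin{equation*}
\E\!\left[\frac{|\mathcal{R}\cap\mathcal{H}_0|}{\max\{1,|\mathcal{R}|\}}\right] \;\le\; \pi_0\alpha \;\le\; \alpha,
\end{equation*}
which is exactly the claim of the corollary. The second inequality is immediate from $\pi_0 \le 1$.

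Because this is a corollary rather than an independent argument, there is no genuine obstacle; the only thing to be careful about is that BY require super-uniformity of null p-values \emph{and} PRDS on the null set, not on the whole index set. Our setting satisfies both, with $\mathcal{H}_0$ playing the role of the null set and Theorem~\ref{thm:prds} explicitly giving PRDS on $\mathcal{H}_0$ (rather than, say, on a larger set of indices that would require stronger hypotheses on the outlier distribution). Nothing needs to be assumed about the joint distribution of outlier test points for the conclusion to hold, which is consistent with the statement of the corollary.
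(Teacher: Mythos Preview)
Your proposal is correct and matches the paper's approach exactly: the corollary is attributed directly to Benjamini and Yekutieli~\cite{benjamini2001control}, and the paper provides no separate proof beyond having established PRDS on the inlier set in Theorem~\ref{thm:prds} and the marginal super-uniformity~\eqref{eq:marg_validity_def}. Your care in noting that BY requires PRDS only on the null set (which is precisely what Theorem~\ref{thm:prds} delivers) is appropriate and consistent with the paper's framing.
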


\begin{remark}
It turns out that the BH procedure applied to the marginal conformal p-values is equivalent to the semi-supervised BH procedure proposed by \cite{mary2021semi} (posted on arXiv two months after our paper), which was first studied by \cite{weinstein2017power} and later generalized by \cite{yang2021bonus} and \cite{rava2021burden}. These works employ a martingale-based technique to prove the FDR control without relying on the PRDS property. Theorem 3.1 in \cite{mary2021semi} also proves a lower bound showing that the FDR is almost exactly $\pi_0 \alpha$. 
\end{remark}

This proves the FDR can be controlled, although only on average over the calibration data because the above expectation is taken over both $\mathcal{D}$ and the future test points.
While such marginal guarantee may be satisfactory for someone carrying out several independent applications, individual practitioners committed to a single calibration data set may prefer stronger results. 

\subsection{A positive result: Storey's correction does not break FDR control}
When the proportion of nulls is much smaller than $1$, as it may be the case in many out-of-distribution detection problems, the BH procedure is conservative, as shown in Corollary~\ref{cor:BH_FDR}. If $\pi_0$ is known, a simple remedy is to replace the target FDR level with $\alpha / \pi_0$. However, $\pi_0$ is rarely known in practice and hence it needs to be estimated. Given p-values $p_i$ for all null hypotheses, it was proposed by Storey et al.~in~\cite{storey2002direct,storey2004strong} to estimate $\pi_0$ as
\[\hat{\pi}_0 = \frac{1 + \sum_{i=1}^{m}I(p_i > \lambda)}{m(1 - \lambda)},\]
and then to apply the BH procedure at level $\alpha / \hat{\pi}_0$; see Appendix~\ref{app:storey_BH} for details.
If the null p-values are super-uniform {in the sense of~\eqref{eq:marg_validity_def}}, mutually independent, and independent of the non-null
p-values, this provably controls the FDR in finite samples~\cite{storey2004strong}.
However, unlike in its standard version, the BH procedure with Storey's correction may fail to control the FDR if the p-values are PRDS; see Section 6.3 of~\cite{benjamini2006adaptive}.

Surprisingly, we show below that the positive correlation (Lemma~\ref{lem:pairwise_correlation}) among the marginal conformal p-values does not break the FDR control at all. The proof of Theorem~\ref{thm:storey_BH} rests on a novel FDR bound for the BH procedure with Storey's correction applied to any type of super-uniform p-values that are PRDS and almost-surely bounded from below by a constant; see Theorem~\ref{thm:storey_BH_generic} in Appendix~\ref{app:storey_BH}. Note that this result is not limited to conformal p-values and may also be useful for other multiple testing problems, such as those involving permutation p-values.

\begin{thm}[Storey's BH with conformal p-values controls the FDR]\label{thm:storey_BH}
Set $\lambda = K / (n+1)$ for any integer $K$. Assume $\hat{s}(X)$ is continuously distributed. In the setting of Corollary~\ref{cor:BH_FDR}, the BH procedure with Storey's correction applied at level $\alpha \in (0,1)$ to the marginal p-values $(\hat{u}^\marg(X_{2n+1}),\dots,\hat{u}^\marg(X_{2n+m}))$ controls the FDR at level $\alpha$. That is,
\begin{equation}
    \E\left[\frac{|\mathcal{R} \cap \mathcal{H}_0|}{\max\{1, |\mathcal{R}|\} } \right] \le \alpha.
\end{equation}
\end{thm}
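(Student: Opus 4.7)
The plan is to deduce this theorem from a more general FDR bound (stated as Theorem~\ref{thm:storey_BH_generic} in Appendix~\ref{app:storey_BH} and foreshadowed in the theorem's preamble), which controls the FDR of Storey-corrected BH applied to any family of p-values satisfying three conditions: (i) the null p-values are marginally super-uniform, (ii) the joint vector is PRDS on the nulls, and (iii) each p-value is almost surely bounded below by a constant $c>0$, with the Storey tuning parameter $\lambda$ chosen as an integer multiple of $c$. Once the generic bound is in hand, verifying the three conditions for $(\hat u^{\marg}(X_{2n+1}),\dots,\hat u^{\marg}(X_{2n+m}))$ is immediate: super-uniformity is~\eqref{eq:marg_validity_def}; the PRDS property is exactly Theorem~\ref{thm:prds}; and, because $\hat s(X)$ is continuous, each $\hat u^{\marg}(X)$ lives on the grid $\{1/(n+1),\ldots,1\}$, so one may take $c=1/(n+1)$, whereupon the standing hypothesis $\lambda=K/(n+1)$ automatically places $\lambda$ on this grid.

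To prove the generic bound, I would start from the standard decomposition
\[
\text{FDR} \;=\; \sum_{i\in\mathcal{H}_0}\E\!\left[\frac{I\{p_i\le \alpha R/(m\hat\pi_0)\}}{R\vee 1}\right],
\]
where $R$ is the total number of rejections and $\hat\pi_0=(1+\sum_{j} I\{p_j>\lambda\})/(m(1-\lambda))$, and argue that each summand is at most $\alpha/m$. The strategy is a leave-one-out coupling: freeze all $p_j$ with $j\neq i$ and examine the summand as a function of $p_i$. Both $R$ and $\hat\pi_0$ are step functions of $p_i$ whose jumps occur on the grid $\{c,2c,\ldots,1\}$, and the alignment $\lambda\in c\mathbb{Z}$ ensures that the discontinuity of $\hat\pi_0$ at $p_i=\lambda$ coincides with a grid point rather than falling inside a cell where $p_i$ has no mass. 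Combining this with the change-of-measure / conditional-stochastic-monotonicity formulation of PRDS from~\cite{benjamini2001control} reduces the expectation of the summand to an integral against the marginal law of $p_i$, which is a super-uniform distribution on $\{c,2c,\ldots,1\}$. A summation-by-parts across the grid then telescopes, and the factor $1/(1-\lambda)$ in $\hat\pi_0$ is absorbed exactly by the lower-bound $p_i\ge c$ together with the ``$+1$'' in Storey's numerator, leaving the target bound $\alpha/m$.

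The main obstacle is the well-known fact (\cite{benjamini2006adaptive}, Section~6.3) that Storey-corrected BH can fail under general PRDS dependence, so the grid-alignment hypothesis must do genuine work. In the independent case treated by~\cite{storey2004strong}, one exploits independence of $\hat\pi_0$ from $p_i$ conditional on $I\{p_i>\lambda\}$; under PRDS this independence is lost, and the substitute is precisely the discreteness of the p-values together with the choice $\lambda\in c\mathbb{Z}$, which guarantees that the leave-one-out trajectory has no ``partial'' jumps that could break the comparison. I expect the remaining steps—checking monotonicity of $R$ and $\hat\pi_0$ in $p_i$, summing over $i\in\mathcal H_0$, and collapsing $\pi_0/(1-\lambda)$ against the grid-super-uniform tail sum—to be routine once this alignment is in place.
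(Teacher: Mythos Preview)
Your high-level plan—invoke the generic bound of Theorem~\ref{thm:storey_BH_generic} and then check super-uniformity, PRDS, and the grid lower bound $c=1/(n+1)$—is correct as far as it goes, but you have misidentified what the generic bound actually delivers. Theorem~\ref{thm:storey_BH_generic} does \emph{not} conclude that each summand is at most $\alpha/m$; its conclusion is
\[
\text{FDR}\;\le\;\alpha(1-\lambda)\sum_{i\in\mathcal H_0}\E\!\left[\frac{1}{1+A}\,\Big|\,p_i\le p_*\right],
\qquad A=\sum_j I\{p_j\ge\lambda\},\quad p_*=\max\{\alpha(1-\lambda)/m,\,p_{\min}\},
\]
which is an intermediate expression, not $\alpha$. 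The PRDS/leave-one-out argument you sketch is essentially what is used to reach this point (via a telescoping over the finite set of values $r/(1+a)$), but at that stage the proof is not finished: the right-hand side still depends on the joint law of the p-values, and nothing about ``grid alignment'' alone forces it to collapse to $\alpha$.

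The missing second step is conformal-specific and is where the hypothesis $\lambda=K/(n+1)$ really enters. One first uses PRDS again to replace $p_*$ by $1/(n+1)$, then exploits the exchangeable uniform-order-statistic representation of the calibration scores: conditional on $p_1=1/(n+1)$ and on the order statistics $S_{(1)}\le\cdots\le S_{(n+1)}$ of the augmented set, the indicators $I\{p_2\ge\lambda\},\ldots,I\{p_{m_0}\ge\lambda\}$ are i.i.d.\ $\mathrm{Ber}(1-S_{(K)})$, so $A_0\sim\mathrm{Binom}(m_0-1,\,1-S_{(K)})$. The inverse-binomial inequality $\E[1/(1+Y)]\le 1/(kp)$ (Lemma~\ref{lem:inverse_binomial}) then gives $\E[1/(1+A_0)\mid\cdots]\le 1/\{m_0(1-S_{(K)})\}$, and a Beta-distribution calculation yields $\E[1/(1-S_{(K)})]=1/(1-\lambda)$ exactly when $\lambda=K/(n+1)$. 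This is where the $(1-\lambda)$ factor is cancelled—not by a summation-by-parts over the grid, but by this explicit moment computation. Your proposal does not contain any analogue of this step, and without it the argument does not close.
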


\section{Calibration-conditional conformal p-values} \label{sec:simult_conformal}

\subsection{Warm up: analyzing the false positive rate} \label{sec:warmup-fpr}

Having noted that conformal inferences hold in theory only marginally over the calibration data, the first question one may ask is: how bad can these inferences be conditional on a particular calibration set?
We will address this question by developing high-probability bounds for the conditional deviation from uniformity of marginal p-values, starting here from the simplest case of pointwise bounds. 
The purpose of a pointwise bound is to control the probability that a null p-value (corresponding to a true inlier) is smaller than $\alpha$, conditional on $\mathcal{D}$, for some {\em fixed} threshold $\alpha \in (0,1)$.
In other words, we wish to understand the conditional false positives rate (FPR) corresponding to the threshold $\alpha$,
\begin{equation} \label{eq-fpr}
    \mathrm{FPR}(\alpha; \mathcal{D}) \defeq \P\left[\hat{u}^\marg(X_{2n+1}) \leq \alpha \mid \mathcal{D}\right],
\end{equation}
beyond what we know from the marginal guarantee in~\eqref{eq:marg_validity_def}, which is $\E\left[\mathrm{FPR}(\alpha; \mathcal{D})\right] \le \alpha$.
The quantity in~\eqref{eq-fpr} can be studied precisely with existing results due to~\cite{vovk2012conditional}. We revisit this topic here because it serves as an intuitive introduction to the more involved high-probability bounds that we will propose later. 

{Looking at the definition of $\hat{u}^\marg(X)$ in~\eqref{eq:marginal-pvals-def}}, we see that, if $\hat{s}(X)$ has a continuous distribution,
\begin{equation*}
    \mathrm{FPR}(\alpha; \mathcal{D}) = F\left(\hat{F}_n^{-1}\left(\frac{(n + 1) \alpha}{n}\right)\right),
\end{equation*}
{where $F$ and $\hat{F}_n$ are, respectively, the true and empirical (evaluated on the calibration data) CDF of $\hat{s}(X)$.}
Therefore, the deviation of $\mathrm{FPR}(\alpha; \mathcal{D})$ {(a random variable depending on $\mathcal{D}$)} from $\alpha$ depends on the quality of $\hat{F}_n^{-1}((n+1)\alpha/n)$ as an approximation of $F^{-1}(\alpha)$, which can be understood through classical results for the order statistics of uniform variables.

\begin{prop}[Pointwise FPR of marginal conformal p-values, from~\cite{vovk2012conditional}] \label{prop:dist_coverage}
Let $\ell = \lfloor (n+1)\alpha \rfloor$. If $\hat{s}(X)$ is continuously distributed, $\mathrm{FPR}(\alpha; \mathcal{D})$ follows a $\betadist(\ell, n + 1 - \ell)$ distribution.
\end{prop}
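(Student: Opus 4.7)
The plan is to condition on the calibration data and then rewrite the event $\{\hat{u}^\marg(X_{2n+1}) \le \alpha\}$ as a simple order statistic comparison, so that the probability integral transform reduces the problem to the standard distributional fact about uniform order statistics.

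First I would unpack the definition of $\hat{u}^\marg$ given in \eqref{eq:marginal-pvals-def}. Writing $N \defeq |\{i \in \Dcal : \hat{s}(X_i) \le \hat{s}(X_{2n+1})\}|$, the event $\hat{u}^\marg(X_{2n+1}) \le \alpha$ is equivalent to $1 + N \le (n+1)\alpha$. Since $N$ is integer valued, this is $N \le \lfloor (n+1)\alpha \rfloor - 1 = \ell - 1$. Let $S_{(1)} < \dots < S_{(n)}$ denote the order statistics of the calibration scores $\{\hat{s}(X_i)\}_{i \in \Dcal}$ (the strict inequalities hold almost surely by continuity of the distribution of $\hat{s}(X)$). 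The inequality $N \le \ell - 1$ holds if and only if $\hat{s}(X_{2n+1}) < S_{(\ell)}$, because strictly fewer than $\ell$ calibration scores lie at or below $\hat{s}(X_{2n+1})$.

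Next I would take the conditional probability given $\mathcal{D}$. Since $X_{2n+1}$ is independent of $\mathcal{D}$ and $\hat{s}(X_{2n+1})$ has CDF $F$,
\begin{equation*}
\mathrm{FPR}(\alpha; \mathcal{D}) = \P\bigl[\hat{s}(X_{2n+1}) < S_{(\ell)} \,\big|\, \mathcal{D}\bigr] = F(S_{(\ell)}),
\end{equation*}
where the final equality uses continuity of $F$ to drop the strict inequality. Now apply the probability integral transform: the variables $U_i \defeq F(\hat{s}(X_i))$ for $i \in \Dcal$ are i.i.d. $\Unif([0,1])$, and since $F$ is nondecreasing, $F(S_{(\ell)}) = U_{(\ell)}$, the $\ell$-th order statistic of $n$ i.i.d. uniforms. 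The classical fact that $U_{(\ell)} \sim \betadist(\ell, n+1-\ell)$ then yields the claim.

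The steps are all short; the only place that needs real care is the accounting in the first step, namely translating the inequality $1 + N \le (n+1)\alpha$ into the correct floor $\ell = \lfloor (n+1)\alpha \rfloor$ and correctly identifying that the comparison with $S_{(\ell)}$ is strict. The continuity hypothesis on $\hat{s}(X)$ is used twice, once to rule out ties among the calibration scores (so that the order statistics are well defined and the counting argument is clean) and once to replace $F(S_{(\ell)}-)$ by $F(S_{(\ell)})$. Everything else is a direct appeal to the distribution of uniform order statistics.
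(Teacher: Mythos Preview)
Your argument is correct and matches the paper's approach: the paper does not give a full proof but simply records the identity $\mathrm{FPR}(\alpha;\mathcal{D}) = F\bigl(\hat{F}_n^{-1}((n+1)\alpha/n)\bigr)$ and appeals to the classical Beta distribution of uniform order statistics, citing \cite{vovk2012conditional}. Your write-up is a clean, complete version of exactly this reduction.
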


Figure~\ref{fig:beta_curves} visualizes the FPR distribution from Proposition~\ref{prop:dist_coverage}, due to~\cite{vovk2012conditional}, for different values of the calibration set size. This shows precisely how a smaller $\Dcal$ makes marginal p-values more conservative on average, but also more likely to be overly liberal on occasion.
For example, we can see there is a non-negligible probability that $\mathrm{FPR}(0.1; \mathcal{D}) > 0.15$ with 100 calibration points, whereas it seems very unlikely that $\mathrm{FPR}(0.1; \mathcal{D}) > 0.12$ with 1600 calibration points. However, it is still quite possible that $\mathrm{FPR}(0.01; \mathcal{D}) > 0.015$ even with 1600 calibration points.
In general, Proposition~\ref{prop:dist_coverage} implies the coefficient of variation (relative spread) of the FPR is approximately proportional to $(|\mathcal{D}^{\mathrm{cal}}| \alpha)^{-1/2}$.
While this result is informative and it is broadly relevant to the issue of how to best choose the number of calibration data points for split-conformal inference~\cite{sesia2020comparison}, it is limited for our purposes. In fact, it provides only a pointwise bound---it takes $\alpha$ as fixed---whereas uniform bounds are needed to construct conditionally valid p-values that can be safely used with any multiple-testing procedure, as discussed in the next section.

\begin{figure}[!htb]
    \centering
    \includegraphics[width = 0.85\textwidth]{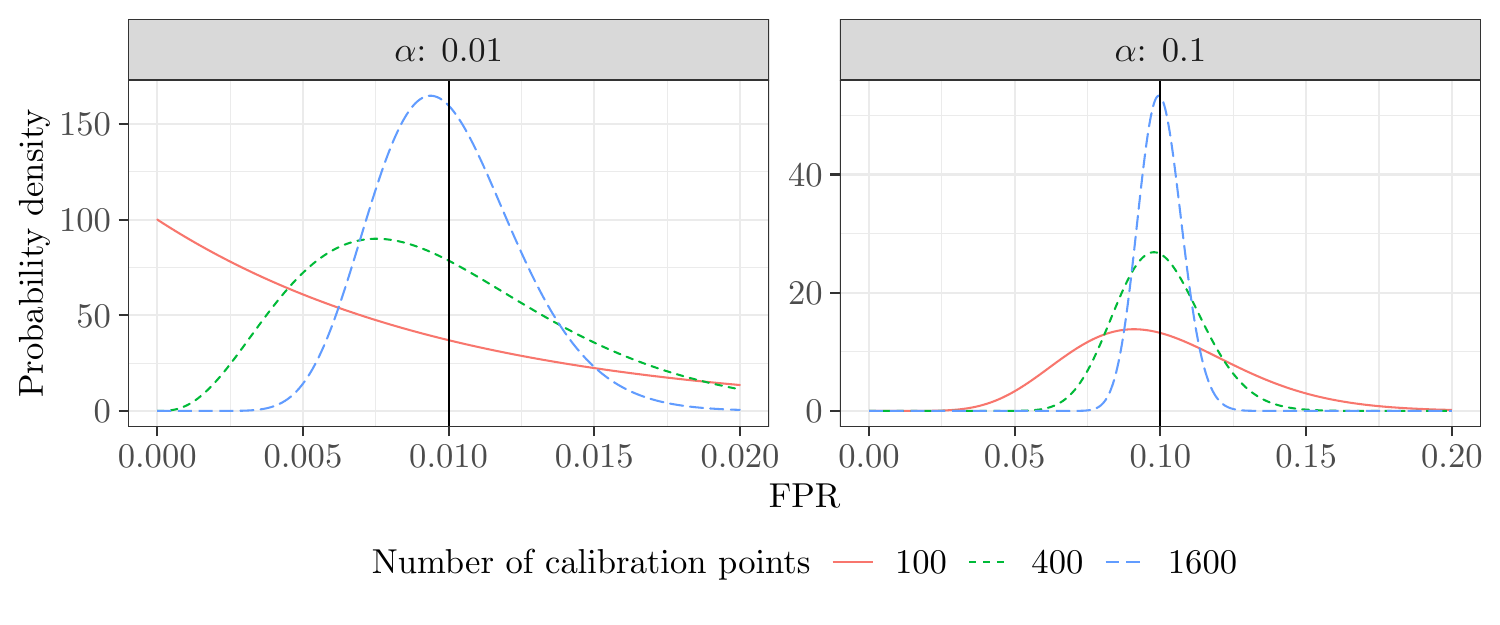}
    \caption{Distribution of the false positive rate obtained by thresholding marginal conformal p-values at levels $\alpha=0.01$ and $\alpha=0.1$, as a function of the number of calibration points.}
    \label{fig:beta_curves}
\end{figure}

\subsection{A generic strategy to adjust marginal conformal p-values}

Proposition~\ref{prop:dist_coverage} implies marginal conformal p-values may be anti-conservative conditional on $\mathcal{D}$. Therefore, in the language of~\eqref{eq:pval_form}, our goal is to find an adjustment function leading to conditionally valid p-values, i.e.,  satisfying \eqref{eq:cond_validity_def}. The following theorem suggests a generic strategy through a simultaneous upper confidence bound for order statistics.

\begin{thm}[Conditional p-value adjustment]\label{thm:generic}
  Let $U_1, \ldots, U_n\stackrel{\mathrm{\mathrm{i.i.d.}}}{\sim}\Unif([0, 1])$, with order statistics $U_{(1)}\le U_{(2)}\le \ldots \le U_{(n)}$, and fix any $\delta \in (0,1)$. Suppose $0\le b_1\le b_2\le \ldots \le b_{n}\le 1$ {are} $n$ reals such that
  \begin{equation}
    \label{eq:confidence_band}
    \P\left[ U_{(1)}\le b_{1}, \ldots, U_{(n)}\le b_{n} \right] \ge 1 - \delta.
  \end{equation}
    Let also {$b_{0} = 0, b_{n+1} = 1$, and} $h: [0, 1]\mapsto [0, 1]$ be a piece-wise constant function such that
    \begin{align} \label{eq:piecewise-f}
      h(t) = b_{\lceil (n + 1)t\rceil}, \,\, t\in [0, 1].
    \end{align}
    Then, $\hat{u}^\cond = h\circ \hat{u}^\marg$ satisfies \eqref{eq:cond_validity_def}, i.e., $\hat{u}^\cond(X_{2n+1})$ is a calibration-conditional valid p-value. 
\end{thm}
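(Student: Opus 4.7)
The plan is to compute the conditional distribution $\P[\hat{u}^\cond(X_{2n+1}) \le t \mid \mathcal{D}]$ explicitly on a single high-probability event that does not depend on $t$, and then verify that this conditional CDF is bounded by $t$ on that event. The key observation is that, because $\hat{s}(X)$ is continuous, applying the probability integral transform to the calibration scores produces exact i.i.d.\ uniforms, so the random behaviour of $\mathrm{FPR}(\alpha;\mathcal{D})$ as a function of the threshold $\alpha$ reduces to the behaviour of the uniform order statistics $U_{(1)}\le\cdots\le U_{(n)}$ governed by \eqref{eq:confidence_band}.

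Concretely, I would first note that $\hat{u}^\marg(X_{2n+1})$ takes values in $\{1/(n+1),\ldots,(n+1)/(n+1)\}$ and that, by continuity of $\hat{s}(X)$, for every $k\in\{1,\ldots,n+1\}$,
\[
\P\!\left[\hat{u}^\marg(X_{2n+1})\le \tfrac{k}{n+1}\,\Big|\,\mathcal{D}\right]=U_{(k)},
\]
where $U_i=F(\hat{s}(X_{n+i}))$ and we set $U_{(n+1)}=1$. This is because the left-hand event is equivalent to $\hat{s}(X_{2n+1})<s_{(k)}$ (the $k$-th order statistic of calibration scores), and $F$ is the conditional CDF of $\hat{s}(X_{2n+1})$. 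Second, since $h$ is piecewise constant with $h(k/(n+1))=b_k$, we have $\hat{u}^\cond(X_{2n+1})=b_K$ where $K=(n+1)\hat{u}^\marg(X_{2n+1})$. For any $t\in(0,1)$, define $k^*(t)=\max\{k\in\{0,1,\ldots,n+1\}:b_k\le t\}$; the conventions $b_0=0$ and $b_{n+1}=1$ force $0\le k^*(t)\le n$, and monotonicity of $(b_k)$ gives $\hat{u}^\cond\le t\iff K\le k^*(t)$. Therefore
\[
\P\!\left[\hat{u}^\cond(X_{2n+1})\le t\,\Big|\,\mathcal{D}\right]=\begin{cases} 0 & k^*(t)=0,\\ U_{(k^*(t))} & k^*(t)\ge 1.\end{cases}
\]

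Third, I would introduce the event $\event=\{U_{(i)}\le b_i\text{ for all }i=1,\ldots,n\}$, which by \eqref{eq:confidence_band} satisfies $\P[\event]\ge 1-\delta$. On $\event$ the verification of \eqref{eq:cond_validity_def} is immediate and uniform in $t$: if $k^*(t)=0$ then the conditional probability is $0\le t$, and if $k^*(t)\ge 1$ then by $\event$ and the definition of $k^*(t)$,
\[
\P\!\left[\hat{u}^\cond(X_{2n+1})\le t\,\Big|\,\mathcal{D}\right]=U_{(k^*(t))}\le b_{k^*(t)}\le t.
\]
Thus \eqref{eq:cond_validity_def} holds whenever $\event$ occurs, and taking probabilities gives the claim. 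There is no real obstacle; the only subtlety is the bookkeeping around the piecewise-constant $h$ and the boundary indices, which the conventions $b_0=0$, $b_{n+1}=1$ (and $U_{(0)}=0$) handle cleanly. The argument is essentially a deterministic monotone reduction from the simultaneous order-statistic bound to conditional super-uniformity, with the randomness fully absorbed by the event $\event$.
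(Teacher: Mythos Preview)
Your proposal is correct and follows essentially the same approach as the paper's proof: both identify the conditional CDF $\P[\hat{u}^\cond(X_{2n+1})\le t\mid\mathcal{D}]$ with the appropriate uniform order statistic, then bound it by $b_i\le t$ on the high-probability event from \eqref{eq:confidence_band}. The only cosmetic difference is that the paper reduces to checking $t\in\{b_1,\ldots,b_n,1\}$ (since these are the atoms of $\hat{u}^\cond$) whereas you handle arbitrary $t$ via $k^*(t)$, and the paper writes the argument with $F^-(\cdot)$ to cover the possibly non-continuous case via stochastic dominance before specializing---but under the paper's standing continuity assumption on $\hat{s}(X)$ the two proofs are effectively identical.
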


Figure~\ref{fig:bseq_illustration} illustrates the idea of Theorem~\ref{thm:generic}.
Here, we set $n=1000$ and  generate $100$ independent realizations of the order statistics $(U_{(1)}, \ldots, U_{(n)})$.
Each of the 100 blue curves corresponds {to} a sample path, plotted against the normalized index $i / n$. The black curve tracks the {theoretical} mean of $(U_{(1)}, \ldots, U_{(n)})$, while the orange and yellow curves correspond to two particular sequences of $b_i$ values derived from the generalized Simes inequality for $\delta=0.1$ and the DKWM~\citep{dvoretzky1956asymptotic, massart1990tight} inequality, detailed in the next subsection.
We observe relatively few paths cross the orange curve, and all crossings occur {at small indices.} This suggests the upper confidence bounds provided by Theorem~\ref{thm:generic} can be especially tight for lower indices of the order statistics, which is essential to obtain reasonably powerful CCV p-values for outlier detection. Of course, calibration-conditional validity still necessarily comes at some power cost.
For example, a marginal p-value of $\hat{u}^\marg(X) = 25 / (n + 1) \approx 0.025$ results in a CCV p-value of $h(25 / (n + 1)) = b_{25} \approx 0.0377$ in this case.

\begin{figure}
    \centering
    \includegraphics[width = 0.75\textwidth]{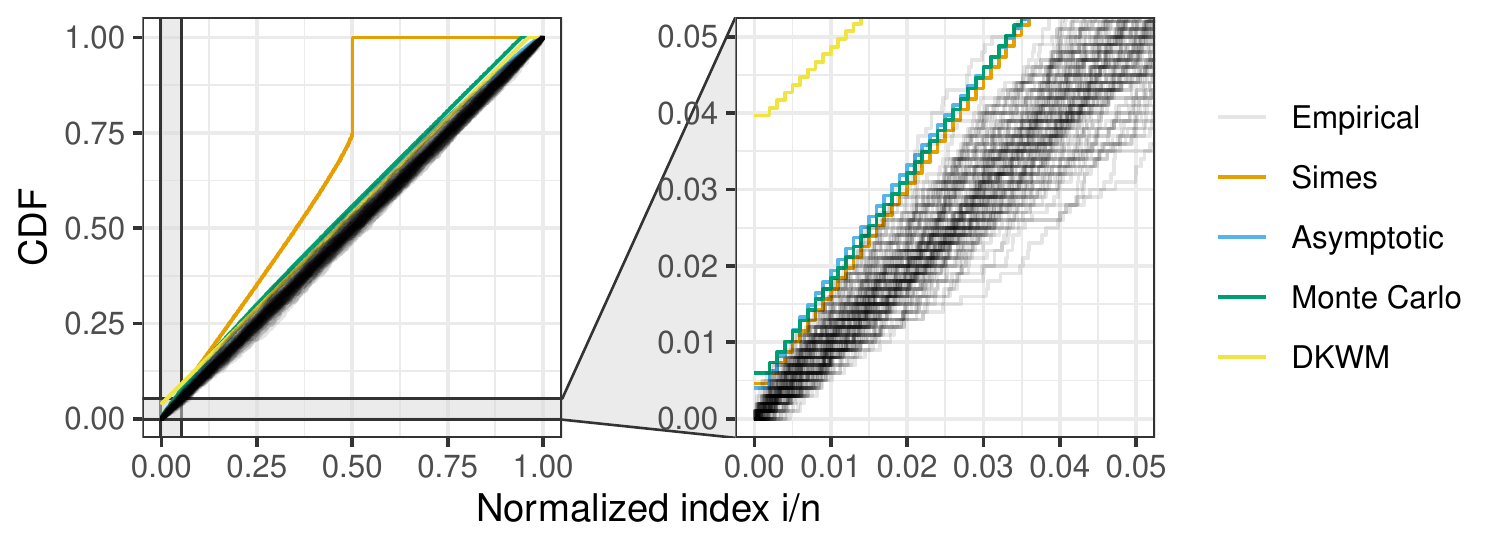}
    \caption{Illustration of Theorem~\ref{thm:generic} with $n=1000$ and $\delta=0.1$. The orange and yellow curves give the sequences derived by the generalized Simes inequality with $k = 500$ and the DKWM inequality, respectively. The blue and green curves (very close to each other) give the corresponding sequences obtained with the asymptotic and Monte Carlo adjustments described below. The right panel zooms in on small indices. }
    \label{fig:bseq_illustration}
\end{figure}

\subsection{Simes adjustment of marginal conformal p-values} \label{sec:gen-simes}

The larger p-values typically do not matter in multiple testing problems, as it is the small ones that determine which hypotheses are rejected.
Therefore, to maximize power, we would like the $b_i$ values in Theorem~\ref{thm:generic} to be as small as possible for low indices $i$, while we may be satisfied with letting $b_{i} = 1$ for large $i$.
The generalized Simes inequality yields a desirable class of $(b_1, \ldots, b_n)$ sequences with this property. 

\begin{prop}[Generalized Simes Inequality, from Equation (3.5) in~\cite{sarkar2008generalizing}]\label{prop:simes}
For any positive integer $k\le n$, the uniform bound~\eqref{eq:confidence_band} in Theorem~\ref{thm:generic} holds with
\begin{align} \label{eq:simes-sequence}
b^{\mathrm{s}}_{n + 1 - i} = 1 - \delta^{1/k}\left(\frac{i\cdots (i - k + 1)}{n\cdots (n - k + 1)}\right)^{1/k}, \qquad i = 1, \ldots, n.
\end{align}
\end{prop}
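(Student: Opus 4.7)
The plan is to first convert the statement into an equivalent bound on the lower tails of uniform order statistics, then recognize the quantity of interest as a single random variable which is stochastically dominated by $\Unif([0,1])$.

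Set $V_{(i)} \defeq 1 - U_{(n+1-i)}$ for $i = 1, \ldots, n$. By the symmetry of the uniform distribution, $V_{(1)} \le V_{(2)} \le \cdots \le V_{(n)}$ are again the order statistics of $n$ i.i.d.\ $\Unif([0,1])$ variables. The event $\{U_{(i)} \le b^{\simes}_i \text{ for all } i\}$ is then equivalent to $\{V_{(i)} \ge c_i \text{ for all } i\}$, where
\[
c_i \defeq 1 - b^{\simes}_{n+1-i} = \delta^{1/k} \left( \frac{i(i-1)\cdots(i-k+1)}{n(n-1)\cdots(n-k+1)} \right)^{1/k}.
\]
For $i < k$ the numerator contains a zero factor, so $c_i = 0$ and the constraint is vacuous. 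Hence it suffices to prove
\[
\P\left[\, \exists\, i \in \{k,\ldots,n\} :\, V_{(i)} < c_i \, \right] \;\le\; \delta.
\]
Raising to the $k$-th power and rearranging, this is the same as $\P[T < \delta] \le \delta$ for the single random variable
\[
T \;\defeq\; \min_{k \le i \le n}\, \frac{n(n-1)\cdots(n-k+1)}{i(i-1)\cdots(i-k+1)}\; V_{(i)}^{k}.
\]

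The core claim I would then prove is that $T$ is stochastically bounded below by (in fact, in the continuous case, equal in distribution to) a $\Unif([0,1])$ random variable. I would argue this by induction on $n-k \ge 0$ with $k$ fixed. For the base case $n=k$, $T = V_{(k)}^{k}$ and $V_{(k)} \sim \betadist(k,1)$, so $V_{(k)}^{k} \sim \Unif([0,1])$ exactly. For the inductive step, condition on $V_{(n)}$: given $V_{(n)} = v$, the variables $V_{(1)}, \ldots, V_{(n-1)}$ are distributed as $v \tilde V_{(1)}, \ldots, v \tilde V_{(n-1)}$ for independent uniform order statistics $\tilde V_{(j)}$ of size $n-1$. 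Plugging this into the definition of $T$ splits it as $\min(v^{k}\, T', V_{(n)}^{k})$, where $T'$ has, up to the explicit ratio $n^{(k)}/(n-1)^{(k)}$, the same form one step lower in $n$. Integrating over $V_{(n)} \sim \betadist(n,1)$ and combining with the inductive hypothesis on $T'$ should yield the claim.

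The main technical obstacle will be bookkeeping in the inductive step: the ratio $n^{(k)}/(n-1)^{(k)}$ that appears when we rescale inside the minimum has to be matched precisely against the contribution of the $i=n$ term $V_{(n)}^{k}$, so that the conditional-on-$V_{(n)}$ distribution of $T$ integrates to something uniform. If the induction proves cumbersome, a clean alternative is to use the exponential representation $V_{(i)} \stackrel{d}{=} S_i / S_{n+1}$ with $S_j = E_1 + \cdots + E_j$ and $E_j \stackrel{\mathrm{i.i.d.}}{\sim} \mathrm{Exp}(1)$; this separates the denominator $S_{n+1}^{k}$ (a pure scale factor distributed as $\mathrm{Gamma}(n+1,1)^{k}$) from $\min_{i \ge k} S_i^{k}/i^{(k)}$, and the claim reduces to a direct (though still delicate) computation of the joint law. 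In any case, the specific form of $c_i$ is calibrated exactly so that $T$ is uniform on $[0,1]$ in the continuous setting, which is why the inequality is tight.
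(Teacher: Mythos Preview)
The paper does not supply its own proof of this proposition; it is quoted directly from Sarkar (2008), Equation~(3.5), and used as a black box. So there is no in-paper argument to compare against.

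That said, your approach is sound and actually closes cleanly. After the symmetry substitution $V_{(i)}=1-U_{(n+1-i)}$ and the reduction to the single statistic
\[
T=\min_{k\le i\le n}\frac{n^{(k)}}{i^{(k)}}\,V_{(i)}^{k},
\]
the induction on $n$ (with $k$ fixed) goes through exactly. Conditional on $V_{(n)}=v$, write $W=v^{k}$ and $Z=\min\{r_nT',1\}$ with $r_n=n/(n-k)$ and $T'$ the analogue of $T$ at sample size $n-1$. The inductive hypothesis gives $T'\sim\Unif([0,1])$, so $Z$ has density $(n-k)/n$ on $[0,1)$ plus a point mass $k/n$ at $1$; meanwhile $W=V_{(n)}^{k}\sim\betadist(n/k,1)$ is independent of $Z$. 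A direct calculation then gives, for $t\in(0,1)$,
\[
\P[WZ\le t]=t^{n/k}+\frac{(n-k)}{k}\,t\int_{t}^{1}w^{n/k-2}\,dw=t^{n/k}+t-t^{n/k}=t,
\]
so $T\sim\Unif([0,1])$ exactly and $\P[T<\delta]=\delta$. The ``bookkeeping'' you flagged as the main obstacle is therefore just this one-line integral; the exponential-representation alternative is unnecessary.

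Two minor remarks. First, since the $U_i$ are assumed continuous uniforms, the hedged ``stochastically bounded below by'' is not needed: $T$ is exactly uniform and the inequality in~\eqref{eq:confidence_band} holds with equality. Second, in the inductive step you should state explicitly that $T'$ is independent of $V_{(n)}$, which follows from the standard fact that, given $V_{(n)}=v$, the lower order statistics are $v$ times an independent size-$(n-1)$ uniform sample.
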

The original motivation of~\cite{sarkar2008generalizing} was to compute thresholds for step-up procedure to achieve $k$-FWER control; there, the parameter $k$ was set to be a small integer.
Here, we exploit Proposition~\ref{prop:simes} differently, choosing {$k = n / 2$} so that the $b^{\mathrm{s}}_{i}$ values {with} lower indices $i$ are as small as possible while those {with larger indices} $i$ may be uninformative (note that $b^{\mathrm{s}}_{n - k + 2} = \ldots = b^{\mathrm{s}}_{n} = 1$).
In particular, our choice corresponds to
\begin{equation*}
{b^{\mathrm{s}}_{1} = 1 - \delta^{2/n} = 1 - \exp\left\{-\frac{2\log(1 / \delta)}{n}\right\} \approx \frac{2\log(1 / \delta)}{n}.}
\end{equation*}
Therefore, the smallest possible marginal p-value equal to $1/(n+1)$ would be mapped to $h(1 / (n + 1)) \approx 2\log(10) / n = 4.61 / n$, if $\delta = 0.1$, for example, since $\hat{u}^\cond(X) = h(\hat{u}^\marg(X))$.
If $n = 1000$, then $h(1 / (n + 1)) \approx 0.0046$, which is larger than the marginal p-value, but much smaller than what one would obtain from other standard uniform bounds. For example, the DKWM inequality~\citep{dvoretzky1956asymptotic, massart1990tight} would imply a result similar to that of Proposition~\ref{prop:simes} but with 
\begin{equation}\label{eq:bi_dkwm}
b_i^\dkwm = \min\{(i / n) + \sqrt{\log(2 / \delta) / 2n}, 1\};
\end{equation}
this would map the smallest possible marginal p-value to $1 / (n + 1) + \sqrt{\log(2 / \delta) / 2n} > 0.1$, in the above example.
The comparison between the generalized Simes inequality and the DKWM inequality is expanded in Appendix~\ref{app:comparison}, where we also consider an additional uniform bound based on the linear-boundary crossing probability for the empirical CDF~\citep{dempster1959generalized}.
This comparison confirms the generalized Simes inequality yields the most powerful adjustment for our multiple testing purposes.
In practice, we {find that $k = n/2$ works well}, as motivated empirically in Appendix~\ref{app:sim}.
(Note that larger values of $k$ would lower further the smallest possible adjusted p-value, but at the cost of raising other small p-values).

\subsection{Asymptotic adjustment of marginal conformal p-values} \label{sec:gen-asymptotic}

The Simes adjustment with $k=n/2$ leads to p-values satisfying~\eqref{eq:cond_validity_def} exactly; however, this causes the smallest possible marginal conformal p-values to be inflated by a factor of order $1/n$, and larger ones may be inflated even more.
A natural question at this point is whether this approach is statistically efficient or whether more powerful alternatives may be available to achieve~\eqref{eq:cond_validity_def}.
We begin to address this matter by comparing the Simes adjustment to an alternative {\it asymptotic} approach that provides a natural benchmark; this solution will be valid in the limit of large $n$ but does not guarantee~\eqref{eq:cond_validity_def} exactly in finite samples.
Recall Donsker's theorem, the classical result from empirical process theory stating that, in the large-$n$ limit, the rescaled difference between the true and the empirical CDFs of the calibration scores, respectively $F$ and $\hat{F}_n$, converges in distribution to a standard Brownian Bridge. Precisely,
$\sqrt{n} (\hat{F}_n - F) \overset{d}{\rightarrow} \mathbb{G}$,
where $\mathbb{G}$ is the Gaussian process on $[0,1]$ with
mean zero and covariance $\mathbb{E}[\mathbb{G}(t_1)\mathbb{G}(t_2)] = t_1 \wedge t_2 - t_1 t_2$, for all $t_1, t_2 \in [0,1]$. 
This result suggests the following {\em asymptotic adjustment} of marginal conformal p-values. 

As a starting point, note that $\sup_{t \in [0,1]}|\mathbb{G}(t)|$ follows the Kolmogorov distribution \cite{kolmogorov1933sulla}, whose $1-\delta$ quantile, namely $q^{\text{K}}_{\delta}$, can be computed. Therefore, a simple way of constructing approximately valid conditional conformal p-values would be to add $q^{\text{K}}_{\delta} / \sqrt{n}$ to the marginal p-values. Unfortunately, this naive solution would suffer from the same limitation of the DKWM approach mentioned in the previous section: it is a correction of constant size which is not very attractive for multiple testing because it is extremely conservative for small p-values of order $1/n$.
Instead, a more useful solution is suggested by the adaptive bound of \cite{eicker1979asymptotic}, which proved that the empirical process $\hat{V}_n(t)$ defined as
\begin{align*}
\hat{V}_n(t) = \sqrt{n} \frac{F(t) - \hat{F}_n(t)}{ \sqrt{\hat{F}_n(t)[1 - \hat{F}_n(t)]}}, \qquad t \in [0,1],
\end{align*}
satisfies $\lim_{n \to \infty} \P[ \sup_{t \in [0, 1]}\hat{V}_n(t) \leq c_n(\delta) ] \geq 1-\delta$,
where $c_n(\delta)$ is defined as
\begin{align*}
c_n(\delta) \defeq \frac{-\log[-\log(1-\delta)]+2\log\log n + (1/2) \log \log \log n - (1/2) \log\pi }{\sqrt{2 \log \log n}}.
\end{align*}
This yields a straightforward asymptotic simultaneous upper confidence bound for $F(t)$ and, in light of Theorem~\ref{thm:generic}, it suggests the following approximately valid adjustment of marginal conformal p-values: 
\begin{align} \label{eq:a-ccv}
\hat{u}^{\textnormal{(a-ccv)}} = h^{\mathrm{a}}\circ \hat{u}^\marg,
\end{align}
where $h^{\mathrm{a}}$ is the piece-wise constant function on $[0,1]$ defined such that $h^{\mathrm{a}}(t) = b^{\mathrm{a}}_{\lceil (n + 1)t\rceil}$, for $t\in [0, 1]$, with $b^{\mathrm{a}}_0=0$, $b^{\mathrm{a}}_{n+1}=1$, and 
\begin{align} \label{eq:a-sequence}
b^{\mathrm{a}}_i = \min\left\{\frac{i}{n} + c_n(\delta) \frac{\sqrt{i(n-i)}}{n \sqrt{n}}, 1\right\},
\quad i = 1,\ldots,n.
\end{align}
In Appendix \ref{subsubsec:fisher_asym_monotonicity}, we will show that $b_1^{\mathrm{a}} \le b_2^{\mathrm{a}}\le \ldots \le b_n^{\mathrm{a}}$, as required by Theorem \ref{thm:generic}. See Figure~\ref{fig:bseq_illustration} for a visualization of the simultaneous CDF bound corresponding to this adjustment function.
The smallest possible marginal p-value is mapped by this function to $h^{\mathrm{a}}(1 / (n + 1)) \approx (1+c_n(\delta))/n$. For example, if $\delta = 0.1$ and $n=1000$, this is approximately $4.09/n \approx 0.0041$, which is very similar to the corresponding constant $0.0046$ obtained with the Simes adjustment.
However, $\hat{u}^{\textnormal{(a-ccv)}}$ has the advantage of being reasonably tight for all p-values, not just the smallest ones, and thus it will generally allow for higher power compared to the Simes adjustment when $n$ is large.

\subsection{Monte Carlo adjustment of marginal conformal p-values} \label{sec:gen-mcmc}

Although the Simes adjustment is more conservative than the asymptotic one in the limit of large $n$, it has two distinct advantages in finite samples. First, it leads to p-values satisfying~\eqref{eq:cond_validity_def} exactly, with no asymptotic approximations. Second, the peculiar shape of its uniform empirical CDF envelope allows it to apply smaller corrections to relatively low p-values, possibly yielding higher power in multiple-testing applications; see Figure~\ref{fig:compare_adjustments} for an illustration.
These observations motivate the development of the following new type of adjustment function, which is based on {\em Monte Carlo} rather than analytical calculations and is designed to combine the strengths of the two aforementioned approaches.
In particular, the Monte Carlo solution proposed here is based on a uniform empirical CDF bound that is (a) theoretically valid in finite samples and (b) whose shape mimics that of the Simes approach for very small p-values while tracking the asymptotic envelope relatively closely for larger ones; see Figure~\ref{fig:compare_adjustments} for a preview.

\begin{figure}[!htb]
    \centering
    \includegraphics[width=0.75\textwidth]{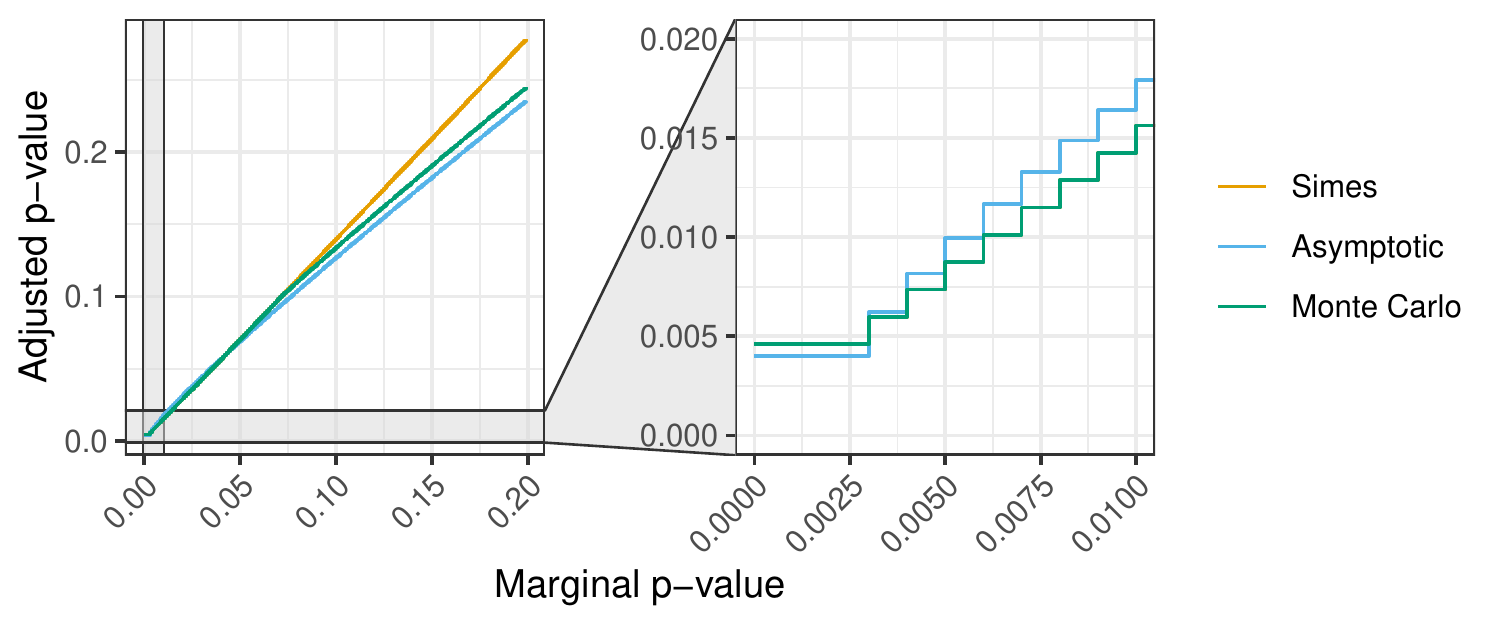}
    \caption{Comparison of different adjustment functions, with $n=1000$ and $\delta=0.1$. In the zoomed-in panel on the right-hand-side, the Simes (orange) and Monte Carlo (green) curves cannot be distinguished.}
    \label{fig:compare_adjustments}
\end{figure}

Having fixed any $n$ and $\delta$, denote by $h^{\mathrm{s}}: [0,1] \to [0,1]$ the Simes piece-wise constant function obtained by combining~\eqref{eq:piecewise-f} with~\eqref{eq:simes-sequence}, using $k=n/2$. Recall that this satisfies~\eqref{eq:confidence_band} exactly.
Let also $h^{\mathrm{a},\hat{\delta}}: [0,1] \to [0,1]$ denote the asymptotic piece-wise constant function obtained by combining~\eqref{eq:piecewise-f} with~\eqref{eq:a-sequence}, after replacing the pre-specific parameter $\delta$ with a variable $\hat{\delta}$, which can take any values in $(0,1)$. Note that it will be useful to keep the dependence of this function on $\hat{\delta}$ explicit. 
Recall that $h^{\mathrm{a},\hat{\delta}}$ satisfies~\eqref{eq:confidence_band} approximately if $n$ is large and $\hat{\delta} = \delta$.
Next, define a new piece-wise constant function $h^{\mathrm{m},\hat{\delta}}: [0,1] \to [0,1]$ as:
\begin{align} \label{mcmc-sequence}
  h^{\mathrm{m},\hat{\delta}}(t) = \min \left\{ h^{\mathrm{s}}(t), h^{\mathrm{a},\hat{\delta}}(t)\right\},  \qquad t\in [0, 1].
\end{align}
Note that this function can be conveniently written in the form of~\eqref{eq:piecewise-f} with a suitable choice of $b_1,\ldots,b_n$.
Now, the goal is to find the smallest possible $\hat{\delta}$, as a function of $n$ and $\delta$, such that the $b_1,\ldots,b_n$ sequence corresponding to the function $h^{\mathrm{m},\hat{\delta}}$ defined in~\eqref{mcmc-sequence} satisfies~\eqref{eq:confidence_band}. The problem can be solved with a bisection search for $\hat{\delta}$ on $(0,1)$, approximating the probability in~\eqref{eq:confidence_band} through a simple Monte Carlo simulation---it suffices to generate a sufficiently large number of independent random samples of size $n$ from a uniform distribution. A feasible solution always exists because $h^{\mathrm{m},\hat{\delta}}$ reduces to $h^{\mathrm{s}}$ as $\hat{\delta} \to 1$, and $h^{\mathrm{s}}$ satisfies~\eqref{eq:confidence_band}. This Monte Carlo simulation is not computationally expensive for reasonable values of $n$, as long as $\delta$ is not too small; for example, it takes a few seconds on a personal computer to obtain a very accurate estimate of $\hat{\delta}$ with $\delta=0.1$ and $n$ as large as $10,000$. Of course, if $n$ is extremely large, the Monte Carlo simulation is not even needed, as in that case one could just rely directly on the asymptotic adjustment.
See Figure~\ref{fig:bseq_illustration} for a visualization of the simultaneous CDF bound corresponding to this adjustment function.

While the Monte Carlo adjustment approaches the asymptotic one in the limit of large $n$, it may lead to more powerful p-values for multiple testing if $n$ is small. In fact, the Simes function $h^{\mathrm{s}}(t)$ can be lower than the asymptotic $h^{\mathrm{a},\delta}(t)$ for values of $t$ very close to 0, and $h^{\mathrm{m},\hat{\delta}}(t)$ inherits this ability of preserving very small p-values relatively intact, as shown in the right-hand-side panel of Figure~\ref{fig:compare_adjustments}.
At the same time, as it will be demonstrated shortly, the Monte Carlo adjustment tends to be more powerful than the Simes adjustment when testing a single hypothesis, or when dealing with many non-null hypotheses, because $h^{\mathrm{a},\delta}(t)$ is lower than $h^{\mathrm{s}}(t)$ for moderately small values of $t$; see the left-hand-side panel of Figure~\ref{fig:compare_adjustments}. Additional figures in Appendix~\ref{app:comparison} show that this relative advantage grows even larger as $n$ increases.

The Monte Carlo adjustment applied in this paper and implemented in the accompanying software package involves an additional modification to the expression in~\eqref{mcmc-sequence}, whose discussion has been postponed until now to simplify the explanation. In practice, $h^{\mathrm{m},\hat{\delta}}(t)$ is defined as in~\eqref{mcmc-sequence} only for $t \leq 1/2$; then, for $t>1/2$, the function is extended it as a tangent straight line because there would be little point in tightening the CDF envelope above $1/2$, as that region involves p-values unlikely to be rejected anyway. The advantage of this approach is that it decreases the boundary crossing probability of the empirical CDF for all $t>1/2$ compared to the asymptotic solution, allowing a slightly more liberal adjustment for the more interesting p-values below 1/2; see Figure~\ref{fig:compare_mcmc_opposite} in Appendix~\ref{app:comparison}.

\subsection{Power analyses of conformal p-value adjustments} \label{sec:gen-power}

As marginal p-values are smaller than calibration-conditional p-values, the latter tend to involve some loss of power, while the former are not always valid, depending on the multiple testing procedure utilized.
In this section, we would like to study the power gap between the marginal and calibration-conditional approaches within settings in which both types of conformal p-values lead to valid tests.
However, traditional power analyses require stronger modeling assumptions (i.e., the distributions of inliers and outliers) and the specification of additional algorithmic details (i.e., the form of the conformity score functions) compared to the framework followed in this paper; in fact, conformal p-values are extremely flexible and can be applied in fully non-parametric settings with any conformity score function.
We overcome this hurdle by analyzing the {\em effective level} of a test applied to calibration-conditional p-values as a proxy for a power analysis. 
More precisely, a test at level $\alpha$ applied to calibration-conditional p-values is generally equivalent to an analogous test at level $\alpha'$ applied to marginal p-values, for some $\alpha' < \alpha$.
Comparing $\alpha$ to $\alpha'$ gives a measure of the loss in power incurred by calibration-conditional p-values that is specific to a particular testing procedure, but requires no assumptions about either the machine learning model utilized to compute conformity scores or the inlier and outlier distributions. 
Thus, $\alpha'$ is studied below for different testing procedures.

\subsubsection{Testing a single hypothesis} \label{sec:power-single}

Consider the problem in which a marginal conformal p-value $\hat{u}^\marg(X_{2n+1})$ for a single test point $X_{2n+1}$ is available, and we wish to test whether $X_{2n+1}$ is an outlier. The level-$\alpha$ test based on the marginal p-value rejects when $\hat{u}^\marg(X_{2n+1}) \le \alpha$. We will compare this to a test based on a calibration-conditional p-value. That is, we take the marginal p-value and adjust it with a generic piece-wise constant function $h : [0,1] \to [0,1]$ in the form of~\eqref{eq:piecewise-f}. Then, we reject the null if $h \circ \hat{u}^\marg \leq \alpha$, or, equivalently, if
\begin{equation*}
  \hat{u}^\marg \leq {i^*(\alpha; h)}/{(n+1)},
\end{equation*}
% \begin{align} \label{eq:power-generic}
%   \text{Power}(\alpha; h) 
%   & = \P \left[ h \circ \hat{u}^\marg \leq \alpha \right]
%     = \P \left[ \hat{u}^\marg \leq \frac{i^*(\alpha; h)}{n+1} \right],
% \end{align}
where $i^*(\alpha; h) = \max \left\{ i \in \{1,\ldots,n\} : b_i \leq \alpha \right\}$ and $b_1, \ldots, b_n$ indicate the step positions defining $h$ in~\eqref{eq:piecewise-f}.
Since $\hat{u}^\marg$ is uniformly distributed, $i^{*}(\alpha; h) / (n + 1)$ is the effective level of the analogous marginal test. 

With the asymptotic adjustment $h^{\mathrm{a}}$, the threshold for the calibration-conditional test can be calculated explicitly by solving a quadratic equation, and the solution in the large-$n$ limit take the following form:
\begin{align*}
  \frac{i^*(\alpha; h^\asym)}{n+1}
  & = O\left(\frac{\alpha}{1+ c^2_n(\delta)/n}\right) = \alpha \left[ 1 - O \left(\frac{\log \log n}{n} \right) \right],
\end{align*}
because
\begin{align*}
  i^*(\alpha; h^{\mathrm{a}})
    = 
    \left\lfloor\frac{c^2_n(\delta) n + 2 n^2 \alpha - c_n(\delta) n \sqrt{c^2_n(\delta) + 4n \alpha - 4n \alpha^2}}{2 [c^2_n(\delta) + n]}\right\rfloor.
\end{align*}
In words, the cost in power of the asymptotic p-value adjustment from Section~\ref{sec:gen-asymptotic} can be understood by noting that the significance threshold $\alpha$ is effectively decreased by a factor of order $(\log \log n)/n$. Similarly, the effective $\alpha$-level with the DKWM adjustment $h^\dkwm$, given by \eqref{eq:bi_dkwm}, is $\alpha - O(1/\sqrt{n})$. By contrast, for the Simes adjustment, we can show the effective $\alpha$-level is strictly below $\alpha$ when $k = \lceil\zeta n\rceil$ for some $\zeta > 0$. 
In fact, using the concavity of the mapping $a(x) = \log(1 - 1/x)$, Jensen's inequality implies
\begin{equation}\label{eq:bi_simes}
b_{i}^{\simes} = 1 - \delta^{1/k}\exp\left\{\frac{1}{k}\sum_{\ell = n-k+1}^{n}a\left(\frac{\ell}{i - 1}\right)\right\}\ge 1 - \exp\left\{a\left(\frac{n - k / 2 + 1/2}{i - 1}\right)\right\} = \frac{i - 1}{n - k /2 + 1/2}.
\end{equation}
As a result,
\begin{equation}\label{eq:simes_alpha}
\frac{i^{*}(a; h^{\simes})}{n+1} \le  \alpha(1 - \zeta / 2) + o(1).
\end{equation}
% Note that such behaviour is almost optimal, up to the $\log \log n$ term, because even the smallest possible marginal conformal p-value cannot be below $1/n$. 
In this sense, the asymptotic and DKWM adjustment are nearly as efficient as the marginal test for a single hypothesis, though the former is more powerful, while the Simes adjustment is asymptotically inefficient.

Analogous threshold calculations for the Monte Carlo adjustments in the same setting cannot be performed analytically because $i^*(\alpha; h^{\mathrm{m},\hat{\delta}})$ no longer has a simple expression for the  sequences $b$ corresponding to those functions $h$. However, these analyses are easy to carry out numerically. 
Figure~\ref{fig:power_single} (a) summarizes the results of these power analyses by comparing the effective significance levels obtained with these three alternative adjustment functions, as a function of $n$. 
The results show the Monte Carlo adjustment behaves very similarly to the efficient asymptotic solution in the limit of large $n$, but it can be even more powerful when the sample size is small thanks to the shape of its CDF envelope, which reduces the inflation of smaller p-values. The Simes adjustment behaves similarly to the Monte Carlo one when the sample size is small, but it is not efficient in the large-$n$ limit. In that case, the effective significance level for testing a single hypothesis does not converge at all to the nominal level $\alpha$ in the large-$n$ limit.
% This can be understood by noting that although the shape of the CDF envelope utilized by the Simes adjustment is tight for the smallest possible marginal conformal p-values, in the large-$n$ limit it tends to become extremely loose for those moderately small but highly relevant p-values of order $\alpha$.

\begin{figure}[!htb]
    \centering
    \includegraphics[width=\textwidth]{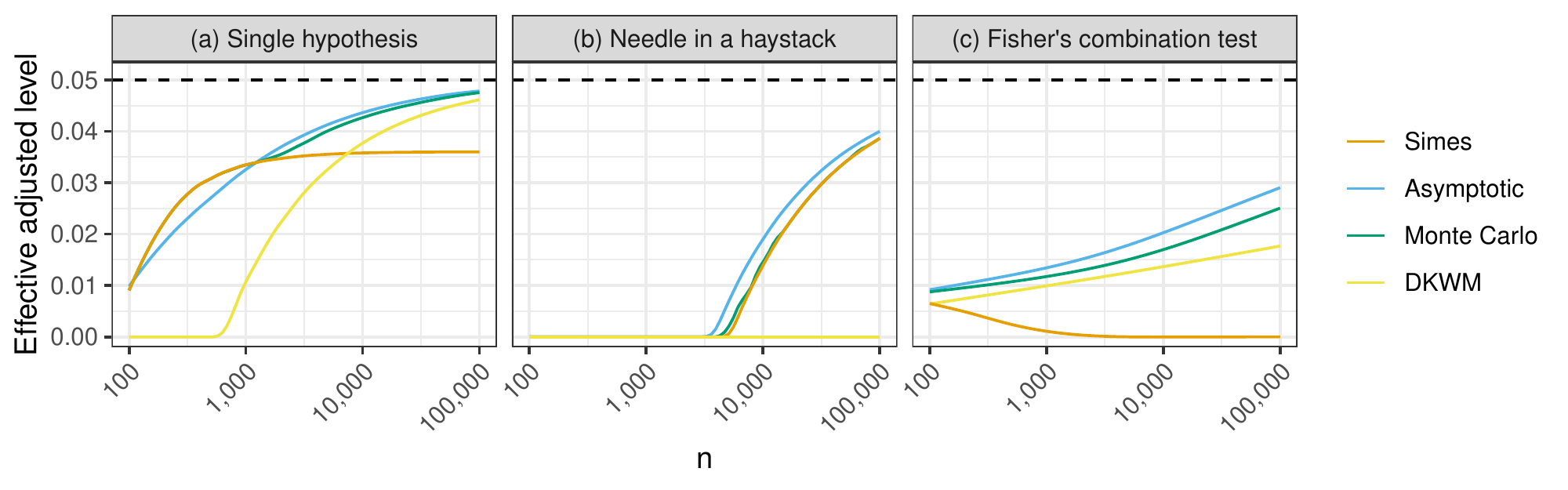}
    \caption{Power analysis of different adjustments for marginal conformal p-values under 3 alternative settings. The effective level resulting from the p-value adjustment for a test at nominal level $\alpha=0.05$ (dashed horizontal line) is plotted as a function of the number of calibration samples, assuming the number of test points $m$ grows as $\sqrt{n}$. (a) Testing a single hypothesis. (b) FWER control with a single strong signal (here the values for DKWM are all equal to 0). (c) Testing a global null with Fisher's combination test.
}
    \label{fig:power_single}
\end{figure}

\subsubsection{Needle in a haystack}

Consider a multiple testing problem in which there are $m$ possible outliers to be tested: the first one of these data points, $X_{2n+1}$, is an outlier (a false null hypothesis), while the remaining $m-1$, $X_{2n+2},\ldots,X_{2n+m}$, are inliers (true nulls). The goal is to identify the outlier, controlling the family-wise error rate below $\alpha$. To further simplify the problem, imagine the signal strength for the true outlier is so high that the marginal conformal p-value for this point takes its minimal value with probability one:
\begin{align*}
\hat{u}^\marg_{2n+1} = \frac{1}{n+1}.
\end{align*}
Then, we reject the null if the adjusted p-value for the outlier is below the Bonferroni level:
\begin{align} \label{eq:power-needle-generic}
  h \circ \hat{u}^\marg_{2n+1} \leq \frac{\alpha}{m}.
\end{align}
In the case of the asymptotic adjustment function, the rejection event can be written as:
% \begin{align*}
%   \text{Power}(\alpha; h^{\mathrm{a}}) 
%   & = \P \left[ h^{\mathrm{a}} \circ \hat{u}^\marg_{2n+1} \leq \frac{\alpha}{m} \right]
% %  & = \P \left[ \frac{1}{n} + c_n(\delta) \frac{\sqrt{n-1}}{n \sqrt{n}}  \leq \frac{\alpha}{m} \right] \\
%   = \P \left[ \hat{u}^\marg_{2n+1} + \frac{1}{n(n+1)} + c_n(\delta) \frac{\sqrt{n-1}}{n \sqrt{n}}  \leq \frac{\alpha}{m} \right].
% \end{align*}
\begin{align*}
  \left\{ h^{\mathrm{a}} \circ \hat{u}^\marg_{2n+1} \leq \frac{\alpha}{m} \right\}
%  & = \P \left[ \frac{1}{n} + c_n(\delta) \frac{\sqrt{n-1}}{n \sqrt{n}}  \leq \frac{\alpha}{m} \right] \\
   \iff 
   \left\{\hat{u}^\marg_{2n+1} + \frac{1}{n(n+1)} + c_n(\delta) \frac{\sqrt{n-1}}{n \sqrt{n}}  \leq \frac{\alpha}{m}\right\}.
\end{align*}
Thus, the calibration-conditional test at level $\alpha$ is equivalent to the marginal test at level $(\alpha + \Delta \alpha)/m$, where
\begin{align*}
  \Delta \alpha 
  & = - \frac{m}{n} \left( \frac{1}{n+1} + c_n(\delta) \sqrt{\frac{n-1}{n}} \right)
    = - \frac{m}{n} \sqrt{2 \log \log n} (1 + o(1)).
\end{align*}
In this regime, the calibration-conditional and marginal tests only differ by a $\sqrt{\log\log n}$ factor. 
%because the smallest possible marginal conformal p-value is of order $1/n$.

In the case of the Simes adjustment with $k=n/2$, the rejection event is
% \begin{align*}
%   \text{Power}(\alpha; h^{\mathrm{s}}) 
%   & = \P \left[ h^{\mathrm{s}} \circ \hat{u}^\marg_{2n+1} \leq \frac{\alpha}{m} \right]
%   \approx \P \left[ \hat{u}^\marg_{2n+1} + \frac{2 \log(1/\delta)}{n} - \frac{1}{n+1} \leq \frac{\alpha}{m} \right],
% \end{align*}
\begin{align*}
   \left\{h^{\mathrm{s}} \circ \hat{u}^\marg_{2n+1} \leq \frac{\alpha}{m}\right\}
  \iff \left\{\hat{u}^\marg_{2n+1} + \frac{2 \log(1/\delta)}{n}(1 + o(1)) - \frac{1}{n+1} \leq \frac{\alpha}{m}\right\},
\end{align*}
which implies the equivalent level for the test is $(\alpha + \Delta \alpha)/m$, with
\begin{align*}
  \Delta \alpha 
  & = - \frac{m}{n} \left(2\log(1/\delta)-1 + o(1)\right).
\end{align*}
Similarly, for the DKWM adjustment, it is easy to see that
\[\Delta \alpha = -\frac{m}{\sqrt{n}}\left(\sqrt{\frac{\log(2/\delta)}{2}} + o(1)\right).\]

Therefore, in the large-$n$ limit, the Simes adjustment is even more powerful than the asymptotic correction for this problem because it does not involve the slightly sub-optimal $\sqrt{\log \log n}$ factor. Unsurprisingly, the large additive inflation by the DKWM adjustment results in a large power loss.
Although the Monte Carlo method is not as amenable to analytical calculations, it is easy to verify numerically that its power is almost the same as that of the asymptotic correction in this setting; see Figure~\ref{fig:power_single} (b).
Interestingly, the numerical power analysis in Figure~\ref{fig:power_single} (b) also highlights that the asymptotic adjustment, although slightly less powerful in the large-$n$ limit, tends to be more powerful than the Simes adjustment for this problem. In fact, $\sqrt{\log \log n} < \left(2\log(1/\delta)-1\right)$ unless $n$ is extremely large or $\delta$ is extremely small.

%equivalent, can only reject with m smaller

\subsubsection{Fisher's combination test of the global null}

Consider a multiple testing problem in which there are $m$ test data points $X_{2n+1}, \ldots, X_{2n+m}$ and none of them are outliers. The goal is to test the global null by applying Fisher's combination test to conformal p-values modified by an adjustment function $h$, for different choices of the latter. Intuitively, the effective $\alpha$-level of this test will depend on the expected value of Fisher's combination statistic under the null---a smaller $\E_{H_0}[-\log (h\circ \hat{u}^\marg)]$ yields a more conservative test. Therefore, we begin by deriving this quantity analytically for the asymptotic, DKWM, and Simes adjustments; see Appendix \ref{sec:power_fisher} for further details. 
\begin{thm}[Expected value of Fisher's combination statistic with conformal p-values] \label{theorem:fisher-mean}
Fixing $\delta > 0$ and letting $n\rightarrow \infty$, 
\begin{enumerate}[(a)]
\item $\displaystyle\E_{H_0}[-\log (h^\asym\circ \hat{u}^\marg)] = 1 - \frac{\pi}{2}\frac{c_n(\delta)}{\sqrt{n}} + O\lb \frac{(\log n)(\log \log n)}{n}\rb.$
\item $\displaystyle\E_{H_0}[-\log (h^\dkwm\circ \hat{u}^\marg)] = 1 - b_n(\delta)\log\lb\frac{e}{b_n(\delta)}\rb + O\lb\frac{\log n}{n}\rb, \,\, \text{where} \,\,b_{n}(\delta) = \sqrt{\frac{\log(2 / \delta)}{2n}}.$
\item   Assume that $k = \lceil\zeta n\rceil$ for some $\zeta > 0$. Then
  \[\E_{H_0}[-\log (h^\simes \circ \hat{u}^\marg)] \le 1 - \zeta - (1 - \zeta)\log (1 - \zeta) + O\lb\frac{\log n}{n}\rb.\]
\end{enumerate}
\end{thm}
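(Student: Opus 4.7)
The common starting point is the observation that, under $H_0$, $\hat{u}^\marg(X_{2n+1})$ is uniformly distributed on $\{1/(n+1), 2/(n+1), \dots, 1\}$, and the piece-wise constant structure of any adjustment $h$ from Theorem~\ref{thm:generic} gives $h(\hat{u}^\marg(X_{2n+1})) = b_{\lceil(n+1)\hat{u}^\marg\rceil}$. Hence in all three cases
\begin{equation*}
\E_{H_0}\bigl[-\log (h\circ \hat{u}^\marg)\bigr] = \frac{1}{n+1}\sum_{i=1}^{n+1}\bigl(-\log b_i\bigr) = \frac{1}{n+1}\sum_{i=1}^{n}\bigl(-\log b_i\bigr),
\end{equation*}
since $b_{n+1}=1$. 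The remainder of the proof is therefore a careful asymptotic evaluation of this finite sum, specialized to the sequences $b_i^\asym$, $b_i^\dkwm$ and $b_i^\simes$.

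For part (b) the calculation is the cleanest: because $b_i^\dkwm = \min\{i/n + b_n(\delta), 1\}$, we approximate the sum by the Riemann integral $\int_0^{1-b_n(\delta)}\!-\log(x+b_n(\delta))\,dx$, which evaluates in closed form to $1 - b_n(\delta)\log(e/b_n(\delta))$. The integrand has bounded variation $O(\log n)$ on $[0,1]$, so the Riemann-sum error is $O(\log n / n)$, matching the claim. Part (c) follows from the Jensen-based lower bound $b_i^\simes \ge (i-1)/(n-k/2+1/2)$ recorded in \eqref{eq:bi_simes}, which is valid on $2\le i\le n-k+1$ (outside this range either $b_i^\simes=1$ or the bound is trivial). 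Substituting this into $-\log b_i^\simes$ and summing telescopes into $(n-k)\log(n-k/2+1/2)-\log((n-k)!)$, which Stirling's formula reduces to $(n-k)\bigl[1+\log\bigl((n-k/2+1/2)/(n-k)\bigr)\bigr]+O(\log n)$. The boundary term $i=1$ contributes $-\log(1-\delta^{1/k}) = O(\log n)$, absorbed into the remainder. Dividing by $n+1$ and using $k = \lceil\zeta n\rceil$ yields an upper bound of $1-\zeta-(1-\zeta)\log(1-\zeta) + (1-\zeta)\log(1-\zeta/2) + O(\log n/n)$; since the extra term $(1-\zeta)\log(1-\zeta/2)$ is negative, it can be dropped, giving the stated inequality.

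Part (a) is the main technical step. Write $b_i^\asym = (i/n)\bigl(1 + (c_n(\delta)/\sqrt{n})\sqrt{(n-i)/i}\bigr)$, so that
\begin{equation*}
-\log b_i^\asym = -\log(i/n) - \log\Bigl(1 + \tfrac{c_n(\delta)}{\sqrt{n}}\sqrt{(n-i)/i}\Bigr).
\end{equation*}
Averaging the first piece recovers $1 + O(\log n/n)$ by Stirling. For the second, pass to the Riemann approximation $\int_0^1 -\log\bigl(1 + (c_n(\delta)/\sqrt{n})\sqrt{(1-x)/x}\bigr)dx$ and Taylor-expand $-\log(1+y)=-y + y^2/2 - \dots$. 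The linear term gives exactly $-(c_n(\delta)/\sqrt{n})\int_0^1\sqrt{(1-x)/x}\,dx = -(\pi/2)\, c_n(\delta)/\sqrt{n}$, via the substitution $x=\sin^2\theta$. The main obstacle is that the Taylor expansion fails near $x=0$ where the argument blows up; I would split the range at the threshold $x_* := c_n(\delta)^2/n$, where the argument of the logarithm becomes of order $1$. On $[x_*,1]$ the quadratic remainder contributes at most $(c_n(\delta)^2/n)\int_{x_*}^1 (1-x)/x\,dx = O(c_n(\delta)^2\log n / n) = O((\log\log n)(\log n)/n)$, using $c_n(\delta)=O(\sqrt{\log\log n})$. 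On $[0,x_*]$ one uses the direct estimate $-\log b_i^\asym = (1/2)\log(n/x) - \log c_n(\delta) + O(1)$ and integrates explicitly, which gives a contribution of the same order. Finally, bounded-variation estimates show the Riemann discretization error is only $O(\log n / n)$, which is swallowed. Collecting these pieces yields the advertised $1 - (\pi/2)\,c_n(\delta)/\sqrt{n} + O((\log n)(\log\log n)/n)$ expansion.
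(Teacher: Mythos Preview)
Your proposal is correct in all three parts, and for part (b) it is essentially identical to the paper's argument. Parts (a) and (c), however, take different routes.

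For part (c), the paper uses the cruder inequality $b_{n+1-i}^\simes \ge 1 - \delta^{1/k}\,i/n \ge 1 - i/n$ (obtained by bounding each factor in the Simes product by $i/n$), and then directly compares to $\int_{\zeta}^{1}-\log(1-x)\,dx = 1-\zeta-(1-\zeta)\log(1-\zeta)$. Your use of the sharper Jensen bound \eqref{eq:bi_simes} is also valid and in fact produces the tighter intermediate estimate $(1-\zeta)\bigl[1+\log\tfrac{1-\zeta/2}{1-\zeta}\bigr]$, which you correctly observe lies below the stated bound. The paper's route is slightly shorter; yours recovers more information.

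For part (a), the difference is more substantial. You decompose $-\log b_i^\asym = -\log(i/n) - \log(1 + a_n\sqrt{(n-i)/i})$, handle the first piece by Stirling, and treat the second by a Taylor expansion of $-\log(1+y)$ combined with a splitting near the singularity at $x=0$. This works, but the bookkeeping near $x_*$ is a bit delicate (you momentarily conflate the full term $-\log b_i^\asym$ with the second summand on $[0,x_*]$), and you also silently ignore the truncation $b_i^\asym = \min\{\cdot,1\}$, though the resulting discrepancy is only $O(c_n^4/n^2)$ and harmless. The paper avoids both issues by computing the antiderivative of $-\log g_n(x)$ exactly: integration by parts reduces it to $\int dx/(\sqrt{x(1-x)}+a_n(1-x))$, which the substitution $x=\sin^2\theta$ evaluates in closed form. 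This yields the constant $\pi/2$ directly from the $\arcsin$ term, with no Taylor expansion or splitting. Your approach is more modular and would generalize more readily to other adjustment shapes; the paper's is cleaner for this specific $g_n$.
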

All three adjustments yield conservative tests because $\E_{H_0}[\sum_{i=1}^{m}-2\log(h\circ \hat{u}^\marg(X_{2n+i}))] < 2m$ asymptotically. The gap is $O(m\sqrt{\log \log n} / \sqrt{n})$ for the asymptotic adjustment (the most efficient one in this case), $O(m\log n / \sqrt{n})$ for the DKWM adjustment, and $O(m)$ for the Simes adjustment (the least efficient one in this case). In Appendix \ref{sec:power_fisher}, we compute the effective $\alpha$-level for each adjustment in different regimes. As those derivations are lengthy, we summarize the results below. 
\begin{itemize}
\item For the asymptotic adjustment, the effective $\alpha$-level is $\alpha(1 + o(1))$ if $m = o(n / \log \log n)$, and $O(1 / \log^c n)$ for some constant $c$ when $m = \gamma n$ for some $\gamma \in (0, 1)$.
\item For the DKWM adjustment, the effective $\alpha$-level is $\alpha(1 + o(1))$ if $m = o(n / \log^2 n)$, and $\exp\{-O(\log^2 n)\}$ when $m = \gamma n$ for some $\gamma \in (0, 1)$.
\item For the Simes adjustment, the effective $\alpha$-level is $\exp\{-O(\min\{m, n\} / \log n)\}$ if $m / \log n\rightarrow \infty$.
\end{itemize}

In Figure~\ref{fig:power_single} (c), we compare the effective $\alpha$-levels computed numerically with $m = \sqrt{n}$, including also the theoretically intractable Monte Carlo adjustment. These results confirm the Simes method becomes extremely conservative for large $n$, as its effective level tends to $0$ instead of $\alpha$. By contrast, the Monte Carlo adjustment yields approximately the same effective significance threshold as the asymptotic method.

Finally, it is interesting to compare these power analyses for calibration-conditional p-values with the exact adjustment of Fisher's combination test from Theorem~\ref{thm:fisher}. 
Under a regime in which $m = \gamma n$ for some $\gamma \in (0, 1)$, it follows from~\eqref{eq:conditional_typeI} that Fisher's combination test applied to marginal conformal p-values is valid at level $\alpha$, conditional on the calibration data, if its nominal significance level is lowered by a factor that depends on $\delta$---the proportion of calibration data sets for which the test is allowed to be invalid---but remains constant with respect to $n$. By contrast, applying Fisher's combination test to calibration-conditional p-values results in an effective level $\alpha$ that at best {\em decreases} as $1/\text{polylog}(n)$, for the asymptotic adjustment.
Therefore, calibration-conditional p-values are not always optimal with Fisher's combination test, at least not compared to the ad-hoc correction of the latter presented in Theorem~\ref{thm:fisher} when $m = \gamma n$, but they have the advantage of flexibility. In fact, calibration-conditional p-values can be utilized by any multiple testing algorithm, including for example the BH procedure, whose power analysis is discussed next.

\subsubsection{Testing multiple hypotheses by the BH procedure} \label{sec:power-BH}

Consider a multiple testing problem in which there are $m$ test data points $X_{2n+1}, \ldots, X_{2n+m}$ and the goal is to detect outliers with FDR control. If the BH procedure is applied to the adjusted p-values, all hypotheses with $h\circ \hat{u}^\marg(X_{2n+i})\le \alpha R(\alpha; h) / m$ are rejected, where
\[R(\alpha; h) = \max\left\{r \in \{0,1,\ldots,m\}: \#\left\{i: h\circ \hat{u}^\marg(X_{2n+i})\le \frac{r\alpha}{m}\right\}\ge r\right\}.\]
As a benchmark, we consider the number of rejections obtained with the marginal p-values:
\[R_{\mathrm{marg}}(\alpha) = \max\left\{r \in \{0,1,\ldots,m\}: \#\left\{i: \hat{u}^\marg(X_{2n+i})\le \frac{r\alpha}{m}\right\}\ge r\right\}.\]

In the case of the asymptotic adjustment, 
\begin{equation}\label{eq:asym_ratio}
\frac{h^\asym (i/(n+1))}{i/(n+1)} \le \frac{n+1}{n}\left\{1 + c_{n}(\delta)\sqrt{\frac{n - i}{ni}}\right\}.
\end{equation}
This quantity is decreasing in $i$, implying that
 \[\max_{i}\frac{h^\asym (i/(n+1))}{i/(n+1)}\le \frac{n+1}{n}\left\{1 + c_{n}(\delta)\sqrt{\frac{n - 1}{n}}\right\}= \sqrt{2\log \log n} + o(1).\]
Therefore, all hypotheses rejected by the BH procedure applied to marginal p-values at a lower level $\alpha / (\sqrt{2\log \log n} + o(1))$ would be guaranteed to be rejected by the BH procedure applied to adjusted p-values, implying the effective FDR level for $h^\asym$ is at least $\alpha / (\sqrt{2\log\log n} + o(1))$. If $\sqrt{2\log \log n} <\!\!< \log m$, this is more powerful than the Benjamini-Yekutieli procedure \citep{benjamini2001control}, whose effective FDR level is $\alpha / (\log m + O(1))$. Further, the ratio given by \eqref{eq:asym_ratio} is $1 + o(1)$ if $i / \log \log n \rightarrow \infty$, implying that, in the limit of $R_{\mathrm{marg}}(\alpha) / \log \log n\rightarrow \infty$, all marginal rejections are also rejected by the BH procedure applied to adjusted p-values with the target FDR level $\alpha(1 + o(1))$. In summary, the cost of the asymptotic adjustment never exceeds $\sqrt{2\log \log n} + o(1)$, and it is negligible if the number of rejections made by the marginal BH procedure grows faster than $\log \log n$.

In the case of the DKWM adjustment, the maximal ratio between the adjusted and marginal p-values is as large as $O(\sqrt{n})$, though the ratio becomes $1 + o(1)$ when $i / \sqrt{n}\rightarrow 0$. Thus, unless the marginal BH procedure can reject many more than $\sqrt{n}$ hypotheses, the power cost of the DKWM adjustment will be much higher than that of the asymptotic adjustment.

In the case of the Simes adjustment, we can show that, if $k = \lceil \zeta n\rceil$ for some $\zeta \in (0, 1)$, the ratio between the adjusted and marginal p-values is bounded by a constant that depends on $\delta$ and $\zeta$. Analogous to \eqref{eq:bi_simes}, the concavity of $a(x)$ implies
\[b_{i}^{\simes} \le 1 - \delta^{1/k}\exp\left\{\frac{1}{2}\left( a\left(\frac{n}{i - 1}\right) +a\left(\frac{n-k+1}{i - 1}\right) \right)\right\} = 1 - \delta^{1/k}\sqrt{\left(1 - \frac{i-1}{n}\right)\left(1 - \frac{i-1}{n-k+1}\right)}.\]
Since $k = \lceil\zeta n\rceil$, 
\begin{align*}
\sqrt{\left(1 - \frac{i-1}{n}\right)\left(1 - \frac{i-1}{n-k+1}\right)} &= \sqrt{\left(1 - \frac{i}{n}\right)\left(1 - \frac{i}{(1-\zeta)n}\right)} + o\left(\frac{1}{n}\right)\\
& = 1 - \frac{2-\zeta}{2(1 - \zeta)}\frac{i}{n} + o\left(\frac{1}{n}\right),
\end{align*}
and
\[\delta^{1/k} = \exp\left\{-\frac{\log(1/\delta)}{k}\right\} = 1 - \frac{\log(1/\delta)}{\zeta n} + o\left(\frac{1}{n}\right);\]
above, all $o(1/n)$ terms are uniform over $i$. 
Then, 
\[b_{i}^\simes \le \frac{\log(1/\delta)}{\zeta n}
 + \frac{2-\zeta}{2(1 - \zeta)}\frac{i}{n}
 + o\left(\frac{1}{n}\right),\]
 and for any $i$,
 \[\frac{h^\simes(i / (n + 1))}{i / (n + 1)} \le \frac{\log(1/\delta)}{\zeta}
 + \frac{2-\zeta}{2(1 - \zeta)} + o\left(\frac{1}{n}\right).\]
Thus, the power cost of the Simes adjustment does not grow with $n$, which is more appealing compared to the asymptotic adjustment in the worst case. However, \eqref{eq:simes_alpha} indicates the cost is never negligible even if $R_{\mathrm{marg}}(\alpha)$ is large, consistent with the behaviour of the Simes adjustment observed in Section~\ref{sec:power-single} for the case of a single hypothesis tested without multiplicity corrections. Thus, the asymptotic adjustment (and the substantially similar Monte Carlo approach) can be expected to be more powerful in practical applications involving FDR control, as long as a reasonably large number of discoveries is expected.

\section{Extensions beyond conformal p-values} \label{sec:extensions}

\subsection{Simultaneous confidence bounds for the false positive rate}

Some practitioners may be accustomed to thinking about outlier detection in terms of FPR---the probability of incorrectly reporting as outlier any true inlier---rather than p-values.
In particular, they may wonder what the FPR can be if they report $X_{2n+1}$ as likely to be an outlier whenever the classification score $\hat{s}(X_{2n+1})$ (computed by some black-box outlier detection algorithm) is below a threshold $t$, as a function of $t$, so that they may choose a posteriori which value of $t$ to adopt.
This question is closely related to the problem of constructing CCV p-values, so our method provides an answer.
In fact, the next result shows Theorem~\ref{thm:generic} also yields a simultaneous upper confidence bound for the CDF. 
\begin{prop}[Simultaneous confidence bounds for the FPR] \label{prop:ucb-fpr}
Let $F$ denote the true CDF of some distribution from which $n$ i.i.d.~samples, $Z_1, \ldots, Z_n$, are drawn, and denote by $\hat{F}_n$ the corresponding empirical CDF.
With the same notation as in Theorem~\ref{thm:generic},
\begin{equation}\label{eq:uniform_CDF}
\P\left[F(z)\le h(\hat{F}_{n}(z)), \,\,\forall z \in \R\right]\ge 1 - \delta.
\end{equation}
\end{prop}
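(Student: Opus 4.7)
\textbf{Proof proposal for Proposition~\ref{prop:ucb-fpr}.}
The plan is to reduce the uniform CDF bound to the simultaneous bound on the order statistics of uniform random variables already assumed in Theorem~\ref{thm:generic}, via the probability integral transform (PIT). First I would treat the case when $F$ is continuous and set $U_i \defeq F(Z_i)$, so that $U_1, \dots, U_n$ are i.i.d.\ $\Unif([0,1])$ and their order statistics satisfy $U_{(i)} = F(Z_{(i)})$. By the hypothesis on the sequence $b_1 \le \cdots \le b_n$ inherited from Theorem~\ref{thm:generic}, the event $E = \{U_{(1)} \le b_1, \ldots, U_{(n)} \le b_n\}$ has probability at least $1 - \delta$, so it suffices to show that on $E$ the bound $F(z) \le h(\hat{F}_n(z))$ holds simultaneously for every $z \in \R$.

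The core step is then a deterministic one on the event $E$. Fix $z \in \R$ and let $k = n \hat{F}_n(z) \in \{0, 1, \ldots, n\}$, so that $Z_{(k)} \le z < Z_{(k+1)}$ under the conventions $Z_{(0)} = -\infty$, $Z_{(n+1)} = +\infty$, $b_{n+1} = 1$. By monotonicity and continuity of $F$, one has $F(z) \le F(Z_{(k+1)}) = U_{(k+1)}$ when $k < n$, while $F(z) \le 1$ holds trivially when $k = n$; on $E$ these two cases yield $F(z) \le b_{k+1}$. It then remains to identify $b_{k+1}$ with $h(\hat{F}_n(z)) = h(k/n)$: since $(n+1)k/n = k + k/n \in (k, k+1]$ for $1 \le k \le n$, the ceiling $\lceil (n+1)k/n \rceil$ equals $k+1$, and the piece-wise constant definition~\eqref{eq:piecewise-f} gives $h(k/n) = b_{k+1}$, exactly matching the bound obtained on $F(z)$.

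Two loose ends remain. The first is the corner case $z < Z_{(1)}$ (so that $k = 0$ and the formal definition would give $h(0) = b_0 = 0$); I would handle this by interpreting $h$ at $0$ through its right limit, noting that on $E$ we still have the genuine bound $F(z) \le U_{(1)} \le b_1$, so the statement is meaningful and true at the natural value. The second, arguably the main technical obstacle, is dropping the continuity assumption on $F$: I would replace the PIT by the standard coupling $F(Z_i) \le U_i$ obtained via generalized quantile inverses, under which $U_1, \ldots, U_n$ remain i.i.d.\ uniform and the chain $F(z) \le F(Z_{(k+1)}^-) \le F(Z_{(k+1)}) \le U_{(k+1)}$ is preserved, so that the deterministic argument above carries through verbatim. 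Taking the probability of $E$ then delivers~\eqref{eq:uniform_CDF}.
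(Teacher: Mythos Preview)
Your argument mirrors the paper's: identify $h(k/n)=b_{k+1}$, pass to uniforms via the probability integral transform, and on the event $\{U_{(i)}\le b_i\ \forall i\}$ bound $F(z)\le U_{(k+1)}\le b_{k+1}$ for $z\in[Z_{(k)},Z_{(k+1)})$; the paper glosses over the same $k=0$ corner you flag. One correction in your discontinuous extension: the quantile-inverse coupling gives $F^-(Z_i)\le U_i\le F(Z_i)$, so your claimed inequality $F(Z_i)\le U_i$ points the wrong way---the chain you need is $F(z)\le F^-(Z_{(k+1)})\le U_{(k+1)}$, dropping the $F(Z_{(k+1)})$ link (this is exactly why the paper's proof of Theorem~\ref{thm:generic} works with $F^-$).
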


Applying Proposition~\ref{prop:ucb-fpr} to the CDF of the scores $\hat{s}$ computed by any one-class classification algorithm provides a uniform upper confidence bound for its FPR, namely $\mathrm{FPR}(t) \defeq \P\left[\hat{s}(X_{2n+1}) \leq t \right]$, as a function of the detection threshold $t$. In other words, this guarantees that reporting as outliers an observation with black-box score equal to $z$ is likely (with probability at least $1-\delta$) to result in a FPR no greater than $h(\hat{F}_n(z))$, where $\hat{F}_n(z)$ is the empirical CDF of the analogous scores computed on a calibration data set of size $n$.
Figure~\ref{fig:bands} shows a practical example of this upper bound based on the empirical distribution of scores evaluated on 1000 calibration points, with $\delta=0.1$ and $k=n/2$ (the exact details of this example are the same as those of the numerical experiments presented later in Section~\ref{sec:exp_simulated}). For instance, this plot informs us that reporting as outliers future samples with scores below -0.5 is likely to result in an FPR below 0.025.

\begin{figure}[!htb]
  \centering
  { \includegraphics[width=0.8\textwidth]{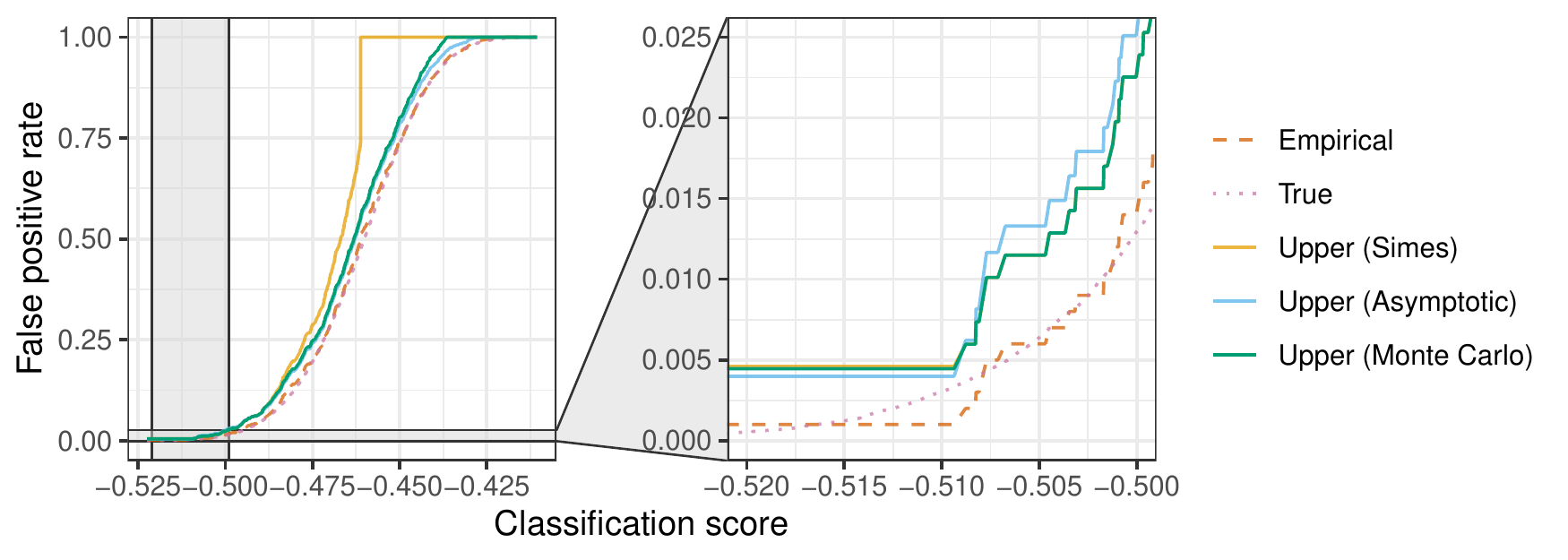}}
\caption{FPR calibration curves obtained with different adjustment methods for an isolation forest one-class classifier on simulated data, as a function of the reporting threshold for the classification scores. Each upper bound (solid) is guaranteed to lie above the true FPR curve (dotted) with probability 90\%. The dashed curve corresponds to the empirical FPR. The panel on {the right} zooms in on small values (likely outliers).
}
  \label{fig:bands}
\end{figure}

Note that the construction of a uniform confidence band for an unknown CDF is a widely studied problem.
For example, the DKWM inequality~\citep{dvoretzky1956asymptotic, massart1990tight} implies the bound in~\eqref{eq:uniform_CDF} with $h(z) = \min\{z + \sqrt{\log(2 / \delta) / 2n}, 1\}$. However, the DKWM bound is tightest at $z=1/2$ and loose near 0, which would limit the power to detect outliers. Therefore, it is preferable for our purposes to have a function $h(z)$ {that is} as close as possible to the identity for small values of $z$, as discussed earlier in Section~\ref{sec:gen-simes}.

\subsection{Simultaneously-valid prediction sets} \label{sec:predictive-sets}
Lastly, CCV p-values can be easily re-purposed to strengthen the marginal guarantees generally obtainable for conformal predictions.
In particular, for each $\alpha \in (0,1)$, one can define a predictive set
\begin{equation} \label{eq:simult_pred_sets_def}
    \Chat^\alpha \defeq \{x : \hat{u}^\cond(x) > \alpha\}.
\end{equation}
These sets are simultaneously valid for all $\alpha$, conditional on the calibration data. That is, they satisfy
\begin{equation}
\label{eq:simult_sets_def}
    \P\bigg[ \P\big[X_{2n+1} \in \hat{\C}^\alpha \mid \mathcal{D} \big] \ge 1 - \alpha \text{ for all } \alpha \in (0,1) \bigg] \ge 1 - \delta.
\end{equation}
In other words, if we use CCV p-values to construct prediction sets, the probability that a new observation falls within $\Chat^\alpha$ is at least $1-\alpha$, simultaneously for all $\alpha \in (0,1)$ with high probability. This is stronger than the usual conformal guarantee, as the latter holds marginally over $\mathcal{D}$ and only for a single pre-specified~$\alpha$.

\section{Numerical experiments} \label{sec:exp}

\subsection{Setup} \label{sec:exp-setup}

The following experiments are designed to simulate a world in which our methods are {independently applied by $J$ practitioners}. Each practitioner $j \in [J]$ has an independent data set $\mathcal{D}_j$ (to train and calibrate the method), and $L$ test sets $\mathcal{D}^{\mathrm{test}}_{j,l}$ (to compute p-values and evaluate performance), each corresponding to different possible future scenarios $l \in [L]$.
The data sets contain $2n$ observations each ($|\mathcal{D}_j|=2n$), and the test sets contain $n_{\text{test}}$ observations each ($|\mathcal{D}^{\mathrm{test}}_{j,l}|=n_{\text{test}}$).
Imagine that, from the practitioner's present point of view, the data set $\mathcal{D}_j$ is fixed but the test set is random, so that $\mathcal{D}^{\mathrm{test}}_{j,l}$ represents the test set for practitioner $j$ under future scenario $l$. 
Then, as discussed in Section~\ref{subsec:preview}, practitioner $j$ {is most interested in the FDR} (or other measures of type-I errors, alternatively) conditional on $\mathcal{D}_j${, i.e., in the random variable}
\begin{align*}
\text{cFDR}(\mathcal{D}_j) \defeq \E\left[ \text{FDP}(\mathcal{D}^{\mathrm{test}}; \mathcal{D}_j) \mid \mathcal{D}_j \right],
\end{align*}
where $\text{FDP}(\mathcal{D}^{\mathrm{test}}; \mathcal{D}_j)$ is the proportion of inliers among the test points reported as outliers, based on the procedure calibrated on $\mathcal{D}_j$.
This motivates the definition of the following performance measures. For any $j \in [J]$, we compute
\begin{align} \label{eq:fdr-pow-sim}
    & \widehat{\text{cFDR}}(\mathcal{D}_j) \defeq \frac{1}{L} \sum_{l=1}^{L} \text{FDP}(\mathcal{D}^{\mathrm{test}}_{j,l}; \mathcal{D}_j),
    & \widehat{\text{cPower}}(\mathcal{D}_j)
    \defeq \frac{1}{L} \sum_{l=1}^{L} \text{Power}(\mathcal{D}^{\mathrm{test}}_{j,l}; \mathcal{D}_j),
\end{align}
where $\text{Power}(\mathcal{D}^{\mathrm{test}}_{j,l}; \mathcal{D}_j)$ is the proportion of outliers in $\mathcal{D}^{\mathrm{test}}_{j,l}$ correctly identified as such by practitioner $j$.

Our experiments will demonstrate that the proposed simultaneous calibration method leads to sufficiently small $\widehat{\text{cFDR}}(\mathcal{D}_j)$ for the desired fraction of practitioners, while the traditional point-wise calibration generally only leads to small values of the marginal FDR, namely $\widehat{\text{mFDR}} \defeq \frac{1}{J} \sum_{j=1}^{J} \widehat{\text{cFDR}}(\mathcal{D}_j)$.

\subsection{Outlier detection on simulated data} \label{sec:exp_simulated}

\subsubsection{Data description}

We begin to investigate the empirical performance of different methods for calibrating conformal p-values on synthetic data. 
The data are generated by sampling each data point $X_i \in \mathbb{R}^{50}$ from a multivariate Gaussian mixture model $P_X^{a}$, such that $X_i = \sqrt{a} \, V_i + W_i$, for some constant $a\geq 1$ and appropriate random vectors $V_i,W_i \in \mathbb{R}^{50}$.
Here, $V_i$ has independent standard Gaussian components, and each coordinate of $W_i$ is independent and uniformly distributed on a discrete set $\mathcal{W} \subseteq \mathbb{R}^{50}$ with cardinality $|\mathcal{W}|=50$.
The vectors in $\mathcal{W}$ are sampled independently from the uniform distribution on $[-3,3]^{50}$, before the beginning of our experiments, and then held constant thereafter. (Therefore, each coordinate of $W_i$ is uniformly distributed on $[-3,3]$, but it is not the case that the different $W_i$'s are independent and identically distributed on $[-3,3]^{50}$; instead, the fixed set $\mathcal{W}$ makes this a mixture model.)

The data sets $\mathcal{D}_j$ are sampled from $P_X^{a}$ with $a=1$ and $n=1000$.
The total $2n$ observations in each $\mathcal{D}_j$ are further divided into $n_{\text{train}}=1000$ observations used to fit a one-class SVM classifier scoring function $\hat{s}$ (implemented in the Python package \texttt{scikit-learn}~\cite{scikit-learn}), and $n_{\text{cal}}=1000$ observations used to calibrate the conformal p-values, as in~\eqref{eq:pval_form}, leading to a valid p-value $\hat{u}(X_{n+1}) \in [0,1]$ for any new data point $X_{n+1}$.
The total number of data sets is $J=100$, each of which is associated with $L=100$ test sets.
A random subset of the observations in each test set $\mathcal{D}^{\mathrm{test}}_{j,l}$ is sampled from $P_X^{a}$ with $a=1$, while the others are outliers, in the sense that they are sampled from $P_X^{a}$ with $a>1$, as specified below.

\subsubsection{Individual outlier detection}
\label{sec:simulated_individual_outlier_detection}

First, we focus on a data generating model under which $90\%$ of the $n_{\text{test}}=1000$ observations in each $\mathcal{D}^{\mathrm{test}}_{j,l}$ are sampled from $P_X^{a}$ with $a=1$, and we seek to identify the remaining 10\% of outliers. 
For this purpose, we calibrate a conformal p-value for all  observations in  $\mathcal{D}^{\mathrm{test}}_{j,l}$, and then we apply the BH procedure at some nominal FDR level $\alpha$ to account for the multiple comparisons,  with and without Storey's correction based on the estimated null proportion. In the following, we apply our conditional calibration method with the parameters $\delta=0.1$ and  $k=n_{\text{cal}}/2$ (see below for comments about the choice of $k$).

Figure~\ref{fig:sim-bh} shows the distribution of $\widehat{\text{cFDR}}(\mathcal{D}_s)$ and $\widehat{\text{cPower}}(\mathcal{D}_s)$, corresponding to $\alpha=0.1$, for different values of the signal strength $a$ (recall that here $a=1$ corresponds to no signal), when the BH procedure is utilized to account for the multiple comparisons.
The results confirm the calibration-conditional p-values control the conditional FDR for at least 90\% of practitioners, while the marginal p-values do not. In fact, marginal p-values only control the conditional FDR if the number of samples in the calibration data set is very large; see Figure~\ref{fig:sim-bh-n}, Appendix~\ref{app:sim}.
Among the three conditional calibration alternatives, the Monte Carlo and Simes methods yield slightly higher power than the asymptotic approximation in this setting.
Note that all methods control the marginal FDR, as also predicted by our theoretical results.
Figure~\ref{fig:sim-bh-storey} presents the results obtained by applying Storeys' correction to the BH procedure, while Figure~\ref{fig:sim-bh-delta0.25} summarizes additional experiments in which the conditional calibration is applied with $\delta=0.25$. Finally, Figure~\ref{fig:sim-simes-tune} visualizes the effect of different values of the $k$ on the conditional p-values calibrated with the Simes method, showing that $k=n_{\text{cal}}/2$ works relatively well, although the performance does not appear to be extremely sensitive to this choice.

\begin{figure}[!htb]
  \centering
  { \includegraphics[width=\textwidth]{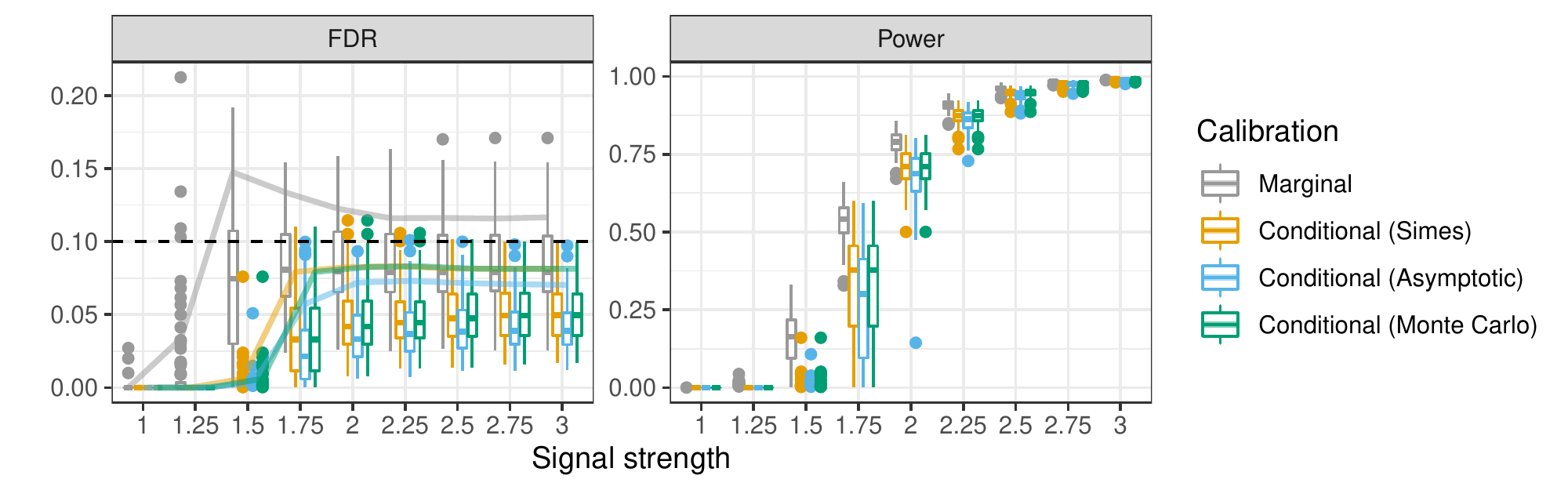}}
\caption{Performance of different methods for calibrating conformal p-values in a simulated outlier detection problem, as a function of the signal strength.
The box plots visualize the distribution of FDR and power, as defined in~\eqref{eq:fdr-pow-sim}, conditional on 100 independent data sets. The solid curves indicate the 90-th quantile of the conditional FDR distribution. The nominal FDR $0.1$, and the conditional method is applied with $\delta=0.1$. 
}
  \label{fig:sim-bh}
\end{figure}

\subsubsection{Batch outlier detection} \label{sec:simulated_batch_outlier_detection}

We now consider the global testing problem of detecting whether a batch of new observations contains any outliers.
For this purpose, we follow the same approach as before, with the only difference that the $n_{\text{test}} = 1000$ observations in each test set are {now} sub-divided into 100 batches of size 10. 
The 10 calibrated p-values in each batch are combined with Fisher's method to test the batch-specific global null. Then, the BH procedure with Storey's correction is applied to control the FDR over {all} batches. This simulation is designed such that 90\% of the batches contain no outliers (i.e., all samples are drawn from $P_X^{a}$ with $a=1$), while 50\% of the samples in the remaining batches are outliers (i.e., they are drawn from $P_X^{a}$ with $a=1.75$).
Of course, batched testing is less informative than the precise identification of outliers discussed in the previous section, but the advantage now is that we {can achieve} higher power.
Figure~\ref{fig:global-storey} shows that, even though this problem is relatively easy (the power is close to 1), the use of marginal p-values may still lead to a conditional FDR that is noticeably higher than expected for many researchers.
By contrast, simultaneous calibration appears to be conservative for all of them, without much power loss. 
Among the three conditional calibration alternatives, the Monte Carlo method and the asymptotic approximation yield higher power in this setting.

\begin{figure}[!htb]
  \centering
  { \includegraphics[width=0.8\textwidth]{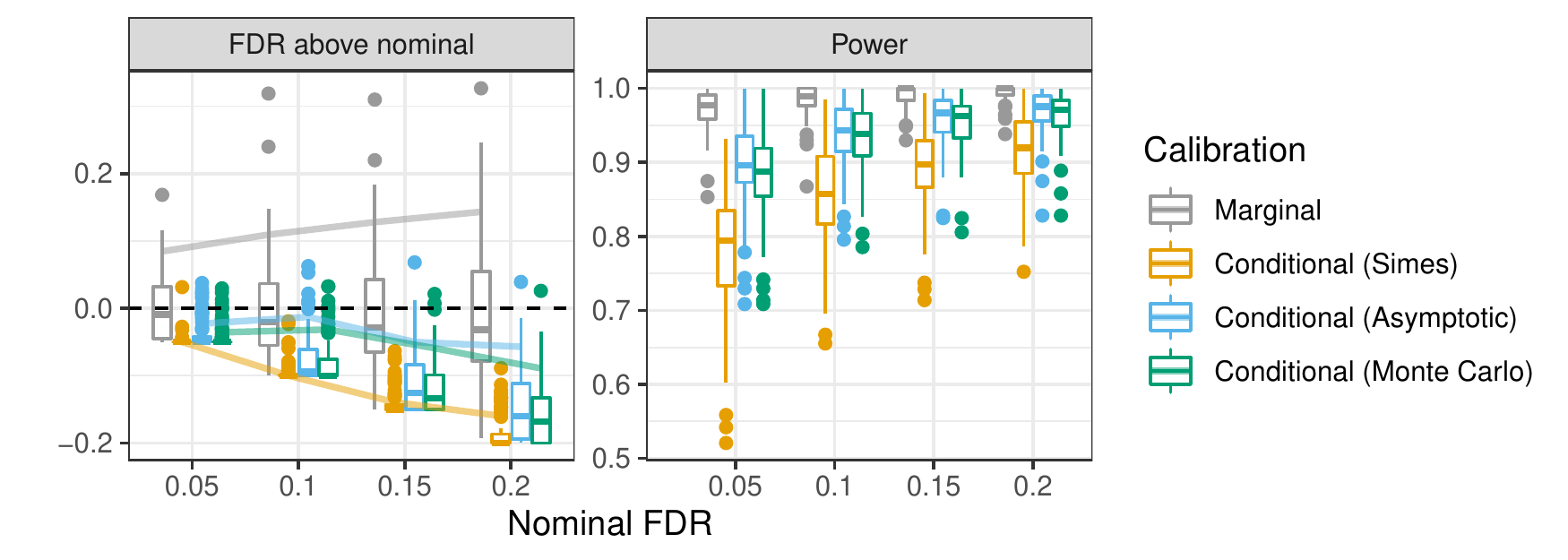}}
\caption{Performance of different methods for calibrating conformal p-values in a simulated outlier batch detection problem, as a function of the nominal FDR level. 
The excess FDR is defined as the difference between the empirical FDR and the nominal FDR. Other details are as in Figure~\ref{fig:sim-bh}.
}
  \label{fig:global-storey}
\end{figure}

Next, we study the effect of the batch size on the performance of different calibration methods under the global null hypothesis (i.e., when there are no outliers in the test set). As before, the p-values in each batch are combined with Fisher's method and the global null is rejected if the resulting p-value is smaller than 0.1. As before, the experiment is repeated for 100 independent data sets and 1000 test sets. 
Figure~\ref{fig:global-storey-fwer} shows that marginal p-values do not lead to valid inferences, especially if the batch size is large. By contrast, the tests based on calibration-conditional p-values always remain valid.

\begin{figure}[!htb]
  \centering
  { \includegraphics[width=0.6\textwidth]{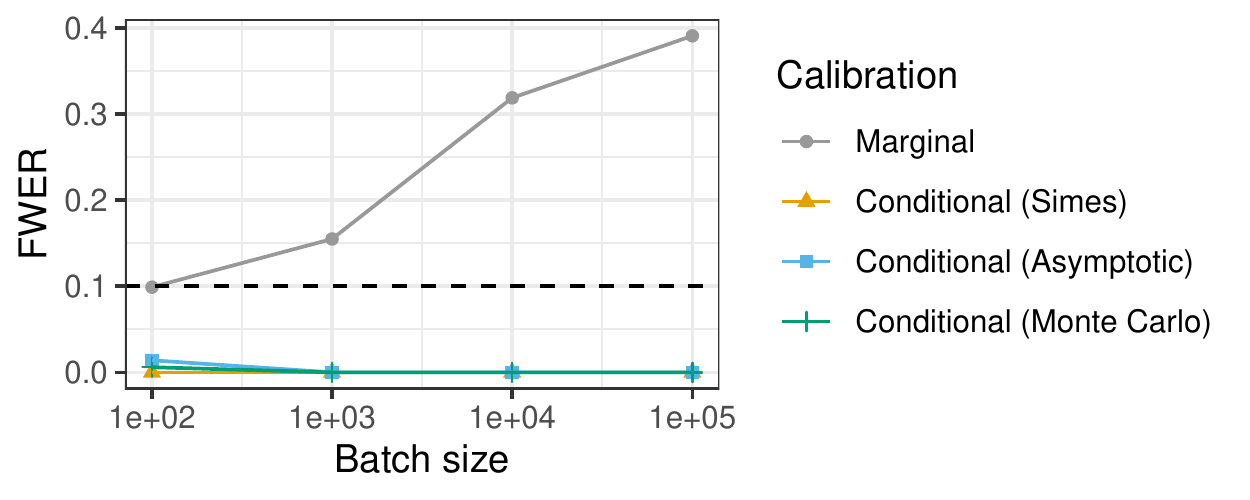}}
\caption{Family-wise error rate (FWER) in a simulated outlier batch detection problem under the global null hypothesis, using different calibration methods for the conformal p-values. The results are shown as a function of the batch size. The global null is rejected if the Fisher's combined p-value is below 0.1, which means the nominal FWER is 10\% (horizontal dashed line).
}
  \label{fig:global-storey-fwer}
\end{figure}

Finally, Figure~\ref{fig:global-combinations} compares the performances of different global testing methods for combining the p-values in each batch (in addition to Fisher's combination test), in the same experiments as in Figure~\ref{fig:global-storey}. The alternative combinations we consider are the harmonic mean with equal weights~\cite{wilson2019harmonic}, Simes'~\cite{simes1986improved}, and Stouffer's~\cite{stouffer1949american} p-values. The results show that the harmonic mean and Simes' p-values yield no discoveries. This should be unsurprising because those methods are designed to have power against an alternative in which the signals are few and strong (e.g., a single outlier in each non-null batch), which is not the case here because each non-null batch contains several outliers and marginal conformal p-values can never be smaller than $1/n$. Fisher's marginal conformal p-values appear to be more powerful than Stouffer's in these experiments, even if the former are simultaneously adjusted with our Monte Carlo method and the latter are not.
It is worth emphasizing that, unlike Fisher's combination test, not all global testing methods may become invalid on average when applied to positively dependent p-values. For example, the harmonic mean~\cite{wilson2019harmonic} and Simes's p-values are known to be robust to positive dependencies \cite{sarkar1997simes}, and Stouffer's combination p-value can also be modified to account for known dependencies~\cite{strube1985combining}. 
Yet, our simultaneous adjustment for conformal p-values remains useful even with combination tests that are robust to positive dependency because this adjustment happens to be necessary to guarantee valid inferences conditional on the calibration data; see Figure~\ref{fig:global-combinations}.

\subsection{Outlier detection on real data} \label{sec:exp_real}

\subsubsection{Data description}

\begin{table}[!htb]
\centering
\small
\caption{\label{tab:datasets}Summary of the benchmark data sets for outlier detection utilized in our applications.}
\begin{tabular}{@{}lccccccc@{}}
\toprule
                  & ALOI  & Cover  & Credit card & KDDCup99 & Mammography & Digits & Shuttle \\
                         & \cite{campos2016evaluation,aloi}   & \cite{cover}    & \cite{creditcard}         & \cite{campos2016evaluation,KDDCup99}      & \cite{mammography}         & \cite{pendigits}    & \cite{shuttle}     \\ \midrule
Features $d$      & 27    & 10     & 30          & 40       & 6           & 16     & 9       \\
Inliers ${n_\text{inliers}}$   & 283301 & 286048 & 284315      & 47913    & 10923       & 6714   & 45586   \\
Outliers ${n_\text{outliers}}$ & 1508  & 2747   & 492         & 200      & 260         & 156    & 3511   \\ \bottomrule
\end{tabular}
\end{table}

We turn to study the performance of the calibration schemes from Section~\ref{sec:exp_simulated} on several benchmark data sets for outlier detection, summarized in Table~\ref{tab:datasets}. The conditional p-values are calibrated with $\delta = 0.1$ using the Monte Carlo method, which is valid in finite samples and has demonstrated in the previous sections to be more powerful than other two alternatives.
We utilize an isolation forest~\cite{liu2008isolation} machine-learning algorithms $\hat{s}$ as the base method for detecting anomalies, available in the Python \texttt{sklearn} package. We rely on the default hyper-parameters, except for the `contamination' parameter which we set equal to $0.1$. Additional experiments based on one-class SVM and Local Outlier Factor (LOF) algorithms are presented in Appendix~\ref{app:sim} (Tables~\ref{tab:data-bh-long}--\ref{tab:data-global-long-storey}).

\subsubsection{Individual outlier detection}\label{sec:real_individual_outlier_detection}

Here, we follow the experimental setup of Section~\ref{sec:simulated_individual_outlier_detection}.
The difference is that we need to construct multiple training, calibration, and test sets by randomly splitting the $n_{\text{inlier}}$ inlier examples into three disjoint subsets of size $n_{\text{train}}$, $n_{\text{cal}}$ and $n_{\text{test}}$, respectively. 
A total of $n_{\text{inlier}}/2 $ data points is used for training and calibration, i.e., $n_{\text{train}} + n_{\text{cal}} = n_{\text{inlier}}/2 $ with $n_{
\text{cal}} = \min\{2000, n_{\text{train}}/2\}$, while outlier examples are only included in the test sets. 
For each training/calibration data subset, we sample 100 test sets of size $n_{\text{test}} = \min\{2000, n_{\text{train}}/3\}$. Each test set contains 90\% of randomly chosen inliers, and 10\% of outliers. 
It should be noted that, in contrast to the simulated experiments of Section~\ref{sec:simulated_individual_outlier_detection} in which the data were effectively infinitely abundant, {there is} some overlap between the samples in different test sets.

%We evaluate the calibration methods by following the  from . Specifically, in each application we (i) fit an Isolation Forest model $\hat{s}$ on randomly chosen $n_{\text{train}}$ inlier samples; (ii) calibrate a function $\hat{\phi}$ such that $\hat{\phi}(X_{n+1}, \hat{s}) \in [0,1]$ is a valid p-value using $n_{
%\text{cal}} = \min\{2000, n_{\text{train}}/2\}$ hold-out inlier samples; (iii) apply $\hat{\phi}(X)$ on test data that contains outliers and evaluate its performance. 
%We compare the calibration schemes using the same performance metrics described in Section~\ref{sec:simulated_individual_outlier_detection}.
%For this purpose, we constructed 100 test sets, each of size , 

Figure~\ref{fig:creditcard-bh-storey} compares the performance of marginal and simultaneously calibrated p-values on the credit card data set~\cite{creditcard}, as a function of the nominal FDR level. 
Here, the BH procedure is applied with Storey's correction.
Note that the proposed Monte Carlo simultaneous calibration leads to FDR control for at least 90\% of simulated practitioners, as expected. This stands in contrast with the marginal calibration approach, which controls the FDR only marginally.

\begin{figure}[!htb]
  \centering
  { \includegraphics[width=0.9\textwidth]{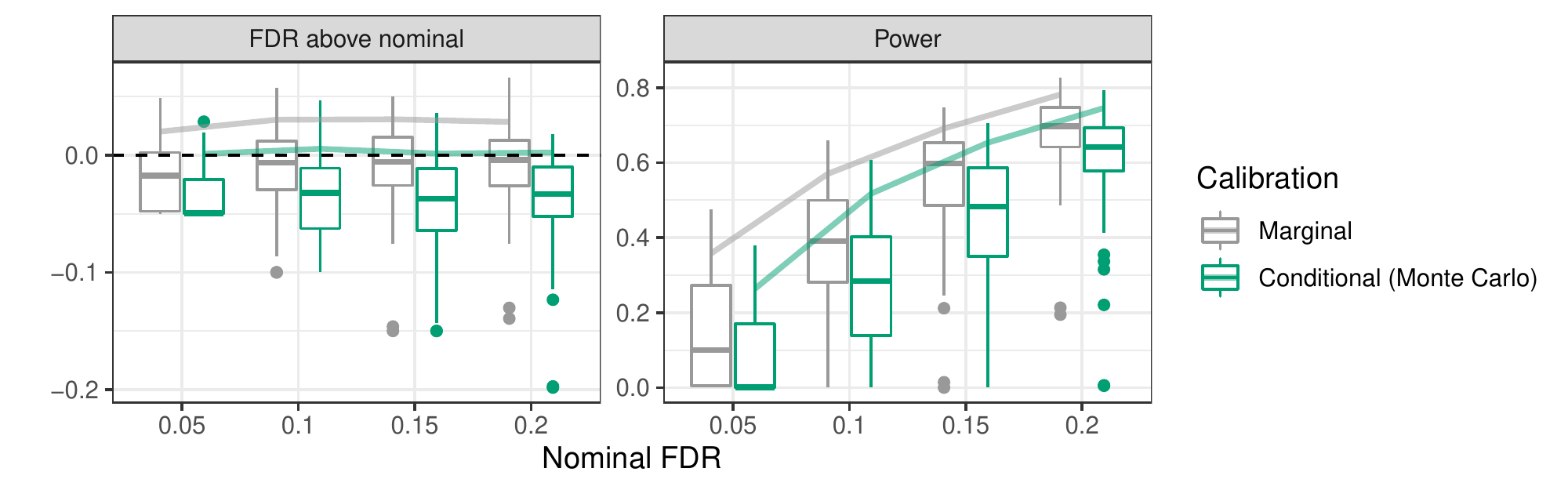}}
\caption{Outlier detection performance on credit card fraud data. Conformal p-values based on an isolation forest model are calibrated using different methods. The Benjamini-Hochberg procedure with Storey's correction is then applied to control the FDR over the set of test points. The results are shown as a function of the nominal FDR level. Other details are as in Figure~\ref{fig:sim-bh}.
}
  \label{fig:creditcard-bh-storey}
\end{figure}

Consistent conclusion can be drawn from Table~\ref{tab:data-bh-storey}, which compares the two calibration procedures on all benchmark data sets at the nominal FDR level of 0.2.
%In contrast to Figure~\ref{fig:creditcard-bh-storey}, the results in Table~\ref{tab:data-bh} are obtained by applying the Benjamini-Hochberg procedure on the calibrated p-values without Storey's correction.
Additional results corresponding to different outlier detection algorithms (one-class SVM and LOF) can be found in Table~\ref{tab:data-bh-long-storey}, Appendix~\ref{app:real_data_exp}. In all cases, we adopt the \texttt{sklearn} default parameters. Finally, Table~\ref{tab:data-bh-long} summarizes the performance of different calibration and detection methods across all data sets when the BH procedure is applied without Storey's correction.

% Table for BH on real data
% {\small
% \input{tables/data_bh_storey_small.tex} % {tab:data-bh-storey}
% }
{\small
\begin{table}[!htb]

\caption{Outlier detection performance on different data sets, using alternative methods for calibrating conformal p-values. 
The FDR and power diagnostics are defined conditional on the training and calibration data, as explained in Section~\ref{sec:exp-setup}. The nominal marginal FDR level is 0.2. Empirical FDR values larger than the nominal level are colored in orange; values at least one standard deviation above it are colored in red. \label{tab:data-bh-storey}}
\centering
\fontsize{10}{12}\selectfont
\begin{tabular}[t]{lllllllll}
\toprule
\multicolumn{1}{c}{ } & \multicolumn{4}{c}{FDR} & \multicolumn{4}{c}{Power} \\
\cmidrule(l{3pt}r{3pt}){2-5} \cmidrule(l{3pt}r{3pt}){6-9}
\multicolumn{1}{c}{ } & \multicolumn{2}{c}{Mean} & \multicolumn{2}{c}{90th percentile} & \multicolumn{2}{c}{Mean} & \multicolumn{2}{c}{90-th quantile} \\
\cmidrule(l{3pt}r{3pt}){2-3} \cmidrule(l{3pt}r{3pt}){4-5} \cmidrule(l{3pt}r{3pt}){6-7} \cmidrule(l{3pt}r{3pt}){8-9}
Dataset & Marg. & Cond. & Marg. & Cond. & Marg. & Cond. & Marg. & Cond.\\
\midrule
ALOI & 0.025 & 0.001 & 0.048 & 0 & 0 & 0 & 0 & 0\\
Cover & 0.099 & 0.044 & \textcolor{red}{0.297} & 0.148 & 0.012 & 0.006 & 0.038 & 0.02\\
Credit card & 0.191 & 0.162 & \textcolor{orange}{0.228} & \textcolor{orange}{0.202} & 0.679 & 0.611 & 0.782 & 0.746\\
KDDCup99 & 0.194 & 0.131 & \textcolor{orange}{0.23} & 0.168 & 0.754 & 0.684 & 0.825 & 0.753\\
Mammography & 0.187 & 0.056 & \textcolor{red}{0.286} & 0.17 & 0.176 & 0.059 & 0.337 & 0.22\\
\addlinespace
Digits & \textcolor{orange}{0.202} & 0.052 & \textcolor{red}{0.266} & 0.173 & 0.417 & 0.096 & 0.629 & 0.355\\
Shuttle & 0.196 & 0.163 & \textcolor{orange}{0.228} & 0.198 & 0.981 & 0.98 & 0.984 & 0.983\\
\bottomrule
\end{tabular}
\end{table}
 % {tab:data-bh-storey}
}

\subsubsection{Batch outlier detection} \label{sec:real_batch_outlier_detection}

We now focus on global testing for outlier batch detection, similarly to Section~\ref{sec:simulated_batch_outlier_detection}.
The available data are divided into training, calibration, and test sets according to the same scheme as in Section~\ref{sec:real_individual_outlier_detection}; the only difference is that the size of the test sets is now equal to 1000, so as to follow as closely as possible the same experimental protocol as in Section~\ref{sec:simulated_batch_outlier_detection}.

Figure~\ref{fig:creditcard-global} compares the performance of the different calibration methods as a function of the nominal FDR level. 
The p-values in each batch are combined with Fisher's method, and then the  BH procedure is applied with Storey's correction.
Again, we observe that simultaneous calibration is required to ensure the conditional FDR is controlled in at least 90\% of the applications, although it involves some power loss. Both calibration methods control the marginal FDR.

\begin{figure}[!htb]
  \centering
  { \includegraphics[width=0.9\textwidth]{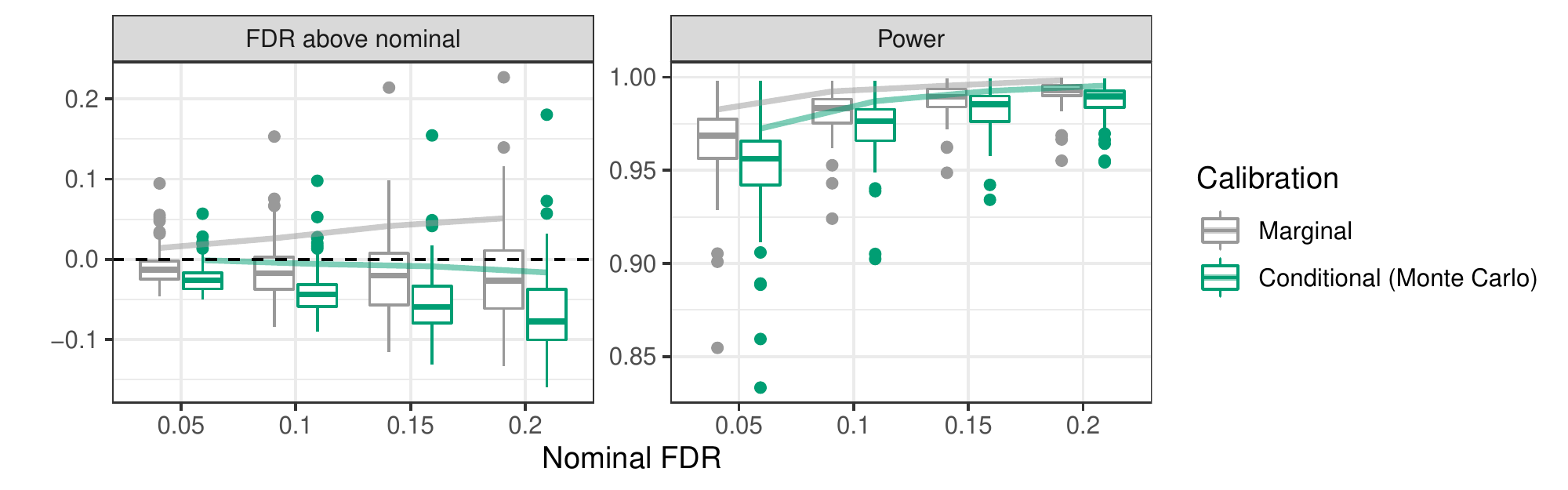}}
\caption{Outlier batch detection performance on credit card fraud data. Conformal p-values are computed based on an isolation forest model and calibrated using different methods. Other details are as in Figure~\ref{fig:global-storey}.
}
  \label{fig:creditcard-global}
\end{figure}

 Table~\ref{tab:data-global-storey} summarizes the performance of the two alternative calibration methods on all data sets.
 Here, the nominal FDR level is 0.1 and the BH procedure is applied with the Storey correction. Again, the results show that the Monte Carlo method controls the conditional FDR 90\% of the time, although at some cost in power, while the marginal calibration method does not.
  See Table~\ref{tab:data-global-long-storey}, Appendix~\ref{app:real_data_exp} for additional results that, in addition to the isolation forest, include also the one-class SVM and LOF algorithms for outlier detection. Finally, Table~\ref{tab:data-global-long} summarizes performance of the different methods on all data sets when the BH procedure is applied without the Storey correction.

% Table for global BH on real data
% {\small
% \input{tables/data_global_bh_storey_small.tex} % {tab:data-global-storey}
% }
{\small
\begin{table}

\caption{Outlier batch detection performance on different data sets, using alternative methods for calibrating conformal p-values. The nominal FDR level is 0.1. Other details are as in Table~\ref{tab:data-bh-storey}. \label{tab:data-global-storey}}
\centering
\fontsize{10}{12}\selectfont
\begin{tabular}[t]{lllllllll}
\toprule
\multicolumn{1}{c}{ } & \multicolumn{4}{c}{FDR} & \multicolumn{4}{c}{Power} \\
\cmidrule(l{3pt}r{3pt}){2-5} \cmidrule(l{3pt}r{3pt}){6-9}
\multicolumn{1}{c}{ } & \multicolumn{2}{c}{Mean} & \multicolumn{2}{c}{90-th quantile} & \multicolumn{2}{c}{Mean} & \multicolumn{2}{c}{90-th quantile} \\
\cmidrule(l{3pt}r{3pt}){2-3} \cmidrule(l{3pt}r{3pt}){4-5} \cmidrule(l{3pt}r{3pt}){6-7} \cmidrule(l{3pt}r{3pt}){8-9}
Data set & Marg. & Cond. & Marg. & Cond. & Marg. & Cond. & Marg. & Cond.\\
\midrule
ALOI & 0.07 & 0.016 & \textcolor{red}{0.2} & 0.081 & 0.001 & 0 & 0.004 & 0.002\\
Cover & 0.08 & 0.05 & \textcolor{red}{0.158} & \textcolor{orange}{0.12} & 0.184 & 0.132 & 0.333 & 0.243\\
Credit card & 0.086 & 0.059 & \textcolor{orange}{0.126} & 0.094 & 0.981 & 0.973 & 0.992 & 0.987\\
KDDCup99 & 0.091 & 0.044 & \textcolor{red}{0.145} & 0.08 & 1 & 0.998 & 1 & 1\\
Mammography & 0.069 & 0.014 & \textcolor{orange}{0.116} & 0.03 & 0.599 & 0.334 & 0.742 & 0.521\\
Digits & 0.084 & 0.016 & \textcolor{red}{0.141} & 0.033 & 0.968 & 0.814 & 0.995 & 0.926\\
Shuttle & 0.094 & 0.047 & \textcolor{red}{0.142} & 0.087 & 1 & 1 & 1 & 1\\
\bottomrule
\end{tabular}
\end{table}
 % {tab:data-global-storey}
}

\section{Discussion} \label{sec:discussion}

This paper has studied the multiple testing problem for outlier detection using conformal p-values.
Conformal p-values provide a natural approach to outlier detection (when clean training data are available) with the advantage of being able to leverage any black-box machine-learning tool, producing fully non-parametric inferences that are provably valid in finite samples and require no modeling beyond the i.i.d.~assumption. Of course, a possible limitation (or perhaps strength, depending on the viewpoint) of conformal inference is that its agnosticism prevents very confident statements, as conformal p-values can never be smaller than $1/(n+1)$, where $n$ is the number of clean data points available for calibration. Therefore, this solution may not be as powerful as likelihood-based approaches, especially if the signals are strong but sparse.
However, it does seem preferable if clean data are available but accurate models are not.

Whenever the conformal framework is appropriate for a particular outlier detection application, the problem of multiple testing considered in this paper is likely to be relevant, as possible outliers are often to be detected among many possible inlier test points, and reporting an excess of false discoveries would be undesirable.
Our work brings attention to the delicacy of such task, showing that the mutual dependence of conformal p-values breaks certain methods (e.g., Fisher's combination test) and makes the validity of others (e.g., the BH procedure) not obvious.
In particular, we find our PRDS result interesting because this property is well-known as a theoretical assumption for FDR control, but it is typically difficult to verify in practical applications~\cite{benjamini2001control, clarke2009robustness}.

Our methodological contribution is a technique based on high-probability bounds to compute calibration-conditional conformal p-values that are mutually independent and can thus be directly trusted in any multiple testing procedure. Our bounds are stronger than those in the previous conformal inference literature because they are simultaneous in nature and, consequently, they can also be useful for practitioners to tune a posteriori the significance threshold for machine-learning statistics above which to report their discoveries.
Unsurprisingly, our simulations demonstrate that calibration-conditional inferences are less powerful on average than marginal conformal inferences; therefore, the additional comfort of their stronger guarantees should be weighted against the potential loss of some interesting findings. Nonetheless, we prefer to leave such considerations to practitioners on a case-by-case basis, as our objective here was simply to explain the theoretical properties and general relative advantages of different statistical methods. 

Finally, this work opens new directions for future research. For example, focusing on split-conformal p-values, we did not study other hold-out approaches, such as the jackknife+~\cite{barber2019predictive} or bootstrap sampling~\cite{kim2020predictive}, that may practically yield higher power, although they are also more {computationally expensive}. A separate line of research may focus on relaxing the i.i.d.~assumption to improve power in a multiple testing setting with structured outliers~\cite{li2019multiple}. In fact, our theory only requires the calibration and test inliers to be exchangeable and mutually independent, while the outliers in the test data may have dependencies with one another.
Furthermore, we mentioned but did not explore the possible connection between our multiple outlier testing problem (especially regarding our results on Fisher's combination method) and classical two-sample testing. Finally, the high-probability bounds developed here may prove useful for purposes other than the calibration of conformal p-values; for instance, we already discussed a straightforward extension to obtain simultaneously valid prediction sets, but other possible applications may involve predictive distributions~\cite{vovk2018cross} and functionals thereof~\cite{wisniewski2020application}, or the comparison of different machine-learning algorithms in terms of estimated generalization error~\cite{holland2020making,Bayle2020}, for example.

\section*{Software availability}

A software implementation of the methods described in this paper is available online, in the form of a Python package, at
\url{https://github.com/msesia/conditional-conformal-pvalues.git}, along with usage examples and notebooks to reproduce our numerical experiments.

\section*{Acknowledgements}

S.B.~gratefully acknowledges the support of the Ric Weiland fellowship. E.C.~was supported by Office of Naval Research grant N00014-20-12157, by the National Science Foundation grants OAC 1934578 and DMS 2032014, by the Army Research Office (ARO) under grant W911NF-17-1-0304, and by the Simons Foundation under award 814641.
 L.L.~gratefully acknowledges the support of the National Science Foundation grants OAC 1934578, the Discovery Innovation Fund for Biomedical Data Sciences, and the NIH grant R01MH113078.  Y.R. was supported by the ISRAEL SCIENCE FOUNDATION (grant No. 729/21) and by the Career Advancement Fellowship of the Technion.
We are grateful to the anonymous referees and associate editor for their helpful comments and suggestions.

\printbibliography

@article{strube1985combining,
  title={Combining and comparing significance levels from nonindependent hypothesis tests.},
  author={Strube, Michael J},
  journal={Psychological bulletin},
  volume={97},
  number={2},
  pages={334},
  year={1985},
  publisher={American Psychological Association}
}

@article{sarkar1997simes,
  title={The Simes method for multiple hypothesis testing with positively dependent test statistics},
  author={Sarkar, Sanat K and Chang, Chung-Kuei},
  journal={Journal of the American Statistical Association},
  volume={92},
  number={440},
  pages={1601--1608},
  year={1997},
  publisher={Taylor \& Francis}
}

@article{simes1986improved,
  title={An improved Bonferroni procedure for multiple tests of significance},
  author={Simes, R John},
  journal={Biometrika},
  volume={73},
  number={3},
  pages={751--754},
  year={1986},
  publisher={Oxford University Press}
}

@article{wilson2019harmonic,
  title={The harmonic mean p-value for combining dependent tests},
  author={Wilson, Daniel J},
  journal={Proceedings of the National Academy of Sciences},
  volume={116},
  number={4},
  pages={1195--1200},
  year={2019},
  publisher={National Acad Sciences}
}

@article{kolmogorov1933sulla,
  title={Sulla determinazione empirica di una legge di distribuzione},
  author={Kolmogorov, Andrey},
  journal={Inst. Ital. Attuari, Giorn.},
  volume={4},
  pages={83--91},
  year={1933}
}

@article{eicker1979asymptotic,
  title={The asymptotic distribution of the suprema of the standardized empirical processes},
  author={Eicker, Friedhelm},
  journal={Annals of Statistics},
  pages={116--138},
  year={1979},
  publisher={JSTOR}
}

@article{zhang2020bootstrap,
  title={Bootstrap prediction intervals with asymptotic conditional validity and unconditional guarantees},
  author={Zhang, Yunyi and Politis, Dimitris N},
  journal={arXiv preprint arXiv:2005.09145},
  year={2020}
}

@article{hu2020distributionfree,
    title={A Distribution-Free Test of Covariate Shift Using Conformal Prediction},
    author={Xiaoyu Hu and Jing Lei},
    year={2020},
    journal = {arXiv preprint arXiv:2010.07147}
}

@inproceedings{Park2020PAC,
    title={{PAC} Confidence Sets for Deep Neural Networks via Calibrated Prediction},
    author={Sangdon Park and Osbert Bastani and Nikolai Matni and Insup Lee},
    booktitle={International Conference on Learning Representations},
    year={2020},
    url={https://openreview.net/forum?id=BJxVI04YvB}
}

@article{bates2021distributionfree,
  title={Distribution-free, risk-controlling prediction sets},
  author={Bates, Stephen and Angelopoulos, Anastasios and Lei, Lihua and Malik, Jitendra and Jordan, Michael},
  journal={Journal of the ACM (JACM)},
  volume={68},
  number={6},
  pages={1--34},
  year={2021},
  publisher={ACM New York, NY}
}

@inproceedings{liu2008isolation,
  title={Isolation forest},
  author={Liu, Fei Tony and Ting, Kai Ming and Zhou, Zhi-Hua},
  booktitle={2008 eighth ieee international conference on data mining},
  pages={413--422},
  year={2008},
  organization={IEEE}
}

@article{holland2020making,
  title={Making learning more transparent using conformalized performance prediction},
  author={Holland, Matthew J},
  journal={arXiv preprint arXiv:2007.04486},
  year={2020}
}

@article{li2019multiple,
  title={Multiple testing with the structure-adaptive Benjamini--Hochberg algorithm},
  author={Li, Ang and Barber, Rina Foygel},
  journal={Journal of the Royal Statistical Society: Series B (Statistical Methodology)},
  volume={81},
  number={1},
  pages={45--74},
  year={2019},
  publisher={Wiley Online Library}
}

@article{Bayle2020,
  title={Cross-validation confidence intervals for test error},
  author={Bayle, Pierre and Bayle, Alexandre and Mackey, Lester and Janson, Lucas},
    journal={Advances in Neural Information Processing Systems},
  volume={33},
  year={2020}
}

@inproceedings{wisniewski2020application,
  title={Application of conformal prediction interval estimations to market makers’ net positions},
  author={Wisniewski, Wojciech and Lindsay, David and Lindsay, Sian},
  booktitle={Conformal and Probabilistic Prediction and Applications},
  pages={285--301},
  year={2020},
  organization={PMLR}
}

@inproceedings{vovk2018cross,
  title={Cross-conformal predictive distributions},
  author={Vovk, Vladimir and Nouretdinov, Ilia and Manokhin, Valery and Gammerman, Alexander},
  booktitle={Conformal and Probabilistic Prediction and Applications},
  pages={37--51},
  year={2018},
  organization={PMLR}
}

@article{kuchibhotla2020exchangeability,
  title={Exchangeability, Conformal Prediction, and Rank Tests},
  author={Kuchibhotla, Arun Kumar},
  journal={arXiv preprint arXiv:2005.06095},
  year={2020}
}

@techreport{friedman2004multivariate,
  title={On multivariate goodness-of-fit and two-sample testing},
  author={Friedman, Jerome},
  year={2004},
  institution={No. SLAC-PUB-10325. Stanford Linear Accelerator Center, Menlo Park, CA (US)}
}

@inproceedings{lopez2017revisiting,
  title={Revisiting classifier two-sample tests},
  author={Lopez-Paz, David and Oquab, Maxime},
  booktitle={International Conference on Learning Representations},
  year={2017}
}

@article{kim2021classification,
  title={Classification accuracy as a proxy for two-sample testing},
  author={Kim, Ilmun and Ramdas, Aaditya and Singh, Aarti and Wasserman, Larry and others},
  journal={Annals of Statistics},
  volume={49},
  number={1},
  pages={411--434},
  year={2021},
  publisher={Institute of Mathematical Statistics}
}

@incollection{wilcoxon1992individual,
  title={Individual comparisons by ranking methods},
  author={Wilcoxon, Frank},
  booktitle={Breakthroughs in statistics},
  pages={196--202},
  year={1992},
  publisher={Springer}
}

@article{riani2009finding,
  title={Finding an unknown number of multivariate outliers},
  author={Riani, Marco and Atkinson, Anthony C and Cerioli, Andrea},
  journal={Journal of the Royal Statistical Society: series B (statistical methodology)},
  volume={71},
  number={2},
  pages={447--466},
  year={2009},
  publisher={Wiley Online Library}
}

@book{hawkins1980identification,
  title={Identification of outliers},
  author={Hawkins, Douglas M},
  volume={11},
  year={1980},
  publisher={Springer}
}

@article{wilks1963multivariate,
  title={Multivariate statistical outliers},
  author={Wilks, Samuel S},
  journal={Sankhy{\=a}: The Indian Journal of Statistics, Series A},
  pages={407--426},
  year={1963},
  publisher={JSTOR}
}

@article{cerioli2010multivariate,
  title={Multivariate outlier detection with high-breakdown estimators},
  author={Cerioli, Andrea},
  journal={Journal of the American Statistical Association},
  volume={105},
  number={489},
  pages={147--156},
  year={2010},
  publisher={Taylor \& Francis}
}

@article{storey2002direct,
  title={A direct approach to false discovery rates},
  author={Storey, John D},
  journal={Journal of the Royal Statistical Society: Series B (Statistical Methodology)},
  volume={64},
  number={3},
  pages={479--498},
  year={2002},
  publisher={Wiley Online Library}
}

@article{chalapathy2019deep,
  title={Deep learning for anomaly detection: A survey},
  author={Chalapathy, Raghavendra and Chawla, Sanjay},
  journal={preprint at arXiv:1901.03407},
  year={2019}
}

@inproceedings{sabokrou2018adversarially,
  title={Adversarially learned one-class classifier for novelty detection},
  author={Sabokrou, Mohammad and Khalooei, Mohammad and Fathy, Mahmood and Adeli, Ehsan},
  booktitle={Proceedings of the IEEE Conference on Computer Vision and Pattern Recognition},
  pages={3379--3388},
  year={2018}
}

@inproceedings{aggarwal2015outlier,
  title={Outlier analysis},
  author={Aggarwal, Charu C},
  booktitle={Data mining},
  pages={237--263},
  year={2015},
  organization={Springer}
}

@article{khan2014one,
  title={One-class classification: taxonomy of study and review of techniques},
  author={Khan, Shehroz S and Madden, Michael G},
  journal={The Knowledge Engineering Review},
  volume={29},
  number={3},
  pages={345--374},
  year={2014},
  publisher={Cambridge University Press}
}

@article{agrawal2015survey,
  title={Survey on anomaly detection using data mining techniques},
  author={Agrawal, Shikha and Agrawal, Jitendra},
  journal={Procedia Computer Science},
  volume={60},
  pages={708--713},
  year={2015},
  publisher={Elsevier}
}

@inproceedings{cai2020real,
  title={Real-time Out-of-distribution Detection in Learning-Enabled Cyber-Physical Systems},
  author={Cai, Feiyang and Koutsoukos, Xenofon},
  booktitle={2020 ACM/IEEE 11th International Conference on Cyber-Physical Systems (ICCPS)},
  pages={174--183},
  year={2020},
  organization={IEEE}
}

@article{laxhammar2015inductive,
  title={Inductive conformal anomaly detection for sequential detection of anomalous sub-trajectories},
  author={Laxhammar, Rikard and Falkman, G{\"o}ran},
  journal={Annals of Mathematics and Artificial Intelligence},
  volume={74},
  number={1-2},
  pages={67--94},
  year={2015},
  publisher={Springer}
}

@inproceedings{smith2015conformal,
  title={Conformal anomaly detection of trajectories with a multi-class hierarchy},
  author={Smith, James and Nouretdinov, Ilia and Craddock, Rachel and Offer, Charles and Gammerman, Alexander},
  booktitle={International symposium on statistical learning and data sciences},
  pages={281--290},
  year={2015},
  organization={Springer}
}

@article{kim2020predictive,
  title={Predictive inference is free with the jackknife+-after-bootstrap},
  author={Kim, Byol and Xu, Chen and Foygel Barber, Rina},
  journal={Advances in Neural Information Processing Systems},
  volume={33},
  year={2020}
}

@book{fisher1925statistical,
  author = {Fisher, R.A.},
  publisher = {Oliver \& Boyd (Edinburgh)},
  title = {Statistical methods for research workers},
  year = 1925
}

@article{clarke2009robustness,
author = {Sandy Clarke and Peter Hall},
title = {{Robustness of multiple testing procedures against dependence}},
volume = {37},
journal = {Annals of Statistics},
number = {1},
publisher = {Institute of Mathematical Statistics},
pages = {332 -- 358},
keywords = {False-discovery rate, family-wise error rate, linear process, moving average, multiplicity, significance level, simultaneous hypothesis testing, time-series},
year = {2009},
}

@article{benjamini2001control,
  title={The control of the false discovery rate in multiple testing under dependency},
  author={Benjamini, Yoav and Yekutieli, Daniel},
  journal={Annals of Statistics},
  pages={1165--1188},
  year={2001},
  publisher={JSTOR}
}

@inbook{Tarassenko2009,
author = {Tarassenko, Lionel and Clifton, David A. and Bannister, Peter R. and King, Steve and King, Dennis},
publisher = {American Cancer Society},
isbn = {9780470061626},
title = {Novelty Detection},
booktitle = {Encyclopedia of Structural Health Monitoring},
chapter = {},
pages = {},
year = {2009},
}

@article{patcha2007overview,
  title={An overview of anomaly detection techniques: Existing solutions and latest technological trends},
  author={Patcha, Animesh and Park, Jung-Min},
  journal={Computer networks},
  volume={51},
  number={12},
  pages={3448--3470},
  year={2007},
  publisher={Elsevier}
}

@article{Fortunato2020,
author = {Fortunato, Francesca and Anderlucci, Laura and Montanari, Angela},
title = {One-class classification with application to forensic analysis},
journal = {Journal of the Royal Statistical Society: Series C (Applied Statistics)},
volume = {69},
number = {5},
pages = {1227-1249},
keywords = {Data depth measure, One-class classification, Transvariation probability},
doi = {https://doi.org/10.1111/rssc.12438},
url = {https://rss.onlinelibrary.wiley.com/doi/abs/10.1111/rssc.12438},
year = {2020}
}

@inproceedings{tarassenko1995novelty,
  title={Novelty detection for the identification of masses in mammograms},
  author={Tarassenko, L and Hayton, P and Cerneaz, N and Brady, M},
  booktitle={1995 Fourth International Conference on Artificial Neural Networks},
  pages={442--447},
  year={1995},
  organization={IET}
}

@article{moya1993one,
  title={One-class classifier networks for target recognition applications},
  author={Moya, Mary M and Koch, Mark W and Hostetler, Larry D},
  journal={NASA STI/Recon Technical Report N},
  volume={93},
  pages={24043},
  year={1993}
}

@article{pimentel2014review,
  title={A review of novelty detection},
  author={Pimentel, Marco AF and Clifton, David A and Clifton, Lei and Tarassenko, Lionel},
  journal={Signal Processing},
  volume={99},
  pages={215--249},
  year={2014},
  publisher={Elsevier}
}

@article{benjamini1995controlling,
  title={Controlling the false discovery rate: a practical and powerful approach to multiple testing},
  author={Benjamini, Yoav and Hochberg, Yosef},
  journal={Journal of the Royal statistical society: series B (Methodological)},
  volume={57},
  number={1},
  pages={289--300},
  year={1995},
  publisher={Wiley Online Library}
}

@article{scikit-learn,
 title={Scikit-learn: Machine Learning in {P}ython},
 author={Pedregosa, F. and Varoquaux, G. and Gramfort, A. and Michel, V.
         and Thirion, B. and Grisel, O. and Blondel, M. and Prettenhofer, P.
         and Weiss, R. and Dubourg, V. and Vanderplas, J. and Passos, A. and
         Cournapeau, D. and Brucher, M. and Perrot, M. and Duchesnay, E.},
 journal={Journal of Machine Learning Research},
 volume={12},
 pages={2825--2830},
 year={2011}
}

@article{sesia2020comparison,
  title={A comparison of some conformal quantile regression methods},
  author={Sesia, Matteo and Cand{\`e}s, Emmanuel J},
  journal={Stat},
  volume={9},
  number={1},
  year={2020},
  publisher={Wiley Online Library}
}

@misc{shuttle,
	author={},
	title = {{Statlog (Shuttle) Data Set}},
	howpublished = {\url{http://odds.cs.stonybrook.edu/shuttle-dataset}},
	note = {Accessed: January, 2021}
}

@misc{pendigits,
	author={},
	title = {{Pen-Based Recognition of Handwritten Digits Data Set}},
	howpublished = {\url{http://odds.cs.stonybrook.edu/pendigits-dataset}},
	note = {Accessed: January, 2021}
}

@misc{mammography,
	author={},
	title = {{Mammography Data Set}},
	howpublished = {\url{http://odds.cs.stonybrook.edu/mammography-dataset/}},
	note = {Accessed: January, 2021}
}

@misc{KDDCup99,
	author={},
	title = {{KDD Cup 1999 Data Set}},
	howpublished = {\url{https://www.kaggle.com/mlg-ulb/creditcardfraud}},
	note = {Not normalized, without duplicates, categorial attributes removed. Accessed: January, 2021}
}

@misc{creditcard,
	author={},
	title = {{Credit Card Fraud Detection Data Set}},
	howpublished = {\url{https://www.kaggle.com/mlg-ulb/creditcardfraud}},
	note = {Accessed: January, 2021}
}

@misc{cover,
	author={},
	title = {{Covertype Data Set}},
	howpublished = {\url{http://odds.cs.stonybrook.edu/forestcovercovertype-dataset}},
	note = {Accessed: January, 2021}
}

@misc{aloi,
	author={},
	title = {{Amsterdam Library of Object Images (ALOI) Data Set}},
	howpublished = {\url{https://www.dbs.ifi.lmu.de/research/outlier-evaluation/DAMI/literature/ALOI}},
	note = {Not normalized, without duplicates. Accessed: January, 2021}
}

@article{campos2016evaluation,
  title={On the evaluation of unsupervised outlier detection: measures, datasets, and an empirical study},
  author={Campos, Guilherme O and Zimek, Arthur and Sander, J{\"o}rg and Campello, Ricardo JGB and Micenkov{\'a}, Barbora and Schubert, Erich and Assent, Ira and Houle, Michael E},
  journal={Data mining and knowledge discovery},
  volume={30},
  number={4},
  pages={891--927},
  year={2016},
  publisher={Springer}
}

@article{gupta2020nested,
title = {Nested conformal prediction and quantile out-of-bag ensemble methods},
  author={Gupta, Chirag and Kuchibhotla, Arun K and Ramdas, Aaditya K},
journal = {Pattern Recognition},
pages = {108496},
year = {2021},
issn = {0031-3203},
doi = {https://doi.org/10.1016/j.patcog.2021.108496}
}

@misc{hechtlinger2018cautious,
    title={Cautious Deep Learning},
    author={Yotam Hechtlinger and Barnabás Póczos and Larry Wasserman},
    year={2018},
    eprint={1805.09460},
    archivePrefix={arXiv},
    primaryClass={stat.ML},
    note={arXiv:1805.09460}
}

@article{angelopoulos2020sets,
  title={Uncertainty Sets for Image Classifiers using Conformal Prediction},
  author={Angelopoulos, Anastasios N and Bates, Stephen and Malik, Jitendra and Jordan, Michael I},
  journal={preprint at arXiv:2009.14193},
  year={2020}
}

@article{lei2013conformal,
author = {Lei, Jing and Rinaldo, Alessandro and Wasserman, Larry},
year = {2013},
month = {02},
pages = {},
title = {A Conformal Prediction Approach to Explore Functional Data},
volume = {74},
journal = {Annals of Mathematics and Artificial Intelligence},
doi = {10.1007/s10472-013-9366-6}
}

@inproceedings{izbicki2019flexible,
  title={Flexible distribution-free conditional predictive bands using density estimators},
  author={Izbicki, Rafael and Shimizu, Gilson and Stern, Rafael},
  booktitle={International Conference on Artificial Intelligence and Statistics},
  pages={3068--3077},
  year={2020},
  organization={PMLR}
}

@book{krishnamoorthy2009statistical,
  title={Statistical Tolerance Regions: Theory, Applications, and Computation},
  author={Krishnamoorthy, K. and Mathew, T.},
  isbn={9780470473894},
  series={Wiley Series in Probability and Statistics},
  url={https://books.google.com/books?id=1jQh0miU6PQC},
  year={2009},
  publisher={Wiley}
}

@article{wilks1941,
author = "Wilks, S. S.",
doi = "10.1214/aoms/1177731788",
fjournal = "Annals of Mathematical Statistics",
journal = "Ann. Math. Statist.",
month = "03",
number = "1",
pages = "91--96",
publisher = "The Institute of Mathematical Statistics",
title = "Determination of Sample Sizes for Setting Tolerance Limits",
url = "https://doi.org/10.1214/aoms/1177731788",
volume = "12",
year = "1941"
}

@article{tukey1947,
author = "Tukey, John W.",
doi = "10.1214/aoms/1177730343",
fjournal = "Annals of Mathematical Statistics",
journal = "Ann. Math. Statist.",
month = "12",
number = "4",
pages = "529--539",
publisher = "The Institute of Mathematical Statistics",
title = "Non-Parametric Estimation II. Statistically Equivalent Blocks and Tolerance Regions--The Continuous Case",
url = "https://doi.org/10.1214/aoms/1177730343",
volume = "18",
year = "1947"
}

@article{wilks1942,
author = "Wilks, S. S.",
doi = "10.1214/aoms/1177731537",
fjournal = "Annals of Mathematical Statistics",
journal = "Ann. Math. Statist.",
month = "12",
number = "4",
pages = "400--409",
publisher = "The Institute of Mathematical Statistics",
title = "Statistical Prediction with Special Reference to the Problem of Tolerance Limits",
url = "https://doi.org/10.1214/aoms/1177731537",
volume = "13",
year = "1942"
}

@article{wald1943,
author = "Wald, Abraham",
doi = "10.1214/aoms/1177731491",
fjournal = "Annals of Mathematical Statistics",
journal = "Ann. Math. Statist.",
month = "03",
number = "1",
pages = "45--55",
publisher = "The Institute of Mathematical Statistics",
title = "An Extension of Wilks' Method for Setting Tolerance Limits",
url = "https://doi.org/10.1214/aoms/1177731491",
volume = "14",
year = "1943"
}

@inproceedings{vovk2003,
author = {Vovk, Vladimir and Nouretdinov, Ilia and Gammerman, Alexander},
year = {2003},
month = {01},
pages = {768-775},
title = {Testing Exchangeability On-Line.}
}

@article{chernozhukov2019distributional,
  title={Distributional conformal prediction},
  author={Chernozhukov, Victor and W{\"u}thrich, Kaspar and Zhu, Yinchu},
  journal={Proceedings of the National Academy of Sciences},
  volume={118},
  number={48},
  year={2021},
  publisher={National Acad Sciences}
}

@article{romano2020classification,
    title={Classification with Valid and Adaptive Coverage},
    author={Yaniv Romano and Matteo Sesia and Emmanuel J. Cand{\`e}s},
    year={2020},
  journal={Advances in Neural Information Processing Systems},
  volume={33},
  year={2020}
}

@inproceedings{kivaranovic2019adaptive,
  title={Adaptive, Distribution-Free Prediction Intervals for Deep Networks},
  author={Kivaranovic, Danijel and Johnson, Kory D and Leeb, Hannes},
  booktitle={International Conference on Artificial Intelligence and Statistics},
  pages={4346--4356},
  year={2020},
  organization={PMLR}
}

@InProceedings{vovk2012conditional,
	title = 	 {Conditional Validity of Inductive Conformal Predictors},
	author = 	 {Vladimir Vovk},
	booktitle = 	 {{Proceedings of the Asian Conference on Machine Learning}},
	pages = 	 {475--490},
	year = 	 {2012},
	volume = 	 {25},
}

@article{barber2019predictive,
  title={Predictive inference with the jackknife+},
  author={Barber, Rina Foygel and Cand{\`e}s, Emmanuel J and Ramdas, Aaditya and Tibshirani, Ryan J and others},
  journal={Annals of Statistics},
  volume={49},
  number={1},
  pages={486--507},
  year={2021},
  publisher={Institute of Mathematical Statistics}
}

@incollection{romano2019conformalized,
	title = {Conformalized Quantile Regression},
	author = {Romano, Yaniv and Patterson, Evan and Cand{\`e}s, Emmanuel},
	booktitle = {{Advances in Neural Information Processing Systems 32}},
	pages = {3543--3553},
	year = {2019},
}

@article{barber2019limits,
  title={The limits of distribution-free conditional predictive inference},
  author={Foygel Barber, Rina and Candes, Emmanuel J and Ramdas, Aaditya and Tibshirani, Ryan J},
  journal={Information and Inference: A Journal of the IMA},
  volume={10},
  number={2},
  pages={455--482},
  year={2021},
  publisher={Oxford University Press}
}

@article{vovk2015cross,
	title={Cross-conformal predictors},
	author={Vovk, Vladimir},
	journal={{Annals of Mathematics and Artificial Intelligence}},
	volume={74},
	number={1-2},
	pages={9--28},
	year={2015},
	publisher={Springer}
}

@inproceedings{papadopoulos2002inductive,
author="Papadopoulos, Harris
and Proedrou, Kostas
and Vovk, Vladimir
and Gammerman, Alex",
title="Inductive Confidence Machines for Regression",
booktitle={{Machine Learning: European Conference on Machine Learning ECML 2002}},
year="2002",
pages="345--356",
}

@article{cauchois2020knowing,
  author  = {Maxime Cauchois and Suyash Gupta and John C. Duchi},
  title   = {Knowing what You Know: valid and validated confidence sets in multiclass and multilabel prediction},
  journal = {Journal of Machine Learning Research},
  year    = {2021},
  volume  = {22},
  number  = {81},
  pages   = {1-42},
}

@inproceedings{vovk1999machine,
	title={Machine-learning applications of algorithmic randomness},
	author={Vovk, Vladimir and Gammerman, Alexander and Saunders, Craig},
	booktitle={{International Conference on Machine Learning}},
	pages={444--453},
	year={1999}
}

@book{vovk2005algorithmic,
	title={Algorithmic learning in a random world},
	author={Vovk, Vladimir and Gammerman, Alex and Shafer, Glenn},
	year={2005},
	publisher={Springer},
	doi = {10.1007/b106715.}
}

@book{arnold2008first,
  title={A first course in order statistics},
  author={Arnold, Barry C and Balakrishnan, Narayanaswamy and Nagaraja, Haikady Navada},
  year={2008},
  publisher={SIAM}
}

@article{petrov1975independent,
  title={Sums of Independent Random Variables},
  author={Petrov, VV},
  journal={Yu. V. Prokhorov. V. StatuleviCius (Eds.)},
  year={1975},
  publisher={Springer}
}

@article{liptak1958combination,
  title={On the combination of independent tests},
  author={Lipt{\'a}k, Tam{\'a}s},
  journal={Magyar Tud Akad Mat Kutato Int Kozl},
  volume={3},
  pages={171--197},
  year={1958}
}

@article{van1967combination,
  title={On the combination of independent test statistics},
  author={Van Zwet, WR and Oosterhoff, J},
  journal={Ann. Math. Stat.},
  volume={38},
  number={3},
  pages={659--680},
  year={1967},
  publisher={JSTOR}
}

@article{stouffer1949american,
  title={The american soldier: Adjustment during army life.(studies in social psychology in world war II), vol. 1},
  author={Stouffer, Samuel A and Suchman, Edward A and DeVinney, Leland C and Star, Shirley A and Williams Jr, Robin M},
  year={1949},
  publisher={Princeton Univ. Press}
}

@article{vovk2020combining,
  title={Combining p-values via averaging},
  author={Vovk, Vladimir and Wang, Ruodu},
  journal={Biometrika},
  volume={107},
  number={4},
  pages={791--808},
  year={2020},
  publisher={Oxford University Press}
}

@article{dvoretzky1956asymptotic,
  title={Asymptotic minimax character of the sample distribution function and of the classical multinomial estimator},
  author={Dvoretzky, Aryeh and Kiefer, Jack and Wolfowitz, Jacob},
  journal={Ann. Math. Stat.},
  pages={642--669},
  year={1956},
  publisher={JSTOR}
}

@article{massart1990tight,
  title={The tight constant in the Dvoretzky-Kiefer-Wolfowitz inequality},
  author={Massart, Pascal},
  journal={Annals of Probability},
  pages={1269--1283},
  year={1990},
  publisher={JSTOR}
}

@article{sarkar2008generalizing,
  title={Generalizing Simes’ test and Hochberg’s stepup procedure},
  author={Sarkar, Sanat K},
  journal={Annals of Statistics},
  volume={36},
  number={1},
  pages={337--363},
  year={2008},
  publisher={Institute of Mathematical Statistics}
}

@book{durbin1973distribution,
  title={Distribution Theory for Tests Based on Sample Distribution Function},
  author={Durbin, J},
  volume={9},
  year={1973},
  publisher={SIAM}
}

@article{kotel1983computing,
  title={On computing the probability of an empirical process not crossing a curvilinear boundary},
  author={Kotel’Nikova, VF and Chmaladze, EV},
  journal={Theory of probability \& its applications},
  volume={27},
  number={3},
  pages={640--648},
  year={1983},
  publisher={SIAM}
}

@article{siegmund1986boundary,
  title={Boundary crossing probabilities and statistical applications},
  author={Siegmund, David},
  journal={Annals of Statistics},
  pages={361--404},
  year={1986},
  publisher={JSTOR}
}

@article{dempster1959generalized,
  title={Generalized $D^{+}_n$ Statistics},
  author={Dempster, AP},
  journal={Ann. Math. Stat.},
  volume={30},
  number={2},
  pages={593--597},
  year={1959},
  publisher={JSTOR}
}

@article{brown1975400,
  title={400: A method for combining non-independent, one-sided tests of significance},
  author={Brown, Morton B},
  journal={Biometrics},
  pages={987--992},
  year={1975},
  publisher={JSTOR}
}

@article{kost2002combining,
  title={Combining dependent P-values},
  author={Kost, James T and McDermott, Michael P},
  journal={Statistics \& Probability Letters},
  volume={60},
  number={2},
  pages={183--190},
  year={2002},
  publisher={Elsevier}
}

@article{storey2004strong,
  title={Strong control, conservative point estimation and simultaneous conservative consistency of false discovery rates: a unified approach},
  author={Storey, John D and Taylor, Jonathan E and Siegmund, David},
  journal={Journal of the Royal Statistical Society: Series B (Statistical Methodology)},
  volume={66},
  number={1},
  pages={187--205},
  year={2004},
  publisher={Wiley Online Library}
}

@inproceedings{lee2018simple,
  title={A Simple Unified Framework for Detecting Out-of-Distribution Samples and Adversarial Attacks},
  author={Lee, Kimin and Lee, Kibok and Lee, Honglak and Shin, Jinwoo},
  booktitle={NeurIPS},
  year={2018}
}

@inproceedings{lee2018training,
  title={Training Confidence-calibrated Classifiers for Detecting Out-of-Distribution Samples},
  author={Lee, Kimin and Lee, Honglak and Lee, Kibok and Shin, Jinwoo},
  booktitle={International Conference on Learning Representations},
  year={2018}
}

@article{hendrycks2016baseline,
 author    = {Dan Hendrycks and Kevin Gimpel},
  title     = {A Baseline for Detecting Misclassified and Out-of-Distribution Examples in Neural Networks},
  journal = {Proceedings of International Conference on Learning Representations},
  year = {2017}
}

@article{liang2017enhancing,
  title={Enhancing the reliability of out-of-distribution image detection in neural networks},
  author={Liang, Shiyu and Li, Yixuan and Srikant, Rayadurgam},
  journal={arXiv preprint arXiv:1706.02690},
  year={2017}
}

@article{benjamini2006adaptive,
  title={Adaptive linear step-up procedures that control the false discovery rate},
  author={Benjamini, Yoav and Krieger, Abba M and Yekutieli, Daniel},
  journal={Biometrika},
  volume={93},
  number={3},
  pages={491--507},
  year={2006},
  publisher={Oxford University Press}
}

@article{moran1947random,
  title={The random division of an interval},
  author={Moran, PAP},
  journal={Supplement to the Journal of the Royal Statistical Society},
  volume={9},
  number={1},
  pages={92--98},
  year={1947},
  publisher={JSTOR}
}

@article{guan2019prediction,
  title={Prediction and outlier detection in classification problems},
  author={Guan, Leying and Tibshirani, Rob},
  journal={arXiv preprint arXiv:1905.04396},
  year={2019}
}

@article{haroush2021statistical,
  title={Statistical Testing for Efficient Out of Distribution Detection in Deep Neural Networks},
  author={Haroush, Matan and Frostig, Tzivel and Heller, Ruth and Soudry, Daniel},
  journal={arXiv preprint arXiv:2102.12967},
  year={2021}
}

@inproceedings{ishimtsev2017conformal,
  title={Conformal $ k $-NN Anomaly Detector for Univariate Data Streams},
  author={Ishimtsev, Vladislav and Bernstein, Alexander and Burnaev, Evgeny and Nazarov, Ivan},
  booktitle={Conformal and Probabilistic Prediction and Applications},
  pages={213--227},
  year={2017},
  organization={PMLR}
}

@inproceedings{fedorova2012plug,
author = {Fedorova, Valentina and Gammerman, Alex and Nouretdinov, Ilia and Vovk, Vladimir},
title = {Plug-in Martingales for Testing Exchangeability on-Line},
year = {2012},
isbn = {9781450312851},
publisher = {Omnipress},
address = {Madison, WI, USA},
booktitle = {Proceedings of the 29th International Coference on International Conference on Machine Learning},
pages = {923–930},
numpages = {8},
location = {Edinburgh, Scotland},
series = {ICML'12}
}

@article{vovk2019testing,
author = {Vladimir Vovk},
title = {{Testing Randomness Online}},
volume = {36},
journal = {Statistical Science},
number = {4},
publisher = {Institute of Mathematical Statistics},
pages = {595 -- 611},
keywords = {change detection, conformal prediction, exchangeability, martingale, online setting, testing randomness},
year = {2021},
}

@article{vovk2020testing,
  title={Testing for concept shift online},
  author={Vovk, Vladimir},
  journal={arXiv preprint arXiv:2012.14246},
  year={2020}
}

@inproceedings{vovk2021retrain,
  title={Retrain or not retrain: Conformal test martingales for change-point detection},
  author={Vovk, Vladimir and Petej, Ivan and Nouretdinov, Ilia and Ahlberg, Ernst and Carlsson, Lars and Gammerman, Alex},
  booktitle={Conformal and Probabilistic Prediction and Applications},
  pages={191--210},
  year={2021},
  organization={PMLR}
}

@article{ross2011fundamentals,
  title =	 {Fundamentals of Stein’s method},
  author =	 {Ross, Nathan},
  journal =	 {Probability Surveys},
  volume =	 {8},
  pages =	 {210--293},
  year =	 {2011},
  publisher =	 {Institute of Mathematical Statistics and Bernoulli
                  Society}
}

@book{boucheron2013concentration,
  title =	 {Concentration inequalities: A nonasymptotic theory
                  of independence},
  author =	 {Boucheron, St{\'e}phane and Lugosi, G{\'a}bor and
                  Massart, Pascal},
  year =	 {2013},
  publisher =	 {Oxford university press}
}

@book{wainwright2019high,
  title =	 {High-dimensional statistics: A non-asymptotic
                  viewpoint},
  author =	 {Wainwright, Martin J},
  volume =	 {48},
  year =	 {2019},
  publisher =	 {Cambridge University Press}
}

@article{mary2021semi,
  title={Semi-supervised multiple testing},
  author={Mary, David and Roquain, Etienne},
  journal={arXiv preprint arXiv:2106.13501},
  year={2021}
}

@article{weinstein2017power,
  title={A power and prediction analysis for knockoffs with lasso statistics},
  author={Weinstein, Asaf and Barber, Rina and Candes, Emmanuel},
  journal={arXiv preprint arXiv:1712.06465},
  year={2017}
}

@article{yang2021bonus,
  title={{BONuS}: Multiple multivariate testing with a data-adaptive test statistic},
  author={Yang, Chiao-Yu and Lei, Lihua and Ho, Nhat and Fithian, Will},
  journal={arXiv preprint arXiv:2106.15743},
  year={2021}
}

@article{rava2021burden,
  title={A Burden Shared is a Burden Halved: A Fairness-Adjusted Approach to Classification},
  author={Rava, Bradley and Sun, Wenguang and James, Gareth M and Tong, Xin},
  journal={arXiv preprint arXiv:2110.05720},
  year={2021}
}

\appendix

\renewcommand{\thefigure}{A\arabic{figure}}
\setcounter{figure}{0}
\renewcommand{\thetable}{A\arabic{table}}
\setcounter{table}{0}

\section{Technical proofs} \label{app:proofs}

\subsection{Correlation structure of null marginal conformal p-values} \label{app:correlation-structure}

For notational convenience, we write $p_i$ instead of $\hat{u}^\marg(X_{2n+i})$. When $X_{2n+1}, \ldots, X_{2n+m}$ are all inliers which are drawn from $P_{X}$, the conformal p-values $p_1, \ldots, p_m$ are exchangeable. 
% The following proposition shows that the pairwise correlation of $p_i$ and $p_j$ is invariant to any finite-valued  transformations. 
% \begin{prop}\label{prop:pairwise_correlation}
% Assume that $\hat{s}(X)$ has a non-atomic distribution. Then, for any function $G: [0, 1]\mapsto \R$, and for any pair of null conformal p-values $(p_i, p_j)$,
% \[\mathrm{Cor}\left[ G(p_i), G(p_j) \right] = \frac{1}{n+2}.\]
% \end{prop}
% Lemma~\ref{lem:pairwise_correlation} suggests a heuristic proof for the fix of Fisher's combination test stated in Corollary~\ref{cor:fisher_fix}. 
Lemma~\ref{lem:pairwise_correlation} suggests that the variance of the combination statistic with any transformation $G(\cdot)$ is $(1 + \gamma)$ times as large as that when the p-values are i.i.d.. In fact, under the global null,
\begin{align*}
\mathrm{Var}\left[\sum_{i=1}^{m}G(p_i)\right]& = m\mathrm{Var}\left[ G(p_1)\right] + m(m-1)\mathrm{Cov}\left[ G(p_1), G(p_2) \right]\\
& = \left(m + \frac{m(m-1)}{n+2}\right)\mathrm{Var}\left[G(p_1)\right]\\
& \approx (1 + \gamma)m\mathrm{Var}\left[G(p_1)\right].
\end{align*}
% Therefore, the variance is inflated by a multiplicative factor $1 + \gamma$ compared to the case of independent p-values, so the critical value needs to be multiplied by $\sqrt{1 + \gamma}$.

\begin{proof}[Proof of Lemma~\ref{lem:pairwise_correlation}]
Without loss of generality, assume $i = 1$ and $j = 2$. Let $(R_1, \ldots, R_n, R_{n+1}, R_{n+2})$ be the rank of $(S_{1}, \ldots, S_{n+2})\stackrel{d}{=}(\hat{s}(X_{n+1}), \ldots, \hat{s}(X_{2n}), \hat{s}(X_{2n+1}), \hat{s}(X_{2n+2}))$ in the ascending order. Then $S_1, \ldots, S_{n+2}$ are i.i.d.~draws from a non-atomic distribution, $(R_1, \ldots, R_{n+2})$ are mutually distinct almost surely and for any permutation $\pi: \{1, \ldots, n+2\}\mapsto \{1, \ldots, n+2\}$, 
\[(S_{\pi(1)}, \ldots, S_{\pi(n+2)})\stackrel{d}{=}(S_1, \ldots, S_{n+2}).\]
Therefore, $(R_1, \ldots, R_{n+2})$ is uniformly distributed over all permutations of $\{1, \ldots, n + 2\}$.
%\[(R_1, \ldots, R_{n+2})\sim \mathrm{Unif}(\{1, \ldots, n + 2\}).\]
By definition, 
\[p_1 = \frac{1}{n+1}\sum_{i=1}^{n+1}I(S_i \le S_{n+1}), \quad R_{n+1} = \sum_{i=1}^{n+2}I(S_i \le S_{n+1}).\]
Thus, 
\[p_1 = \frac{R_{n+1} - I(S_{n+2}\le S_{n+1})}{n+1}.\]
Similarly,
\[p_2 = \frac{R_{n+2} - I(S_{n+1}\le S_{n+2})}{n+1}.\]
For any $j\in \{1, \ldots, n+1\}$, 
\begin{align*}
&\P\left[p_1 = p_2 = \frac{j}{n+1}\right] = \P\left[ R_{n+1} = j+1, R_{n+2} = j\right]+ \P\left[ R_{n+1} = j, R_{n+2} = j+1\right]  \\
& = 2\P\left[ R_{n+1} = j+1, R_{n+2} = j \right] = \frac{2}{(n+2)(n+1)}.
\end{align*}
For any $1 \le j < k\le n+1$, 
\begin{align*}
&\P\left[ p_1 = \frac{j}{n+1}, p_2 = \frac{k}{n+1}\right] = \P\left[ R_{n+1} = j, R_{n+2} = k + 1 \right] = \frac{1}{(n+2)(n+1)}.
\end{align*}
By symmetry,
\begin{align*}
&\P\left[ p_1 = \frac{k}{n+1}, p_2 = \frac{j}{n+1} \right] = \frac{1}{(n+2)(n+1)}.
\end{align*}
As a result,
\begin{align*}
\E[G(p_1)G(p_2)] &= \frac{2}{(n+2)(n+1)}\sum_{j=1}^{n+1}G^2\left(\frac{j}{n+1}\right) + \frac{1}{(n+2)(n+1)}\sum_{j\not=k}G\left(\frac{j}{n+1}\right)G\left(\frac{k}{n+1}\right)\\
& = \frac{1}{(n+2)(n+1)}\sum_{j=1}^{n+1}G^2\left(\frac{j}{n+1}\right) + \frac{1}{(n+2)(n+1)}\left\{\sum_{j=1}^{n+1}G\left(\frac{j}{n+1}\right)\right\}^2.
\end{align*}
On the other hand, since $p_1$ is uniformly distributed on $\{1 / (n+1), 2 / (n + 1), \ldots, 1\}$, 
\[\E[G(p_1)] = \frac{1}{n+1}\sum_{j=1}^{n+1}G\left(\frac{j}{n+1}\right), \quad \E[G^2(p_1)] = \frac{1}{n+1}\sum_{j=1}^{n+1}G^2\left(\frac{j}{n+1}\right).\]
Note that $\E[G^2(p_1)] < \infty$ since $G(i / (n + 1))\in \mathbb{R}$. As a result,
\begin{align*}
  \mathrm{Cov}\left[ G(p_1), G(p_2) \right] 
  &= \frac{1}{(n+2)(n+1)}\sum_{j=1}^{n+1}G^2\left(\frac{j}{n+1}\right) - \frac{1}{(n+2)(n+1)^2}\left\{\sum_{j=1}^{n+1}G\left(\frac{j}{n+1}\right)\right\}^2\\
  & = \frac{1}{n+2}\left\{\E[G^2(p_1)] - (\E[G(p_1)])^2\right\} = \frac{1}{n+2}\mathrm{Var}[G(p_1)].
\end{align*}
Therefore, 
\[\mathrm{Cor}\left[ G(p_1), G(p_2) \right] = \frac{\mathrm{Cov}\left[ G(p_1), G(p_2) \right]}{\sqrt{\mathrm{Var}[G(p_1)]\mathrm{Var}[G(p_2)]}} = \frac{1}{n + 2}.\]
\end{proof}

\subsection{Failure of type-I error control with combination tests} \label{app:global_testing}
We state a theorem for general (adjusted) combination tests which reject the global null if $$
\sum_{i=1}^{m}G(\hat{u}^\marg(Z_{2n+i}))\ge \xi c_{1-\alpha}(G) - m(\xi - 1)\int_0^1 G(u)du
$$
where $\xi > 0$ is a pre-specified constant and 
$$c_{1-\alpha}(G)\triangleq \textnormal{Quantile}_{1 - \alpha}\left(\sum_{i=1}^{m}G(U_i)\right), \quad U_i\stackrel{\mathrm{i.i.d.}}{\sim}\Unif([0, 1]).
$$

\begin{thm}\label{thm:combination_test}
Assume $\hat{s}(X)$ is continuously distributed and $G(\cdot): [0, 1]\mapsto \R$ is a non-constant function satisfying
\begin{enumerate}[(i)]
    \item $\int_{0}^{1}G^{2 + \eta}(u)du < \infty$;
    \item $\big|\frac{1}{n+1}\sum_{j=1}^{n+1}G^k\left(j / (n + 1)\right) - \int_{0}^{1}G^{k}(u)du\big| = o(1 / \sqrt{n})$, for $k \in \{1, 2\}$;
    \item $\max_{j\in \{1, \ldots, n + 1\}}G(j / (n + 1)) = o(\sqrt{n})$.
\end{enumerate}
Then, under the global null, if $m = \lfloor\gamma n\rfloor$ for some $\gamma \in (0, \infty)$, as $n \to \infty$, 
\begin{equation}\label{eq:unconditional_typeI}
\P\left[\sum_{i=1}^{m} G(\hat{u}^\marg(X_{2n+i}))\ge \xi c_{1 - \alpha}(G) - m(\xi - 1)\int_0^1 G(u)du\right]\rightarrow \bar{\Phi}\left(\frac{\xi z_{1 - \alpha}}{\sqrt{1 + \gamma}}\right),
\end{equation}
where $z_{1 - \alpha}$ and $\bar{\Phi}$ denote the $(1 - \alpha)$-th quantile and the survival function of the standard normal distribution, respectively. Furthermore, under the same asymptotic regime, for $W\sim N(0, 1)$,
\begin{equation}\label{eq:conditional_typeI-app}
\P\left[\sum_{i=1}^{m}G(\hat{u}^\marg(X_{2n+i}))\ge \xi c_{1 - \alpha}(G) - m(\xi - 1)\int_0^1 G(u)du\mid \mathcal{D} \right]\stackrel{d}{\rightarrow} \bar{\Phi}(\xi z_{1 - \alpha} + \sqrt{\gamma} W).
\end{equation}
\end{thm}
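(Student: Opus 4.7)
The plan is to condition on $\mathcal{D}$ and exploit the fact that, under the global null and continuous $\hat{s}$, the p-values $\hat{u}^\marg(X_{2n+1}),\ldots,\hat{u}^\marg(X_{2n+m})$ are conditionally i.i.d.~with a discrete distribution on $\{1/(n+1),\ldots,1\}$. Writing $S_{(1)}<\cdots<S_{(n)}$ for the ordered calibration scores and setting $S_{(0)}=-\infty$, $S_{(n+1)}=+\infty$, the conditional mass at $k/(n+1)$ is $p_k = F(S_{(k)})-F(S_{(k-1)})$, and by the probability integral transform $(p_1,\ldots,p_{n+1})$ has the joint law of the spacings of $n$ i.i.d.~uniforms on $[0,1]$. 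Let $g_k = G(k/(n+1))$, $\mu(\mathcal{D})=\sum_{k=1}^{n+1}p_k g_k$, $\sigma^2(\mathcal{D})=\sum_k p_k g_k^2 - \mu(\mathcal{D})^2$, $\mu_G = \int_0^1 G(u)\,du$, and $\sigma_G^2 = \int_0^1 G^2(u)\,du - \mu_G^2$.

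I would then establish three asymptotic ingredients. First, a \emph{conditional CLT}: given $\mathcal{D}$, $T_m := \sum_{i=1}^m G(\hat{u}^\marg(X_{2n+i}))$ is a sum of i.i.d.~bounded variables; condition (iii) (giving $\max_k |g_k|=o(\sqrt n)=o(\sqrt m)$) verifies Lindeberg's condition, so $(T_m - m\mu(\mathcal{D}))/(\sqrt m\,\sigma(\mathcal{D}))\mid \mathcal{D} \xrightarrow{d} N(0,1)$. Second, a \emph{CLT for $\mu(\mathcal{D})$} via the exponential representation of uniform spacings: writing $p_k = E_k/T_n$ with $E_j\stackrel{\mathrm{i.i.d.}}{\sim}\mathrm{Exp}(1)$ and $T_n = \sum_{j=1}^{n+1}E_j$, one has $\mu(\mathcal{D}) - \bar g_n = T_n^{-1}\sum_k (g_k-\bar g_n)E_k$ where $\bar g_n = (n+1)^{-1}\sum_k g_k$. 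An ordinary CLT (again using condition (iii) for Lindeberg) on the numerator combined with $T_n/(n+1)\to 1$ a.s.~yields $\sqrt n(\mu(\mathcal{D}) - \bar g_n)\xrightarrow{d}N(0,\sigma_G^2)$, and condition (ii) upgrades this to $\sqrt n(\mu(\mathcal{D})-\mu_G)\xrightarrow{d}N(0,\sigma_G^2)$. A parallel argument (or direct moment computation using $\mathrm{Var}(p_k),\mathrm{Cov}(p_k,p_\ell)$) gives $\sigma^2(\mathcal{D})\to\sigma_G^2$ in probability. Third, the \emph{quantile expansion} $c_{1-\alpha}(G) = m\mu_G + \sqrt m\,\sigma_G z_{1-\alpha} + o(\sqrt m)$, which follows from the ordinary CLT for $\sum G(U_i)$ under the $(2+\eta)$-th moment condition (i). Hence the threshold simplifies to $\tau_m := \xi c_{1-\alpha}(G) - m(\xi-1)\mu_G = m\mu_G + \xi\sqrt m\,\sigma_G z_{1-\alpha} + o(\sqrt m)$.

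Assembling the pieces, the conditional probability equals $\P[(T_m-m\mu(\mathcal{D}))/(\sqrt m\,\sigma(\mathcal{D}))\ge \tau_m^* \mid \mathcal{D}]$ with
\[
\tau_m^* \;=\; \frac{\xi z_{1-\alpha}\sigma_G}{\sigma(\mathcal{D})} \;-\; \sqrt{m/n}\cdot\frac{\sqrt n(\mu(\mathcal{D})-\mu_G)}{\sigma(\mathcal{D})} + o_\P(1) \;\xrightarrow{d}\; \xi z_{1-\alpha} - \sqrt{\gamma}\,W',
\]
for $W'\sim N(0,1)$, by Slutsky's lemma combined with $\sigma(\mathcal{D})\to\sigma_G$ in probability and $m/n\to\gamma$. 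Continuity of $\bar\Phi$ then yields the conditional limit $\bar\Phi(\xi z_{1-\alpha}-\sqrt{\gamma}W') \stackrel{d}{=} \bar\Phi(\xi z_{1-\alpha}+\sqrt{\gamma}W)$ as claimed. The unconditional limit follows by taking expectations: writing $Z\sim N(0,1)$ independent of $W$, $\E[\bar\Phi(\xi z_{1-\alpha}+\sqrt{\gamma}W)] = \P[Z-\sqrt{\gamma}W\ge \xi z_{1-\alpha}] = \bar\Phi(\xi z_{1-\alpha}/\sqrt{1+\gamma})$, since $Z-\sqrt{\gamma}W\sim N(0,1+\gamma)$.

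The main technical obstacle is the \emph{joint} handling of the conditional CLT (which a priori holds for $\mathcal{D}$ in a set of probability one, with a $\mathcal{D}$-dependent rate) and the distributional limit of the random threshold $\tau_m^*$. To pass cleanly from ``$\P[\cdot\mid\mathcal{D}] - \bar\Phi(\tau_m^*) \to 0$ for a.e.~$\mathcal{D}$'' to a statement about the random variable $\P[T_m\ge\tau_m\mid\mathcal{D}]$, one either needs a uniform-in-$\mathcal{D}$ rate for the conditional CLT (a Berry--Esseen bound under condition (i) via truncation, since only $2+\eta$ moments are assumed) or a careful subsequence argument exploiting the tightness of $(\mu(\mathcal{D}),\sigma(\mathcal{D}))$. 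Everything else is routine once this step is in place; the broader class of $G$ just replaces the $\chi^2$-specific constants $(\mu_G,\sigma_G^2)=(2,4)$ from Fisher's combination test with the general $(\mu_G,\sigma_G^2)$, so that Theorem~\ref{thm:fisher} emerges as the special case $\xi=1$ and equation~\eqref{eq:fisher_corrected} as the case $\xi=\sqrt{1+\gamma}$.
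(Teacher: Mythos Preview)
Your proposal is correct and follows essentially the same route as the paper: condition on $\mathcal{D}$, represent the conditional p-value distribution via uniform spacings and the exponential representation, establish a conditional CLT for $T_m$, a CLT for the conditional mean $\mu(\mathcal{D})$, and a normal approximation for $c_{1-\alpha}(G)$, then combine via Slutsky and take expectations for the unconditional statement. The paper resolves the technical obstacle you flag precisely via your first suggestion---a Berry--Esseen bound yielding a $\mathcal{D}$-uniform rate on a high-probability event $\mathcal{E}_n$ where $\sigma^2(\mathcal{D})$ is close to $\sigma_G^2$---and no truncation is needed: since the marginal p-values take only the $n+1$ values $k/(n+1)$, $G(p_i)$ is automatically bounded by $g_n:=\max_k G(k/(n+1))=o(\sqrt n)$ from condition~(iii), which directly controls the third-moment ratio in Berry--Esseen.
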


\begin{remark}
For Fisher's combination test, $G(u) = -2\log u$. Since $G(U)\sim \chi^2(2)$, condition (i) is clearly satisfied. To verify (ii), we note that $G(u)$ is decreasing and $|G'(u)| = 2 / u$ is decreasing. Thus, for $u\in [(j-1)/(n+1), j/(n+1)]$, for $k\in \{1, 2\}$,
\[0\le G^{k}(u) - G^{k}\left(\frac{j}{n+1}\right)\le \frac{k}{n+1}G^{k-1}\left(\frac{j}{n+1}\right)G'\left(\frac{j}{n+1}\right)\le \frac{8\log(n + 1)}{j}.\]
As a result,
\begin{align*}
&\bigg|\frac{1}{n+1}\sum_{j=1}^{n+1}G^k\left(j / (n + 1)\right) - \int_{0}^{1}G^{k}(u)du\bigg| \le \sum_{j=1}^{n+1}\bigg|\frac{1}{n+1}G^{k}\left(\frac{j}{n+1}\right) - \int_{(j-1)/(n+1)}^{j/(n+1)}G^{k}(u)du\bigg|\\
& \le \sum_{j=1}^{n+1}\int_{(j-1)/(n+1)}^{j/(n+1)}\big|G^k\left(j / (n + 1)\right) - G^{k}(u)\big|du \le \frac{1}{n+1}\sum_{j=1}^{n+1}\frac{8\log(n + 1)}{j} = O\left(\frac{\log^2 n}{n}\right).
\end{align*}
Thus, (ii) is proved. Finally, (iii) is satisfied because $G(j/(n+1))\le G(1/(n+1)) = O(\log n)$. Therefore, Theorem~\ref{thm:fisher} is a special case of Theorem~\ref{thm:combination_test} with $\xi = 1$. In general, it is easy to verify (i)--(iii) for various other combination functions~\cite{liptak1958combination, van1967combination, vovk2020combining}.
\end{remark}

\begin{remark}
By \eqref{eq:unconditional_typeI}, the limiting marginal type-I error is $\alpha$ when $\xi = \sqrt{1 + \gamma}$. This implies \eqref{eq:fisher_corrected} by noting that $\int_0^1 (-2\log u)du = 2$. By \eqref{eq:conditional_typeI-app}, since the random variable $\bar{\Phi}(\xi z_{1 - \alpha} + \sqrt{\gamma}W)$ has a positive density everywhere, the $(1 - \delta)$-th quantile of the conditional type-I error converges to the $(1 - \delta)$-th quantile of $\bar{\Phi}(\xi z_{1 - \alpha} + \sqrt{\gamma}W)$, which is $\bar{\Phi}(\xi z_{1 - \alpha} - \sqrt{\gamma}z_{1 - \delta})$. Thus, the conditional type-I error is controlled at level $\alpha$ with probability at least $1 - \delta$ asymptotically if $\xi = 1 + \sqrt{\gamma}z_{1 - \delta}/ z_{1 - \alpha}$. 
\end{remark}

\begin{remark}
To confirm our theory, we run Monte-Carlo simulations with $n = 10^{5}$ and $\gamma \in \{2^{-3}, 2^{-2},\ldots, 2^{3}\}$, estimating  the average type-I error across $10^{4}$ samples. Since $\hat{s}(X)$ is continuously distributed, we can assume that $\hat{s}(X) \sim \Unif([0, 1])$ without loss of generality, as we will show in the proof. Figure~\ref{fig:global_testing} presents the simulated and asymptotic type-I errors for both the unadjusted ($\xi = 1$) and adjusted ($\xi = \sqrt{1 + \gamma}$) Fisher's combination test given by \eqref{eq:fisher_corrected}. 
\begin{figure}[htp]
    \centering
    \includegraphics[width = 0.7\textwidth]{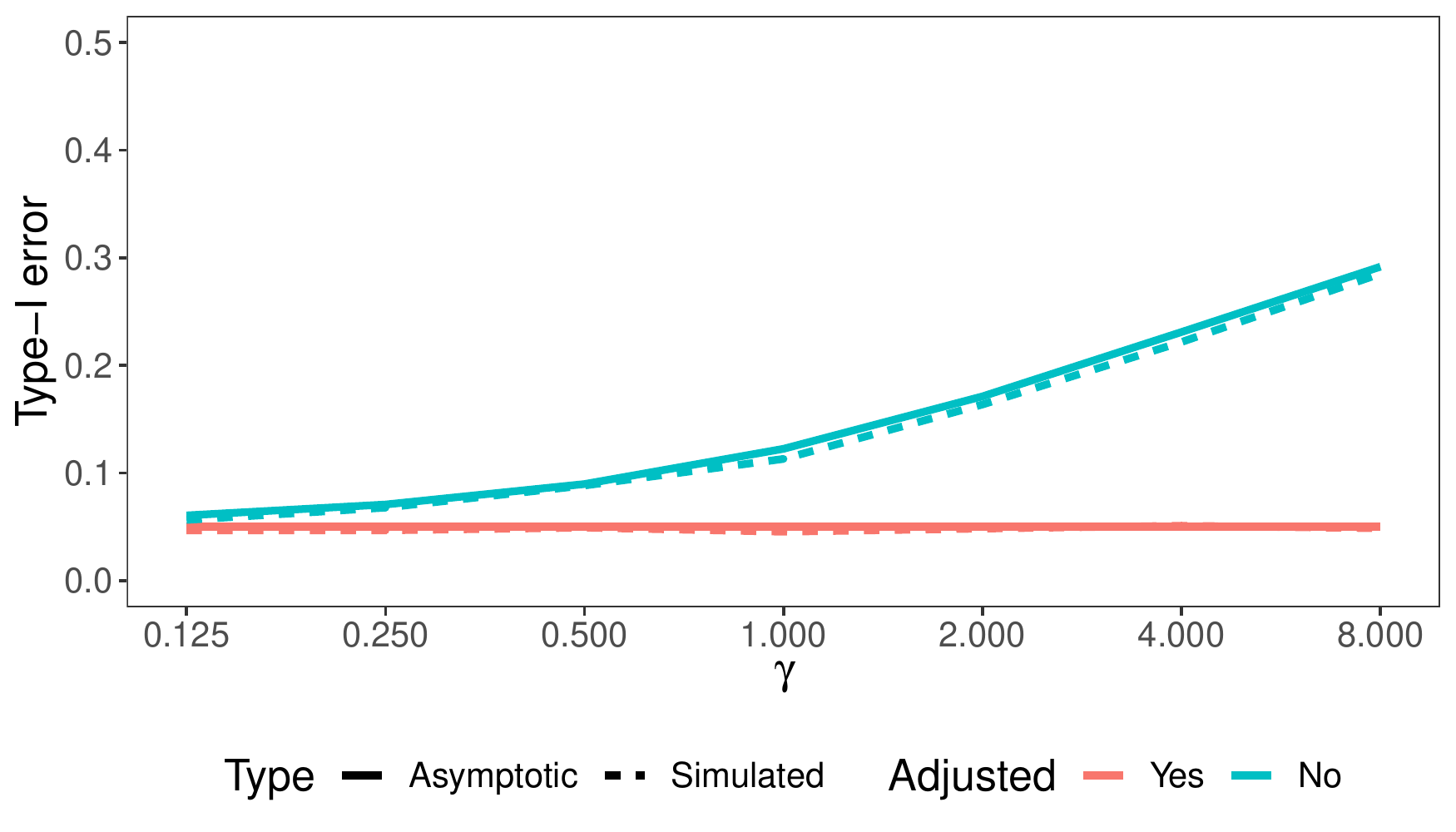}
    \caption{Type-I errors of unadjusted and adjusted Fisher's combination test.}
    \label{fig:global_testing}
\end{figure}
\end{remark}

\begin{remark}
If the $p_i$'s are dependent,~\cite{brown1975400} and~\cite{kost2002combining} approximate the null distribution by a rescaled chi-square distribution $c\chi^2(f)$, where $c$ and $f$ are chosen to match the mean and variance of the Fisher's combination statistic $S_{\mathrm{Fisher}}$. Specifically,
\[c = \frac{\mathrm{Var}[S_{\mathrm{Fisher}}]}{2\E[S_{\mathrm{Fisher}}]}, \quad f = \frac{2\E[S_{\mathrm{Fisher}}]^2}{\mathrm{Var}[S_{\mathrm{Fisher}}]}.\]
In our case, it is easy to see that 
\[\E[S_{\mathrm{Fisher}}]\approx 2m, \quad \mathrm{Var}[S_{\mathrm{Fisher}}]\approx 4m (1 + \gamma).\]
As a result, the null distribution is approximated by $(1 + \gamma)\chi^2(2m / (1 + \gamma))$. The central limit theorem implies that $\chi^2(f)\approx N(f, 2f)$ when $f$ is large. Thus, the critical value for this approximation is
\[(1 + \gamma)\chi^2(2m / (1 + \gamma); 1 - \alpha)\approx (1 + \gamma)\left(\frac{2m}{1 + \gamma} + \sqrt{\frac{2m}{1 + \gamma}}z_{1 - \alpha}\right) = 2m + \sqrt{2m(1 + \gamma)}z_{1 - \alpha}.\]
Similarly, the critical value for our correction \eqref{eq:fisher_corrected} is 
\[\sqrt{1 + \gamma}\chi^2(2m; 1 - \alpha) - 2(\sqrt{1 + \gamma} - 1)m\approx \sqrt{1 + \gamma}(2m + \sqrt{2m}z_{1 - \alpha}) - 2(\sqrt{1 + \gamma} - 1)m\approx 2m + \sqrt{2m(1 + \gamma)}z_{1 - \alpha}.\]
Therefore, both corrections are asymptotically equivalent.
\end{remark}

To prove Theorem~\ref{thm:combination_test}, we start by stating two lemmas. The first lemma is a general Berry-Esseen bound for sums of independent (but not necessarily identically distributed) random variables with potentially infinite third moments.

\begin{lemma}\label{lem:Berry_Esseen}[\cite{petrov1975independent}, p.~112, Theorem 5]
Let $X_1, X_2, \ldots, X_{n}$ be independent random variables such that $\E[X_j] = 0$, for all $j$. Assume also $\E[X_j^2 g(X_j)] < \infty$ for some function $g$ that is non-negative, even, and non-decreasing in the interval $x > 0$, with $x / g(x)$ being non-decreasing for $x > 0$. Write $B_{n} = \sum_{j}\Var[X_j]$. Then,
\[d_{K}\left(\mathcal{L}\left(\frac{1}{\sqrt{B_{n}}}\sum_{j=1}^{n}X_j\right), N(0, 1)\right)\le \frac{A}{B_n g(\sqrt{B_n})}\sum_{j=1}^{n}\E\left[X_j^2 g(X_j)\right],\]
where $A$ is a universal constant, $\mathcal{L}(\cdot)$ denotes the probability law, $d_{K}$ denotes the Kolmogorov-Smirnov distance (i.e., the $\ell_{\infty}$-norm of the difference of CDFs)
\end{lemma}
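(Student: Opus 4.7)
The plan is to prove this classical generalized Berry--Esseen bound via the Fourier-analytic route, replacing the role of the third absolute moment by the hypothesis $\E[X_j^2 g(X_j)] < \infty$ and exploiting the two monotonicity properties of $g$ at every point where a truncation is needed.

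The first step is to reduce to a characteristic-function estimate by invoking Esseen's smoothing inequality: for $S_n = \sum_j X_j$ and any cutoff $T > 0$,
\[
d_K\!\left(\mathcal{L}(S_n/\sqrt{B_n}),\,N(0,1)\right) \le \frac{1}{\pi}\int_{-T}^{T} \frac{|\psi(t) - e^{-t^2/2}|}{|t|}\,dt + \frac{C_0}{T},
\]
where $\psi(t) = \prod_{j=1}^n \phi_j(t/\sqrt{B_n})$ and $\phi_j$ is the characteristic function of $X_j$. I would then expand each factor via the elementary inequality $|e^{iu} - 1 - iu + u^2/2| \le u^2\min(|u|,2)$, applied with $u = tX_j/\sqrt{B_n}$.

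The heart of the argument converts the resulting expectation into the generalized moment $\beta_j \defeq \E[X_j^2 g(X_j)]$. I would split the expectation at $|X_j| = \sqrt{B_n}/|t|$. On $\{|X_j| \le \sqrt{B_n}/|t|\}$, the minimum equals $|tX_j|/\sqrt{B_n}$, and the monotonicity of $x/g(x)$ gives $|X_j| \le g(X_j)(\sqrt{B_n}/|t|)/g(\sqrt{B_n}/|t|)$, so that $X_j^2\cdot|tX_j|/\sqrt{B_n} \le X_j^2 g(X_j)/g(\sqrt{B_n}/|t|)$. On $\{|X_j| > \sqrt{B_n}/|t|\}$, the monotonicity of $g$ gives $g(X_j) \ge g(\sqrt{B_n}/|t|)$, so that $X_j^2 \le X_j^2 g(X_j)/g(\sqrt{B_n}/|t|)$. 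The two cases combine into
\[
\Bigl|\phi_j(t/\sqrt{B_n}) - 1 + \frac{t^2\sigma_j^2}{2B_n}\Bigr| \le \frac{t^2}{B_n}\cdot\frac{\beta_j}{g(\sqrt{B_n}/|t|)}.
\]

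With this per-factor bound in hand, I would pass to the product $\psi(t)$ using the telescoping identity $|\prod a_j - \prod b_j| \le \sum_j |a_j - b_j|\prod_{k\ne j}\max(|a_k|,|b_k|)$, together with the standard bounds $|\phi_j(t/\sqrt{B_n})| \le \exp(-c t^2 \sigma_j^2/B_n)$ and $|e^{-t^2 \sigma_j^2/(2B_n)} - 1 + t^2\sigma_j^2/(2B_n)| \le C(t^2\sigma_j^2/B_n)^2$, valid on the relevant range of $t$. Comparing $\psi(t)$ with $\prod_j e^{-t^2\sigma_j^2/(2B_n)} = e^{-t^2/2}$ delivers an estimate of the form
\[
|\psi(t) - e^{-t^2/2}| \le C\,e^{-c t^2}\cdot \frac{t^2}{B_n\,g(\sqrt{B_n}/|t|)}\sum_{j=1}^n \beta_j.
\]
Plugging into Esseen's inequality and choosing $T$ of order $\sqrt{B_n}$ so that $g(\sqrt{B_n}/|t|)$ is uniformly comparable to $g(\sqrt{B_n})$ on the integration window, the $|t|^{-1}$ inside the integral absorbs the $t^2$ from the characteristic-function estimate and leaves a bounded integral in $t$, yielding the stated bound.

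The main obstacle is the coordinated use of the two monotonicity properties to align the characteristic-function truncation with the Esseen cutoff $T$: we need $g(\sqrt{B_n}/|t|)$ to remain uniformly of the same order as $g(\sqrt{B_n})$ over $|t|\le T$, and simultaneously the quartic remainder terms $(t^2\sigma_j^2/B_n)^2$ must be reabsorbed into $\beta_j/B_n$, which uses $\sigma_j^2 \le \beta_j/g(\sigma_j)$ (itself a consequence of the monotonicity of $g$ after a short truncation of $X_j$ at level $\sigma_j$). After this bookkeeping the sum of per-summand contributions collapses into the final constant $A/\bigl(B_n g(\sqrt{B_n})\bigr)\sum_j \beta_j$, which is what is claimed.
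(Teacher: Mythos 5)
The paper does not actually prove this lemma---it is imported verbatim from Petrov's monograph---so the only comparison available is with the classical argument. Your Fourier-analytic sketch starts on the right track: the per-summand estimate obtained by splitting at $|X_j|=\sqrt{B_n}/|t|$ and using the two monotonicity properties of $g$ is exactly the right way to replace third moments by $\beta_j=\E[X_j^2 g(X_j)]$. But the final step contains a genuine error: you take the Esseen cutoff $T$ of order $\sqrt{B_n}$ and claim that $g(\sqrt{B_n}/|t|)$ is uniformly comparable to $g(\sqrt{B_n})$ for $|t|\le T$. Both claims fail. Comparability already fails for $g(x)=|x|$ (at $|t|$ near $\sqrt{B_n}$ you are comparing $g(1)$ with $g(\sqrt{B_n})$), and, more importantly, with this cutoff the smoothing remainder $C_0/T=C_0/\sqrt{B_n}$ bears no relation to the claimed bound: after the harmless normalization $B_n=1$ (replace $g$ by $g(\cdot\,\sqrt{B_n})$, which is again admissible), your $T$ and hence $C_0/T$ are constants, so nothing is proved. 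The correct choice is $T\asymp 1/L_n$ with $L_n=\sum_j\beta_j/(B_n g(\sqrt{B_n}))$ (one may assume $L_n$ small, else the bound is trivial because $d_K\le 1$), and on the range $1\le|t|\le T$ one does not need comparability: monotonicity of $x/g(x)$ gives $g(\sqrt{B_n}/|t|)\ge g(\sqrt{B_n})/|t|$, so your per-factor bound becomes of order $|t|^3\beta_j/(B_n g(\sqrt{B_n}))$, i.e.\ the generalized moment acts exactly like a Lyapunov third-moment term, and the extra power of $|t|$ is absorbed by the Gaussian factor in the integral.

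There is a second gap in the product-comparison step, which is not mere bookkeeping in the non-i.i.d.\ case. The bound $|\phi_j(t/\sqrt{B_n})|\le\exp(-ct^2\sigma_j^2/B_n)$ is false on the whole window $|t|\le T$ (take $X_j$ symmetric Bernoulli: $|\phi_j|$ is periodic and returns to $1$), and retaining exponential decay of $|\psi(t)|$ up to $|t|\asymp 1/L_n$ is precisely the delicate part of the non-identically-distributed Berry--Esseen theorem; the naive telescoping bound $|\prod_j a_j-\prod_j b_j|\le\sum_j|a_j-b_j|$ loses the factor $e^{-ct^2}$ and yields an unusable $T^3L_n$ after integration. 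So as written the argument does not close. Two repairs are available: (i) redo the characteristic-function proof with $T\asymp 1/L_n$ and the $g(\sqrt{B_n})/|t|$ device above, treating the product as in the classical non-i.i.d.\ proof; or (ii) avoid Fourier analysis and deduce the lemma from the usual third-moment Berry--Esseen theorem by truncating each $X_j$ at level $\sqrt{B_n}$: on $\{|X_j|\le\sqrt{B_n}\}$ monotonicity of $x/g(x)$ gives $|X_j|^3\le X_j^2 g(X_j)\sqrt{B_n}/g(\sqrt{B_n})$, while on $\{|X_j|>\sqrt{B_n}\}$ monotonicity of $g$ gives $X_j^2\le X_j^2 g(X_j)/g(\sqrt{B_n})$, so the truncated third moments, the induced shifts of means and variances, and the probability that any summand is truncated are all controlled by $\sum_j\beta_j/(B_n g(\sqrt{B_n}))$; this is essentially Katz's original route and is the shortest complete proof.
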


The second lemma is a well-known representation of the spacing between consecutive order statistics. 
\begin{lemma}[From~\cite{moran1947random}; see also Section 4 of~\cite{arnold2008first}]\label{lem:moran}
  Let $U_1, \ldots, U_n\stackrel{i.i.d.}{\sim}\mathrm{Unif}([0, 1])$ and $U_{(1)} \le U_{(2)}\le \ldots \le U_{(n)}$ be their order statistics. Then
\[
(U_{(1)} - U_{(0)}, \ldots, U_{(n+1)} - U_{(n)}) \stackrel{d}{=} \left(\frac{V_1}{\sum_{k=1}^{n+1}V_{k}}, \ldots, \frac{V_{n+1}}{\sum_{k=1}^{n+1}V_{k}}\right),
\]
where $U_{(0)} = 0, U_{(n + 1)}= 1$, and $V_{1},\ldots, V_{n+1}\stackrel{\mathrm{i.i.d.}}{\sim}\mathrm{Exp}(1)$.
\end{lemma}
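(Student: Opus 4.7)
The plan is to prove the equality in distribution by computing the joint density of both sides and showing they agree---indeed, both will turn out to be the uniform (Dirichlet$(1,\ldots,1)$) law on the $n$-dimensional simplex $\Delta_n = \{(d_1,\ldots,d_{n+1}) \in \mathbb{R}_+^{n+1} : \sum_{i} d_i = 1\}$.

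For the left-hand side, I would start from the standard joint density of the uniform order statistics, $n! \cdot \mathbf{1}\{0 < u_1 < \cdots < u_n < 1\}$, and change variables to the first $n$ spacings $D_i = U_{(i)} - U_{(i-1)}$ with $U_{(0)} = 0$. The transformation is lower triangular with unit Jacobian, so $(D_1, \ldots, D_n)$ has density $n!$ on $\{d_i \geq 0, \sum_{i=1}^n d_i \leq 1\}$. The last spacing $D_{n+1} = 1 - \sum_{i=1}^n D_i$ is a deterministic function of the others, pinning the full vector $(D_1, \ldots, D_{n+1})$ to the uniform distribution on $\Delta_n$.

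For the right-hand side, I introduce the bijection $(V_1, \ldots, V_{n+1}) \leftrightarrow (W_1, \ldots, W_n, S)$ with $W_i = V_i/S$ for $i \leq n$ and $S = \sum_{k=1}^{n+1} V_k$, whose inverse is $V_i = S W_i$ for $i \leq n$ and $V_{n+1} = S(1 - \sum_{i=1}^n W_i)$. Adding rows $1$ through $n$ of the Jacobian matrix to row $n+1$ reduces the last row to $(0, \ldots, 0, 1)$, leaving a form whose determinant is immediately $S^n$. Transporting the joint density $e^{-\sum_k v_k}$ of the $V$'s gives joint density $s^n e^{-s}$ for $(W_1, \ldots, W_n, S)$ on $\{w_i \geq 0, \sum_i w_i \leq 1\} \times \mathbb{R}_+$. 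This factorizes as $n! \cdot \mathbf{1}\{w \text{ in the simplex}\} \cdot \tfrac{s^n e^{-s}}{n!}$, so $(W_1, \ldots, W_n)$ is uniform on the simplex with density $n!$---matching the left-hand side exactly---and is independent of $S \sim \text{Gamma}(n+1, 1)$. Because $V_{n+1}/S = 1 - \sum_{i=1}^n W_i$ mirrors the constraint $D_{n+1} = 1 - \sum_{i=1}^n D_i$, the desired $(n+1)$-dimensional equality in law follows.

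The main obstacle is the Jacobian computation, since the $n+1$ ratios $V_i/S$ lie on an $n$-dimensional manifold and the bare map from the $V$'s to the ratios is not a bijection. The fix is to append $S$ as an auxiliary coordinate, perform the row operation above to triangularize the Jacobian, and then recognize that the exponential law factorizes cleanly into a simplex-uniform piece times a Gamma piece after the substitution. Once this factorization is in hand, the two sides of the claimed equality can be read off one another without further work.
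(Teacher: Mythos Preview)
Your proof is correct and follows a standard density-comparison argument. The paper does not actually prove this lemma; it is stated with attribution to Moran (1947) and Arnold et al.\ (2008) and invoked as a classical fact, so there is no paper proof to compare against.
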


\begin{proof}[\textbf{Proof of Theorem~\ref{thm:combination_test}}]
We first prove the limiting conditional type-I error \eqref{eq:conditional_typeI-app}. For convenience, we write $p_i$ instead of $\hat{u}^\marg(X_{2n+i})$ and $S_j$ instead of $\hat{s}(X_{n+j})$. Since $\hat{s}(X)$ is continuously distributed, 
\[
p_i = \frac{1 + |\{j\in \Dcal: S_{j} \le \hat{s}(X_{2n+i}) \}|}{n + 1} = \frac{1 + |\{j\in \Dcal: F_{S}(S_{j}) \le F_{S}(\hat{s}(X_{2n+i})) \}|}{n + 1}\]
where $F_{S}$ denotes the CDF of $\hat{s}(X)$ conditional on $\mathcal{D}$. As a result, we can assume $\hat{s}(X)\sim \Unif([0, 1])$ without loss of generality. Conditional on $\mathcal{D}$, $p_1, \ldots, p_{m}$ are i.i.d.~random variables with 
\[\P\left[p_i = \frac{j}{n+1}\mid \mathcal{D} \right] = S_{(j)} - S_{(j-1)}, \qquad j = 1,\ldots, n+1,
\]
where $S_{(1)} < S_{(2)} < \ldots < S_{(n)}$ denote the order statistics of $(S_1, \ldots, S_{n})$, and $S_{(0)} = 0, S_{(n + 1)}= 1$. By Lemma~\ref{lem:moran}, we can reformulate the distribution of $p_i$ conditional on $\mathcal{D}$ as 
\begin{equation}\label{eq:pi_dist}
    \P\left[p_i = \frac{j}{n+1}\mid \mathcal{D}\right] = \frac{V_{j}}{\sum_{k=1}^{n+1}V_{k}}, \qquad j = 1,\ldots, n+1.
\end{equation}
As a result, for $k \in \{1, 2\}$,
\begin{align}\label{eq:EGk}
\E\left[G^k(p_i)\mid \mathcal{D}\right] = \frac{\sum_{j=1}^{n+1}G^k\left(\frac{j}{n+1}\right)V_{j}}{\sum_{j=1}^{n+1}V_{j}} = \frac{(n+1)^{-1}\sum_{j=1}^{n+1}G^k\left(\frac{j}{n+1}\right)V_{j}}{(n+1)^{-1}\sum_{j=1}^{n+1}V_{j}}.
\end{align}
By the strong law of large number,
\begin{equation}\label{eq:LLN_sumV}
\frac{1}{n+1}\sum_{j=1}^{n+1}V_{j}\stackrel{\mathrm{a.s.}}{\rightarrow}\E[V_{1}] = 1.
\end{equation}
Let $g_{n} = \max_{j\in \{1, \ldots, n+1\}}G(j / (n + 1))$. Since $V_1, \ldots, V_{n+1}$ are independent, 
\begin{align}
&\Var\left[\frac{1}{n+1}\sum_{j=1}^{n+1}G^k\left(\frac{j}{n+1}\right)V_{j}\right] = \sum_{j=1}^{n+1}\frac{1}{(n + 1)^2}\E\left[G^{2k}\left(\frac{j}{n+1}\right)(V_{j} - 1)^2\right]\nonumber\\
& = \frac{1}{(n + 1)^2}\sum_{j=1}^{n + 1}G^{2k}\left(\frac{j}{n+1}\right)\le \frac{g_{n}^{2k - 2}}{(n + 1)}\frac{1}{n+1}\sum_{j=1}^{n + 1}G^{2}\left(\frac{j}{n+1}\right).\label{eq:G2k}
\end{align}
By condition (ii), 
\[\bigg|\frac{1}{n+1}\sum_{j=1}^{n + 1}G^{2}\left(\frac{j}{n+1}\right) - \int_0^1 G^2(u)du\bigg| = o(1),\]
and thus
\[\frac{1}{n+1}\sum_{j=1}^{n + 1}G^{2}\left(\frac{j}{n+1}\right) = O(1).\]
By condition (iii), $g_{n} = o(\sqrt{n})$. Together with \eqref{eq:G2k}, we obtain that for $k\in \{1, 2\}$,
\[\Var\left[\frac{1}{n+1}\sum_{j=1}^{n+1}G^k\left(\frac{j}{n+1}\right)V_{j}\right] = o(1).\]
By Chebyshev's inequality, 
\[\frac{1}{n+1}\sum_{i=1}^{n}G^{k}\left(\frac{j}{n+1}\right)(V_{j} - 1) = o_{P}(1).\]
Applying the condition (ii) again, we arrive at
\begin{equation*}
\frac{1}{n+1}\sum_{i=1}^{n}G^{k}\left(\frac{j}{n+1}\right)V_{j} - \int_{0}^{1}G^k(u)du = o_{P}(1).    
\end{equation*}
By \eqref{eq:EGk},
\begin{equation}\label{eq:EGk_limit}
\E\left[G^k(p_i)\mid \mathcal{D}\right] - \int_{0}^{1}G^k(u)du = o_{P}(1), \qquad k\in \{1, 2\}.
\end{equation}
Let $a_{n}$ be a deterministic sequence such that $a_{n} < 1 / 2$, and $U\sim \Unif([0, 1])$. Let also $\mathcal{E}_{n}$ be the event that $\mathcal{D}$ is such that
\begin{equation}\label{eq:event_Var}
    \mathcal{E}_{n} = \left\{\mathcal{D}: \frac{\Var[G(p_i)\mid \mathcal{D}]}{\Var[G(U)]}\in [1 - a_{n}, 1 + a_{n}]\right\}.
\end{equation}
Since $G$ is a non-constant function, $\Var[G(U)] > 0$. By \eqref{eq:EGk_limit}, we can choose $a_n = o(1)$ such that
\[\P\left[\mathcal{E}_{n}^c\right] = o(1).\]
Let 
\[W_{m} = \frac{\sum_{i=1}^{m}\left\{G(p_i) - \E[G(p_i)\mid \mathcal{D}]\right\}}{\sqrt{m\Var[G(p_i)\mid \mathcal{D}]}}.\]
By Lemma~\ref{lem:Berry_Esseen} with $g(x) = x$, 
\[d_{K}\left(\mathcal{L}\left(W_{m}\mid \mathcal{D}\right), N(0, 1)\right)\le \frac{A}{\sqrt{m}}\frac{\E\left[|G(p_i) - \E[G(p_i)\mid \mathcal{D}]|^3\right]}{\Var[G(p_i)\mid \mathcal{D}]^{3/2}},\]
where $A$ is a universal constant. Since $G(p_i)\le g_{n}$ almost surely, by condition (iii), 
\[\E\left[|G(p_i) - \E[G(p_i)\mid \mathcal{D}]|^3\right]\le 2g_{n}\Var[G(p_i)\mid \mathcal{D}].\]
Thus, 
\[d_{K}\left(\mathcal{L}\left(W_{m}\mid \mathcal{D}\right), N(0, 1)\right)\le \frac{2A}{\sqrt{m}}\frac{g_{n}}{\Var\left[G(p_i)\mid \mathcal{D}\right]^{1/2}}.\]
On the event $\mathcal{E}_{n}$, the condition (iii) and that $n = O(m)$ imply that
\begin{equation*}
  d_{K}\left(\mathcal{L}\left(W_{m}\mid \mathcal{D}\right), N(0, 1)\right)\le \frac{4Ag_{n}}{\sqrt{m\Var[G(U)]}} = o(1).  
\end{equation*}
Since the Kolmogorov distance is invariant under rescalings, we have
\[d_{K}\left(\mathcal{L}\left(\sqrt{\frac{\Var[G(p_i)\mid \mathcal{D}]}{\Var[G(U)]}} W_{m}\mid \mathcal{D}\right), N\left(0, \frac{\Var[G(p_i)\mid \mathcal{D}]}{\Var[G(U)]}\right)\right) = o(1).\]
Since $\Var[G(p_i)\mid \mathcal{D}]/\Var[G(U)]\in [1 - a_{n}, 1 + a_{n}]\rightarrow 1$,
\[d_{K}\left( N\left(0, \frac{\Var[G(p_i)\mid \mathcal{D}]}{\Var[G(U)]}\right), N(0, 1)\right) = o(1).\]
Let 
\begin{equation}\label{eq:conditional_CLT}
  K_{n}\triangleq d_{K}\left(\mathcal{L}\left(\sqrt{\frac{\Var[G(p_i)\mid \mathcal{D}]}{\Var[G(U)]}} W_{m}\mid \mathcal{D}\right), N(0, 1)\right).
\end{equation}
The above arguments show that $K_{n} = o(1)$ on the event $\mathcal{E}_{n}$. 

~\\
\noindent On the other hand, let 
\begin{equation*}
   c_{m} = \frac{c_{1 - \alpha}(G) - m\E[G(U)]}{\sqrt{m\Var[G(U)]}},
\end{equation*}
and 
\begin{equation*}
\tilde{W}_{n} = \frac{\sqrt{n + 1}(\E[G(p_i)\mid \mathcal{D}] - \E[G(U)])}{\sqrt{\Var[G(U)]}}.
\end{equation*}
Then

\begin{align*}
\P\left[\sum_{i=1}^{m}G(p_i)\ge \xi c_{1 - \alpha}(G) - m(\xi - 1)\E[G(U)]\mid \mathcal{D}\right] = \P\left[\sqrt{\frac{\Var[G(p_i)\mid \mathcal{D}]}{\Var[G(U)]}}W_{m} + \sqrt{\frac{m}{n + 1}}\tilde{W}_{n}\ge \xi c_{m}\mid \mathcal{D}\right].
\end{align*}

By \eqref{eq:conditional_CLT},
\[\bigg|\P\left[\sqrt{\frac{\Var[G(p_i)\mid \mathcal{D}]}{\Var[G(U)]}}W_{m} + \sqrt{\frac{m}{n + 1}}\tilde{W}_{n}\ge \xi c_{m}\mid \mathcal{D}\right] - \bar{\Phi}\left(\xi c_{m} - \sqrt{\frac{m}{n + 1}}\tilde{W}_{n}\right)\bigg| \le K_{n}.\]
Since $K_{n} = o(1)$ on $\mathcal{E}_{n}$ and $\P[\mathcal{E}_{n}^{c}] = o(1)$, we obtain that 
\begin{equation}\label{eq:step1}
\bigg|\P\left[\sum_{i=1}^{m}G(p_i)\ge c_{1 - \alpha}(G)\mid \mathcal{D}\right] -  \bar{\Phi}\left(\xi c_{m} - \sqrt{\frac{m}{n + 1}}\tilde{W}_{n}\right)\bigg| = o_{P}(1).
\end{equation}
Since $\bar{\Phi}$ is a continuous function and $m / n\rightarrow \gamma$, to prove \eqref{eq:conditional_typeI-app}, it remains to prove
\begin{equation}\label{eq:goal1}
c_{m}\stackrel{p}{\rightarrow} z_{1 - \alpha}, \quad \tilde{W}_{n}\stackrel{d}{\rightarrow} N(0, 1).
\end{equation}
Without loss of generality, we assume that $\eta \le 1$ in the condition (i). By Lemma~\ref{lem:Berry_Esseen} with $g(x) = x^{\eta}$, which clearly fulfills the criteria, we have that
\begin{equation}\label{eq:cm}
d_{K}\left(\frac{\sum_{j=1}^{m}G(U_i) - \E[G(U)] }{\sqrt{m\Var[G(U)]}}, N(0, 1)\right)\le \frac{A}{m^{\eta / 2}}\frac{\E[|G(U) - \E[G(U)]|^{2 + \eta}]}{\Var[G(U)]^{1 + \eta/2}} = o(1).
\end{equation}
By definition, $c_{m}$ is the $(1 - \alpha)$-th quantile of $\left(\sum_{j=1}^{m}G(U_i) - \E[G(U)]\right) / \sqrt{m\Var[G(U)]}$. By \eqref{eq:cm},
\[|\bar{\Phi}(c_{m}) - \alpha| = |\bar{\Phi}(c_{m}) - \bar{\Phi}(z_{1 - \alpha})| = o(1).\]
Since $\bar{\Phi}'(z_{1 - \alpha}) > 0$, it implies the first part of \eqref{eq:goal1}. 

~\\
\noindent To prove the second part of \eqref{eq:goal1}, we recall \eqref{eq:EGk} with $k = 1$ that
\[\tilde{W}_{n} = \frac{(n + 1
)^{-1/2}\sum_{j=1}^{n+1}\left\{G\left(\frac{j}{n+1}\right) - \E[G(U)]\right\}V_{j}}{\sqrt{\Var[G(U)]}\left(\sum_{j=1}^{n+1}V_{j}\right)/(n+1)}.\]
Set $X_j = (n + 1)^{-1/2}\left\{G\left(\frac{j}{n+1}\right) - \E[G(U)]\right\}(V_{j} - 1)$ and $g(x) = x$ in Lemma~\ref{lem:Berry_Esseen}. Then
\[B_{n} = \frac{1}{n+1}\sum_{j=1}^{n+1}\left\{G\left(\frac{j}{n+1}\right) - \E[G(U)]\right\}^2.\]
By the condition (ii), we have that
\begin{equation}\label{eq:Bn}
B_{n} = \frac{1}{n+1}\sum_{j=1}^{n}G^2\left(\frac{j}{n+1}\right) - \frac{2\E[G(U)]}{n+1}\sum_{j=1}^{n}G\left(\frac{j}{n+1}\right) + (\E[G(U)])^2\rightarrow \Var[G(U)].
\end{equation}
By the condition (i), (iii) and \eqref{eq:Bn}, 
\begin{align*}
\sum_{j=1}^{n+1}\E|X_j|^3 
&\le \frac{1}{(n+1)^{3/2}}\sum
_{j=1}^{n+1}\bigg|G\left(\frac{j}{n+1}\right) - \E[G(U)]\bigg|^3\\
&\le \frac{g_{n} + \E[G(U)]}{\sqrt{n+1}}\frac{1}{n+1}\sum_{j=1}^{n+1}\left(G\left(\frac{j}{n+1}\right) - \E[G(U)]\right)^2\\
&= \frac{g_{n} + \E[G(U)]}{\sqrt{n+1}}B_{n} = o(1).
\end{align*}
Let 
\[\tilde{W}'_{n} = \frac{1}{\sqrt{B_n}}\sum_{j=1}^{n+1}X_j = \frac{1}{\sqrt{(n + 1)B_{n}}}\sum_{j=1}^{n+1}\left\{G\left(\frac{j}{n+1}\right) - \E[G(U)]\right\}(V_{j} - 1).\]
Then Lemma~\ref{lem:Berry_Esseen} implies that
\begin{equation}\label{eq:CLT_tildeW_prime}
d_{K}\left(\tilde{W}'_{n}, N(0, 1)\right)\le \frac{A\sum_{j=1}^{n+1}\E|X_j|^3}{B_{n}^{3/2}} = o(1).
\end{equation}
By definition,
\[\tilde{W}_{n} = \left(\tilde{W}'_{n} + \frac{1}{\sqrt{(n + 1)B_{n}}}\sum_{j=1}^{n+1}\left\{G\left(\frac{j}{n+1}\right) - \E[G(U)]\right\}\right)\sqrt{\frac{B_{n}}{\Var[G(U)]}}\frac{1}{\left(\sum_{j=1}^{n}V_{j}\right) / (n+1)}.\]
The condition (ii) with $k = 1$ implies that 
\begin{equation}\label{eq:bias}
\frac{1}{\sqrt{(n + 1)}}\sum_{j=1}^{n+1}\left\{G\left(\frac{j}{n+1}\right)- \E[G(U)]\right\} = o(1).
\end{equation}
By \eqref{eq:LLN_sumV}, \eqref{eq:Bn}, \eqref{eq:CLT_tildeW_prime}, \eqref{eq:bias} and Slutsky's Lemma, we prove the second part of \eqref{eq:goal1}. Therefore, the limiting conditional type-I error \eqref{eq:conditional_typeI-app} is proved. 

Next, we prove the limiting marginal type-I error \eqref{eq:unconditional_typeI}. Since $$\P\left[\sum_{i=1}^{m}G(p_i)\ge \xi c_{1 - \alpha}(G) - m(\xi - 1)\E[G(U)] \mid \mathcal{D} \right]$$ is bounded almost surely, the convergence in distribution implies the convergence in expectation. Therefore,
\[\P\left[\sum_{i=1}^{m}G(p_i)\ge \xi c_{1 - \alpha}(G) - m(\xi - 1)\E[G(U)]\right] \rightarrow \E[\bar{\Phi}(\xi z_{1 - \alpha} + \sqrt{\gamma}W)].\]
Let $W'$ be an independent copy of $W$. Then
\[\bar{\Phi}(\xi z_{1 - \alpha} + \sqrt{\gamma}W) = \P\left[ W' \ge \xi z_{1 - \alpha} + \sqrt{\gamma}W\mid W\right].\]
As a result,
\[\E[\bar{\Phi}(\xi z_{1 - \alpha} + \sqrt{\gamma}W)] = \P\left[W' \ge \xi z_{1 - \alpha} - \sqrt{\gamma}W \right] = \P\left[W' - \sqrt{\gamma}W\ge \xi z_{1 - \alpha} \right].\]
The proof is completed by the fact that $W' - \sqrt{\gamma}W\sim N(0, 1 + \gamma)$.
 
\end{proof}

\subsection{Conformal p-values are PRDS}\label{app:PRDS}

\begin{proof}[Proof of Theorem~\ref{thm:prds}]
Let $Z = (S_{(1)},\dots,S_{(n)})$ be the order statistics of $(\hat{s}(X_i))_{i \in \{n+1,\dots,2n\}}$, the conformal scores evaluated on the calibration set.  Let $Y = (p_1,\dots,p_m)$ be the conformal p-values evaluated on the test set (i.e., $p_j = \hat{u}^\marg(X_{2n+j})$). Then,
\begin{align*}
    \P\left[ Y \in A \mid Y_i = y \right] &= \int \P\left[ Y \in A \mid Z = z\right] \P\left[Z = z \mid Y_i = y\right] dz \\
    &= \E_{Z \mid Y_i = y}\left[ \P\left[ Y \in A \mid Z\right] \right].
\end{align*}
With this representation, the conclusion will be implied by the following two lemmas.

\begin{lemma} \label{lem:prds1}
For a non-decreasing set $A$ and vectors $z, z'$ such that $z \preceq z'$, then 
\begin{equation*}
    \P\left[ Y \in A \mid Z = z \right] \ge \P \left[ Y \in A \mid Z = z' \right].
\end{equation*}
\end{lemma}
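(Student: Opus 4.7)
\textbf{Proof proposal for Lemma \ref{lem:prds1}.} My plan is to couple the two conditional distributions through a simple deterministic formula: given the calibration order statistics $Z$, each p-value is a purely combinatorial function of $Z$ and the test scores, and that function is monotone non-increasing in $Z$. Integrating against the (independent) law of the test scores then delivers the claim.

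Concretely, I would first set $T_j \defeq \hat{s}(X_{2n+j})$ for each $j$, so that by \eqref{eq:marginal-pvals-def} and the continuity of $\hat{s}(X)$ assumed in Theorem \ref{thm:prds},
\[
Y_j \;=\; \phi(Z, T)_j \;\defeq\; \frac{1 + |\{i \in \{1,\ldots,n\} : Z_i \le T_j\}|}{n+1} \qquad \text{a.s.}
\]
Under the hypotheses of that theorem, the test vector $T$ is independent of the calibration data, hence of $Z$. Consequently the conditional law of $T$ given $Z = z$ agrees with its marginal law, and
\[
\P[Y \in A \mid Z = z] \;=\; \E_T\!\left[ I\{\phi(z, T) \in A\}\right].
\]

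The core step is the coordinate-wise monotonicity of $\phi$ in its first argument: if $z \preceq z'$ and $t \in \R^m$ is any fixed vector, then for every $j$ one has $\{i : z'_i \le t_j\} \subseteq \{i : z_i \le t_j\}$, so $\phi(z', t)_j \le \phi(z, t)_j$; that is, $\phi(z', t) \preceq \phi(z, t)$. Because $A$ is a non-decreasing set, membership of $\phi(z', t)$ in $A$ forces membership of $\phi(z, t)$ in $A$, giving the pointwise bound $I\{\phi(z', t) \in A\} \le I\{\phi(z, t) \in A\}$. Taking expectation against the marginal law of $T$ yields $\P[Y \in A \mid Z = z'] \le \P[Y \in A \mid Z = z]$, as required.

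I do not foresee a substantive obstacle here: the argument is essentially combinatorial once the independence $T \perp Z$ and the almost-sure representation $Y = \phi(Z, T)$ are in place. The only small points to verify are that the continuity assumption on $\hat{s}(X)$ makes the tie event $\{Z_i = T_j \text{ for some } i, j\}$ a null event, so that the formula for $\phi$ holds almost surely, and that the independence $T \perp Z$ follows directly from the setup of Theorem \ref{thm:prds}.
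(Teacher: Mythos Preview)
Your proposal is correct and takes essentially the same approach as the paper: the paper simply notes that Lemma~\ref{lem:prds1} ``follows immediately from the definition of marginal conformal p-values,'' and your argument spells out exactly that immediacy---writing $Y = \phi(Z,T)$, observing $\phi$ is coordinatewise non-increasing in $Z$, and integrating against the law of $T$ (independent of $Z$). The paper's detailed version of this same monotonicity-plus-independence argument appears only in the non-continuous extension (where the randomized p-values require more care), and your continuous-case write-up matches that structure precisely.
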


\begin{lemma} \label{lem:prds2}
For $y \ge y'$, if $i$ belongs to the set of inliers, there exists $Z_1 \sim Z \mid Y_i = y$ and $Z_2 \sim Z \mid Y_i = y'$ such that $\P\left[Z_1 \preceq Z_2\right] = 1$. 
\end{lemma}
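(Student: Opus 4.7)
The plan is to derive an explicit representation for the conditional law of $Z \mid Y_i = y$ via exchangeability, and then exhibit the desired monotone coupling directly on a single probability space.

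For the first step, since $i$ indexes an inlier and $\mathcal{D}^{\mathrm{train}}$ is held fixed, the $n+1$ scores $\hat{s}(X_{n+1}),\ldots,\hat{s}(X_{2n}),\hat{s}(X_{2n+i})$ are i.i.d.\ draws from a continuous distribution. Let $W_{(1)} < \cdots < W_{(n+1)}$ denote the order statistics of this pooled sample and let $R$ be the rank of $\hat{s}(X_{2n+i})$ within it. By a standard exchangeability argument, $R$ is uniform on $\{1,\ldots,n+1\}$ and is independent of $(W_{(1)},\ldots,W_{(n+1)})$. For $y = k/(n+1)$, the event $\{Y_i = y\}$ coincides with $\{R=k\}$, and on this event $Z = (W_{(1)},\ldots,W_{(k-1)},W_{(k+1)},\ldots,W_{(n+1)})$ by the very definition of the calibration order statistics. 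Combining this identity with the independence of $R$ and $(W_{(\cdot)})$, the conditional law of $Z$ given $Y_i = y$ equals the unconditional law of this deletion map applied to a single draw of the full order statistic vector.

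For the second step, given $y \ge y'$ with $y = k/(n+1)$, $y' = k'/(n+1)$, and $k \ge k'$, I will draw a single realization of $(W_{(1)},\ldots,W_{(n+1)})$ and define $Z_1$ by deleting $W_{(k)}$ and $Z_2$ by deleting $W_{(k')}$. By the representation obtained in the first step, $Z_1$ and $Z_2$ have the required marginal laws. The $j$-th entry of $Z_1$ equals $W_{(j)}$ for $j < k$ and $W_{(j+1)}$ for $j \ge k$, and analogously for $Z_2$ with $k'$ in place of $k$. Splitting into the three cases $j < k'$, $k' \le j < k$, and $j \ge k$ yields coordinate-wise equality in the outer two regions and $(Z_1)_j = W_{(j)} \le W_{(j+1)} = (Z_2)_j$ in the middle region; hence $Z_1 \preceq Z_2$ with probability one.

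The only mildly delicate ingredient is the independence of $R$ from the full order statistic vector invoked in the first step. This is the standard fact that, for exchangeable and almost surely distinct variables, the sorting permutation is uniform on the symmetric group and independent of the order statistics; the continuity assumption on $\hat{s}(X)$ guarantees distinctness with probability one, so no further subtlety is required.
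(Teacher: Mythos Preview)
Your proof is correct and follows essentially the same approach as the paper: both arguments pool the $n$ calibration scores with the $i$-th test score, observe that the rank $R$ of the test score is independent of the pooled order statistics $W_{(1)}<\cdots<W_{(n+1)}$ by exchangeability and continuity, identify the conditional law of $Z\mid Y_i=k/(n+1)$ with the law of $(W_{(1)},\ldots,W_{(k-1)},W_{(k+1)},\ldots,W_{(n+1)})$, and then couple by deleting different coordinates from a single draw. Your coordinate-wise verification of $Z_1\preceq Z_2$ is slightly more explicit than the paper's one-line ``clearly entry-wise non-increasing in $k$,'' but the substance is identical.
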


In other words, Lemma~\ref{lem:prds1} states that the conformal p-values increase as the conformal scores on the calibration set decrease, while Lemma~\ref{lem:prds2} states that a larger conformal p-value indicates the calibration conformal scores are smaller.
The proof follows easily from these. Take any $y \ge y'$ and let $Z_1$ and $Z_2$ be as in the statement of Lemma~\ref{lem:prds2}. Then, for any $i$ belonging to the set of inliers,
\begin{align*}
\P\left[ Y \in A \mid Y_i = y \right] 
    &= \E_{Z_1}\left[\P\left[ Y \in A \mid Z = Z_1\right] \right] \\
    &\ge \E_{Z_2}\left[\P\left[ Y \in A \mid Z = Z_2 \right] \right] \\
    &= \P\left[Y \in A \mid Y_i = y'\right].
\end{align*}
The inequality follows from Lemma~\ref{lem:prds1} and the fact that $\P\left[ Z_1 \preceq Z_2 \right] = 1$, which comes from Lemma~\ref{lem:prds2}.
\end{proof}

Lemma~\ref{lem:prds1} follows immediately from the definition of marginal conformal p-values in~\eqref{eq:marginal-pvals-def}. Lemma~\ref{lem:prds2} is proved below.

\begin{proof}[Proof of Lemma~\ref{lem:prds2}, continuous case]
% First, we prove the result when $\hat{s}(X)$ is continuously distributed. 
As in the proof of Theorem~\ref{thm:combination_test}, since $\hat{s}(X)$ is continuously distributed, we can assume without loss of generality that the scores $S_i$ follow the uniform distribution on $[0,1]$. Let $S'_{(1)} \le S'_{(2)}\le \ldots \le S'_{(n+1)}$ be the order statistics of $(\hat{s}(X_{n+1}), \ldots, \hat{s}(X_{2n+1}))$ and $R_{2n+1}$ be the rank of $\hat{s}(X_{2n+1})$ among these. By definition,
\[\bigg\{(S_{(1)}, \ldots, S_{(n)})\mid R_{2n+1} = k, S'_{(1)}, \ldots, S'_{(n+1)}\bigg\} = (S'_{(1)}, \ldots, S'_{(k-1)}, S'_{(k+1)}, \ldots, S'_{(n+1)}).\]
Since $\hat{s}
(X)$ is continuously distributed, $R_{2n+1}$ is independent of $(S'_{(1)}, S'_{(2)}, \ldots, S'_{(n+1)})$. As a result, for any positive integer $k \le n + 1$,
\[\bigg\{(S_{(1)}, \ldots, S_{(n)})\mid R_{2n+1} = k\bigg\}\stackrel{d}{=}(S'_{(1)}, \ldots, S'_{(k-1)}, S'_{(k+1)}, \ldots, S'_{(n+1)}).\]
The right-hand-side is clearly entry-wise non-increasing in $k$. Since $p_1 = R_{2n+1} / (n + 1)$, Lemma~\ref{lem:prds2} is proved for $i = 1$. The same proof carries over to other indices $i$ belonging to the set of inliers.

\end{proof}

\paragraph{Extension to non-continuous scores.} When $\hat{s}(X)$ has atoms, the set of conformity scores $\{\hat{s}(X_i): i\in \Dcal\}$ have ties with non-zero probability. In this case, we replace the marginal conformal p-value \eqref{eq:marg_validity_def} by a randomized version, i.e., 
\begin{equation}\label{eq:marginal_discrete}
p_j = \frac{|\{i \in \mathcal{D}^{\text{cal}} : \hat{s}(X_i) < \hat{s}(X_{2n+j})\}| + \lceil (1 + |\{i \in \mathcal{D}^{\text{cal}} : \hat{s}(X_i) = \hat{s}(X_{2n+j})\}|)U_j\rceil}{n+1},
\end{equation}
where $U_1, U_2, \ldots$ are i.i.d. random variables drawn from $\Unif([0, 1])$ which are independent of the data. Note that  \eqref{eq:marginal_discrete} is identical to \eqref{eq:marg_validity_def} almost surely when $\hat{s}(X)$ is continuously distributed. Now we prove that the marginal conformal p-values defined in \eqref{eq:marginal_discrete} satisfy the PRDS property. 

\begin{prop}[Theorem~\ref{thm:prds} for the non-continuous case]
Consider the setting of Theorem~\ref{thm:prds}, but where $\hat{s}(\cdot)$ is not assumed to be continuous. Define the randomized marginal p-values as in \eqref{eq:marginal_discrete}. Then, the marginal conformal p-values $(p_1,\dots,p_m)$ are PRDS.
\end{prop}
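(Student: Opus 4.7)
The plan is to follow the proof of Theorem~\ref{thm:prds} for the continuous case, adapting Lemmas~\ref{lem:prds1} and~\ref{lem:prds2} to accommodate ties in $\hat{s}(X)$ and the randomized tie-breaking of~\eqref{eq:marginal_discrete}. Fix an inlier index $i$, let $Z=(S_{(1)},\ldots,S_{(n)})$ denote the order statistics of the calibration scores, and let $Y=(p_1,\ldots,p_m)$. For the analog of Lemma~\ref{lem:prds1}, note that the $U_j$'s are auxiliary i.i.d.~uniforms independent of all the $X$'s, and the inlier $X_{2n+i}$ is independent of everything else in the test and calibration data, so conditionally on $Z$ the vector $p_{-i}$ is independent of $Y_i$ and $\P[Y\in A\mid Z,Y_i=y]=\P[(p_{-i},y)\in A\mid Z]$. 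A short case analysis on the effect of replacing one entry of the multiset $Z$ by a larger value shows that, for every fixed $(\hat{s}(X_{2n+j}),U_j)$, the numerator of~\eqref{eq:marginal_discrete} can only stay the same or decrease; hence $p_j$ is a non-increasing function of $Z$ in the entrywise order. Combined with $A$ being increasing, this makes $\P[(p_{-i},y)\in A\mid Z]$ non-decreasing in $y$ for fixed $Z$ and non-increasing in $Z$ for fixed $y$.

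For the analog of Lemma~\ref{lem:prds2}, observe that since $i$ is an inlier, the $n+1$ scores $\hat{s}(X_{n+1}),\ldots,\hat{s}(X_{2n}),\hat{s}(X_{2n+i})$ are i.i.d.~from $P_X$. Let $S'_{(1)}\le\cdots\le S'_{(n+1)}$ be their joint order statistics, and let $R'_i\in\{1,\ldots,n+1\}$ be the rank of $\hat{s}(X_{2n+i})$ in this combined sample, with ties inside its value group broken uniformly at random using $U_i$; by construction, $p_i=R'_i/(n+1)$. The key claim, proved by conditioning on the multiset of the $n+1$ values, is that $R'_i$ is uniform on $\{1,\ldots,n+1\}$ and independent of $(S'_{(1)},\ldots,S'_{(n+1)})$: exchangeability places the test value uniformly on the multiset, and the randomized tie-breaking places the test point uniformly within its tied group, so $\P[R'_i=k\mid\text{multiset}]=1/(n+1)$ for each $k$. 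Given $R'_i=k$, the calibration order statistics equal $(S'_{(1)},\ldots,S'_{(k-1)},S'_{(k+1)},\ldots,S'_{(n+1)})$, which is entrywise non-increasing in $k$. Hence, for any $y\ge y'$, sampling the joint order statistics once and removing position $(n+1)y$ or $(n+1)y'$ yields a coupling $(Z_1,Z_2)$ with $Z_1\sim Z\mid Y_i=y$, $Z_2\sim Z\mid Y_i=y'$, and $Z_1\preceq Z_2$ almost surely.

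Assembling these ingredients exactly as in the continuous proof gives
\begin{align*}
\P[Y\in A\mid Y_i=y] &= \E[\P[Y\in A\mid Z,Y_i=y]\mid Y_i=y]\\
&\ge \E[\P[Y\in A\mid Z,Y_i=y']\mid Y_i=y]\\
&\ge \E[\P[Y\in A\mid Z,Y_i=y']\mid Y_i=y']\\
&= \P[Y\in A\mid Y_i=y'],
\end{align*}
where the first inequality uses monotonicity in $y$ (pointwise in $Z$) and the second uses monotonicity in $Z$ together with the stochastic ordering provided by the coupling. The main technical obstacle is the uniformity-and-independence statement for $R'_i$ in the coupling step: without continuity of $\hat{s}(X)$, the classical rank argument from the continuous-case proof of Lemma~\ref{lem:prds2} no longer applies, and one must combine exchangeability of the $n+1$ i.i.d.~scores with the uniform tie-breaking prescribed by~\eqref{eq:marginal_discrete} to show that every position $k\in\{1,\ldots,n+1\}$ is equally likely, regardless of how the multiset of values happens to be configured.
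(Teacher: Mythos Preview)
Your proposal is correct and follows essentially the same approach as the paper: both proofs adapt Lemmas~\ref{lem:prds1} and~\ref{lem:prds2} to the non-continuous case by (i) verifying via a case analysis that the numerator of~\eqref{eq:marginal_discrete} is coordinatewise non-increasing in $z$ for fixed $(\hat{s}(X_{2n+j}),U_j)$, and (ii) showing that the randomized rank $R'_i$ is uniform on $\{1,\ldots,n+1\}$ and independent of the joint order statistics by combining exchangeability with the uniform tie-breaking. Your combining step is in fact slightly more explicit than the paper's, since you track the conditioning on $Y_i=y$ throughout and separate the monotonicity in $y$ from the monotonicity in $Z$; the paper writes the decomposition as $\P[Y\in A\mid Y_i=y]=\E_{Z\mid Y_i=y}[\P[Y\in A\mid Z]]$ without making this distinction explicit, but the underlying argument is the same.
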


The proof follows as above, once we verify Lemma~\ref{lem:prds1} and Lemma~\ref{lem:prds2} in the more general setting.

\begin{proof}[Proof of Lemma~\ref{lem:prds1}, general case]
Let $U = (U_1, \ldots, U_m)$. By definition, $U$ is independent of $(Y, Z)$, and thus
\[\P[Y\in A\mid Z = z] = \P[Y\in A \mid Z = z, U], \quad \text{a.s.}.\]
Let $p_j(x; z, u)$ denote the mapping from $(X_{2n+j}, Z, U)$ to $p_j$. Then 
\[p_j(x; z, u) = \frac{m_{<}(x; z) + \lceil\{1 + m_{=}(x; z)\}u\rceil}{n + 1},\]
where
\[m_{<}(x, z) = \sum_{i=1}^{n}I(z_i < x), \quad m_{=}(x, z) = \sum_{i=1}^{n}I(z_i = x).\]
If $z \preceq z'$,
\begin{equation}\label{eq:m<m=}
m_{<}(x, z)\ge m_{<}(x, z'), \quad m_{<}(x, z) + m_{=}(x, z)\ge m_{<}(x, z') + m_{=}(x, z').
\end{equation}
We claim that the mapping $p_j(x; z, u)$ is non-increasing in $z$ for every $x$ and $u$. Equivalently, we will show that for any $x$ and $u\in [0, 1]$, 
\begin{equation}\label{eq:pj_monotone}
m_{<}(x, z) + \lceil \{1 + m_{=}(x, z)\} u\rceil\ge m_{<}(x, z') + \lceil \{1 + m_{=}(x, z')\}u\rceil.
\end{equation}
We consider three cases. 
\begin{enumerate}[{Case} 1: ]
\item if $m_{<}(x, z) = m_{<}(x, z')$, \eqref{eq:m<m=} implies that $m_{=}(x, z)\ge m_{=}(x, z')$. Thus, \eqref{eq:pj_monotone} is obviously true.
\item if $m_{<}(x, z) + m_{=}(x, z) = m_{<}(x, z') + m_{=}(x, z')$, let $a = 1 + m_{=}(x, z)$ and $b = m_{<}(x, z) - m_{<}(x, z')$. Then \eqref{eq:pj_monotone} is equivalent to
\[b\ge \lceil (a + b)u\rceil - \lceil au\rceil.\]
This can be proved using the fact that $\lceil (a + b)u\rceil \le \lceil au\rceil + \lceil bu\rceil$.
\item if $m_{<}(x, z) > m_{<}(x, z')$ and $m_{<}(x, z) + m_{=}(x, z) > m_{<}(x, z') + m_{=}(x, z')$, then 
$m_{<}(x, z) \ge m_{<}(x, z') + 1$ and $m_{<}(x, z) + m_{=}(x, z) \ge m_{<}(x, z') + m_{=}(x, z') + 1$ since
$m_{<}(x, z), m_{<}(x, z'), m_{=}(x, z)$, and $m_{=}(x, z')$ are all integers. Then
\begin{align*}
 m_{<}(x, z) + \lceil \{1 + m_{=}(x, z)\}u\rceil
&\ge m_{<}(x, z) + \{1 + m_{=}(x, z)\}u \\
&= m_{<}(x, z)(1 - u) + \{1 + m_{=}(x, z) + m_{<}(x, z)\}u\\
&\ge \{1 + m_{<}(x, z')\}(1 - u) + \{2 + m_{=}(x, z') + m_{<}(x, z')\}u\\
& = m_{<}(x, z') + \{1 + m_{=}(x, z')\}u + 1\\
& \ge m_{<}(x, z') + \lceil\{1 + m_{=}(x, z')\}u\rceil.
\end{align*}
\end{enumerate}
% \begin{align*}
% & p_j(x; z, u) - p_j(x; z', u)\\
% & = \frac{1}{n+1}\left(m_{<}(x, z) - m_{<}(x, z') + u\{m_{=}(x, z) - m_{=}(x, z')\}\right)\\
% &\ge \frac{1}{n+1}\min\left\{m_{<}(x, z) - m_{<}(x, z'), m_{<}(x, z) - m_{<}(x, z') + m_{=}(x, z) - m_{=}(x, z')\right\}\\
% &\ge 0.
% \end{align*}
Therefore, \eqref{eq:pj_monotone} is proved. As a result, the mapping from $(X_{2n+1}, \ldots, X_{2n+m}, Z, U)$ to $Y$ is entry-wise non-increasing in $Z$ given $(X_{2n+j}, \ldots, X_{2n+m}, U)$. Since $\{X_{2n+j}: j = 1,\ldots,m\}$, $Z$, and $U$ are mutually independent, we arrive at
\[\P[Y\in A \mid Z = z, U]\ge \P[Y\in A \mid Z = z', U], \quad \text{a.s.}.\]
The independence between $U$ and $Z$ implies that $(U\mid Z = z) \stackrel{d}{=} (U\mid Z = z')$.  Lemma \ref{lem:prds1} then follows from the above inequality.
\end{proof}

\begin{proof}[Proof of Lemma~\ref{lem:prds2}, general case]
Let $R_{2n+j} = (n + 1)p_j$. Note that $R_{2n+j}$ can be interpreted as the rank with ties broken randomly. As in the proof for the continuous case, we first prove that
\begin{equation}\label{eq:discrete_PRDS_step1}
\bigg\{(S_{(1)}, \ldots, S_{(n)})\mid R_{2n+1} = k, S'_{(1)}, \ldots, S'_{(n+1)}\bigg\} = (S'_{(1)}, \ldots, S'_{(k-1)}, S'_{(k+1)}, \ldots, S'_{(n+1)}).
\end{equation}
Let $k_{-} = \max\{\ell: S'_{(\ell)} < S_{2n+1}\}$ and $k_{+} = \min\{\ell: S'_{(\ell)} > S_{2n+1}\}$. Then $S'_{\ell} = S_{2n+1}$ for any $k_{-} < \ell < k_{+}$. Since there exists at least one $\ell$ with $S'_{(\ell)} = S_{2n+1}$, i.e., the index corresponding to $S_{2n+1}$, we have $k_{+} - k_{-}\ge 2$. By definition, 
\[1 + |\{i \in \mathcal{D}^{\text{cal}} : \hat{s}(X_i) = \hat{s}(X_{2n+j})\}| = |\{i \in \mathcal{D}^{\text{cal}}\cup \{2n+1\} : \hat{s}(X_i) = \hat{s}(X_{2n+j})\}| = k_{+} - k_{-} - 1.\]
As a result,
\[k = k_{-} + \lceil (k_{+} - k_{-} - 1)U_{1}\rceil\in (k_{-}, k_{+}).\]
Therefore, $\hat{s}(X_{2n+1}) = S'_{(k)}$ and \eqref{eq:discrete_PRDS_step1} is proved.

It remains to prove that $R_{2n+1}$ is independent of $(S'_{(1)}, S'_{(2)}, \ldots, S'_{(n+1)})$. For any non-decreasing sequence $a_{1}\le \ldots\le a_{n+1}$, let $1 = n_0 < n_1 < \ldots < n_m = n+1$ be integers such that
\[a_{n_{j-1}} = \ldots = a_{n_j - 1} < a_{n_j}, \quad j = 1,\ldots, m - 1, \quad a_{n_{m-1}-1} < a_{n_{m-1}} = \ldots = a_{n_{m}}\]
Let $\pi: \{1, \ldots, n+1\}\mapsto \{1, \ldots, n + 1\}$ be a uniform random permutation. Since $X_{n+1}, \ldots, X_{2n+1}$ are i.i.d.,  Conditioning on the event that, 
\[\bigg\{\left(\hat{s}(X_{n+1}), \ldots, \hat{s}(X_{2n+1})\right)\mid (S'_{(1)}, \ldots, S'_{(n+1)}) = (a_1, \ldots, a_{n+1})\bigg\}\stackrel{d}{=}\left(a_{\pi(1)}, \ldots, a_{\pi(n+1)}\right).\]
For any $j = 1,\ldots, m - 1$, if $\pi(n+1)\in [n_{j-1}, n_{j})$, 
\[|\{i: a_{\pi(i)} = a_{\pi(n+1)}\}| = n_{j} - n_{j-1}, \quad |\{i: a_{\pi(i)} < a_{\pi(n+1)}\}| = n_{j-1} - 1,\]
and thus,
\[R_{2n+1} = n_{j-1} - 1 + \lceil (n_{j} - n_{j-1})U_{j}\rceil.\]
Similarly, if $\pi(n+1)\in [n_{m-1}, n_{m}]$,
\[R_{2n+1} = n_{m-1} - 1 + \lceil (n_{m} - n_{m-1} + 1)U_{j}\rceil.\]
For any $k$, let $j_k = \max\{j: n_{j}\le k\}$, and $\mathcal{I}_k$ be the set $\{n_{j_k - 1}, \ldots, n_{j_k} - 1\}$ if $j_k < m$ and $\{n_{j_k - 1}, \ldots, n_{j_k}\}$ otherwise. Then
\begin{align*}
&\P(R_{2n+1} = k\mid (S'_{(1)}, \ldots, S'_{(n+1)}) = (a_1, \ldots, a_{n+1}))\\
& = \P\left(\pi(n+1)\in \mathcal{I}_{k}, U_{1}\in \left(\frac{k - n_{j_k - 1}}{|\mathcal{I}_k|}, \frac{k + 1 - n_{j_k - 1}}{|\mathcal{I}_k|} \right]\right)\\
& = \P\left(\pi(n+1)\in \mathcal{I}_{k}\right)\P\left(U_{1}\in \left(\frac{k  - n_{j_k - 1}}{|\mathcal{I}_{k}|}, \frac{k + 1 - n_{j_k - 1}}{|\mathcal{I}_{k}|} \right]\right)\\
& = \frac{|\mathcal{I}_{k}|}{n+1} \frac{1}{|\mathcal{I}_{k}|} = \frac{1}{n+1}.
\end{align*}
Therefore, $R_{2n+1}$ is independent of $(S'_{(1)}, \ldots, S'_{(n+1)})$. The proof of Lemma \ref{lem:prds2} is then completed. 
\end{proof}

% We assume without loss of generality that the scores $S_i$ follow the uniform distribution on $[0,1]$, since we are only interested in their order statistics. Then, we use a classical representation of the uniform order statistics in terms of i.i.d.~standard exponential variables. Let $V_1,\dots,V_n, V_{n+1}, \tilde{V}$ be i.i.d.~standard exponential variables. A classical result is that
% \begin{equation*}
%      (S_{(1)},\dots,S_{(n)}) \stackrel{d}{=} \frac{1}{\sum_{i=1}^{n+1}V_i}\left(V_1, V_1 + V_2, \dots, V_1 + V_2 + \dots + V_n\right).
% \end{equation*}

% By the definition of marginal conformal p-values in~\eqref{eq:marginal-pvals-def}, we know that $k \defeq (n+1) p_i - 1 = |\{j \in \{1,\ldots,n\} : S_j \leq S_i\}|$.
% Let $(\tilde{S}_{(1)}, \dots, \tilde{S}_{(n)})$ be a sample from $Z \mid Y_i = y$. Then, by applying the above representation to the random variables $(S_1,\dots,S_n, S_i)$---the sequence of all calibration scores with test score $i$ appended at the end---we obtain
% \begin{equation*}
%      (\tilde{S}_{(1)},\dots,\tilde{S}_{(n)}) \stackrel{d}{=} \frac{1}{\tilde{V} + \sum_{i=1}^{n+1}V_i}\left(V_1 + \tilde{V} \I_{\{k \ge n\}}, V_1 + V_2 + \tilde{V} \I_{\{k \ge n - 1\}}, \dots, V_1 + V_2 + \dots + V_n + \tilde{V} \I_{\{k \ge 1\}} \right).
% \end{equation*}
% The proof is now complete because $k$ is clearly increasing in $y$.

\subsection{Storey's correction does not break FDR control}\label{app:storey_BH}
 Given a p-value $p_i$ for the $i$-th null hypothesis, let $p_{(1)}\le \ldots \le p_{(m)}$ be the ordered statistics. Given a target FDR level $\alpha$ and a scalar $\lambda\in (0, 1)$, the rejection set of the Storey-BH procedure is
\[\mathcal{R} = \left\{i: p_i\le \frac{\alpha R}{m\hat{\pi}_0}, p_i < \lambda\right\},\]
where
\[\hat{\pi}_0 = \frac{1 + \sum_{i=1}^{m}I(p_i \ge \lambda)}{m (1 - \lambda)} \triangleq \frac{1 + A}{m(1 - \lambda)}\]
and
\[R = \max\left\{r: p_{(r)}\le \frac{\alpha r}{m\hat{\pi}_0}, p_{(r)} < \lambda\right\}.\]
The parameter $\lambda$ is often chosen as $0.5$, $\alpha$ or $1 - \alpha$.

We start with a novel FDR bound for this procedure applied to PRDS p-values.

\begin{thm}\label{thm:storey_BH_generic}
Assume that $(p_1, \ldots, p_n)$ is PRDS and each null p-value is super-uniform with an almost sure lower bound $p_{\min}\in [0, 1]$. Then
\[\E\left[\frac{|\mathcal{R} \cap \mathcal{H}_0|}{\max\{1, |\mathcal{R}|\} } \right]\le \alpha(1 - \lambda)\sum_{i\in \mathcal{H}_0}\E\left[\frac{1}{1 + A}\mid p_i\le p_{*}\right],\]
where
\[p_{*} = \max\left\{\frac{\alpha(1 - \lambda)}{m}, p_{\min}\right\}.\]
\end{thm}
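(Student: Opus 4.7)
The plan is to prove the theorem through a per-null bound $\E[I(i \in \mathcal{R})/(R \vee 1)] \le \alpha(1-\lambda)\,\E[1/(1+A) \mid p_i \le p_*]$ for every $i \in \mathcal{H}_0$, after which summing over nulls recovers the stated bound. First, since any rejected null $i$ must satisfy $p_i < \lambda$, one has $A = A^{(-i)} \defeq \sum_{j \neq i} I(p_j \ge \lambda)$, a quantity depending only on $p_{-i}$. A standard self-consistency analysis of Storey--BH then produces a threshold $T^*(p_{-i})$ and a rejection count $R^*(p_{-i})$ tied by $R^* = T^*(1+A^{(-i)})/(\alpha(1-\lambda))$, such that $\{i \in \mathcal{R}\} = \{p_i \le T^*(p_{-i})\}$ and $R = R^*$ on this event. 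Substituting $1/R^* = \alpha(1-\lambda)/[T^*(1+A^{(-i)})]$ yields the exact identity $\E[I(i \in \mathcal{R})/(R \vee 1)] = \alpha(1-\lambda)\,\E[I(p_i \le T^*)/(T^*(1+A^{(-i)}))]$.

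Two ingredients would then enter. The first is a geometric lower bound: because $R^* \ge 1$ and $1 + A^{(-i)} \le m$, one has $T^* \ge \alpha(1-\lambda)/m$ deterministically, while $p_i \le T^*$ together with $p_i \ge p_{\min}$ forces $T^* \ge p_{\min}$ on the rejection event; combined, $T^* \ge p_*$ on $\{i \in \mathcal{R}\}$. The second is a PRDS comparison lemma. Since the map $p_{-i} \mapsto 1/(1+A^{(-i)})$ is non-increasing, the PRDS property of $(p_1,\ldots,p_m)$ on $\mathcal{H}_0$ implies that $s \mapsto \E[1/(1+A^{(-i)}) \mid p_i = s]$ is non-increasing; integrating over $t \in [0,s]$ with respect to the law of $p_i$ shows $s \mapsto \E[1/(1+A^{(-i)}) \mid p_i \le s]$ is non-increasing as well. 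Together with super-uniformity ($\P[p_i \le s] \le s$), this yields the useful inequality $\E[I(p_i \le v)/(1+A^{(-i)})] \le v\,\E[1/(1+A^{(-i)}) \mid p_i \le p_*]$ for any $v \ge p_*$.

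The main obstacle is turning the random-threshold expectation $\E[I(p_i \le T^*)/(T^*(1+A^{(-i)}))]$ into the target conditional form, since $T^*$ itself depends on $p_{-i}$. My plan is to use the layer-cake identity $I(p_i \le T^*)/T^* = \int_0^{\infty} I(p_i \le T^* \le v)\,v^{-2}\,dv$, apply Fubini, and bound $I(p_i \le T^* \le v) \le I(p_i \le v)\,I(T^* \le v)$; the lower bound $T^* \ge p_*$ on $\{i \in \mathcal{R}\}$ makes the portion of the integral with $v < p_*$ vanish, while the PRDS comparison from the previous paragraph controls the integrand for $v \ge p_*$. Closing this integration will require delicate handling of the $I(T^* \le v)$ factor, which is not monotone in $p_{-i}$ in isolation; I would leverage the self-consistency relation $T^*(1+A^{(-i)}) = \alpha R^*(1-\lambda)$ to track how $T^*$ co-varies with $A^{(-i)}$ and force the remaining integral to collapse back to exactly $\E[1/(1+A^{(-i)}) \mid p_i \le p_*]$. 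Summing over $i \in \mathcal{H}_0$ then produces the theorem; I expect the integration step to be the most delicate, since it is where the interplay between super-uniformity and the full PRDS structure must be brought together.
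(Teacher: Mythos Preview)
Your preliminary setup is sound: the leave-one-out relation $A = A^{(-i)}$ on $\{i \in \mathcal{R}\}$, the identity $\E[V_i/(R\vee 1)] = \alpha(1-\lambda)\,\E[I(p_i \le T^*)/(T^*(1+A^{(-i)}))]$, and the PRDS monotonicity of $s \mapsto \E[1/(1+A^{(-i)}) \mid p_i \le s]$ are all correct (modulo the minor constraint $p_i < \lambda$ that you dropped). The gap is exactly where you flag the argument as ``delicate.'' Once you bound $I(p_i \le T^* \le v) \le I(p_i \le v)\,I(T^* \le v)$ you have discarded the constraint $p_i \le T^*$, and the plan cannot close: dropping $I(T^* \le v)$ and applying your PRDS comparison gives $\int_{p_*}^\infty v^{-1}\,dv = \infty$, while keeping $I(T^* \le v)$ is no help because $I(T^* \le v)/(1+A^{(-i)})$ is not coordinate-wise monotone in $p_{-i}$ (as $p_{-i}$ increases, $T^*$ decreases so $I(T^* \le v)$ increases, while $1/(1+A^{(-i)})$ decreases), so PRDS cannot be invoked on it. Your suggestion to ``track how $T^*$ co-varies with $A^{(-i)}$'' via the self-consistency relation simply reintroduces $R^*$ and returns you to the starting expression.

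The paper sidesteps the random-threshold problem by never forming $T^*(p_{-i})$. It decomposes $\E[V_i/(R\vee 1)]$ over the deterministic grid of pairs $(r,a)$, so every conditioning event is of the simple form $\{p_i \le c\}$ with $c = \alpha(1-\lambda)r/(1+a)$, and super-uniformity applies immediately. Reindexing by the finite set of ratios $t_j = r/(1+a)$ and writing the result as an Abel (telescoping) sum over the functions $H_j(p) = I\{R/(1+A) \ge t_j\}/(1+A)$ is the crucial device: each $H_j$ is coordinate-wise decreasing in the full vector $p$ (since $R$ is decreasing and $A$ increasing), so PRDS forces every cross-term $\E[H_{j+1} \mid p_i \le \alpha(1-\lambda)t_j] - \E[H_{j+1} \mid p_i \le \alpha(1-\lambda)t_{j+1}]$ to be nonnegative. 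Only the smallest-threshold term survives, and it is dominated by $\E[1/(1+A) \mid p_i \le p_*]$. This summation-by-parts step is the missing idea; your layer-cake route cannot replace it without rebuilding an equivalent telescoping structure on the discrete range of $R/(1+A)$.
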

\begin{proof}
Let
  \[V_i = I(H_i \text{ is rejected}) \le I\lb p_i \le \alpha(1 - \lambda)\frac{R}{1 + A}\rb.\]
  Then
  \begin{align*}
    \E\left[\frac{|\mathcal{R} \cap \mathcal{H}_0|}{\max\{1, |\mathcal{R}|\} } \right]
    &= \sum_{i\in \mathcal{H}_0}\E\left[\frac{V_i}{R\vee 1}\right] = \sum_{i\in \mathcal{H}_0}\sum_{r = 1}^{m}\frac{1}{r}\P\lb p_i \le \alpha(1 - \lambda)\frac{r}{1 + A}, R = r\rb\\
    & = \sum_{i\in \mathcal{H}_0}\sum_{r = 1}^{m}\sum_{a = 1}^{m}\frac{1}{r}\P\lb p_i \le \alpha(1 - \lambda)\frac{r}{1 + a}, R = r, A = a\rb.
  \end{align*}
   Let $r_0(a) = \max\{1, \lceil(1 + a)p_{\min} / (1 - \lambda)\alpha\rceil\}$. By definition, the summand for a given $a$ is non-zero only if $r\ge r_{0}(a)$. Thus,
  \begin{align*}
    \E\left[\frac{|\mathcal{R} \cap \mathcal{H}_0|}{\max\{1, |\mathcal{R}|\} } \right]
    & = \sum_{i\in \mathcal{H}_0}\sum_{a = 1}^{m}\sum_{r = r_{0}(a)}^{m}\frac{1}{r}\P\lb p_i \le \alpha(1 - \lambda)\frac{r}{1 + a}\rb\P\lb R = r, A = a\mid p_i \le \alpha(1 - \lambda)\frac{r}{1 + a}\rb\\
    & \stackrel{(i)}{\le} \sum_{i\in \mathcal{H}_0}\sum_{a = 1}^{m}\sum_{r = r_0(a)}^{m}\frac{1}{r}\cdot \alpha(1 - \lambda)\frac{r}{1 + a}\P\lb R = r, A = a\mid p_i \le \alpha(1 - \lambda)\frac{r}{1 + a}\rb\\
    & = \alpha(1 - \lambda)\sum_{i\in \mathcal{H}_0}\sum_{a = 1}^{m}\sum_{r = r_{0}(a)}^{m}\frac{1}{1 + a}\P\lb R = r, A = a\mid p_i \le \alpha(1 - \lambda)\frac{r}{1 + a}\rb,
  \end{align*}
  where (i) uses the super-uniformity of the null p-value. Let $\T$ denote the set of all possible values that $r / (1 + a)$ can take such that $\P(p_i \le \alpha(1 - \lambda)r / (1 + a)) > 0$, i.e.
  \[\T = \left\{\frac{r}{1 + a}: a\in \{1, \ldots, m\}, r\in \{r_{0}(a), \ldots, m\}, a + r \le m\right\}.\]
  Clearly, $\T$ is a finite set. Let $t_1\le t_2 \le \ldots \le t_M$ denote the elements of $\T$. It is easy to see that
  \begin{equation}
    \label{eq:t1}
    \alpha(1 - \lambda)t_1 \ge \max\left\{p_{\min}, \frac{\alpha(1 - \lambda)}{m}\right\} = p_{*}.
  \end{equation}
  Then,
  \begin{align*}
    \E\left[\frac{|\mathcal{R} \cap \mathcal{H}_0|}{\max\{1, |\mathcal{R}|\} } \right] 
    & \leq \alpha(1 - \lambda)\sum_{i\in \mathcal{H}_0}\sum_{a = 1}^{m}\sum_{r = r_{0}(a)}^{m}\frac{1}{1 + a}\P\lb R = r, A = a\mid p_i \le \alpha(1 - \lambda)\frac{r}{1 + a}\rb \\
    & = \alpha(1 - \lambda)\sum_{i\in \mathcal{H}_0} \sum_{j=1}^{M}\sum_{a = 1}^{m} \sum_{r = r_{0}(a)}^{m} \frac{1}{1 + a}\P\lb R = r, A = a\mid p_i \le \alpha(1 - \lambda)\frac{r}{1 + a}\rb \I\left[t_j = \frac{r}{1+a}\right]\\
    & = \alpha(1 - \lambda)\sum_{i\in \mathcal{H}_0} \sum_{j=1}^{M} \sum_{a = 1}^{m} \sum_{r = r_{0}(a)}^{m} \frac{1}{1 + a}\P\lb R = t_j(1+a), A = a\mid p_i \le \alpha(1 - \lambda) t_j \rb \I\left[t_j = \frac{r}{1+a}\right]\\
& = \alpha(1 - \lambda)\sum_{i\in \mathcal{H}_0}\sum_{j=1}^{M}\sum_{a = 1}^{m}\frac{1}{1 + a}\P\lb R = (1 + a)t_j, A = a\mid p_i \le \alpha(1 - \lambda)t_j\rb\\
         & = \alpha(1 - \lambda)\sum_{i\in \mathcal{H}_0}\sum_{j=1}^{M}\E\left[ \frac{I\{R = (1 + A)t_j\}}{1 + A}\mid p_i \le \alpha(1 - \lambda)t_j\right]\\
         & = \alpha(1 - \lambda)\sum_{i\in \mathcal{H}_0}\sum_{j=1}^{M}\bigg\{\E\left[ H_j(p)\mid p_i \le \alpha(1 - \lambda)t_j\right] - \E\left[ H_{j+1}(p)\mid p_i \le \alpha(1 - \lambda)t_j\right]\bigg\}\\
         & = \alpha(1 - \lambda)\sum_{i\in \mathcal{H}_0}\bigg\{\E\left[ H_1(p)\mid p_i \le \alpha(1 - \lambda)t_1\right] \\
    & \quad - \sum_{j=1}^{M-1}\lb\E\left[ H_{j+1}(p)\mid p_i \le \alpha(1 - \lambda)t_j\right] - \E\left[ H_{j+1}(p)\mid p_i \le \alpha(1 - \lambda)t_{j+1}\right]\rb\bigg\},
  \end{align*}
  where $p = (p_1,\ldots,p_m)$ and
  \[H_j(p) = \frac{I\{R\ge (1 + A)t_j\}}{1 + A}, \quad H_{M+1}(p) = 0.\]  
  Since $A$ is an increasing function of $p$ (element-wise) and $R$ is a decreasing function of $p$ (element-wise), $H_j(p)$ is decreasing in $p$ (element-wise). The PRDS property implies that for any $j = 1, \ldots, M - 1$,
  \[\E\left[ H_{j+1}(p)\mid p_i \le \alpha(1 - \lambda)t_j\right] - \E\left[ H_{j+1}(p)\mid p_i \le \alpha(1 - \lambda)t_{j+1}\right]\ge 0.\]
  Therefore,
    \begin{align*}
      \E\left[\frac{|\mathcal{R} \cap \mathcal{H}_0|}{\max\{1, |\mathcal{R}|\} } \right] & \le \alpha(1 - \lambda)\sum_{i\in \mathcal{H}_0}\E\left[ H_1(p)\mid p_i \le \alpha(1 - \lambda)t_1\right]\\
           & \le \alpha(1 - \lambda)\sum_{i\in \mathcal{H}_0}\E\left[ \frac{1}{1 + A}\mid p_i \le \alpha(1 - \lambda)t_1\right]\\
      &\le \alpha(1 - \lambda)\sum_{i\in \mathcal{H}_0}\E\left[ \frac{1}{1 + A}\mid p_i \le p_{*}\right],
    \end{align*}
    where the last step follows from \eqref{eq:t1}, the PRDS property, and the fact that $p\mapsto 1 / (1 + A)$ is decreasing element-wise.
\end{proof}

To prove Theorem~\ref{thm:storey_BH}, we present an additional lemma. 
\begin{lemma}\label{lem:inverse_binomial}[Lemma 1 from~\cite{benjamini2006adaptive}]
  If $Y\sim \mathrm{Binom}(k - 1, p)$, then $\E[1 / (1 + Y)]\le 1 / kp$.
\end{lemma}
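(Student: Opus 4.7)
\textbf{Proof proposal for Lemma~\ref{lem:inverse_binomial}.} The plan is to evaluate $\E[1/(1+Y)]$ in closed form via a standard binomial identity and then read off the bound. First I would write out the expectation as a sum:
\[
\E\!\left[\frac{1}{1+Y}\right] \;=\; \sum_{j=0}^{k-1} \frac{1}{1+j}\binom{k-1}{j} p^{j}(1-p)^{k-1-j}.
\]
The key algebraic observation is the identity
\[
\frac{1}{1+j}\binom{k-1}{j} \;=\; \frac{1}{k}\binom{k}{j+1},
\]
which follows immediately from $\binom{k}{j+1} = \frac{k}{j+1}\binom{k-1}{j}$.

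Substituting this into the sum and reindexing with $i = j+1$ gives
\[
\E\!\left[\frac{1}{1+Y}\right] \;=\; \frac{1}{kp}\sum_{i=1}^{k}\binom{k}{i}p^{i}(1-p)^{k-i} \;=\; \frac{1 - (1-p)^{k}}{kp}.
\]
Since $(1-p)^{k} \ge 0$, we obtain $\E[1/(1+Y)] \le 1/(kp)$, which is the desired bound.

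There is essentially no obstacle here: the proof reduces to a one-line manipulation of binomial coefficients followed by recognizing the resulting sum as $1 - (1-p)^{k}$. The only care needed is in the reindexing (the sum runs over $i = 1, \ldots, k$, not over $i = 0, \ldots, k$, which is precisely what produces the factor $1 - (1-p)^{k}$ rather than $1$, though this is irrelevant for the stated upper bound).
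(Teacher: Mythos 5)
Your proof is correct, and it is the standard closed-form computation: the identity $\tfrac{1}{1+j}\binom{k-1}{j} = \tfrac{1}{k}\binom{k}{j+1}$ gives $\E[1/(1+Y)] = \bigl(1-(1-p)^{k}\bigr)/(kp) \le 1/(kp)$. The paper itself gives no proof (it cites Lemma 1 of Benjamini, Krieger, and Yekutieli), and the cited argument is essentially this same binomial-coefficient manipulation, so nothing further is needed.
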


\begin{proof}[Proof of Theorem~\ref{thm:storey_BH}]
As in the proof of Theorem~\ref{thm:combination_test}, since $\hat{s}(X)$ is continuously distributed, we can assume $\hat{s}(X)\sim \Unif([0, 1])$ without loss of generality. We write $p_i$ instead of $\hat{u}^\marg(X_{2n+i})$ and $S_j$ instead of $\hat{s}(X_{n+j})$. Then
\[p_j = \frac{1 + \sum_{i=1}^{n}I(S_{i}\le S_{n+j})}{n + 1}.\]
Then $p_j\ge 1 / (n + 1)$ almost surely. Let $m_0 = |\mathcal{H}_0|$ and we assume that $\mathcal{H}_0 = \{1, \ldots, m_0\}$ without loss of generality. Since $p = (p_1, \ldots, p_m)$ are PRDS and exchangeable, Theorem~\ref{thm:generic} implies that
\[\E\left[\frac{|\mathcal{R} \cap \mathcal{H}_0|}{\max\{1, |\mathcal{R}|\} } \right]\le \alpha(1 - \lambda)m_0\E\left[\frac{1}{1 + A}\mid p_1\le \max\left\{\frac{1}{n+1}, \frac{\alpha(1 - \lambda)}{m}\right\}\right].\]
Since $1 / (1 + A)$ is decreasing in $p$, using the PRDS property again, we have
\begin{equation}
  \label{eq:fdr_conformal}
  \E\left[\frac{|\mathcal{R} \cap \mathcal{H}_0|}{\max\{1, |\mathcal{R}|\} } \right]\le \alpha(1 - \lambda)m_0\E\left[\frac{1}{1 + A}\mid p_1\le \frac{1}{n+1}\right] = \alpha(1 - \lambda)m_0\E\left[\frac{1}{1 + A}\mid p_1 = \frac{1}{n+1}\right].
\end{equation}
Let $A_0 = \sum_{j=2}^{m_0}I(p_j \ge \lambda)$. Then
\[\E\left[\frac{1}{1 + A}\mid p_1 = \frac{1}{n+1}\right]\le \E\left[\frac{1}{1 + A_0}\mid p_1 = \frac{1}{n+1}\right].\]
Let $S_{(1)}\le S_{(2)}\le \ldots \le S_{(n+1)}$ denote the order statistics of $S_1, \ldots, S_{n+1}$ and $R_{n+1}$ denote the rank of $S_{n+1}$. Since $S_1\sim \mathrm{Unif}([0, 1])$, there is no tie almost surely.

Now we compute
\begin{equation}
  \label{eq:A0}
  \E\left[\frac{1}{1 + A_0}\mid p_1 = \frac{1}{n+1}, S_{(1)}, \ldots, S_{(n+1)}\right] = \E\left[\frac{1}{1 + A_0}\mid R_{n+1} = 1, S_{(1)}, \ldots, S_{(n+1)}\right].
\end{equation}
By definition, 
\[p_2, \ldots, p_{m_0} \mid S_1, \ldots, S_{n+1} \stackrel{i.i.d.}{\sim}\frac{1 + \sum_{j=1}^{n}I(S_j\le U)}{n + 1} \]
where $U\sim \mathrm{Unif}([0, 1])$. Note that there is a bijection between $(S_1, \ldots, S_{n+1})$ and $(S_{(1)}, \ldots, S_{(n+1)}, R_{1}, \ldots, R_{n+1})$ for vectors without ties. The above distributional equivalence can be rewritten as
\[p_2, \ldots, p_{m_0} \mid R_1, \ldots, R_{n+1}, S_{(1)}, \ldots, S_{(n+1)}  \stackrel{i.i.d.}{\sim}\frac{1 + \sum_{j=1}^{n+1}I(S_{(j)}\le U) - I(S_{(R_{n+1})} \le U)}{n + 1}.\]
Since the RHS does not depend on $(R_1, \ldots, R_n)$, $(p_2, \ldots, p_{m_0})$ is independent of $(R_1, \ldots, R_n)$ conditional on $(R_{n+1}, S_{(1)}, \ldots, S_{(n+1)})$. As a result,
\[p_2, \ldots, p_{m_0} \mid R_{n+1} = 1, S_{(1)}, \ldots, S_{(n+1)} \stackrel{i.i.d.}{\sim}\frac{1 + \sum_{j=2}^{n+1}I(S_{(j)}\le U)}{n + 1}.\]
Recall $K = (n + 1)\lambda\in \mathbb{Z}$. Then
\begin{align*}
  \P\lb p_2 \ge \lambda \mid R_{n+1} = 1, S_{(1)}, \ldots, S_{(n+1)}\rb
%   & = \P\lb \frac{1 + \sum_{j=2}^{n+1}I(S_{(j)}\le U)}{n + 1} \ge \lambda\mid S_{(2)}, \ldots, S_{(n+1)}\rb\\
  & = \P\lb \sum_{j=2}^{n+1}I(S_{(j)}\le U)\ge K - 1 \mid S_{(2)}, \ldots, S_{(n+1)}\rb\\
  & = \P\lb U \ge S_{(K)}\mid S_{(2)}, \ldots, S_{(n+1)}\rb\\
  & = 1 - S_{(K)}.
\end{align*}
Therefore,
\[I(p_2 \ge \lambda), \ldots, I(p_{m_0} \ge \lambda) \mid R_{n+1} = 1, S_{(1)}, \ldots, S_{(n+1)} \stackrel{i.i.d.}{\sim}\mathrm{Ber}\lb 1 - S_{(K)}\rb.\]
This implies that
\[A_0\mid R_{n+1} = 1, S_{(1)}, \ldots, S_{(n+1)}\sim \mathrm{Binom}\lb m_0 - 1, 1 - S_{(K)}\rb.\]
By Lemma~\ref{lem:inverse_binomial},
\[\E\left[\frac{1}{1 + A_0}\mid R_{n+1} = 1, S_{(1)}, \ldots, S_{(n+1)}\right]\le \frac{1}{m_0 \{1 - S_{(K)}\}}.\]
Since $R_{n+1}$ is independent of $(S_{(1)}, \ldots, S_{(n+1)})$,
\begin{equation}
  \label{eq:EA0}
  \E\left[\frac{1}{1 + A_0}\mid R_{n+1} = 1\right] \le \E\left[\frac{1}{m_0 \{1 - S_{(K)}\}}\right].
\end{equation}
By symmetry and the property of order statistics,
\[1 - S_{(K)}\stackrel{d}{=} S_{(n + 2 - K)}\sim \mathrm{Beta}(n + 2 - K, K).\]
Thus,
\begin{align}
  \E\left[\frac{1}{1 - S_{(K)}}\right]
  &= \int_{0}^{1}\frac{1}{x}\frac{\Gamma(n + 2)}{\Gamma(n + 2 - K)\Gamma(K)}x^{n + 1 - K}(1 - x)^{K - 1}dx\nonumber\\
  & = \int_{0}^{1}\frac{\Gamma(n + 2)}{\Gamma(n + 2 - K)\Gamma(K)}x^{n - K}(1 - x)^{K - 1}dx\nonumber\\
  & = \frac{\Gamma(n + 2)\Gamma(n + 1 - K)}{\Gamma(n + 2 - K)\Gamma(n + 1)}\nonumber\\
  & = \frac{n + 1}{n + 1 - K} = \frac{1}{1 - \lambda}.\label{eq:ES}
\end{align}
Putting \eqref{eq:fdr_conformal}, \eqref{eq:EA0} and \eqref{eq:ES} together, we prove the result.
\end{proof}

\subsection{Conditional p-value adjustment}

\begin{proof}[Proof of Theorem~\ref{thm:generic}]
Let $S_{i} = \hat{s}(X_{n+i})$ for $i = 1, \ldots, n$ with $F^{-}(t) = \P[S_{i} < t\mid \mathcal{D}^\textnormal{train}]$, and $S_{(1)}\le S_{(2)}\le \ldots \le S_{(n)}$ be the order statistics. Note that here we condition on the training data in $\mathcal{D}^\textnormal{train}$, which makes the $S_{i}$ are independent of another. Then it is easy to see that 
\[(F^{-}(S_{(1)}), \ldots, F^{-}(S_{(n)}))\preceq (U_{(1)}, \ldots, U_{(n)}),\]
where $\preceq$ denotes the entry-wise stochastic dominance in the sense that $(A_1, \ldots, A_n) \preceq (B_1, \ldots, B_n)$ iff
\[\P[A_1 \le z_1, \ldots, A_n \le z_n] \ge \P[B_1 \le z_1, \ldots, B_n \le z_n], \quad \forall (z_1, \ldots, z_n)\in \R^n.\]
When $F$ is continuous, the equality in distribution holds. Let $\mathcal{E}_{n}$ denote the event on which $F^-(S_{(i)})\le b_{i}$ for all $i = 1, \ldots, n$. Then 
\[\P[ \mathcal{E}_{n}] \ge 1 - \delta.\]
Now we prove the following claim, which directly yields the theorem:
\begin{equation}\label{eq:goal_generic}
\P\left[\hat{u}^\cond(X_{2n+1})\le t\mid \mathcal{D}\right]\le t, \quad \forall t\in [0, 1], \quad \text{if } \mathcal{E}_{n} \text{ occurs}.
\end{equation}
Note that the image of $\hat{u}^\cond$ is $\{b_1, \ldots, b_n, 1\}$, it remains to prove \eqref{eq:goal_generic} with $t \in \{b_1, \ldots, b_n, 1\}$. When $t = 1$, it clearly holds. When $t = b_i$,
\[\hat{u}^\cond(X_{2n+1})\le b_i\Longleftrightarrow \hat{u}^\marg(X_{2n+1})\le \frac{i}{n+1}\Longleftrightarrow \hat{s}(X_{2n+1}) < S_{(i)}.\]
Thus, 
\[\P\left[\hat{u}^\cond(X_{2n+1})\le b_i\mid \mathcal{D}\right] = \P\left[\hat{s}(X_{2n+1}) < S_{(i)}\mid \mathcal{D}\right] = F^-(S_{(i)}).\]
By definition of $\mathcal{E}_{n}$, \eqref{eq:goal_generic} holds for all $t\in \{b_1, \ldots, b_n\}$.

\end{proof}

\subsection{Simultaneous confidence bounds for the false positive rate}

\begin{proof}[Proof of Proposition~\ref{prop:ucb-fpr}]
Note that $h(i / n) = b_{\lceil i + i/n\rceil} = b_{i+1}$ where we let $b_{n+1} = 1$ for  convenience. Then, the event that $F(Z_{(i)})\le h((i - 1) / n) = b_{i}$ for all $i \in \{1, \ldots, n\}$ occurs with probability at least $1 - \delta$, where $Z_{(1)}\le \ldots\le Z_{(n)}$ are the order statistics. Under this event, for any $z \in [Z_{(i-1)}, Z_{(i)})$, where we let $Z_{(0)} = \infty$ and $Z_{(n+1)} = \infty$ for convenience, $\hat{F}_{n}(z) = (i - 1) / n$ and thus
\[F(z)\le F(Z_{(i)})\le b_{i} = h\left(\hat{F}_{n}(z)\right).\]
On the other hand, if $h: [0, 1]\rightarrow [0, 1]$ is a function such that $h(\hat{F}_{n}(z))$ is a uniform upper confidence band of $F$ for any CDF $F$, then \eqref{eq:confidence_band} holds with $b_{i} = h(i / n)$.
\end{proof}

\section{Power analysis of Fisher's combination test}\label{sec:power_fisher}

In this section, we investigate the effective $\alpha$-level of Fisher's combination test applied to calibration-conditional conformal p-values $\hat{u}_i^\cond \equiv h\circ \hat{u}_i^\marg$, for different adjustment functions $h$. To be self-contained, we summarize the three calibration-conditional adjustments as follows.
\begin{itemize}
\item Asymptotic adjustment:
  \begin{equation}
    \label{eq:asym}
    h^\asym\lb \frac{i}{n+1}\rb = \min\left\{\frac{i}{n} + c_n(\delta)\frac{\sqrt{i(n - i)}}{n\sqrt{n}}, 1\right\}, 
  \end{equation}
  \[\text{where } c_n(\delta) = \frac{-\log[-\log(1-\delta)]+2\log\log n + (1/2) \log \log \log n - (1/2) \log\pi }{\sqrt{2 \log \log n}}.\]
\item DKWM adjustment:
  \[h^\dkwm\lb \frac{i}{n+1}\rb = \min\left\{\frac{i}{n+1} + \sqrt{\frac{\log(2/\delta)}{2n}}, 1\right\}.\]
\item Simes adjustment:
  \[h^\simes\lb \frac{n+1 - i}{n+1}\rb = 1 - \delta^{1/k}\left(\frac{i\cdots (i - k + 1)}{n\cdots (n - k + 1)}\right)^{1/k},\]
  where $k$ is chosen to be $0.5 n$ in the experiments.
\end{itemize}
% Recall that Theorem \ref{thm:generic} requires $h$ to be non-decreasing. Clearly, $h^\dkwm$ and $h^\simes$ satisfy the condition. However, it is not obvious for $h^\asym$ because the function $\sqrt{i(n-i)}$ is not monotonic in $i$. We prove it in the following proposition.

Throughout the section, we will treat $\delta\in (0, 1)$ as a constant that does not vary with $n$ or $m$, though it is not hard to recover the dependence on $\delta$ from the proofs. As a result, the big-O notation could hide constants that solely depend on $\delta$. 

\subsection{Asymptotic adjustment}\label{subsec:fisher_asym}
\subsubsection{Monotonicity of $h^\asym$}\label{subsubsec:fisher_asym_monotonicity}
  For notational convenience, we write $a_n$ for $c_n(\delta) / \sqrt{n}$ throughout the subsection. It should be kept in mind that $a_{n}$ depends on $\delta$. We first prove that $h^\asym$ is non-decreasing.
  \begin{prop}\label{prop:hasym_increasing}
  $h^\asym$ is non-decreasing for any $n$ and $\delta$. Furthermore,
  \[h^\asym\lb \frac{i}{n+1}\rb = 1\text{ for any }i\ge t_n, \text{ where }t_n = \left\lceil \frac{n}{a_n^2 + 1}\right\rceil.\]
\end{prop}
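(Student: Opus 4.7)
The plan is to extend the sequence $(b_i^{\asym})$ to a continuous function $f:[0,n]\to\R$ defined by $f(x) = x/n + a_n\sqrt{x(n-x)}/n$ with $a_n \defeq c_n(\delta)/\sqrt{n}$, so that $b_i^{\asym} = \min\{f(i),1\}$ for $i=1,\ldots,n$. Since $h^{\asym}(t) = b^{\asym}_{\lceil(n+1)t\rceil}$, the monotonicity of $h^{\asym}$ reduces to showing that $(b_i^{\asym})$ is non-decreasing in $i$, which is what I will establish.

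First I address the second claim. For $i \le n$, the inequality $f(i)\ge 1$ reads $a_n\sqrt{i(n-i)}\ge n-i$; squaring (both sides are non-negative) and simplifying yields $i(a_n^2+1)\ge n$, i.e., $i\ge n/(a_n^2+1)$. Hence the smallest integer $i$ with $f(i)\ge 1$ is exactly $t_n=\lceil n/(a_n^2+1)\rceil$, which gives the second claim; moreover $b^{\asym}_i = 1$ for all $i\ge t_n$, while $b^{\asym}_i = f(i) < 1$ for $i < t_n$.

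For the monotonicity, a direct computation gives $f'(x) = 1/n + (a_n/n)(n-2x)/(2\sqrt{x(n-x)})$. Since $\sqrt{x(n-x)}$ is concave on $[0,n]$, $f$ admits a unique interior maximizer $x^*$, and solving $f'(x^*)=0$ (which forces $x^*>n/2$) together with the substitution $u\defeq \sqrt{a_n^2+1}$ gives $x^* = (n/2)(1+1/u)$ and $f(x^*) = (1+u)/2 \ge 1$. In particular $f$ is strictly increasing on $[0,x^*]$. The key step is to verify $t_n - 1 \le x^*$: writing $n/(a_n^2+1) = n/u^2$ and $x^* = n(u+1)/(2u)$, the inequality $1/u^2 \le (u+1)/(2u)$ is equivalent to $u^2+u\ge 2$, which holds for $u\ge 1$; hence $t_n - 1 \le n/(a_n^2+1) \le x^*$. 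It follows that $f$ is non-decreasing on the integers $\{0,1,\ldots,t_n-1\}$, giving $b_1^{\asym}\le\cdots\le b_{t_n-1}^{\asym}$. Combined with the strict inequality $b_{t_n-1}^{\asym} = f(t_n-1) < 1 = b_{t_n}^{\asym}$ and $b^{\asym}_i = 1$ for all $i\ge t_n$, the full chain $b_0^{\asym}\le b_1^{\asym}\le\cdots\le b_{n+1}^{\asym}$ follows.

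The main obstacle is the algebraic comparison $t_n - 1 \le x^*$; once that is in hand, the rest is routine monotonicity bookkeeping, with mild care to handle the transition index $i = t_n$ where the $\min$ in the definition of $b_i^{\asym}$ first becomes active.
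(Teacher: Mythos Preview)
Your proposal is correct and follows essentially the same approach as the paper: both analyze the continuous extension of the sequence, locate its unique maximizer, and verify that the point $n/(a_n^2+1)$ (where the function first reaches $1$) lies to the left of that maximizer. The only cosmetic differences are that the paper works on $[0,1]$ via $g_n(x)=x+a_n\sqrt{x(1-x)}$ while you rescale to $[0,n]$, and the paper checks $g_n(1/(a_n^2+1))=1$ directly and then argues $g_n\ge 1$ on $[1/(a_n^2+1),1]$ by unimodality, whereas you instead compute $f(x^*)=(1+u)/2\ge 1$ and bound $t_n-1\le x^*$; the algebraic inequality $u^2+u\ge 2$ in your argument is equivalent to the paper's factorization $(\sqrt{1/(a_n^2+1)}+1/2)(\sqrt{1/(a_n^2+1)}-1)<0$.
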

\begin{proof}
  Let $g_n(x) = x + a_n\sqrt{x(1 - x)}$. By definition,
  \[h^\asym\lb \frac{i}{n+1}\rb = g_n\lb\frac{i}{n}\rb.\]
  It is left to prove $\min\{g(x), 1\}$ is non-decreasing on $[0, 1]$. Taking the derivative, we obtain that
  \[g_n'(x) = 1 + a_n \frac{1 - 2x}{2\sqrt{x(1 - x)}}.\]
  Clearly, $g_n'(x)\ge 0$ for any $x \le 1/2$. When $x > 1/2$,
  \begin{align*}
    g_n'(x) > 0 & \Longleftrightarrow 2\sqrt{x(1 - x)} \ge a_n(2x - 1)\\
              & \Longleftrightarrow 4x(1 - x) \ge a_n^2(2x - 1)^2\\
              & \Longleftrightarrow (a_n^2 + 1)(2x - 1)^2 \le 1\\
              & \Longleftrightarrow x \le \frac{1}{2}\lb 1 + \sqrt{\frac{1}{a_n^2 + 1}}\rb\equiv d_n.
  \end{align*}
  As a result, $g_n(x)$ is increasing on $[0, d_n]$ and decreasing on $[d_n, 1]$. On the other hand,
  \[g_n\lb\frac{1}{a_n^2 + 1}\rb = 1,\]
  and
  \[\frac{1}{a_n^2 + 1} - d_n = \lb \sqrt{\frac{1}{a_n^2 + 1}} + \frac{1}{2}\rb\lb \sqrt{\frac{1}{a_n^2 + 1}}- 1\rb < 0.\]
  Therefore, $g_n(x)$ is increasing on $[0, 1/(a_n^2 + 1)]$ with $g_n(1/(a_n^2 + 1)) = 1$. On $[1/(a_n^2 + 1), 1]$,
  \[g_n(x)\ge \min\{g_n(1/(a_n^2 + 1)), g_n(1)\} = 1.\]
  Therefore, $\min\{g_n(x), 1\}$ is increasing and, when $i\ge t_n$,
  \[h^\asym\lb\frac{i}{n+1}\rb = 1.\]
\end{proof}

\subsubsection{Mean of Fisher's combination statistic}
As a stepping stone to analyze the effective $\alpha$-level, we will first compute the mean of Fisher's combination statistic under the null, which roughly measures the conservatism of the test.
\begin{lemma}\label{lem:integral}
  Let $\int f(x)dx$ denote the indefinite integral of $f(x)$. Then
  \[\int \frac{dx}{\sqrt{x(1-x)} + a_n(1 - x)} = \frac{2}{a_n^2 + 1}\left\{ \arcsin(\sqrt{x}) - a_n \log\lb \sqrt{x} + a_n\sqrt{1 - x}\rb\right\} + \mathrm{Const}.\]
\end{lemma}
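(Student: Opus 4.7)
The plan is to verify the claimed antiderivative by differentiating the right-hand side and checking that it reproduces the integrand. Since the right-hand side is elementary and the identity is of a standard form, direct differentiation will be the cleanest route (as opposed to an $x=\sin^2\theta$ substitution, which would also work but is longer).

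Concretely, let me denote the right-hand side (without the constant) by $F(x)$ and compute $F'(x)$. Using $\frac{d}{dx}\arcsin(\sqrt{x}) = \frac{1}{2\sqrt{x(1-x)}}$ and
\[
\frac{d}{dx}\log\bigl(\sqrt{x} + a_n\sqrt{1-x}\bigr) = \frac{1}{\sqrt{x} + a_n\sqrt{1-x}}\cdot\frac{\sqrt{1-x} - a_n\sqrt{x}}{2\sqrt{x(1-x)}},
\]
I get
\[
F'(x) = \frac{1}{(a_n^2+1)\sqrt{x(1-x)}}\left\{1 - \frac{a_n(\sqrt{1-x} - a_n\sqrt{x})}{\sqrt{x} + a_n\sqrt{1-x}}\right\}.
\]

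The bracketed quantity simplifies, after combining over a common denominator, to $\frac{(1+a_n^2)\sqrt{x}}{\sqrt{x} + a_n\sqrt{1-x}}$, so the $(a_n^2+1)$ factors cancel and
\[
F'(x) = \frac{\sqrt{x}}{\sqrt{x(1-x)}\,\bigl(\sqrt{x} + a_n\sqrt{1-x}\bigr)} = \frac{1}{\sqrt{1-x}\,\bigl(\sqrt{x} + a_n\sqrt{1-x}\bigr)}.
\]
The final step is the key algebraic identity: factor $\sqrt{1-x}$ out of the original denominator,
\[
\sqrt{x(1-x)} + a_n(1-x) = \sqrt{1-x}\,\bigl(\sqrt{x} + a_n\sqrt{1-x}\bigr),
\]
which shows $F'(x)$ matches the integrand exactly.

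There is no real obstacle here; the only mildly delicate piece is recognizing the factorization in the last display, which drives both the cancellation in $F'(x)$ and the rewriting of the integrand into the matching form. Domain-wise, everything is fine for $x\in(0,1)$ with $a_n>0$, since $\sqrt{x}+a_n\sqrt{1-x}>0$ there, so the logarithm and the divisions are well-defined and $F$ is a genuine antiderivative on $(0,1)$.
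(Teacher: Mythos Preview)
Your proof is correct. The differentiation is carried out accurately, and the key factorization $\sqrt{x(1-x)} + a_n(1-x) = \sqrt{1-x}\bigl(\sqrt{x} + a_n\sqrt{1-x}\bigr)$ is exactly what makes the match work.

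The paper takes a different route: it derives the antiderivative forward via the substitution $x = \sin^2\theta$, reducing the problem to $2\int \frac{\sin\theta}{\sin\theta + a_n\cos\theta}\,d\theta$, and then solves this by the standard linear-combination trick, setting $f_1 = \int \frac{\sin\theta}{\sin\theta + a_n\cos\theta}\,d\theta$ and $f_2 = \int \frac{\cos\theta}{\sin\theta + a_n\cos\theta}\,d\theta$ and observing that $f_1 + a_n f_2 = \theta$ while $f_2 - a_n f_1 = \log(\sin\theta + a_n\cos\theta)$. Your approach is shorter and fully rigorous for the lemma as stated, since the claim already supplies the candidate antiderivative; the paper's approach has the advantage of showing where the formula comes from, which would matter if one had to discover it rather than verify it.
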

\begin{proof}
  \begin{align*}
    &\int \frac{dx}{\sqrt{x(1-x)} + a_n(1 - x)} \stackrel{x = \sin^2\theta}{=}\int \frac{d\sin^2\theta}{\sin\theta \cos\theta + a_n\cos^2\theta} = 2\int \frac{\sin\theta}{\sin\theta + a_n \cos\theta}d\theta.
  \end{align*}
  Let
  \[f_1(\theta) = \int \frac{\sin\theta}{\sin\theta + a_n \cos\theta}d\theta, \quad f_2(\theta) = \int \frac{\cos\theta}{\sin\theta + a_n \cos\theta}d\theta.\]
  Then
  \[f_1(\theta) + a_n f_2(\theta) = \int 1d\theta = \theta + \mathrm{Const},\]
  and
  \[f_2(\theta) - a_n f_1(\theta) = \int \frac{\cos\theta - a_n \sin\theta}{\sin\theta + a_n \cos\theta}d\theta = \log(\sin\theta + a_n \cos\theta) + \mathrm{Const}.\]
  Therefore,
  \[f_1(\theta) = \frac{1}{a_n^2 + 1}\lb \theta - a_n\log(\sin\theta + a_n \cos\theta)\rb + \mathrm{Const},\]
  which implies the lemma.
\end{proof}

\begin{thm}[Part (a) of Theorem~\ref{theorem:fisher-mean}]\label{thm:fisher_asym}
  \[\E_{H_0}[-\log (h^\asym \circ \hat{u}^\marg)] = 1 - \frac{\pi}{2}\frac{c_n(\delta)}{\sqrt{n}} + O\lb \frac{(\log n)(\log \log n)}{n}\rb.\]
\end{thm}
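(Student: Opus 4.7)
Write $a_n = c_n(\delta)/\sqrt{n}$ and $\phi(x) = x + a_n\sqrt{x(1-x)}$, so that the piece-wise-constant sequence from \eqref{eq:asym} can be written as $b_i^\asym = \min\{\phi(i/n),1\}$. Under the null, $\hat{u}^\marg$ is uniform on $\{j/(n+1):j=1,\ldots,n+1\}$, and hence
\[
\E_{H_0}[-\log(h^\asym\circ \hat{u}^\marg)]
= \frac{1}{n+1}\sum_{i=1}^{n+1}(-\log b_i^\asym).
\]
By Proposition~\ref{prop:hasym_increasing}, $b_i^\asym=1$ for $i\ge t_n=\lceil n/(a_n^2+1)\rceil$, so the sum reduces to an expression involving only $-\log\phi(i/n)$ for $i<t_n$.

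The plan is to (i) replace this Riemann sum by an integral, then (ii) evaluate that integral in closed form by integration by parts combined with Lemma~\ref{lem:integral}, and finally (iii) Taylor expand in $a_n$. For step (i), note that $-\log\phi(x)=-\log x-\log\bigl(1+a_n\sqrt{(1-x)/x}\bigr)$ is dominated near $0$ by $-\tfrac12\log x - \log a_n$ and its derivative is of order $1/x$ away from $0$, so a textbook Riemann-sum estimate (splitting off the interval $[0,1/n]$, where the integrand is integrable of size $O((\log n)/n)$, and bounding the remaining error by $n^{-1}\int_{1/n}^{1/(a_n^2+1)}|(\log\phi)'|\,dx=O((\log n)/n)$) will give
\[
\E_{H_0}[-\log(h^\asym\circ \hat{u}^\marg)]=I+O\bigl((\log n)/n\bigr),\qquad
I\defeq\int_0^{1/(a_n^2+1)}(-\log\phi(x))\,dx.
\]

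For step (ii), integrating by parts and using that $-x\log\phi(x)\to 0$ at both endpoints (at $x=1/(a_n^2+1)$ because $\phi=1$ there, and at $0$ because $\phi(x)\sim a_n\sqrt{x}$) gives $I=\int_0^{1/(a_n^2+1)} x\phi'(x)/\phi(x)\,dx$. The crucial algebraic identity, verified by clearing denominators, is
\[
\frac{x\phi'(x)}{\phi(x)}
= \frac{2\sqrt{x(1-x)}+a_n(1-2x)}{2\bigl[\sqrt{x(1-x)}+a_n(1-x)\bigr]}
= 1 - \frac{a_n}{2\bigl[\sqrt{x(1-x)}+a_n(1-x)\bigr]}.
\]
The remaining integral is then exactly the one evaluated by Lemma~\ref{lem:integral}. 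Using $\arcsin\sqrt{1/(a_n^2+1)}=\arctan(1/a_n)$ and $\sqrt{1/(a_n^2+1)}+a_n\sqrt{a_n^2/(a_n^2+1)}=\sqrt{a_n^2+1}$ at the upper limit, and evaluating the primitive at both endpoints, one obtains the closed form
\[
I=\frac{1}{a_n^2+1}\Bigl[\,1-a_n\arctan(1/a_n)-a_n^2\log\!\bigl(a_n/\sqrt{a_n^2+1}\bigr)\Bigr].
\]

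For step (iii), expanding $\arctan(1/a_n)=\pi/2-a_n+O(a_n^3)$, $\log(a_n/\sqrt{a_n^2+1})=\log a_n+O(a_n^2)$, and $1/(1+a_n^2)=1-a_n^2+O(a_n^4)$ gives $I=1-(\pi/2)a_n+O(a_n^2|\log a_n|)$. Since $a_n=c_n(\delta)/\sqrt{n}$ and $c_n(\delta)=\Theta(\sqrt{\log\log n})$, we have $a_n^2|\log a_n|=O((\log n)(\log\log n)/n)$, which dominates the $O((\log n)/n)$ error from step (i) and the $O(a_n^2)$ remainder. This yields the claimed expansion.

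I expect the main obstacles to be twofold. The first is spotting the algebraic identity that reduces $x\phi'/\phi$ to a constant minus the integrand of Lemma~\ref{lem:integral}; without this the computation does not collapse. The second is the bookkeeping in step (iii): one must verify that every subleading contribution, including the delicate $-a_n^2\log a_n$ term arising from the $-a_n^2\log(a_n/\sqrt{a_n^2+1})$ piece, is absorbed by the stated $O((\log n)(\log\log n)/n)$ error.
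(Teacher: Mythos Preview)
Your proposal is correct and follows essentially the same route as the paper: the same integration by parts reducing $\int(-\log\phi)$ to $\int x\phi'/\phi$, the same algebraic simplification to $1-\tfrac{a_n}{2}[\sqrt{x(1-x)}+a_n(1-x)]^{-1}$, and the same appeal to Lemma~\ref{lem:integral}. The only cosmetic difference is that the paper evaluates the antiderivative $G_n$ at the Riemann-sum endpoints $1/n$ and $t_n/n$ and bounds each of the four pieces of $G_n$ term by term, whereas you integrate over the exact interval $[0,1/(a_n^2+1)]$ (where the boundary terms vanish cleanly), obtain the closed form $I=\frac{1}{a_n^2+1}\bigl[1-a_n\arctan(1/a_n)-a_n^2\log(a_n/\sqrt{a_n^2+1})\bigr]$, and then Taylor expand; this is arguably tidier but requires the separate Riemann-sum estimate in your step~(i), which the paper avoids by sandwiching directly.
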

\begin{proof}
  Let $g_n(x) = x + a_n \sqrt{x(1 - x)}$. Since $-\log(1) = 0$, 
  \[\E_{H_0}[-\log (h^\asym \circ \hat{u}^\marg)] = \frac{1}{n+1}\sum_{i=1}^{n+1}h^\asym\lb\frac{i}{n+1}\rb = \frac{1}{n+1}\sum_{i=1}^{t_n - 1} -\log \left\{g_n\lb\frac{i}{n}\rb\right\}.\]
  We have shown in the proof of Proposition \ref{prop:hasym_increasing} that $g_n(x)$ is increasing on $[0, t_n / n]$. Thus,
  \begin{equation}
    \label{eq:integral_bound}
    \frac{n+1}{n}\E_{H_0}[-\log (h^\asym \circ \hat{u}^\marg)]\in \left[\int_{1/n}^{t_n / n}-\log \{g_n(x)\}dx, \int_{0}^{(t_n - 1) / n}-\log \{g_n(x)\}dx\right].
  \end{equation}
  % We first prove the following approximation for the definite integral of $-\log \{g_n(x)\}$: 
  % \begin{equation}
  %   \label{eq:integral}
  %   \text{for any }0 \le r < t \le 1, \,\,\,\bigg|\int_{r}^{t}-\log \{g_n(x)\}dx - \{G_n(t) - G_n(r)\}\bigg|\le a_n^2, 
  % \end{equation}
  % \[\text{where }G_n(x) = -x\log\{g_n(x)\} + x + a_n\sqrt{x(1 - x)} - a_n \arcsin(\sqrt{x}).\]
  Now we calculate the indefinite integral of $-\log\{g_n(x)\}$. Using integration by parts,
  \begin{align*}
    &\int -\log \{g_n(x)\}dx\\
    &= -x\log \{g_n(x)\} + \int x\frac{1 + a_n (1 - 2x) / 2\sqrt{x(1- x)}}{x + a_n\sqrt{x(1 - x)}}dx\\
    &= -x\log \{g_n(x)\} + \int x\frac{ 2\sqrt{x(1- x)} + a_n (1 - 2x)}{2x\sqrt{x(1- x)} + 2a_n x(1 - x)}dx\\
    & = -x\log \{g_n(x)\} + \int\frac{2\sqrt{x(1 - x)} + a_n (1 - 2x)}{2\sqrt{x(1 - x)} + 2a_n(1 - x)}dx\\
    & = -x\log \{g_n(x)\} + x - \frac{a_n}{2}\int\frac{1}{\sqrt{x(1 - x)} + a_n(1 - x)}dx\\
    & = -x\log \{g_n(x)\} + x - \frac{a_n}{a_n^2 + 1}\arcsin(\sqrt{x}) + \frac{a_n^2}{a_n^2 + 1}\log\lb \sqrt{x} + a_n\sqrt{1 - x}\rb + \mathrm{Const} \,\,\text{(Lemma \ref{lem:integral})}\\
    & \equiv G_n(x) + \mathrm{Const}.
  \end{align*}
  Applying this to \eqref{eq:integral_bound}, by Newton-Leibniz formula,
  \[\int_{1/n}^{t_n / n}-\log \{g_n(x)\}dx = G_n(t_n / n) - G_n(1 / n).\]
  By definition,
  \[1 - \frac{t_n}{n} = O(a_n^2), \quad \frac{1}{n} = O\lb a_n^2\rb.\]
  Then
  \begin{align*}
    -\frac{t_n}{n}\log\left\{\frac{t_n}{n} + a_n \sqrt{\frac{t_n}{n}}\lb 1 - \frac{t_n}{n}\rb\right\} &= O\lb 1 - \frac{t_n}{n} - a_n \sqrt{\frac{t_n}{n}}\lb 1 - \frac{t_n}{n}\rb\rb\\
    & = O(a_n^2) = O\lb\frac{(\log n)(\log \log n)}{n}\rb,
  \end{align*}
  and
  \begin{equation*}
    -\frac{1}{n}\log\left\{\frac{1}{n} + a_n\sqrt{\frac{n-1}{n^2}}\right\} = O\lb\frac{(\log n)(\log\log n)}{n}\rb.
  \end{equation*}
  Thus,
  \begin{equation}
    \label{eq:integral_1}
    -x\log \{g_n(x)\}\bigg|_{1/n}^{t_n/n} = O\lb a_n^2 + \frac{(\log n)(\log\log n)}{n}\rb =O\lb\frac{(\log n)(\log\log n)}{n}\rb.
  \end{equation}
  Similarly,
  \begin{equation}
    \label{eq:integral_2}
    x\bigg|_{1/n}^{t_n/n} = 1 + O\lb a_n^2\rb = 1 + O\lb\frac{(\log n)(\log\log n)}{n}\rb.
  \end{equation}
  Next, 
  \begin{equation*}
    \frac{a_n}{a_n^2 + 1}\arcsin(\sqrt{1/n}) = O\lb\frac{a_n}{\sqrt{n}}\rb = O\lb\frac{(\log n)(\log\log n)}{n}\rb,
  \end{equation*}
  and
  \begin{align}
    -\frac{a_n}{a_n^2 + 1}\arcsin(\sqrt{t_n/n}) &= -\frac{a_n}{a_n^2+1}\lb \arcsin(1) - \int_{\sqrt{t_n/n}}^{1}\frac{dx}{\sqrt{1 - x^2}}\rb\nonumber\\
                              &\in -\frac{\pi}{2}\frac{a_n}{a_n^2 + 1} + \frac{a_n}{a_n^2 + 1}\sqrt{1 - \frac{t_n}{n}}\cdot \left[ 1, \frac{1}{1 - t_n / n}\right]\nonumber\\
    & = -\frac{\pi}{2}a_n + O(a_n^2) = -\frac{\pi}{2}a_n + O\lb\frac{(\log n)(\log\log n)}{n}\rb.\nonumber\label{eq:integral_4}
  \end{align}
  Thus,
  \begin{equation}
    \label{eq:integral_3}
    -\frac{a_n}{a_n^2 + 1}\arcsin(\sqrt{x})\bigg|_{1/n}^{t_n/n} = -\frac{\pi}{2}a_n + O\lb \frac{(\log n)(\log\log n)}{n}\rb.
  \end{equation}
  Finally,
  \begin{align*}
    &\frac{a_n^2}{a_n^2 + 1}\log\lb \sqrt{\frac{t_n}{n}} + a_n\sqrt{1 - \frac{t_n}{n}}\rb = O\lb a_n^2\left\{1 - \sqrt{\frac{t_n}{n}} - a_n\sqrt{1 - \frac{t_n}{n}}\right\}\rb\\
    &= O(a_n^4) = O\lb\frac{(\log n)(\log\log n)}{n}\rb,
  \end{align*}
  and
  \begin{align*}
    &\frac{a_n^2}{a_n^2 + 1}\log\lb \sqrt{\frac{1}{n}} + a_n\sqrt{1 - \frac{1}{n}}\rb = O(a_n^2 \log n) = O\lb\frac{(\log n)(\log\log n)}{n}\rb.
  \end{align*}
  As a result,
  \begin{equation}
    \label{eq:integral_4}
    \frac{a_n^2}{a_n^2 + 1}\log\lb \sqrt{x} + a_n\sqrt{1 - x}\rb\bigg|_{1/n}^{t_n/n} = O\lb\frac{(\log n)(\log\log n)}{n}\rb.
  \end{equation}
  Putting \eqref{eq:integral_1} - \eqref{eq:integral_4} together, we obtain that
  \[\int_{1/n}^{t_n / n}-\log \{g_n(x)\}dx = 1 - \frac{\pi}{2}a_n + O\lb\frac{(\log n)(\log \log n)}{n}\rb.\]
  Similarly,
  \[\int_{0}^{(t_n - 1) / n}-\log \{g_n(x)\}dx = 1 - \frac{\pi}{2}a_n + O\lb\frac{(\log n)(\log \log n)}{n}\rb.\]
  The proof is then completed.
\end{proof}

\subsubsection{Effective $\alpha$-level}
The next result shows the distributional approximation for Fisher's combination statistic. The proof is involved and thus relegated to Section \ref{subsec:lem_fisher_asym_CLT}.
\begin{lemma}\label{lem:fisher_asym_CLT}
  Under the global null, as $m, n\rightarrow \infty$,
  \begin{align*}
    &\bigg|\P_{H_0}\lb \sum_{i=1}^{m}-\log\lb h^{\asym}\circ \hat{u}^\marg_i\rb\ge \sqrt{m}\sqrt{1 + \frac{m}{n}}z_{1- \alpha} + m \left\{ 1 - \frac{\pi}{2}\frac{c_n(\delta)}{\sqrt{n}}\right\}\rb - \alpha\bigg|
     = O\lb \frac{1}{\sqrt{m}} + \frac{\log^6 n}{\sqrt{n}}\rb.
  \end{align*}
  % Assume $m = \lceil \gamma n \rceil$ for some $\gamma > 0$.
  % \begin{align*}
  %   &\sup_{\alpha\in (0, 1)}\bigg|\P_{H_0}\lb\sum_{i=1}^{m} -\log(h^\asym \circ \hat{u}_i^\marg)\ge m + \sqrt{m}\left\{\sqrt{1 + \gamma}z_{1 - \alpha} - \frac{\pi\sqrt{\gamma}}{2} c_{n}(\delta)\right\}\rb - \alpha\bigg|= O\lb\frac{\mathrm{polyLog(n)}}{\sqrt{n}}\rb.
  % \end{align*}
\end{lemma}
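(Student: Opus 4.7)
The plan is to specialize and quantify the CLT framework from the proof of Theorem~\ref{thm:combination_test}, with $G_n(u) = -\log h^\asym(u)$, tracking explicit finite-sample rates. After the probability integral transform reduces to the case $\hat s(X)\sim \Unif([0,1])$, Moran's representation (Lemma~\ref{lem:moran}) gives the conditional law
\[\P\Big[\hat u_i^\marg = \tfrac{j}{n+1}\,\Big|\,\mathcal{D}\Big] = \frac{V_j}{\sum_{k=1}^{n+1}V_k}, \qquad V_1,\ldots,V_{n+1}\stackrel{\mathrm{i.i.d.}}{\sim}\mathrm{Exp}(1).\]
Conditionally on $\mathcal{D}$, the terms $G_n(\hat u_i^\marg)$ are i.i.d.\ with mean $\mu(\mathcal{D})$ and variance $\sigma^2(\mathcal{D})$, both measurable functions of $(V_1,\ldots,V_{n+1})$. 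By Proposition~\ref{prop:hasym_increasing}, $h^\asym(u) \ge \max\{u, 1/n\}$, so $G_n$ is bounded by $g_n = O(\log n)$ and the unconditional moments $\E[G_n(U)^k]\le k!$ are uniformly controlled by the bound $G_n(u)\le -\log u$.

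I would split the analysis into two CLTs: a conditional one for the $m$-sum, and an unconditional one for the conditional mean. For the conditional CLT, applying Lemma~\ref{lem:Berry_Esseen} (with $g(x)=x^\eta$ for a small $\eta>0$ to exploit the bounded unconditional moments) to $\sum_i[G_n(\hat u_i^\marg) - \mu(\mathcal{D})]$ on the high-probability event $\{\sigma^2(\mathcal{D}) \ge 1/2\}$ gives Kolmogorov distance $O(1/\sqrt m)$ to $N(0,1)$. For the CLT on the conditional mean, decompose $\mu(\mathcal{D}) = \bar G_n + T_n$ with $\bar G_n = (n+1)^{-1}\sum_j G_n(j/(n+1)) = \E[G_n(\hat u^\marg)]$ and $T_n = [\sum_j(G_n(j/(n+1))-\bar G_n)(V_j-1)]/\sum_k V_k$. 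Theorem~\ref{thm:fisher_asym} gives $\bar G_n = 1 - (\pi/2)c_n(\delta)/\sqrt n + O((\log n)(\log\log n)/n)$, and a second application of Lemma~\ref{lem:Berry_Esseen} to the centered weighted sum in the numerator of $T_n$, together with concentration $\sum_k V_k/(n+1) = 1 + O_\P(1/\sqrt n)$, yields $\sqrt{n+1}\,T_n \approx N(0,\Var[G_n(U)])$ at Kolmogorov rate $O(\log n/\sqrt n)$.

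Combining, conditionally on $\mathcal{D}$,
\[\sum_{i=1}^m G_n(\hat u_i^\marg) = m\bar G_n + mT_n + \sqrt{m\sigma^2(\mathcal{D})}\,Z_m,\]
where $Z_m$ is (asymptotically) a standard Gaussian independent of $T_n$ after marginalizing out the test-point randomness. The two independent Gaussian components contribute variances $m^2/n$ (from $mT_n$) and $m\sigma^2(\mathcal{D}) \approx m$, for a total of $m(1+m/n)$, matching the stated normalization; substituting the expansion of $\bar G_n$ and inverting the $(1-\alpha)$-quantile of $N(0,1)$ yields the rejection threshold in the lemma. The main obstacle is the careful bookkeeping of errors: the two Berry-Esseen contributions give $O(1/\sqrt m) + O(\sqrt{m/n}\cdot \log n/\sqrt n)$; the remainder in Theorem~\ref{thm:fisher_asym} contributes $O(m(\log n)(\log\log n)/n)/\sqrt{m(1+m/n)}$; and replacing $\sigma^2(\mathcal{D})$ by $\Var[G_n(U)]$ (and then by $1$) in the denominator via a Taylor expansion contributes $O(\log^c n/\sqrt n)$ from concentration of the weighted average of exponentials with weights $G_n(j/(n+1))^2$, where $c$ depends only on the polynomial growth of the relevant higher moments of $G_n(U)$. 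The generous $\log^6 n/\sqrt n$ tolerance in the stated rate comfortably absorbs all these polylog losses.
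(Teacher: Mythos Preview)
Your proposal is correct and follows essentially the same two-layer CLT architecture as the paper: a conditional Berry--Esseen bound for $W_m = m^{-1/2}\sum_i\{G_n(\hat u_i^\marg)-\E[G_n(\hat u_i^\marg)\mid\mathcal D]\}$, a separate CLT for $\tilde W_n = \sqrt{n+1}\{\E[G_n(\hat u_i^\marg)\mid\mathcal D]-\E[G_n(\hat u_i^\marg)]\}$ via Moran's exponential representation, and combination through the smoothness of $\bar\Phi$ after substituting the mean expansion from Theorem~\ref{thm:fisher_asym}. The one methodological difference is that the paper controls $\tilde W_n$ in \emph{Wasserstein} distance (Lemma~\ref{lem:dW_CLT}, a Stein-method bound requiring fourth moments), which makes the step $|\E[\bar\Phi(t-c\tilde W_n)]-\E[\bar\Phi(t-cZ)]|\le c\,d_W(\tilde W_n,Z)$ immediate via Lipschitzness; your Kolmogorov route also works here because $\bar\Phi(t-c\,\cdot)$ has bounded total variation, but the Wasserstein argument is cleaner and avoids the need to argue separately that the ratio $\tilde W_{1n}/\tilde W_{2n}$ inherits a Kolmogorov bound from its numerator. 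The paper also packages the high-probability moment control into standalone Lemmas~\ref{lem:cond_moments}--\ref{lem:uncond_moments} (using sub-exponential concentration for the $V_j$ with $G_n$-weights bounded by $\log n$), which is where the $\log^6 n$ exponent arises; your sketch correctly anticipates that any fixed polylog suffices.
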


We apply Lemma \ref{lem:fisher_asym_CLT} to derive the effective $\alpha$-level in two practically relevant regimes.

\begin{thm}\label{thm:fisher_asym_effective_alpha}
  Assume that $m = \gamma n$ for some $\gamma \in (0, \infty)$. The type-I error of Fisher's combination test applied to $h^\asym\circ \hat{u}_i^\marg$'s
  \begin{align*}
    &\P_{H_0}\lb\sum_{i=1}^{m} -2\log(h^\asym \circ \hat{u}_i^\marg)\ge \chi^2(2m; 1 - \alpha)\rb\\
    &= (1 + o(1))\cdot C(\alpha; \gamma)\exp\left\{-\frac{\pi^2 \gamma}{4(1 + \gamma)}\log\log n - \frac{\pi\sqrt{\gamma}z_{1-\alpha}}{\sqrt{2}(1 + \gamma)}\sqrt{\log \log n} - \frac{1}{2}\log\log\log n\right\}\\
    & = \frac{1}{(\log n)^{e(\gamma)(1 + o(1))}},
  \end{align*}
  where
  \[C(\alpha; \gamma) = \sqrt{\frac{2(1 + \gamma)\exp\{-z_{1-\alpha}^2 / (1 + \gamma)\}}{\pi^2\gamma}}, \quad e(\gamma) = \frac{\pi^2 \gamma}{4(1 + \gamma)}.\]
\end{thm}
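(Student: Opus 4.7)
The plan is to deduce Theorem~\ref{thm:fisher_asym_effective_alpha} from Lemma~\ref{lem:fisher_asym_CLT}, by translating the Fisher chi-square threshold into an ``effective'' upper-tail normal probability. Write $S_m = \sum_{i=1}^{m} -\log(h^\asym \circ \hat u_i^\marg)$, so the rejection region is $\{2S_m \ge \chi^2(2m;1-\alpha)\}$. The lemma asserts that for any $\alpha' \in (0,1)$,
\[
\P_{H_0}\bigl[S_m \ge \sqrt{m}\sqrt{1 + m/n}\, z_{1-\alpha'} + m\bigl(1 - \tfrac{\pi}{2} c_n(\delta)/\sqrt{n}\bigr)\bigr] = \alpha' + O\bigl(m^{-1/2} + \log^6 n / \sqrt{n}\bigr).
\]
I will therefore solve for the unique $\alpha'$ such that the threshold on the right matches $\chi^2(2m;1-\alpha)/2$, and then compute its asymptotic value.

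First I expand the chi-square quantile. Since $\chi^2(2m)$ has mean $2m$ and variance $4m$, the Cornish--Fisher expansion yields $\chi^2(2m;1-\alpha)/2 = m + \sqrt{m}\, z_{1-\alpha} + O(1)$ as $m \to \infty$. Setting this equal to the threshold in the display above and using $m = \gamma n$, I obtain
\[
z_{1-\alpha'} = \frac{z_{1-\alpha}}{\sqrt{1+\gamma}} + \frac{\pi \sqrt{\gamma}\, c_n(\delta)}{2\sqrt{1+\gamma}} + O(m^{-1/2}).
\]
Because $c_n(\delta) \sim \sqrt{2\log\log n}$ diverges, so does $z_{1-\alpha'}$; hence Mill's ratio applies:
\[
\alpha' = \bar\Phi(z_{1-\alpha'}) = \frac{e^{-z_{1-\alpha'}^2 / 2}}{z_{1-\alpha'}\sqrt{2\pi}}\bigl(1 + o(1)\bigr).
\]

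The next step is to substitute and track each order. Squaring gives
\[
\tfrac{1}{2} z_{1-\alpha'}^2 = \tfrac{\pi^2 \gamma}{8(1+\gamma)}\, c_n(\delta)^2 + \tfrac{\pi \sqrt{\gamma}\, z_{1-\alpha}}{2(1+\gamma)}\, c_n(\delta) + \tfrac{z_{1-\alpha}^2}{2(1+\gamma)} + o(1).
\]
Plugging in $c_n(\delta)^2 = 2\log\log n + o(\log\log n)$ and $c_n(\delta) = \sqrt{2\log\log n}\,(1+o(1))$ yields the leading exponential factors $\exp\{-\tfrac{\pi^2\gamma}{4(1+\gamma)}\log\log n\}$ and $\exp\{-\tfrac{\pi \sqrt{\gamma}\, z_{1-\alpha}}{\sqrt{2}(1+\gamma)}\sqrt{\log\log n}\}$, while the constant piece $z_{1-\alpha}^2/(2(1+\gamma))$ contributes the exponential factor in $C(\alpha;\gamma)$. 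The Mill's-ratio prefactor $1/(z_{1-\alpha'}\sqrt{2\pi})$ equals $(1+o(1)) \cdot 1/(\tfrac{\pi\sqrt{\gamma}}{\sqrt{1+\gamma}}\sqrt{2\pi \log\log n})$, which accounts for the algebraic prefactor $\sqrt{2(1+\gamma)/(\pi^2\gamma)}$ and the $-\tfrac{1}{2}\log\log\log n$ term. Assembling the pieces produces the first line of the theorem; the coarser second line is immediate from the leading $\log\log n$ coefficient.

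The main obstacle is the bookkeeping, since there are multiple sources of sub-leading corrections: (i) the Cornish--Fisher remainder in the chi-square quantile, (ii) the $O(m^{-1/2})$ slack when inverting for $z_{1-\alpha'}$, (iii) higher-order terms in the expansion of $c_n(\delta)$, and (iv) the $O(1/z_{1-\alpha'}^2)$ error in Mill's ratio. One must verify that each of these perturbs $\alpha'$ by a multiplicative $(1+o(1))$ factor and, crucially, that the additive error $O(m^{-1/2} + \log^6 n/\sqrt{n})$ inherited from Lemma~\ref{lem:fisher_asym_CLT} is negligible \emph{relative to} $\alpha'$. This is the delicate point, but since $\alpha' = (\log n)^{-\Theta(1)}$ decays only polylogarithmically in $n$ while the additive error is polynomial in $n^{-1/2}$, the lemma's remainder is swallowed by the $(1+o(1))$ factor in the final expression.
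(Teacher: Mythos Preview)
Your proposal is correct and follows essentially the same approach as the paper's proof: both invert Lemma~\ref{lem:fisher_asym_CLT} by matching the chi-square threshold to the lemma's normal-scale threshold, use the CLT expansion $\chi^2(2m;1-\alpha)=2m+2\sqrt{m}\,z_{1-\alpha}+O(1)$ to express $z_{1-\alpha'}$, apply the Mill's ratio asymptotic to convert $z_{1-\alpha'}$ into $\alpha'$, and finally note that the lemma's additive remainder $O(m^{-1/2}+n^{-1/2}\log^6 n)$ is negligible against the polylogarithmically decaying $\alpha'$. Your explicit enumeration of the sub-leading error sources (i)--(iv) is a bit more careful than the paper's write-up, but the route is the same.
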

\begin{remark}
  When $\gamma = 1, \alpha = 0.05$,
  \[C(\alpha; \gamma)\approx 0.32, \quad e(\gamma)\approx 1.23,\]
  and thus the type-I error is approximately
  \[0.32(\log n)^{-1.23} \cdot \exp\left\{-1.83\sqrt{\log\log n} - 0.5\log\log \log n\right\}.\]
\end{remark}
\begin{proof}
  Choose $\alpha_n\in (0, 1)$ such that
  \[\sqrt{1 + \gamma}z_{1 - \alpha_{n}} - \frac{\pi\sqrt{\gamma}}{2} c_{n}(\delta) = \frac{\chi^2(2m; 1 - \alpha) - 2m}{2\sqrt{m}} \triangleq A_m.\]
  The standard CLT implies that
  \begin{equation}
    \label{eq:chi_normal_approx}
    A_{m} = z_{1 - \alpha} + O\lb\frac{1}{\sqrt{m}}\rb = z_{1 - \alpha} + O\lb\frac{1}{\sqrt{n}}\rb.
  \end{equation}
  Then
  \begin{equation}
    \label{eq:alphan}
    z_{1 - \alpha_{n}} = \frac{A_{m}}{\sqrt{1 + \gamma}} + \frac{\pi}{2}\sqrt{\frac{\gamma}{1 + \gamma}}c_{n}(\delta) = \frac{z_{1-\alpha}}{\sqrt{1+\gamma}} + \frac{\pi}{\sqrt{2}}\sqrt{\frac{\gamma}{1 + \gamma}}\sqrt{\log\log n} + o(1).
  \end{equation}
  % \[z_{1 - \alpha_{n}} = \frac{A_{m}}{\sqrt{1 + \gamma}} + \frac{\pi}{2}\sqrt{\frac{\gamma}{1 + \gamma}}c_{n}(\delta) = \frac{\pi}{\sqrt{2}}\sqrt{\frac{\gamma}{1 + \gamma}}\sqrt{\log\log n} + \frac{z_{1-\alpha}}{\sqrt{1+\gamma}} + O\lb\frac{\log\log\log n}{\sqrt{\log \log n}}\rb.\]
  Since $z_{1 - \alpha_{n}}\rightarrow \infty$,
  \begin{align*}
    \alpha_{n}
    &= 1 - \Phi(z_{1 - \alpha_{n}}) = (1 + o(1))\cdot \frac{1}{z_{1 - \alpha_{n}}}\exp\left\{-\frac{z_{1 - \alpha_{n}}^2}{2}\right\}\\
    & = (1 + o(1))\cdot \frac{C(\alpha; \gamma)}{\sqrt{\log \log n}}\exp\left\{-\frac{\pi^2 \gamma}{4(1 + \gamma)}\log\log n - \frac{\pi\sqrt{\gamma}z_{1-\alpha}}{\sqrt{2}(1 + \gamma)}\sqrt{\log \log n}\right\}\\
    & = (1 + o(1))\cdot C(\alpha; \gamma)\exp\left\{-\frac{\pi^2 \gamma}{4(1 + \gamma)}\log\log n - \frac{\pi\sqrt{\gamma}z_{1-\alpha}}{\sqrt{2}(1 + \gamma)}\sqrt{\log \log n} - \frac{1}{2}\log\log\log n\right\}.
  \end{align*}
  Clearly, $\alpha_{n}$ decays more slowly than any polynomial of $1/n$. By Lemma \ref{lem:fisher_asym_CLT},
  \begin{align*}
    &\P_{H_0}\lb\sum_{i=1}^{m} -2\log(h^\asym \circ \hat{u}_i^\marg)\ge \chi^2(2m; 1 - \alpha)\rb\\
    & = (1 + o(1))\cdot C(\alpha; \gamma)\exp\left\{-\frac{\pi^2 \gamma}{4(1 + \gamma)}\log\log n - \frac{\pi\sqrt{\gamma}z_{1-\alpha}}{\sqrt{2}(1 + \gamma)}\sqrt{\log \log n} - \frac{1}{2}\log\log\log n\right\}.
  \end{align*}
\end{proof}

\begin{thm}\label{thm:fisher_asym_effective_alpha_decay}
  Assume that $m \rightarrow \infty$ and $m = o(n / \log \log n)$. The type-I error of Fisher's combination test applied to $h^\asym\circ \hat{u}_i^\marg$'s
  \begin{align*}
    &\P_{H_0}\lb\sum_{i=1}^{m} -2\log(h^\asym \circ \hat{u}_i^\marg)\ge \chi^2(2m; 1 - \alpha)\rb = \alpha + o(1).
  \end{align*}
\end{thm}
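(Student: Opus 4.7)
The plan is to mimic the proof of Theorem \ref{thm:fisher_asym_effective_alpha}, but track the error terms carefully under the sparser regime $m = o(n / \log \log n)$ so that the effective level collapses to the nominal $\alpha$ rather than drifting to $0$. Specifically, I would choose $\alpha_{n} \in (0,1)$ to match the chi-square threshold to the Lemma \ref{lem:fisher_asym_CLT} form by solving
\begin{equation*}
\sqrt{m}\sqrt{1 + m/n}\,z_{1 - \alpha_{n}} + m\left\{1 - \frac{\pi}{2}\frac{c_{n}(\delta)}{\sqrt{n}}\right\} = \frac{\chi^{2}(2m;\,1 - \alpha)}{2}.
\end{equation*}
Then a direct application of Lemma \ref{lem:fisher_asym_CLT} gives
\begin{equation*}
\P_{H_{0}}\!\left[\sum_{i=1}^{m} -2\log\bigl(h^{\asym} \circ \hat{u}_{i}^{\marg}\bigr) \ge \chi^{2}(2m;\,1 - \alpha)\right] = \alpha_{n} + O\!\left(\frac{1}{\sqrt{m}} + \frac{\log^{6} n}{\sqrt{n}}\right),
\end{equation*}
and the error already vanishes because $m \to \infty$ and $\log^{6} n = o(\sqrt{n})$. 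So the entire task reduces to proving $\alpha_{n} \to \alpha$.

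For that, I would use the standard CLT / Berry--Esseen expansion of the chi-square quantile, namely $(\chi^{2}(2m;\,1-\alpha) - 2m)/(2\sqrt{m}) = z_{1-\alpha} + O(1/\sqrt{m})$, to rewrite the defining equation as
\begin{equation*}
z_{1 - \alpha_{n}} = \frac{z_{1-\alpha}}{\sqrt{1 + m/n}} + \frac{\pi}{2}\sqrt{\frac{m/n}{1 + m/n}}\,c_{n}(\delta) + O\!\left(\frac{1}{\sqrt{m}}\right).
\end{equation*}
Under $m = o(n/\log\log n)$ one has $m/n \to 0$, so $\sqrt{1 + m/n} \to 1$ and the leading term converges to $z_{1-\alpha}$. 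Since $c_{n}(\delta) = \sqrt{2\log\log n}\,(1 + o(1))$, the correction term is of order $\sqrt{m \log \log n / n}$, which is $o(1)$ precisely because of the hypothesis on $m$. Together with $1/\sqrt{m} \to 0$, this yields $z_{1-\alpha_{n}} \to z_{1-\alpha}$ and hence $\alpha_{n} \to \alpha$ by continuity of $\bar{\Phi}$.

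The proof contains no real obstacle: unlike Theorem \ref{thm:fisher_asym_effective_alpha}, where the quantile $z_{1-\alpha_{n}}$ diverges and one must push through a careful Mills-ratio expansion, here the regime is designed exactly so that the $c_{n}(\delta)$ correction is absorbed, so the argument is essentially bookkeeping on error terms. The only thing deserving some care is checking that the scaling $m = o(n/\log\log n)$ is the \emph{sharp} threshold making the $\sqrt{m/n}\,c_{n}(\delta)$ term vanish, which justifies the appearance of this exact rate in the hypothesis; any slower decay of $m/n$ would leave a nonvanishing bias and push $\alpha_{n}$ away from $\alpha$, consistent with the $1/\mathrm{polylog}(n)$ behaviour of Theorem \ref{thm:fisher_asym_effective_alpha}.
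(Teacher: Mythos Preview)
Your proposal is correct and follows essentially the same approach as the paper's proof: define $\alpha_n$ so that the chi-square threshold matches the form in Lemma~\ref{lem:fisher_asym_CLT}, invoke that lemma to reduce the problem to showing $\alpha_n \to \alpha$, and then use the chi-square CLT expansion together with $m = o(n/\log\log n)$ to show $z_{1-\alpha_n} = z_{1-\alpha} + o(1)$. The paper's argument is line-for-line the same, referring back to the notation of Theorem~\ref{thm:fisher_asym_effective_alpha} rather than rewriting the defining equation.
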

\begin{proof}
  Adopting the notation utilized in the proof of Theorem \ref{thm:fisher_asym_effective_alpha}, by \eqref{eq:alphan},
  \begin{align*}
    z_{1 - \alpha_{n}} &= \frac{A_{m}}{\sqrt{1 + m/n}} + \frac{\pi}{2}\sqrt{\frac{m/n}{1 + m/n}}c_{n}(\delta)\\
                       & = \frac{z_{1-\alpha}}{\sqrt{1+m/n}} + \frac{\pi}{\sqrt{2}}\sqrt{\frac{m/n}{1 + m/n}}\sqrt{\log\log n}(1 + o(1))\\
    & = z_{1- \alpha} + o(1).
  \end{align*}
  This implies $\alpha_{n} = \alpha + o(1)$. By Lemma \ref{lem:fisher_asym_CLT},
  \[\bigg|\P_{H_0}\lb\sum_{i=1}^{m} -2\log(h^\asym \circ \hat{u}_i^\marg)\ge \chi^2(2m; 1 - \alpha)\rb - \alpha_n\bigg| = O\lb \frac{1}{\sqrt{m}} + \frac{\log^6 n}{\sqrt{n}}\rb = o(1).\]
  The proof is then completed.
\end{proof}

\subsubsection{Proof of Lemma \ref{lem:fisher_asym_CLT}}\label{subsec:lem_fisher_asym_CLT}
\begin{lemma}\label{lem:cond_moments}
  There exist universal constants $C, c > 0$ and a constant $C(\delta) > 0$ that only depends on $\delta$ such that, with probability $1 - \exp\{-c(\log n)^2\}$,
  \begin{equation}
    \label{eq:k=12}
    \sum_{k=1}^{2}\bigg|\E_{H_0}\left[|-\log (h^\asym \circ \hat{u}^\marg)|^k\mid \D\right] - k\bigg| \le \frac{C(\delta)\log^6 n}{\sqrt{n}},
  \end{equation}
  and
  \begin{equation}
    \label{eq:k=34}
    \sum_{k=3}^{4}\bigg|\E_{H_0}\left[|-\log (h^\asym \circ \hat{u}^\marg)|^k\mid \D\right]\bigg|\le C.
  \end{equation}
\end{lemma}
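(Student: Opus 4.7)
The plan is to reduce both bounds to concentration of a ratio of weighted sums of iid exponentials. Conditional on $\D$ (and assuming WLOG that $\hat{s}(X)\sim\Unif([0,1])$ as in the proof of Theorem~\ref{thm:combination_test}), the random variable $\hat{u}^\marg$ places mass $S_{(j)}-S_{(j-1)}$ on $j/(n+1)$, where $S_{(0)}=0, S_{(n+1)}=1$. Writing $w_j^{(k)} := |\log h^\asym(j/(n+1))|^k$ and invoking Moran's representation (Lemma~\ref{lem:moran}),
$$
\E\bigl[|{-}\log(h^\asym\circ\hat{u}^\marg)|^k\mid\D\bigr]
\;\stackrel{d}{=}\;\frac{A_k}{B}, \qquad
A_k := \frac{1}{n+1}\sum_{j=1}^{n+1} w_j^{(k)} V_j,\qquad
B := \frac{1}{n+1}\sum_{j=1}^{n+1} V_j,
$$
with $V_1,\ldots,V_{n+1}\stackrel{\mathrm{iid}}{\sim}\mathrm{Exp}(1)$. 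I would then prove the claim for $A_k/B$ and transfer back by the distributional identity.

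The first task is to estimate the deterministic quantity $\bar f_k := \E[A_k] = (n+1)^{-1}\sum_j w_j^{(k)}$. For $k=1$ this is precisely $\E_{H_0}[-\log(h^\asym\circ\hat u^\marg)]$, already controlled in Theorem~\ref{thm:fisher_asym} at $1 - (\pi/2)c_n(\delta)/\sqrt{n} + O((\log n)(\log\log n)/n)$. The analogous integration-by-parts argument applied to $(-\log g_n(x))^k$ for $k=2,3,4$ (using Proposition~\ref{prop:hasym_increasing} so that $h^\asym$ is monotone up to the truncation point $t_n$, and the fact that $h^\asym(j/(n+1))\ge j/n\ge x$ for $x=j/(n+1)$, which gives $|\log h^\asym(x)|\le |\log x|$ at small $x$) yields $\bar f_2 = 2 + O(c_n(\delta)/\sqrt{n})$ and $\bar f_3,\bar f_4 = O(1)$, matching the unconditional moments $k!$ of $-\log U$ up to the asymptotic correction.

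Next is the concentration step. Two uniform bounds on the weights drive everything: (i) $\max_j w_j^{(k)} = O((\log n)^k)$, since $h^\asym(1/(n+1))\gtrsim c_n(\delta)/n$ gives $|\log h^\asym(j/(n+1))| = O(\log n)$; and (ii) $(n+1)^{-1}\sum_j (w_j^{(k)})^2 = \bar f_{2k} = O(1)$, by the same integral estimate as above. The $V_j-1$ are centred sub-exponential, so Bernstein's inequality gives
$$
\P\bigl[|A_k-\bar f_k|\ge t\bigr] \;\le\; 2\exp\bigl(-c\min\{nt^2,\, nt/(\log n)^k\}\bigr),
$$
and likewise $\P[|B-1|\ge t]\le 2\exp(-c\min\{nt^2,nt\})$. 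Choosing $t = C(\delta)(\log n)^{k+2}/\sqrt{n}$ (which is $\le (\log n)^6/\sqrt{n}$ for $k\le 4$) makes both probabilities at most $\exp(-c(\log n)^2)$, and the identity
$$
\frac{A_k}{B} - \bar f_k \;=\; \frac{(A_k-\bar f_k)+\bar f_k(1-B)}{B}
$$
then gives $|A_k/B - \bar f_k| = O(t)$ on the good event, which combined with $|\bar f_k-k| = O(c_n(\delta)/\sqrt n)$ for $k\in\{1,2\}$ yields~\eqref{eq:k=12}, while for $k\in\{3,4\}$ it yields $|A_k/B|\le \bar f_k + O(t) = O(1)$, giving~\eqref{eq:k=34}.

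I expect the main obstacle to be the integral estimate of $\bar f_k$ for $k\ge 2$: one must carefully track the polylogarithmic corrections coming from the boundary layer near $x=1/n$ (where $h^\asym$ is of order $c_n(\delta)/n$) and from $x$ near $t_n/n$ (where $h^\asym$ saturates at $1$), analogous to~\eqref{eq:integral_1}--\eqref{eq:integral_4}, but with an additional $\log^k$ factor in the integrand. The Bernstein concentration itself is entirely standard once the weight bounds (i)--(ii) are in place, and the generous exponent $6$ in $\log^6 n$ reflects absorbing all such polylogarithmic losses into a single tidy bound.
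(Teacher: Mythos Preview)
Your proposal is correct and follows essentially the same route as the paper: Moran's representation to write the conditional moments as $A_k/B$ with weighted exponentials, deterministic estimation of $\bar f_k$, Bernstein concentration for each $A_k$ and $B$, and the algebraic ratio identity to combine. One minor difference worth noting: for $\bar f_2$ the paper does not redo the integration-by-parts of Theorem~\ref{thm:fisher_asym} but instead uses the Lipschitz comparison $|\log g_n(x)-\log x|\le a_n\sqrt{(1-x)/x}$ together with Cauchy--Schwarz, which gives $|\bar f_2-2|=O(a_n\sqrt{\log n})$ rather than your claimed $O(c_n(\delta)/\sqrt n)$ (the extra $\sqrt{\log n}$ is harmless here); and for $k=3,4$ the paper simply invokes the pointwise bound $(-\log g_n(x))^k\le(-\log x)^k$ to get $\bar f_k\le k!$ without any integral computation, which is shorter than tracking boundary layers. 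On the concentration side, the paper uses the cruder sub-exponential parameters $(2\log^4 n,2\log^4 n)$ uniformly in $k$ rather than your sharper variance proxy $\bar f_{2k}=O(1)$, but both choices land comfortably inside the $\log^6 n/\sqrt n$ budget.
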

\begin{proof}
  Following \cite{wainwright2019high}, we call a random variable $V$ \emph{sub-exponential} with parameters $(\nu, b)$ if
  \[\E[e^{\lambda(V - \E[V])}]\le e^{\nu^2\lambda^2 / 2}, \quad \text{for all }\lambda < 1 / b.\]
  If $V \sim \mathrm{Exp}(1)$, then for any $\lambda < 1/2$, it is easy to verify that
  \[\E[e^{\lambda(V - \E[V])}] = \frac{e^{-\lambda}}{1 - \lambda}\le e^{2\lambda^2}.\]
  Thus, $V$ is sub-exponential with parameters $(2, 2)$.

  Let $G(p) = -\log (h^\asym(p))$. By \eqref{eq:EGk} in the proof of Theorem \ref{thm:combination_test},
  \begin{equation}
    \label{eq:E_log_D}
    \E_{H_0}\left[|-\log (h^\asym \circ \hat{u}^\marg)|^k\mid \D\right] = \frac{\frac{1}{n+1}\sum_{i=1}^{n+1}G^k\left(\frac{i}{n+1}\right)V_{i}}{\frac{1}{n+1}\sum_{i=1}^{n+1}V_{i}},
  \end{equation}
  % and
  % \begin{equation}
  %   \label{eq:E_log2_D}
  %   \E[\log^2 (h^\asym \circ \hat{u}^\marg)\mid \D] = \frac{\frac{1}{n+1}\sum_{i=1}^{n+1}G^2\left(\frac{i}{n+1}\right)V_{i}}{\frac{1}{n+1}\sum_{i=1}^{n+1}V_{i}},
  % \end{equation}
  where $V_1, \ldots, V_{n+1}\stackrel{i.i.d.}{\sim} \mathrm{Exp}(1)$. Let $\event_n$ denote the event that
  \begin{align}
    &\sum_{k=0}^{4}\bigg|\frac{1}{n+1}\sum_{i=1}^{n+1}G^k\left(\frac{i}{n+1}\right)(V_{i} - 1)\bigg|\le \frac{\log^6 n}{\sqrt{n}}.    \label{eq:event_En}
  \end{align}
  Note that
  \[G\lb \frac{i}{n+1}\rb\le \log n, \quad i = 1, \ldots, n+1.\]
  Then $G^k\left(\frac{i}{n+1}\right)(V_{i} - 1)$ is exponential with parameters $(2\log^4 n, 2\log^4 n)$ for $k\le 4$. By Bernstein's inequality \citep[e.g.,][Proposition 2.9]{wainwright2019high},
  \[\P\lb\bigg|\frac{1}{n+1}\sum_{i=1}^{n+1}G^k\left(\frac{i}{n+1}\right)(V_{i} - 1)\bigg| \ge t\rb\le 2\exp\lb-\min\left\{\frac{(n+1)t^2}{8\log^8 n}, \frac{(n+1)t}{4\log^4 n}\right\}\rb.\]
  Let $t = \frac{\log^6 n}{5\sqrt{n}}$, we obtain that
  \[\P\lb\bigg|\frac{1}{n+1}\sum_{i=1}^{n+1}G^k\left(\frac{i}{n+1}\right)(V_{i} - 1)\bigg| \ge \frac{\log^6 n}{5\sqrt{n}}\rb\le 2\exp \left\{-O(\log^2 n)\right\}.\]
  Applying the union bound, there exists a universal constant $c > 0$ such that
  \begin{equation}
    \label{eq:prob_event_c}
    \P\lb \event_n^c\rb \le \exp \left\{-c\log^2 n\right\}.
  \end{equation}

Adopting the notation in Section \ref{subsec:fisher_asym}, we have
  % \[\frac{1}{n+1}\sum_{i=1}^{n+1}G\left(\frac{i}{n+1}\right) = \frac{1}{n+1}\sum_{i=1}^{t_n - 1} -\log \left\{g_n\lb\frac{i}{n}\rb\right\},\]
  % and
  \[\frac{1}{n+1}\sum_{i=1}^{n+1}G^k\left(\frac{i}{n+1}\right) = \frac{1}{n+1}\sum_{i=1}^{t_n - 1} \lb -\log \left\{g_n\lb\frac{i}{n}\rb\right\}\rb^{k}.\]
  We have shown in the proof of Proposition \ref{prop:hasym_increasing} that $g_n(x)\le 1$ is increasing on $[0, t_n / n]$. Then,
  \begin{align}
    &\frac{1}{n}\sum_{i=1}^{t_n - 1} \log^2 \left\{g_n\lb\frac{i}{n}\rb\right\} \in \left[\int_{1/n}^{t_n / n}\log^2 \{g_n(x)\}dx, \frac{\log^2\left\{g_n(1 / n)\right\}}{n} + \int_{1/n}^{t_n / n}\log^2 \{g_n(x)\}dx\right].    \label{eq:integral_log2_bound}
  \end{align}
  For any $x\in (0, 1)$,
  \[|\log(g_n(x)) - \log(x)|\le \frac{|g_n(x) - x|}{x} = a_{n}\sqrt{\frac{1 - x}{x}}.\]
  Then
  \begin{align*}
    &\bigg|\int_{1/n}^{t_n / n}\log^2 \{g_n(x)\}dx - \int_{1/n}^{t_n / n}\log^2 xdx\bigg|\\
    & \le 2a_n \int_{1/n}^{t_n / n} \sqrt{\frac{1-x}{x}}(-\log x) dx + a_n^2 \int_{1/n}^{t_n / n} \frac{1-x}{x} dx\\
    & \le 2a_{n}\sqrt{\int_{1/n}^{t_n / n} \frac{1-x}{x} dx}\sqrt{\int_{1/n}^{t_n / n} \log^2 x dx} + a_n^2 \int_{1/n}^{t_n / n} \frac{1-x}{x} dx,
  \end{align*}
  where the last line uses the Cauchy-Schwarz inequality. Clearly,
  \[\int_{1/n}^{t_n / n} \frac{1-x}{x} dx = \log t_n - \frac{t_n - 1}{n}\le \log n,\]
  and
  \[\int_{1/n}^{t_n / n} \log^2 x dx = x\log^2 x - 2x\log x + 2x\bigg|_{1/n}^{t_n/n}.\]
  Using the same reasoning as in \eqref{eq:integral_1},
  \begin{equation}
    \label{eq:log2x_integral}
    \int_{1/n}^{t_n / n} \log^2 x dx = 2 + O(a_n^2) = 2 + o_{\delta}\lb\frac{1}{\sqrt{n}}\rb,
  \end{equation}
  where we use $o_{\delta}$ and $O_{\delta}$ herein to hide the dependence on $\delta$. Therefore,
  \begin{align*}
    \bigg|\int_{1/n}^{t_n / n}\log^2 \{g_n(x)\}dx - \int_{1/n}^{t_n / n}\log^2 xdx\bigg| = O(a_n \sqrt{\log n} + a_n^2\log n) = O_{\delta}\lb\frac{\log^6 n}{\sqrt{n}}\rb.
  \end{align*}
    Putting pieces together with \eqref{eq:integral_log2_bound}, we have
    \begin{align*}
   \int_{1/n}^{t_n / n}\log^2 \{g_n(x)\}dx = 2 + O_{\delta}\lb\frac{\log^6 n}{\sqrt{n}}\rb.
    \end{align*}
  On the other hand, by Theorem \ref{thm:fisher_asym},
  \[\bigg|\frac{1}{n+1}\sum_{i=1}^{n+1} G\lb\frac{i}{n+1}\rb - 1\bigg| = O\lb \frac{\sqrt{\log\log n}}{\sqrt{n}}\rb =  O\lb \frac{\log^6 n}{\sqrt{n}}\rb.\]
  As a result, there exists a constant $C_1(\delta)$ that only depends on $\delta$ such that
  \begin{equation}
    \label{eq:k=12_rate}
    \bigg|\frac{1}{n+1}\sum_{i=1}^{n+1}G^2\left(\frac{i}{n+1}\right) - 2\bigg| + \bigg|\frac{1}{n+1}\sum_{i=1}^{n+1}G\left(\frac{i}{n+1}\right) - 1\bigg|\le C_1(\delta)\frac{\log^6 n}{\sqrt{n}}.
  \end{equation}
    Putting \eqref{eq:E_log_D}, \eqref{eq:event_En}, and \eqref{eq:k=12_rate} together, on the event $\event_{n}$, for $k = 1,2$, 
    \[\E_{H_0}\left[\lb -\log (h^\asym \circ \hat{u}^\marg)\rb^{k}\mid \D\right]\in \left[\frac{k - C_1(\delta) \log^6 n / \sqrt{n}}{1 + \log^6 n / \sqrt{n}}, \frac{k + C_1(\delta) \log^6 n / \sqrt{n}}{1 - \log^6 n / \sqrt{n}}\right].\]
    % and
    % \[\E[\log^2 (h^\asym \circ \hat{u}^\marg)\mid \D]\in \left[\frac{2 - C_1(\delta) \log^6 n / \sqrt{n}}{1 + \log^6 n / \sqrt{n}}, \frac{2 + C_1(\delta) \log^6 n / \sqrt{n}}{1 - \log^6 n / \sqrt{n}}\right].\]
    As a result, for sufficiently large $n$, 
  \[\sum_{k=1}^{2}\bigg|\E_{H_0}\left[|-\log (h^\asym \circ \hat{u}^\marg)|^k\mid \D\right] - k\bigg| \le 2(C_1(\delta) + 2)\frac{\log^6 n}{\sqrt{n}}.\]
  This completes the proof of \eqref{eq:k=12}.

  To prove \eqref{eq:k=34}, note that
  \begin{align*}
    &\frac{1}{n+1}\sum_{i=1}^{n+1}G^k\left(\frac{i}{n+1}\right) = \frac{1}{n+1}\sum_{i=1}^{t_n - 1} \lb -\log \left\{g_n\lb\frac{i}{n}\rb\right\}\rb^{k}\\
& \le \int_{0}^{t_n/n}\lb-\log g_n(x)\rb^k dx\le \int_{0}^{1}\lb-\log x\rb^k dx = k!.
  \end{align*}
  By \eqref{eq:E_log_D} and \eqref{eq:event_En}, on the event $\event_{n}$, for $k = 3,4$,
  \[\E_{H_0}\left[\lb -\log (h^\asym \circ \hat{u}^\marg)\rb^{k}\mid \D\right]\le \frac{k! + \log^6 n / \sqrt{n}}{1 - \log^6 n / \sqrt{n}}.\]
  This proves \eqref{eq:k=34}.
\end{proof}

\begin{lemma}\label{lem:uncond_moments}
  There exist a universal constant $C > 0$ and a constant $C(\delta) > 0$ that only depends on $\delta$ such that,
  \begin{equation}
    \label{eq:k=12_uncond}
    \sum_{k=1}^{2}\bigg|\E_{H_0}\left[|-\log (h^\asym \circ \hat{u}^\marg)|^k\right] - k\bigg| \le \frac{C(\delta)\log^6 n}{\sqrt{n}},
  \end{equation}
  and
  \begin{equation}
    \label{eq:k=34_uncond}
    \sum_{k=3}^{4}\bigg|\E_{H_0}\left[|-\log (h^\asym \circ \hat{u}^\marg)|^k\right]\bigg|\le C.
  \end{equation}

\end{lemma}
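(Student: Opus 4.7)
The plan is to derive the unconditional bounds by marginalizing the conditional ones in Lemma~\ref{lem:cond_moments} over $\D$ using the tower property, treating the high-probability event $\event_n$ from that lemma's proof and its complement separately.

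First, I would establish an almost-sure upper bound on $|-\log(h^\asym \circ \hat{u}^\marg)|$. By Proposition~\ref{prop:hasym_increasing} and the definition \eqref{eq:asym}, the smallest value that $h^\asym$ takes is $h^\asym(1/(n+1)) = 1/n + c_n(\delta)\sqrt{n-1}/(n\sqrt{n})$, which is of order $c_n(\delta)/\sqrt{n}$. Consequently, $|-\log(h^\asym \circ \hat{u}^\marg)| \le \tfrac{1}{2}\log n + O(\log\log n)$ almost surely, so there is a constant $C_2(\delta)$ with
\begin{equation*}
\E_{H_0}\!\left[|-\log(h^\asym\circ \hat u^\marg)|^k \,\big|\, \D\right] \le C_2(\delta) (\log n)^k \quad \text{almost surely, for } k=1,2,3,4.
\end{equation*}

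Next, let $\event_n$ be the event from Lemma~\ref{lem:cond_moments}, satisfying $\P[\event_n^c] \le \exp\{-c(\log n)^2\}$. By the tower property, for $k\in\{1,2\}$,
\begin{equation*}
\E_{H_0}\!\left[|-\log(h^\asym\circ \hat u^\marg)|^k\right] = \E\!\left[\E_{H_0}[\,\cdot\,|\D] \I_{\event_n}\right] + \E\!\left[\E_{H_0}[\,\cdot\,|\D] \I_{\event_n^c}\right].
\end{equation*}
On $\event_n$, Lemma~\ref{lem:cond_moments} gives $\big|\E_{H_0}[|-\log(h^\asym\circ \hat u^\marg)|^k|\D] - k\big| \le C(\delta)\log^6 n/\sqrt n$, so this term contributes $k + O_\delta(\log^6 n/\sqrt n) + k\cdot O(\P[\event_n^c])$. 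The complement term is at most $C_2(\delta)(\log n)^k \exp\{-c(\log n)^2\}$, which decays faster than any polynomial in $1/n$ and is therefore $o_\delta(\log^6 n/\sqrt n)$. Combining these two contributions and absorbing constants gives \eqref{eq:k=12_uncond} with a suitable $C(\delta)$. For $k\in\{3,4\}$, the same decomposition combined with \eqref{eq:k=34} on $\event_n$ and the a.s. bound on $\event_n^c$ yields \eqref{eq:k=34_uncond} with a universal constant $C$ (after enlarging $C$ to absorb the vanishing complement term).

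There is no real obstacle here: the content of Lemma~\ref{lem:uncond_moments} is essentially a bookkeeping consequence of Lemma~\ref{lem:cond_moments} together with the a.s. logarithmic upper bound on $-\log(h^\asym \circ \hat u^\marg)$. The only thing to check carefully is that the ``bad event'' contribution $\exp\{-c(\log n)^2\}\cdot(\log n)^k$ is indeed negligible relative to $\log^6 n/\sqrt n$, which is immediate since the exponential decay dominates any polynomial in $\log n$ and $n$.
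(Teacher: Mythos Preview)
Your proposal is correct and follows essentially the same decomposition as the paper: split via the tower property into the high-probability event $\event_n$ from Lemma~\ref{lem:cond_moments} and its complement, use the conditional bounds on $\event_n$, and control $\event_n^c$ by the almost-sure bound $G(i/(n+1))\le \log n$ times $\P[\event_n^c]\le \exp\{-c(\log n)^2\}$. One inconsequential slip: $h^\asym(1/(n+1))=1/n+c_n(\delta)\sqrt{n-1}/(n\sqrt{n})$ is of order $c_n(\delta)/n$, not $c_n(\delta)/\sqrt{n}$, so the almost-sure upper bound on $-\log(h^\asym\circ\hat u^\marg)$ is $\log n+O(\log\log n)$ rather than $\tfrac12\log n+O(\log\log n)$; this does not affect the argument since any polylogarithmic bound suffices.
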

\begin{proof}
  Let $\event_n$ be the event defined in Lemma \ref{lem:cond_moments}. By Lemma \ref{lem:cond_moments}, $\P(\event_n^c) \le \exp\{-c(\log n)^2\}$. On $\event_n^c$.
  \[\E_{H_0}\left[|-\log (h^\asym \circ \hat{u}^\marg)|^k\mid \D\right]\le (\log n)^{k}.\]
  Thus, for $k\le 4$, 
  \[\E_{H_0}\left[|-\log (h^\asym \circ \hat{u}^\marg)|^k I(\event^c)\right]\le \exp\{-c(\log n)^2\}(\log n)^4 = O\lb\frac{1}{\sqrt{n}}\rb.\]
  By the triangle inequality, for $k = 1, 2$,
  \begin{align*}
    &\bigg|\E_{H_0}\left[|-\log (h^\asym \circ \hat{u}^\marg)|^k - k\right]\bigg|\\
    & \le \bigg|\E_{H_0}\left[\lb |-\log (h^\asym \circ \hat{u}^\marg)|^k - k\rb I(\event_n)\right]\bigg| + k\P(\event_n^c) + \bigg|\E_{H_0}\left[|-\log (h^\asym \circ \hat{u}^\marg)|^k I(\event^c)\right]\bigg|\\
    & = \E\left\{\bigg|\E_{H_0}\left[|-\log (h^\asym \circ \hat{u}^\marg)|^k - k\mid \D\right]\bigg|I(\event_n)\right\} + O\lb\frac{1}{\sqrt{n}}\rb\\
    & \le \frac{C(\delta)\log^6 n}{\sqrt{n}} + O\lb\frac{1}{\sqrt{n}}\rb
  \end{align*}
  where the last line uses \eqref{eq:k=12}. Using a similar argument, we can also prove \eqref{eq:k=34_uncond}.
\end{proof}

\begin{lemma}\label{lem:dK_normal}
  For any $\mu\in \R$ and $\sigma^2 > 0$,
  \[d_{K}(N(\mu, \sigma^2), N(0, 1))\le |\mu| + |1/\sigma - 1|\cdot \max\{1, \sigma\}.\]
\end{lemma}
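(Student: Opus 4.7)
The strategy is to split the CDF discrepancy into a pure translation error and a pure scaling error via the triangle inequality. Writing $F_{\mu,\sigma}(x) = \Phi((x-\mu)/\sigma)$ for the CDF of $N(\mu,\sigma^2)$ and $y = x-\mu$, I will bound
\begin{equation*}
  |F_{\mu,\sigma}(x) - \Phi(x)| \le \bigl|\Phi(y/\sigma) - \Phi(y)\bigr| + \bigl|\Phi(x-\mu) - \Phi(x)\bigr|.
\end{equation*}
The translation piece is the easy one: by the mean value theorem and the bound $\sup_t \phi(t) = 1/\sqrt{2\pi} \le 1$, we get $|\Phi(x-\mu) - \Phi(x)| \le |\mu|$, uniformly in $x$.

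The scaling piece is where one must be careful. The naive route of writing $|(x-\mu)/\sigma - x| \le |x|\,|1/\sigma-1| + |\mu|/\sigma$ and applying the Lipschitz bound on $\Phi$ is hopeless because $|x|$ is unbounded. Instead, I will fix $y$ and parametrize $g(s) = \Phi(ys)$ along the segment connecting $s=1$ to $s=1/\sigma$. Then $g'(s) = y\,\phi(ys)$, so
\begin{equation*}
  |\Phi(y/\sigma) - \Phi(y)| \le \left|\frac{1}{\sigma} - 1\right| \cdot \sup_{s} \bigl| y\,\phi(ys)\bigr|,
\end{equation*}
where the supremum is over the interval with endpoints $1$ and $1/\sigma$.

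The key estimate is $\sup_{t\in\mathbb{R}} |t\,\phi(t)| = \phi(1) = (2\pi e)^{-1/2} \le 1$. Rewriting $|y\,\phi(ys)| = (1/s)\cdot|sy\,\phi(sy)|$ and applying this inequality in the variable $u = sy$, we get $|y\,\phi(ys)| \le (1/s)\phi(1) \le 1/s$. Taking the supremum of $1/s$ over the segment between $1$ and $1/\sigma$ gives $\max\{1,\sigma\}$, since the maximum of $1/s$ is attained at the smaller endpoint $\min\{1,1/\sigma\} = 1/\max\{1,\sigma\}$. Combining the two pieces and taking the supremum over $x$ yields the advertised bound. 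The main (mild) obstacle is precisely this uniform-in-$y$ control of the scaling error, which forces one to differentiate in the scale parameter $s$ rather than in $x$ and to invoke the pointwise estimate $|t\phi(t)| \le \phi(1)$.
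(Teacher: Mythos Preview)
Your proof is correct and follows essentially the same approach as the paper: split via the triangle inequality into a translation piece (bounded by $|\mu|$ using $\sup\phi\le 1$) and a scaling piece handled by differentiating $s\mapsto\Phi(ys)$ and invoking $\sup_t|t\phi(t)|\le 1$ together with $\sup_s 1/s=\max\{1,\sigma\}$. The paper presents the triangle inequality at the level of $d_K$ by passing through $N(\mu,1)$, whereas you apply it pointwise with $y=x-\mu$, but the underlying estimates are identical.
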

\begin{proof}
  Let $W\sim N(0, 1)$. By the triangle inequality,
  \begin{align*}
    &d_{K}(N(\mu, \sigma^2), N(0, 1))\\
    &\le d_{K}(N(\mu, \sigma^2), N(\mu, 1)) + d_{K}(N(\mu, 1), N(0, 1))\\
    &= d_{K}(N(0, \sigma^2), N(0, 1)) + d_{K}(N(\mu, 1), N(0, 1))\\
    & = \sup_{x}|\Phi(x / \sigma) - \Phi(x)| + \sup_{x}|\Phi(x - \mu) - \Phi(x)|
  \end{align*}
  For any $x\in \R$, 
  \begin{align*}
    |\Phi(x / \sigma) - \Phi(x)| & \le \bigg|\frac{1}{\sigma} - 1\bigg|\cdot |x| \cdot \frac{1}{\sqrt{2\pi}}\exp\left\{-\frac{x^2}{2\max\{1, \sigma\}^2}\right\}\\
                                 & \le \bigg|\frac{1}{\sigma} - 1\bigg|\cdot \max\{1, \sigma\} \cdot \sup_{y\in \R}\frac{1}{\sqrt{2\pi}}|y|\exp\left\{-\frac{y^2}{2}\right\}\\
    & \le \bigg|\frac{1}{\sigma} - 1\bigg|\cdot \max\{1, \sigma\}.
  \end{align*}
  Similarly,
  \[|\Phi(x - \mu) - \Phi(x)|\le |\mu| \cdot \sup_{y\in \R}\frac{1}{\sqrt{2\pi}}\exp\left\{-\frac{y^2}{2}\right\}\le |\mu|.\]
\end{proof}

\begin{lemma}\label{lem:dW_CLT}[\citet{ross2011fundamentals}, Theorem 3.2 with $D = 1$]
  Let $X_1, X_2, \ldots, X_{n}$ be independent random variables such that $\E[X_j] = 0$ and $\E[X_j^4] <\infty$, for all $j$. Write $B_{n} = \sum_{j}\Var[X_j]$. Then
  \[d_{W}\left(\mathcal{L}\left(\frac{1}{\sqrt{B_n}}\sum_{j=1}^{n}X_j\right), N(0, 1)\right)\le \frac{1}{B_n^{3/2}}\sum_{j=1}^{n}\E[|X_j|^3] + \frac{\sqrt{28}}{\sqrt{\pi}B_n}\sqrt{\sum_{j=1}^{n}\E[X_j^4]}.\]
\end{lemma}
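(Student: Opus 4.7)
The plan is to apply Stein's method for normal approximation, using the Kantorovich--Rubinstein dual characterisation of the $1$-Wasserstein distance. Let $\xi_j = X_j/\sqrt{B_n}$, $W = \sum_j \xi_j$ (so $\E W = 0$, $\Var W = 1$), and $W^{(j)} = W - \xi_j$, which is independent of $\xi_j$. By duality, it suffices to bound $|\E h(W) - \E h(Z)|$ uniformly over $1$-Lipschitz $h$, where $Z \sim N(0,1)$. For each such $h$, the Stein equation $f'(w) - w f(w) = h(w) - \E h(Z)$ admits a bounded $C^2$ solution $f = f_h$ satisfying the standard estimates $\|f'\|_\infty \le \sqrt{2/\pi}$ and $\|f''\|_\infty \le 2$ (Lemma~2.4 of Ross). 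So the task reduces to bounding $|\E[f'(W) - W f(W)]|$ uniformly in $f$.

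The key step uses independence: since $\E[\xi_j f(W^{(j)})] = 0$, writing $f(W) - f(W^{(j)}) = \int_0^{\xi_j} f'(W^{(j)} + t)\,dt$ and summing over $j$ gives
\[
\E[W f(W)] = \sum_j \E\!\int f'(W^{(j)} + t)\, K_j(t)\,dt,
\qquad K_j(t) = \xi_j\,\mathbf{1}\{0 \le t \le \xi_j\},
\]
with the obvious sign convention for $\xi_j < 0$, so that $\int K_j(t)\,dt = \xi_j^2$ and $|K_j(t)| \le |\xi_j|\,\mathbf{1}\{|t| \le |\xi_j|\}$. Combining with $\sum_j \E \xi_j^2 = 1$ yields the decomposition
\[
\E[f'(W) - W f(W)] = \sum_j \E\!\int \bigl[f'(W) - f'(W^{(j)} + t)\bigr] K_j(t)\,dt + \E\!\bigl[f'(W)\bigl(1 - \textstyle\sum_j \xi_j^2\bigr)\bigr].
\]

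The first sum is bounded by $\|f''\|_\infty$ times $\sum_j \E\!\int |\xi_j - t|\,|K_j(t)|\,dt \le \sum_j \E|\xi_j|^3 = B_n^{-3/2}\sum_j \E|X_j|^3$, producing the first term of the claimed bound. The second piece is bounded by $\|f'\|_\infty$ times $\E|1 - \sum_j \xi_j^2|$; by Cauchy--Schwarz and independence, $\E|1 - \sum_j \xi_j^2| \le \sqrt{\Var(\sum_j \xi_j^2)} = \sqrt{\sum_j \Var(\xi_j^2)} \le \sqrt{\sum_j \E \xi_j^4} = B_n^{-1}\sqrt{\sum_j \E X_j^4}$, producing the second term. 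The main obstacle is tracking the sharp constant: the stated $\sqrt{28/\pi}$ does not quite match the naive combination of $\|f'\|_\infty \le \sqrt{2/\pi}$ with the Cauchy--Schwarz step, so one must incorporate a sharper bookkeeping (a better constant in the second-order remainder, or a refined Stein estimate specific to Lipschitz test functions) to recover exactly the form given in Ross's Theorem~3.2. Beyond this numerical audit, the argument is routine.
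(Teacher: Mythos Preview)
The paper does not prove this lemma at all: it is simply quoted from Ross (2011), Theorem~3.2, specialised to $D=1$, and then invoked as a black box in the proof of Lemma~\ref{lem:fisher_asym_CLT}. So there is no ``paper's own proof'' to compare against.

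Your sketch is the standard Stein's-method argument for Wasserstein bounds under independence, and it is essentially what Ross does (in the more general dependency-neighbourhood setting). Your two-term decomposition and the bounds via $\|f''\|_\infty\le 2$ and $\|f'\|_\infty\le\sqrt{2/\pi}$ are correct. The constant discrepancy you flag is real but harmless: Ross's Theorem~3.2 is stated for general dependency neighbourhoods of size $D$, and the constant $\sqrt{28/\pi}$ (or $\sqrt{26/\pi}$ in some versions) arises from the covariance bookkeeping needed in that generality; when one specialises to $D=1$ the constant is not tightened, so the stated bound is weaker than what your direct independent-case argument actually yields. In other words, your argument proves the inequality with $\sqrt{2/\pi}$ in place of $\sqrt{28/\pi}$, which is strictly stronger, and there is no ``sharper bookkeeping'' needed to recover the stated form---you already have it.
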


\begin{proof}[\textbf{Proof of Lemma \ref{lem:fisher_asym_CLT}}]
Write $p_i$ for $\hat{u}^\marg_i$. Throughout the proof we suppress $H_0$ from the expectation $\E_{H_0}$ and $\P_{H_0}$ because we will only consider the global null. This proof refines that of Theorem \ref{thm:combination_test}. Let $p_i = \hat{u}_i^\marg$, $G(p) = -\log(h^\asym(p))$, and $\event_{n}$ be the event defined in \eqref{eq:event_En}. By Lemma \ref{lem:cond_moments}, on the event $\event_n$,
\[\sum_{k=1}^{2}\bigg|\E[G^k(p_i)\mid \D] - k\bigg| \le \frac{C(\delta)\log^6 n}{\sqrt{n}}, \quad \text{and}\quad \sum_{k=3}^{4}\bigg|\E[G^k(p_i)\mid \D]\bigg| \le C.\]
Recalling that we treat $\delta$ as a constant and ignore all terms that solely depend on $\delta$ in the big-O notation, we will simply write $C$ for $C(\delta)$ for the rest of the proof.

Analogous to the proof of Theorem \ref{thm:combination_test}, we define
  \[W_{m} = \frac{1}{\sqrt{m}}\sum_{i=1}^{m}\left\{G(p_i) - \E[G(p_i)\mid \mathcal{D}]\right\}, \quad \tilde{W}_{n} = \sqrt{n + 1}\left\{\E[G(p_i)\mid \mathcal{D}] - \E[G(p_i)]\right\}.\]
  By Lemma~\ref{lem:Berry_Esseen} with $g(x) = |x|$,
  \begin{align*}
    d_{K}\left(\mathcal{L}\left(\frac{W_{m}}{\sqrt{\Var[G(p_i)\mid \mathcal{D}]}}\mid \mathcal{D}\right), N(0, 1)\right)&\le \frac{A}{\sqrt{m}}\frac{\E\left[|G(p_i) - \E[G(p_i)\mid \mathcal{D}]|^3\right]}{\Var[G(p_i)\mid \mathcal{D}]^{3/2}}\\
    & \le \frac{8A}{\sqrt{m}}\frac{\E\left[|G(p_i)|^3\mid \mathcal{D}\right]}{\Var[G(p_i)\mid \mathcal{D}]^{3/2}},  
  \end{align*}
where $A$ is a universal constant. By definition of $\event_{n}$,
\begin{equation}
  \label{eq:var_cond}
  \bigg|\Var[G(p_i)\mid \mathcal{D}] - 1\bigg|\le \frac{2C\log^6 n}{\sqrt{n}}, \quad \text{on the event }\event_n.
\end{equation}
When $n$ is sufficiently large,
\begin{equation*}
  d_{K}\left(\mathcal{L}\left(\frac{W_{m}}{\Var[G(p_i)\mid \mathcal{D}]}\mid \mathcal{D}\right), N(0, 1)\right)\le \frac{1}{\sqrt{m}}\cdot\frac{8CA}{\lb 1 - \frac{2C\log^6 n}{\sqrt{n}}\rb^{3/2}}\le \frac{16CA}{\sqrt{m}}, \quad \text{on the event }\event_n.
\end{equation*}
The scale-invariance of the Kolmogorov distance then implies
\[d_{K}\left(\mathcal{L}\left(W_{m}\mid \mathcal{D}\right), N(0, \Var[G(p_i)\mid \mathcal{D}])\right)\le \frac{16CA}{\sqrt{m}}, \quad \text{on the event }\event_n.
\]
By \eqref{eq:var_cond}, Lemma \ref{lem:dK_normal} and the triangle inequality, for a sufficiently large $n$,
\begin{equation}
  \label{eq:dK_normal_Wm}
  d_{K}\left(\mathcal{L}\left(W_{m}\mid \mathcal{D}\right), N(0, 1)\right)\le \frac{16CA}{\sqrt{m}} + \frac{6C\log^6 n}{\sqrt{n}},\quad \text{on the event }\event_n.
\end{equation}
Since $\tilde{W}_{n}$ is a function of $\mathcal{D}$, for any $t\in \R$,
\[\bigg|\P\lb W_{m} + \sqrt{\frac{m}{n+1}}\tilde{W}_{n}\ge t \mid \mathcal{D}\rb - \bar{\Phi}\lb t - \sqrt{\frac{m}{n+1}}\tilde{W}_{n}\rb\bigg|\le \frac{16CA}{\sqrt{m}} + \frac{6C\log^6 n}{\sqrt{n}},\quad \text{on the event }\event_n.\]
By \eqref{eq:prob_event_c}, when $n$ is sufficiently large,
\begin{align}
  &\sup_{t\in \R}\bigg|\P\lb W_{m} + \sqrt{\frac{m}{n+1}}\tilde{W}_{n}\ge t \rb - \E\left[\bar{\Phi}\lb t - \sqrt{\frac{m}{n+1}}\tilde{W}_{n}\rb\right]\bigg|\nonumber\\
  & \le \frac{16CA}{\sqrt{m}} + \frac{6C\log^6 n}{\sqrt{n}} + \P(\event_n^c)
    = O\lb \frac{1}{\sqrt{m}} + \frac{\log^6 n}{\sqrt{n}}\rb.
    % \le \frac{2CA}{\sqrt{m}} + \frac{7C\log^6 n}{\sqrt{n}}.
    \label{eq:P_Wm_Wn}
\end{align}
Recall from \eqref{eq:E_log_D} with $k = 1$ that
\[\tilde{W}_{n} = \frac{(n + 1)^{-1/2}\sum_{j=1}^{n+1}\left\{G\left(\frac{j}{n+1}\right) - \E[G(p_i)]\right\}V_{j}}{\frac{1}{n+1}\sum_{j=1}^{n+1}V_{j}}\triangleq \frac{\tilde{W}_{1n}}{\tilde{W}_{2n}}.\]
Let
\[B_{n} = \frac{1}{n+1}\sum_{j=1}^{n+1}\left\{G\left(\frac{j}{n+1}\right) - \E[G(p_i)]\right\}^2.\]
Since $\E[G(p_i)] = \frac{1}{n+1}\sum_{j=1}^{n+1}G\left(\frac{j}{n+1}\right)$,
\[B_{n} = \E\left[|-\log (h^\asym(p_i))|^2\right] - \lb\E\left[|-\log (h^\asym(p_i))|\right]\rb^2.\]
By \eqref{eq:k=12} in Lemma \ref{lem:uncond_moments},
\begin{equation*}
    |B_{n} - 1| \le 2 + \frac{C\log^6 n}{\sqrt{n}} - \lb 1 - \frac{C\log^6 n}{\sqrt{n}}\rb^2\le 1 + \frac{3C\log^6 n}{\sqrt{n}}.
\end{equation*}
Similarly, Lemma \ref{lem:uncond_moments} implies that
\[\sum_{k=3}^{4}\frac{1}{n+1}\sum_{j=1}^{n}\bigg|G\left(\frac{j}{n+1}\right) - \E[G(p_i)]\bigg|^k \le \sum_{k=3}^{4}2^{k}\frac{1}{n+1}\sum_{j=1}^{n}\bigg|G\left(\frac{j}{n+1}\right)\bigg|^k = O(1),\]
where the first step applies Jensen's inequality that
\[\Bigg|\frac{G\left(\frac{j}{n+1}\right) - \E[G(p_i)]}{2}\Bigg|^k\le \frac{1}{2}\lb\bigg|G\left(\frac{j}{n+1}\right)\bigg|^{k} + |\E[G(p_i)]|^k\rb,\]
and
\[|\E[G(p_i)]|^k\le \bigg|\frac{1}{n+1}\sum_{j=1}^{n+1}G\lb\frac{j}{n+1}\rb\bigg|^{k}\le \frac{1}{n+1}\sum_{j=1}^{n+1}\bigg|G\lb\frac{j}{n+1}\rb\bigg|^{k}.\]
By Lemma \ref{lem:dW_CLT},
\begin{align*}
  & d_{W}\left(\mathcal{L}\left(\frac{\tilde{W}_{1n}}{\sqrt{B_n}}\right), N(0, 1)\right)\\
  & \le \frac{\E[V_1^3]}{(n+1)^{3/2}B_n^{3/2}}\sum_{j=1}^{n}\bigg|G\left(\frac{j}{n+1}\right) - \E[G(p_i)]\bigg|^3 + \frac{\sqrt{28}\E[V_1^4]}{\sqrt{\pi}(n+1) B_n}\sqrt{\sum_{j=1}^{n}\bigg|G\left(\frac{j}{n+1}\right) - \E[G(p_i)]\bigg|^4}\\
  & = O\lb\frac{1}{\sqrt{n}}\rb.
\end{align*}
As a result,
\[d_{W}\left(\mathcal{L}(\tilde{W}_{1n}), N(0, B_n)\right) = \sqrt{B_n}\cdot d_{W}\left(\mathcal{L}\left(\frac{\tilde{W}_{1n}}{\sqrt{B_n}}\right), N(0, 1)\right) = O\lb\frac{1}{\sqrt{n}}\rb.\]
Let $Z$ denotes a standard normal random variable. Using the coupling definition of the Wasserstein distance,
\[d_{W}(N(0, B_n), N(0, 1)) \le \E |\sqrt{B_n} Z - Z| = |\sqrt{B_n} - 1|\cdot \E|Z| = O(B_n - 1) = O\lb\frac{\log^6 n}{\sqrt{n}}\rb.\]
By the triangle inequality,
\begin{equation}
  \label{eq:W1n_CLT}
  d_{W}\left(\mathcal{L}(\tilde{W}_{1n}), N(0, 1)\right) \le d_{W}\left(\mathcal{L}(\tilde{W}_{1n}), N(0, B_n)\right) + d_{W}\left(N(0, B_n), N(0, 1)\right) = O\lb\frac{\log^6 n}{\sqrt{n}}\rb.
\end{equation}
On the other hand, using the coupling definition of Wasserstein distance again,
\begin{align}
  d_{W}\left(\tilde{W}_{n}, \tilde{W}_{1n}\right)&\le \E\left|\frac{\tilde{W}_{1n}}{\tilde{W}_{2n}} - \tilde{W}_{1n}\right| = \E\left[|\tilde{W}_{1n}|\cdot \left|\frac{1}{\tilde{W}_{2n}} - 1\right|\right].\label{eq:dW_Wn_W1n}
\end{align}
Since $V_i\sim \mathrm{Exp}(1)\sim \Gamma(1, 1)$,
\[(n+1)\tilde{W}_{2n}\sim \Gamma(n+1, 1).\]
Using the properties of inverse-Gamma distributions,
\[\E\left[\frac{1}{\tilde{W}_{2n}}\right] = 1 + \frac{1}{n}, \quad \Var\left[\frac{1}{\tilde{W}_{2n}}\right] = \frac{(n + 1)^2}{n^2 (n - 1)}.\]
By the triangle inequality and Cauchy-Schwarz inequality,
\begin{align*}
  \E\left[|\tilde{W}_{1n}|\cdot \left|\frac{1}{\tilde{W}_{2n}} - 1\right|\right]
  & \le \frac{1}{n}\E\left[|\tilde{W}_{1n}|\right] + \E\left[|\tilde{W}_{1n}| \cdot \left|\frac{1}{\tilde{W}_{2n}} - \E\left[\frac{1}{\tilde{W}_{2n}}\right]\right|\right]\\
  & \le \frac{1}{n}\sqrt{\E[\tilde{W}_{1n}^2]} + \sqrt{\E[\tilde{W}_{1n}^2]}\cdot \sqrt{\Var \left[\frac{1}{\tilde{W}_{2n}}\right]}\\
  & = O\lb \frac{\sqrt{\E[\tilde{W}_{1n}^2]}}{\sqrt{n}}\rb.
\end{align*}
Note that
\[\E[\tilde{W}_{1n}] = (n + 1)^{-1/2}\sum_{j=1}^{n+1}\left\{G\left(\frac{j}{n+1}\right) - \E[G(p_i)]\right\} = 0,\]
and by Lemma \ref{lem:uncond_moments}, 
\[\Var[\tilde{W}_{1n}] = \frac{1}{n+1}\sum_{j=1}^{n+1}\left\{G\left(\frac{j}{n+1}\right) - \E[G(p_i)]\right\}^2\le \frac{1}{n+1}\sum_{j=1}^{n+1}G^2\left(\frac{j}{n+1}\right) = O(1).\]
Thus,
\[\E\left[|\tilde{W}_{1n}|\cdot \left|\frac{1}{\tilde{W}_{2n}} - 1\right|\right] = O\lb \frac{\sqrt{\Var[\tilde{W}_{1n}]}}{\sqrt{n}}\rb = O\lb\frac{1}{\sqrt{n}}\rb.\]
By the triangle inequality, \eqref{eq:W1n_CLT}, and \eqref{eq:dW_Wn_W1n}
\begin{equation}
  \label{eq:Wn_CLT}
  d_{W}\lb\tilde{W}_{n}, N(0, 1)\rb\le d_{W}\left(\tilde{W}_{n}, \tilde{W}_{1n}\right) + d_{W}\left(\tilde{W}_{1n}, N(0, 1)\right) = O\lb\frac{\log^6 n}{\sqrt{n}}\rb.
\end{equation}
Note that
\[|\bar{\Phi}'(x)| = \frac{1}{\sqrt{2\pi}}\exp\left\{-\frac{x^2}{2}\right\} \le 1.\]
Again, Let $Z$ denotes a standard normal random variable. Using the Kantorovich-Rubinstein dual representation of the Wasserstein's distance,
\begin{equation}
  \label{eq:barPhi_diff}
  \sup_{t\in \R}\bigg|\bar{\Phi}\lb t - \sqrt{\frac{m}{n+1}}\tilde{W}_n\rb - \bar{\Phi}\lb t - \sqrt{\frac{m}{n+1}}Z\rb\bigg| \le \sqrt{\frac{m}{n+1}} d_{W}(\tilde{W}_{n}, Z) = O\lb\frac{\sqrt{m}\log^6 n}{n}\rb.
\end{equation}
Similar to the last step in the proof of Theorem \ref{thm:combination_test}, letting $\tilde{Z}$ denote a standard normal random variable that is independent of $Z$,
\begin{equation}
  \label{eq:barPhi}
  \E\left[\bar{\Phi}\lb t - \sqrt{\frac{m}{n+1}}Z\rb\right] = \P\lb \sqrt{\frac{m}{n+1}}Z + \tilde{Z}\ge t\rb = \P\lb N\lb 0, 1 + \frac{m}{n+1}\rb\ge t\rb = \bar{\Phi}\lb \frac{t}{\sqrt{1 + \frac{m}{n+1}}}\rb.
\end{equation}
Combining \eqref{eq:P_Wm_Wn}, \eqref{eq:barPhi_diff}, and \eqref{eq:barPhi},
\begin{equation}\label{eq:approx_final_1}
  \sup_{t\in \R}\bigg|\P\lb W_{m} + \sqrt{\frac{m}{n+1}}\tilde{W}_{n}\ge t \rb - \bar{\Phi}\lb \frac{t}{\sqrt{1 + \frac{m}{n+1}}}\rb\bigg| = O\lb \frac{1}{\sqrt{m}} + \frac{\log^6 n}{\sqrt{n}}\rb.
\end{equation}
Note that $W_{m} + \sqrt{\frac{m}{n+1}}\tilde{W}_{n} = (1/\sqrt{m})\sum_{i=1}^{m}\{G(p_i) - \E[G(p_i)]\}$. Let
\[t = \sqrt{1 + \frac{m}{n}}z_{1 - \alpha} + \sqrt{m}\left\{ 1 - \frac{\pi}{2}\frac{c_n(\delta)}{\sqrt{n}} - \E[G(p_i)]\right\}.\]
Then \eqref{eq:approx_final_1} implies that
\[\sup_{\alpha\in (0, 1)}\bigg|\P\lb \sum_{i=1}^{m}G(p_i)\ge \sqrt{m}\sqrt{1 + \frac{m}{n}}z_{1- \alpha} + m \left\{ 1 - \frac{\pi}{2}\frac{c_n(\delta)}{\sqrt{n}}\right\}\rb - \bar{\Phi}\lb z_{1-\alpha} + b_{m, n}\rb\bigg| = O\lb \frac{1}{\sqrt{m}} + \frac{\log^6 n}{\sqrt{n}}\rb,\]
where
\[b_{m, n} = \lb\sqrt{\frac{1 + m/n}{1 + m / (n+1)}} - 1\rb z_{1-\alpha} + \sqrt{\frac{m(n + 1)}{n + m + 1}}\left\{ 1 - \frac{\pi}{2}\frac{c_n(\delta)}{\sqrt{n}} - \E[G(p_i)]\right\}.\]
As shown in the previous steps, $\bar{\Phi}$ is $1$-Lipschitz. Thus,
\[\sup_{\alpha\in (0, 1)}\bigg|\P\lb \sum_{i=1}^{m}G(p_i)\ge \sqrt{m}\sqrt{1 + \frac{m}{n}}z_{1- \alpha} + m \left\{ 1 - \frac{\pi}{2}\frac{c_n(\delta)}{\sqrt{n}}\right\}\rb - \alpha\bigg| = O\lb \frac{1}{\sqrt{m}} + \frac{\log^6 n}{\sqrt{n}} + b_{m, n}\rb.\]
By Theorem \ref{thm:fisher_asym},
\[\bigg|1 - \frac{\pi}{2}\frac{c_n(\delta)}{\sqrt{n}} - \E[G(p_i)]\bigg| = O\lb \frac{(\log n)(\log \log n)}{n}\rb.\]
Therefore,
\[b_{m, n} = O\lb \frac{1}{n} + \sqrt{\frac{m}{m+n}}\frac{(\log n)(\log \log n)}{\sqrt{n}}\rb = O\lb\frac{\log^6 n}{\sqrt{n}}\rb.\]
The proof is then completed.
\end{proof}
% \begin{align*}
%   d_{W}\left(\tilde{W}_{n}, \tilde{W}_{1n}\right)&\le \E\left|\frac{\tilde{W}_{1n}}{\tilde{W}_{2n}} - \tilde{W}_{1n}\right|\le \sqrt{\E[\tilde{W}_{1n}^2]\cdot \E\left[\lb \frac{1}{\tilde{W}_{2n}} - 1\rb^2\right]}\\
%   & \le \sqrt{\E[\tilde{W}_{1n}^2]}\cdot \lb \E\left[|\tilde{W}_{2n} - 1|^4\right]\rb^{1/4}\cdot \lb \E\left[\frac{1}{\tilde{W}_{2n}^4}\right]\rb^{1/4}.
% \end{align*}
% By Cauchy-Schwarz inequality again,
% \[\E[\tilde{W}_{1n}^2]\le \]

\subsection{DKWM adjustment}
\subsubsection{Mean of Fisher's combination statistic}
  For notational convenience, let
  \[b_{n} = \sqrt{\frac{\log(2/\delta)}{2n}}.\]
  It should be kept in mind that $b_n$ depends on $\delta$.
\begin{thm}[Part (b) of Theorem~\ref{theorem:fisher-mean}]\label{thm:fisher_DKWM}
  \[\E_{H_0}[-\log (h^\dkwm \circ \hat{u}^\marg)] = 1 - b_n\log\lb\frac{e}{b_n}\rb + O\lb\frac{\log n}{n}\rb.\]
\end{thm}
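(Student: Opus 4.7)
}
Since $\hat{u}^\marg$ is uniformly distributed on $\{1/(n+1), 2/(n+1), \ldots, 1\}$ under the global null, I would begin by rewriting the expectation as the discrete average
\[
\E_{H_0}[-\log(h^\dkwm\circ \hat{u}^\marg)] = \frac{1}{n+1}\sum_{i=1}^{n+1} -\log\!\Bigl(\min\Bigl\{\tfrac{i}{n+1}+b_n,\,1\Bigr\}\Bigr).
\]
The $\min$ is active for large indices: letting $j^\star = \lceil (n+1)(1-b_n)\rceil$, every term with $i \geq j^\star$ contributes $-\log 1 = 0$, so only the indices $i = 1,\dots,j^\star -1$ remain. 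This reduces the problem to estimating $\frac{1}{n+1}\sum_{i=1}^{j^\star-1} f(i/(n+1))$, where $f(x) = -\log(x+b_n)$ is positive, decreasing, and convex on $[0,1-b_n]$.

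The heart of the argument will be to replace this Riemann-type sum by the corresponding integral. A direct substitution $u = x+b_n$ gives
\[
\int_0^{1-b_n} -\log(x+b_n)\,dx = \int_{b_n}^1 -\log u\,du = \bigl[u - u\log u\bigr]_{b_n}^1 = 1 - b_n + b_n\log b_n = 1 - b_n\log(e/b_n),
\]
which is precisely the leading term we aim to match. Once this identity is in hand, the remaining task is to bound the gap between the discrete sum and the continuous integral.

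For the sum-to-integral approximation, I would use the standard bracketing bounds for decreasing $f$:
\[
\int_{i/(n+1)}^{(i+1)/(n+1)} f(x)\,dx \;\leq\; \frac{f(i/(n+1))}{n+1} \;\leq\; \int_{(i-1)/(n+1)}^{i/(n+1)} f(x)\,dx .
\]
Telescoping these across $i=1,\dots,j^\star-1$ sandwiches the sum between two integrals whose endpoints differ by $1/(n+1)$. The resulting error is controlled by $\frac{1}{n+1}\bigl[f(0)-f((j^\star-1)/(n+1))\bigr]$, and $f(0) = -\log b_n = \tfrac{1}{2}\log n + O(1)$, so this discretization error is $O(\log n /n)$ as claimed. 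A small extra care is needed for the boundary index $j^\star$: rounding changes the integration limit by at most $1/(n+1)$, and the integrand is bounded by $-\log b_n = O(\log n)$ near that boundary, giving an additional $O(\log n/n)$ contribution. Combining these pieces yields the theorem.

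The only delicate step will be the index $i=1$, where the integrand $f$ has its largest value $\log(1/b_n) \asymp \log n$; this is exactly where the bracketing argument is tightest and determines the $\log n /n$ rate in the remainder. Everything else is a routine integral evaluation and an Euler-Maclaurin-style comparison, so I expect no further obstacles.
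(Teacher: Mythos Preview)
Your proposal is correct and follows essentially the same route as the paper's proof: both reduce the expectation to the truncated Riemann sum $\frac{1}{n+1}\sum_{i=1}^{t_n}-\log\bigl(\tfrac{i}{n+1}+b_n\bigr)$, sandwich it between two integrals using the monotonicity of $x\mapsto -\log(x+b_n)$, and evaluate $\int_{b_n}^{1}(-\log u)\,du = 1 - b_n\log(e/b_n)$ exactly, with the discretization error governed by the largest integrand value $-\log b_n = O(\log n)$. The only cosmetic difference is that the paper writes the sandwiching integrals directly in the shifted variable $u=x+b_n$ and evaluates the antiderivative $h_1(u)=-u\log u+u$ at the endpoints, whereas you phrase it as an Euler--Maclaurin-style bracketing; the content is identical.
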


\begin{proof}
  For notational convenience, let $t_n = \lfloor (n + 1)(1 - b_n)\rfloor$. Since the mapping $x\mapsto -\log(x + b_n)$ is monotone decreasing,
  \begin{align}
    \E_{H_0}[-\log (h^\dkwm \circ \hat{u}^\marg)]
    &= \frac{1}{n+1}\sum_{i=1}^{n+1}-\log\lb\min\left\{\frac{i}{n+1} + b_{n}, 1\right\}\rb\nonumber\\
    &= \frac{1}{n+1}\sum_{i=1}^{t_n}-\log\lb\frac{i}{n+1} + b_{n}\rb\nonumber\\
    & \in \left[\int_{b_n + 1/(n + 1)}^{b_n + t_n / (n + 1)}(-\log x)dx, \int_{b_n}^{b_n + t_n / (n + 1)}(-\log x)dx\right].\label{eq:range_EH0_dkwm}
  \end{align}
  Note that the indefinite integral of $(-\log x)$ is 
  \[\int (-\log x)dx = -x\log x + x \triangleq h_1(x).\]
  It is easy to see that
  \[h_1\lb b_n\rb = b_n\log\lb\frac{e}{b_n}\rb,\]
  \[h_1\lb b_n + \frac{1}{n+1}\rb = h_1(b_n) + O\lb\frac{h_1'(b_n)}{n+1}\rb = b_n\log\lb\frac{e}{b_n}\rb + O\lb\frac{\log n}{n}\rb,\]
  and
  \[\bigg|h_1\lb b_n + \frac{t_n}{n+1}\rb - 1\bigg| = O\lb 1 - b_n - \frac{t_n}{n+1}\rb = O\lb\frac{1}{n}\rb.\]
  By Newton-Leibniz formula,
  \[\int_{b_n + 1/(n+1)}^{b_n + t_n / (n + 1)}(-\log x)dx, \int_{b_n}^{b_n + t_n / (n + 1)}(-\log x)dx = 1 - b_n\log\lb\frac{e}{b_n}\rb + O\lb\frac{\log n}{n}\rb.\]
  The proof is then closed by \eqref{eq:range_EH0_dkwm}.
\end{proof}

\subsubsection{Conditional moments of the adjusted p-values}
\begin{lemma}\label{lem:cond_moments_dkwm}
  There exist universal constants $C, c > 0$ and a constant $C(\delta) > 0$ that only depends on $\delta$ such that, with probability $1 - \exp\{-c(\log n)^2\}$,
  \begin{equation*}
    \sum_{k=1}^{2}\bigg|\E_{H_0}\left[|-\log (h^\dkwm \circ \hat{u}^\marg)|^k\mid \D\right] - k\bigg| \le \frac{C(\delta)\log^6 n}{\sqrt{n}},
  \end{equation*}
  and
  \begin{equation*}
    \sum_{k=3}^{4}\bigg|\E_{H_0}\left[|-\log (h^\dkwm \circ \hat{u}^\marg)|^k\mid \D\right]\bigg|\le C.
  \end{equation*}
\end{lemma}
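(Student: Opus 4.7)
My plan is to mirror the structure of the proof of Lemma~\ref{lem:cond_moments}, with $h^{\asym}$ replaced by $h^{\dkwm}$. Writing $G(p) = -\log(h^{\dkwm}(p))$, the crucial structural fact I will exploit is the exact representation of \eqref{eq:E_log_D}, which gives
\begin{equation*}
\E_{H_0}\!\left[G^k(\hat{u}^\marg)\mid \D\right] = \frac{(n+1)^{-1}\sum_{i=1}^{n+1} G^k\!\left(\frac{i}{n+1}\right) V_i}{(n+1)^{-1}\sum_{i=1}^{n+1} V_i}, \qquad V_1,\ldots,V_{n+1}\stackrel{\mathrm{i.i.d.}}{\sim}\mathrm{Exp}(1).
\end{equation*}
The key quantitative input specific to the DKWM adjustment is the uniform bound $G(i/(n+1)) \le -\log b_n = O(\log n)$, which comes directly from the definition of $h^{\dkwm}$ and the fact that $b_n = \Theta(1/\sqrt{n})$.

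First, I will define the good event
\begin{equation*}
\event_n = \left\{\sum_{k=0}^{4}\bigg|\frac{1}{n+1}\sum_{i=1}^{n+1}G^k\!\left(\frac{i}{n+1}\right)(V_i - 1)\bigg| \le \frac{\log^6 n}{\sqrt{n}}\right\}
\end{equation*}
and show $\P(\event_n^c)\le \exp\{-c\log^2 n\}$ via Bernstein's inequality exactly as in Lemma~\ref{lem:cond_moments}: since $|G^k(i/(n+1))| \le \log^4 n$ for $k \le 4$, each summand $G^k(i/(n+1))(V_i - 1)$ is sub-exponential with parameters $(2\log^4 n, 2\log^4 n)$, and a union bound over $k\in\{0,1,2,3,4\}$ yields the conclusion with $t = \log^6 n/(5\sqrt{n})$.

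Next, I need deterministic estimates of $(n+1)^{-1}\sum_{i=1}^{n+1} G^k(i/(n+1))$. For $k=1$ this is precisely $\E_{H_0}[-\log(h^{\dkwm}\circ \hat{u}^{\marg})]$, which Theorem~\ref{thm:fisher_DKWM} already shows equals $1 - b_n\log(e/b_n) + O(\log n/n) = 1 + O(\log^6 n/\sqrt{n})$. For $k=2$, I will carry out the analogous monotone Riemann-sum comparison: since $G^2$ is monotone on $[0, 1-b_n)$ and vanishes on $[1-b_n,1]$, the sum is sandwiched between $\int \log^2(x+b_n)\,dx$-type integrals on $[b_n,1]$, which evaluate to $2 + O(b_n \log^2(1/b_n)) = 2 + O(\log^6 n/\sqrt{n})$ via the antiderivative $u\log^2 u - 2u\log u + 2u$. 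Combining these with the event $\event_n$, on $\event_n$ the ratio in \eqref{eq:E_log_D} is $(k + O(\log^6 n /\sqrt n))/(1 + O(\log^6 n /\sqrt n)) = k + O(\log^6 n / \sqrt n)$ for $k=1,2$. For $k=3,4$, crude integral upper bounds give $(n+1)^{-1}\sum_i G^k(i/(n+1)) \le \int_0^1 (-\log x)^k dx = k! = O(1)$, and on $\event_n$ this passes through the ratio to give the required $O(1)$ bound on the conditional moments.

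The main obstacle, such as it is, lies in the integral estimates for $(n+1)^{-1}\sum G^k(i/(n+1))$ when $k=2$, where one must be careful that the singular behavior of $\log^2$ near zero (shifted by $b_n = \Theta(1/\sqrt{n})$) contributes only an $O(\log^2 n/\sqrt n)$ error, subsumed by the target $\log^6 n / \sqrt n$ rate. Everything else is a routine transcription of the argument of Lemma~\ref{lem:cond_moments}, since the fundamental boundedness $G(i/(n+1)) = O(\log n)$ is the same in both cases.
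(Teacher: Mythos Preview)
Your proposal is correct and follows essentially the same approach as the paper's proof: both define the same event $\event_n$ via Bernstein's inequality using the bound $G(i/(n+1)) = O(\log n)$, invoke Theorem~\ref{thm:fisher_DKWM} for $k=1$, carry out the monotone Riemann-sum comparison against $\int \log^2(x+b_n)\,dx$ for $k=2$, and bound $k=3,4$ crudely by $k!$.
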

\begin{proof}
  Let $G(p) = -\log (h^\dkwm(p))$. Then
  \begin{equation}
    \label{eq:G_upper_dkwm}
    G\lb\frac{i}{n+1}\rb\le -\log\lb \frac{1}{n+1} + \sqrt{\frac{\log(2/ \delta)}{2n}}\rb\le \log n,
  \end{equation}
  where we use the fact that
  \[\sqrt{\frac{\log(2/ \delta)}{2n}}\ge \sqrt{\frac{\log 2}{2n}}\ge \frac{1}{n(n+1)}.\]
  Using the same argument as in the proof of Lemma \ref{lem:cond_moments},
  \[\P\lb \event_n^c\rb \le \exp \left\{-c\log^2 n\right\},\]
  where $\event_n$ denotes the event that
  \[\sum_{k=0}^{4}\bigg|\frac{1}{n+1}\sum_{i=1}^{n+1}G^k\left(\frac{i}{n+1}\right)(V_{i} - 1)\bigg|\le \frac{\log^6 n}{\sqrt{n}}.\] 
Adopting the notation in Theorem \ref{thm:fisher_DKWM}, for any $k \ge 2$,
  \begin{align*}
    \frac{1}{n+1}\sum_{i=1}^{n+1}G^k\left(\frac{i}{n+1}\right) = \frac{1}{n+1}\sum_{i=1}^{t_n} \lb -\log \left\{\frac{i}{n+1} + b_n\right\}\rb^{k}.
  \end{align*}
  Since the mapping $x\mapsto \log^2(x + b_n)$ is decreasing,
  \begin{align*}
    \frac{1}{n+1}\sum_{i=1}^{n+1}G^2\left(\frac{i}{n+1}\right)&\in \left[\int_{1/(n + 1)}^{t_n / (n + 1)}\log^2 (x + b_n)dx, \int_{0}^{t_n / (n + 1)}\log^2 (x + b_n)dx\right]\\
                                                              & \in \left[\int_{b_n + 1/(n + 1)}^{b_n + t_n / (n + 1)}\log^2 xdx, \int_{b_n}^{b_n + t_n / (n + 1)}\log^2 xdx\right]
  \end{align*}
  Note that the indefinite integral of $\log^2 x$ is 
  \[\int \log^2 xdx = x\log^2 x - 2x\log x + 2x \triangleq h_2(x).\]
  It is easy to see that
  \[h_2\lb b_n\rb, h_2\lb b_n + \frac{1}{n+1}\rb = O_{\delta}\lb\frac{\log^2 n}{\sqrt{n}}\rb,\]
  where we use $O_{\delta}$ herein to hide the dependence on $\delta$, and
  \[\bigg|h_2\lb b_n + \frac{t_n}{n+1}\rb - 2\bigg| = O\lb 1 - b_n - \frac{t_n}{n+1}\rb = O\lb\frac{1}{n}\rb.\]
  By Newton-Leibniz formula,
  \[\int_{b_n+1/(n+1)}^{b_n + t_n / (n + 1)}\log^2 xdx, \int_{b_n}^{b_n + t_n / (n + 1)}\log^2 xdx = 2 + O_{\delta}\lb\frac{\log^2 n}{\sqrt{n}}\rb.\]
  This implies
  \[\bigg|\frac{1}{n+1}\sum_{i=1}^{n+1}G^2\left(\frac{i}{n+1}\right) - 2 \bigg| = O_{\delta}\lb\frac{\log^2 n}{\sqrt{n}}\rb.\]
  We have shown in Theorem \ref{thm:fisher_DKWM} that
  \[\bigg|\frac{1}{n+1}\sum_{i=1}^{n+1}G\left(\frac{i}{n+1}\right) - 1 \bigg| = O_{\delta}\lb\frac{\log n}{\sqrt{n}}\rb.\]
  Analogous to the proof of Lemma \ref{lem:cond_moments}, on the event $\event_{n}$, 
  \[\sum_{k=1}^{2}\bigg|\E_{H_0}\left[|-\log (h^\asym \circ \hat{u}^\marg)|^k\mid \D\right] - k\bigg| = O_{\delta}\lb\frac{\log^2 n}{\sqrt{n}}\rb = O_{\delta}\lb\frac{\log^6 n}{\sqrt{n}}\rb.\]
  
  To prove the bound for higher-order moments, note that
  \begin{align*}
    \frac{1}{n+1}\sum_{i=1}^{n+1}G^k\left(\frac{i}{n+1}\right) &= \frac{1}{n+1}\sum_{i=1}^{t_n} \lb -\log \left\{\frac{i}{n+1} + b_n\right\}\rb^{k} \le \int_{0}^{1}(-\log x)^k dx = k!.
  \end{align*}
  By \eqref{eq:E_log_D} and the definition of $\event_{n}$, on the event $\event_{n}$, for $k = 3,4$,
  \[\E\left[\lb -\log (h^\asym \circ \hat{u}^\marg)\rb^{k}\mid \D\right]\le \frac{k! + \log^6 n / \sqrt{n}}{1 - \log^6 n / \sqrt{n}} = O(1).\]
\end{proof}

\subsubsection{Tail approximation for Fisher's combination statistic}
\begin{lemma}\label{lem:fisher_dkwm_tail}
  Under the global null, there exists a universal constant $C > 0$ and $n_0(\delta)$ that only depends on $\delta$, such that for any $n\ge n_0(\delta)$ and $t > 0$,
  \begin{align*}
    &\P_{H_0}\lb \sum_{i=1}^{m}\left\{-\log\lb h^{\dkwm} \circ \hat{u}^\marg_i\rb - \E\left[-\log\lb h^{\dkwm} \circ \hat{u}^\marg\rb\right]\right\}\ge t\rb\\
    & \le \exp\left\{-C(\log n)^2 - C\min\left\{\frac{n}{m}, 1\right\}\min\left\{\frac{t^2}{m}, \frac{t}{\log n}\right\}\right\}.
  \end{align*}
\end{lemma}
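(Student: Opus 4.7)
The plan is to decompose the centered sum into a conditionally-mean-zero sampling piece and a bias piece that depends only on the calibration data. Writing $G = -\log h^\dkwm$ and $\mu = \E[G(\hat{u}^\marg)]$, I set
\begin{equation*}
  X_m = \sum_{i=1}^m \{G(\hat{u}_i^\marg) - \mu\} = \underbrace{\sum_{i=1}^m \{G(\hat{u}_i^\marg) - \E[G(\hat{u}^\marg) \mid \mathcal{D}]\}}_{=: Y_m} + \underbrace{m\{\E[G(\hat{u}^\marg) \mid \mathcal{D}] - \mu\}}_{=: Z_m}.
\end{equation*}
Conditional on $\mathcal{D}$, the summands of $Y_m$ are i.i.d., centered, and bounded by $\log n$ in absolute value by \eqref{eq:G_upper_dkwm}; on the high-probability event $\mathcal{E}_n$ from the proof of Lemma~\ref{lem:cond_moments_dkwm}, the conditional variance is of order one. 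A classical Bernstein inequality applied conditionally on $\mathcal{D}$ then yields, on $\mathcal{E}_n$,
\begin{equation*}
  \P(Y_m > s \mid \mathcal{D}) \le \exp\bigl(-C \min\{s^2/m,\, s/\log n\}\bigr).
\end{equation*}

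For the bias $Z_m$ I would invoke the Moran representation of Lemma~\ref{lem:moran} to write $\E[G(\hat{u}^\marg) \mid \mathcal{D}] = N/D$ with $N = (n+1)^{-1}\sum_{j=1}^{n+1} G(j/(n+1))\,V_j$ and $D = (n+1)^{-1}\sum_{j=1}^{n+1} V_j$, where the $V_j$'s are i.i.d.\ $\mathrm{Exp}(1)$. Since $\mu = (n+1)^{-1}\sum_j G(j/(n+1))$, we obtain the tidy identity $N - D\mu = (n+1)^{-1}\sum_{j=1}^{n+1} (G(j/(n+1)) - \mu)(V_j - 1)$, exhibiting $N - D\mu$ as a sum of independent mean-zero sub-exponential random variables with total variance $O(1/n)$ and individual sub-exponential parameter $O(\log n/n)$. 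A Bernstein inequality for sub-exponentials then gives $\P(|N - D\mu| > s) \le 2\exp(-C \min\{n s^2,\, n s/\log n\})$. On $\mathcal{E}_n$ we have $D \in [1/2, 2]$, so $\{|Z_m| > s\} \cap \mathcal{E}_n \subset \{m|N - D\mu| > s/4\}$, producing the tail
\begin{equation*}
\P(|Z_m| > s,\, \mathcal{E}_n) \le 2\exp\bigl(-C\min\{n s^2/m^2,\, n s/(m \log n)\}\bigr).
\end{equation*}

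A union bound over $\{Y_m > t/2\}$, $\{Z_m > t/2\}$, and $\mathcal{E}_n^c$, combined with the observation that $\min\{n/m,1\}\cdot\min\{t^2/m, t/\log n\}$ is exactly the smaller of the two concentration exponents, yields a tail of order $\exp(-c(\log n)^2) + \exp(-CQ)$ with $Q := \min\{n/m,1\}\min\{t^2/m,t/\log n\}$. The main obstacle will be upgrading this to the product-in-the-exponent form $\exp(-C(\log n)^2 - CQ)$ stated in the lemma, which is strictly stronger than the max-of-exponentials that a naive union bound delivers. I expect to close this gap with a direct Chernoff argument: bound $\E[\mathbf{1}_{\mathcal{E}_n} e^{\lambda X_m}]$ by factoring the conditional MGF of $Y_m$ (valid for $\lambda \lesssim 1/\log n$ via Bernstein's MGF estimate) against the MGF of $Z_m$ controlled through the Moran-based sub-exponential representation (valid for $\lambda \lesssim n/(m \log n)$); separately bound $\E[\mathbf{1}_{\mathcal{E}_n^c} e^{\lambda X_m}] \le e^{2\lambda m \log n - c(\log n)^2}$ via the crude deterministic bound $|X_m| \le 2m\log n$; and finally optimize $\lambda$ in the intersection of the admissible ranges. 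Threading this optimization so that the $(\log n)^2$ contribution from $\P(\mathcal{E}_n^c)$ couples multiplicatively rather than additively with the Bernstein rate is the technical heart of the proof.
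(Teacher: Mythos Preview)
Your decomposition into a conditionally-centered piece $Y_m$ and a calibration-bias piece $Z_m$, with a conditional Bernstein bound for $Y_m$ and a sub-exponential Bernstein bound for $Z_m$ via the Moran representation, is exactly the paper's approach. The paper splits the bias one step further into a numerator deviation $\sum_j G(j/(n+1))(V_j-1)$ and a denominator event $\{\sum_j V_j \le (n+1)/2\}$ rather than working on $\mathcal{E}_n$ directly, but this is cosmetic.

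The discrepancy you flag in the final step---that the union bound yields only $\exp\{-c(\log n)^2\} + \exp\{-CQ\}$, not the product $\exp\{-C(\log n)^2 - CQ\}$---is real, but it is a defect of the lemma's \emph{statement}, not of your argument. The paper's own proof stops at the additive bound (``Piecing \eqref{eq:term1_dkwm_tail_prob}--\eqref{eq:term3_dkwm_tail_prob} together, the lemma is proved''), and the multiplicative form as written is in fact false: letting $t \downarrow 0$, the left-hand side tends to $\P_{H_0}(X_m \ge 0)$, which is bounded away from zero, whereas $\exp\{-C(\log n)^2\}$ vanishes with $n$. The only downstream use of the lemma is Theorem~\ref{thm:fisher_dkwm_effective_alpha}(a), where $t_{n,m} \gtrsim \sqrt{n}\log n$ makes both exponents of order $(\log n)^2$, so the additive bound already yields the claimed conclusion. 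You should therefore drop the MGF upgrade entirely; it is not needed, and your sketch would not deliver the stated product anyway, since the constraint $\lambda \lesssim (\log n)/m$ forced by the crude bound $|X_m|\le 2m\log n$ on $\mathcal{E}_n^c$ is far too restrictive to recover the Bernstein exponent $Q$ after optimizing over $\lambda$.
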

\begin{proof}
   Let $G(p) = -\log (h^\dkwm(p))$ and write $p_i$ for $\hat{u}^\marg_i$. Throughout the proof we suppress $H_0$ from the expectation $\E_{H_0}$ and $\P_{H_0}$ because we will only consider the global null. By \eqref{eq:E_log_D} , we can write 
  \[\E[G(p_i)\mid \D] = \frac{\sum_{i=1}^{n+1}G\left(\frac{i}{n+1}\right)V_{i}}{\sum_{i=1}^{n+1}V_{i}}.\]
  Then, 
  \begin{align}
    &\P\lb \sum_{i=1}^{m}\{G(p_i) - \E[G(p_i)]\}\ge t\rb\nonumber\\
    & \le \P\lb \sum_{i=1}^{m}\{G(p_i) - \E[G(p_i)\mid \D]\}\ge \frac{t}{3}\rb + \P\lb \E[G(p_i)\mid \D] - \E[G(p_i)]\ge \frac{2t}{3m}\rb\nonumber\\
    & \le \P\lb \sum_{i=1}^{m}\{G(p_i) - \E[G(p_i)\mid \D]\}\ge \frac{t}{3}\rb + \P\lb \E[G(p_i)\mid \D] - \E[G(p_i)]\ge \frac{2t}{3m}\rb\nonumber\\
    & \le \P\lb \sum_{i=1}^{m}\{G(p_i) - \E[G(p_i)\mid \D]\}\ge \frac{t}{3}\rb + \P\lb \sum_{i=1}^{n+1}G\left(\frac{i}{n+1}\right)(V_{i} - 1)\ge \frac{t(n+1)}{3m}\rb\nonumber\\
    & \qquad+ \P\lb \sum_{i=1}^{n+1}V_{i} \le \frac{n+1}{2}\rb.\label{eq:dkwm_tail_prob}
  \end{align}
  By \eqref{eq:G_upper_dkwm} and Bernstein's inequality \citep[e.g.,][equation (2.10)]{boucheron2013concentration},
  \begin{align*}
    &\P\lb \sum_{i=1}^{m}\{G(p_i) - \E[G(p_i)\mid \D]\}\ge \frac{t}{3}\mid \D\rb\\
    &\le \exp\left\{-\frac{t^2}{18(m\Var[G(p_i)\mid \D] + t\log n / 9)}\right\}\\
    & \le \exp \left\{-\frac{t^2}{18m\Var[G(p_i)\mid \D]}\right\} + \exp\left\{-\frac{t}{2\log n}\right\}.
  \end{align*}
  By Lemma \ref{lem:cond_moments}, with probability $1 - \exp\{-c(\log n)^2\}$, 
  \[\Var[G(p_i)\mid \D] \le 2,\]
  when $C(\delta)\log^6 n / \sqrt{n}\le 1$. Thus,
  \begin{align}
    \label{eq:term1_dkwm_tail_prob}
    \P\lb \sum_{i=1}^{m}\{G(p_i) - \E[G(p_i)\mid \D]\}\ge \frac{t}{3}\rb\le \exp\{-c(\log n)^2\} + \exp \left\{-\frac{t^2}{36m}\right\} + \exp\left\{-\frac{t}{2\log n}\right\}.
  \end{align}
  Moving to the second term of \eqref{eq:dkwm_tail_prob}, we can use the fact that $V_i$ is sub-exponential with parameters $(2,2)$ as shown in the proof of Lemma \ref{lem:cond_moments} and apply Bernstein's inequality for sums of exponential variables \citep[e.g.,][Proposition 2.9]{wainwright2019high},
  \begin{align*}
    &\P\lb \sum_{i=1}^{n+1}G\left(\frac{i}{n+1}\right)(V_{i} - 1)\ge \frac{t(n+1)}{3m}\rb\\
    & \le \exp \left\{-\frac{n+1}{m^2}\frac{t^2}{36 \frac{1}{n+1}\sum_{i=1}^{n+1}G^2\lb\frac{i}{n+1}\rb}\right\} + \exp\left\{-\frac{n+1}{m}\frac{t}{4\log n}\right\}.
  \end{align*}
  By Lemma \ref{lem:cond_moments_dkwm} and a similar argument for Lemma \ref{lem:uncond_moments}, 
  \[\frac{1}{n+1}\sum_{i=1}^{n+1}G^2\lb\frac{i}{n+1}\rb = \E[G(p_i)^2] \le 3,\]
  when $C(\delta)\log^6 n / \sqrt{n}\le 1$. Thus, 
  \begin{align}
    \label{eq:term2_dkwm_tail_prob}
    \P\lb \sum_{i=1}^{n+1}G\left(\frac{i}{n+1}\right)(V_{i} - 1)\ge \frac{t(n+1)}{3m}\rb \le \exp \left\{-\frac{n+1}{m^2}\frac{t^2}{108}\right\} + \exp\left\{-\frac{n+1}{m}\frac{t}{4\log n}\right\}.
  \end{align}
  As for the third term of \eqref{eq:dkwm_tail_prob}, we can apply Bernstein's inequality for sums of sub-exponential random variables again and obtain that
  \begin{align}
    \label{eq:term3_dkwm_tail_prob}
    &\P\lb \sum_{i=1}^{n+1}V_{i} \le \frac{n+1}{2}\rb \le \P\lb \sum_{i=1}^{n+1}(1 - V_{i})\ge \frac{n+1}{2}\rb\le 2\exp\left\{-\frac{n+1}{16}\right\}.
  \end{align}
  Piecing \eqref{eq:term1_dkwm_tail_prob} - \eqref{eq:term3_dkwm_tail_prob} together, the lemma is proved. 
\end{proof}

\begin{lemma}\label{lem:fisher_dkwm_CLT}
  Under the global null, as $m, n\rightarrow \infty$,
  \begin{align*}
    &\bigg|\P_{H_0}\lb \sum_{i=1}^{m}-\log\lb h^{\dkwm}\circ \hat{u}^\marg_i\rb\ge \sqrt{m}\sqrt{1 + \frac{m}{n}}z_{1- \alpha} + m \left\{ 1 - b_n\log\lb\frac{e}{b_n}\rb\right\}\rb - \alpha\bigg|\\
    & \quad = O\lb \frac{1}{\sqrt{m}} + \frac{\log^6 n}{\sqrt{n}}\rb.    
  \end{align*}
\end{lemma}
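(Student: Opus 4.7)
The plan is to transplant the argument used for Lemma~\ref{lem:fisher_asym_CLT} in Section~\ref{subsec:lem_fisher_asym_CLT}, since that proof is essentially agnostic to the particular adjustment function $h$; it only exploits moment and mean estimates for $G(p)=-\log(h(p))$. All the required DKWM-specific inputs are already at hand: the conditional moment bounds of Lemma~\ref{lem:cond_moments_dkwm}, the uniform pointwise bound $G(i/(n+1))\le \log n$ from \eqref{eq:G_upper_dkwm}, and the mean estimate $\E_{H_0}[G(p)]=1-b_n\log(e/b_n)+O(\log n/n)$ from Theorem~\ref{thm:fisher_DKWM}.

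First, I would upgrade the conditional moment bounds in Lemma~\ref{lem:cond_moments_dkwm} to unconditional ones, exactly as Lemma~\ref{lem:uncond_moments} was obtained from Lemma~\ref{lem:cond_moments}: the exceptional event has probability $\exp\{-c(\log n)^2\}$, and on that event $G^k \le (\log n)^k$, so its contribution to the expectation is $o(1/\sqrt n)$. This gives $|\E G^k - k|=O(\log^6 n/\sqrt n)$ for $k=1,2$ and $\E|G|^k=O(1)$ for $k=3,4$.

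Next, I would reuse the decomposition
\[
\sum_{i=1}^{m} G(\hat{u}^\marg_i) = \sqrt{m}\,\sqrt{\Var[G(p_1)\mid\mathcal D]}\,\widetilde{W}_m + \sqrt{m}\,\sqrt{\tfrac{m}{n+1}}\,\tilde W_n + m\,\E[G(p_1)],
\]
where $\widetilde{W}_m$ is the conditional standardized sum and $\tilde W_n=\sqrt{n+1}(\E[G(p_1)\mid\mathcal D]-\E[G(p_1)])$. Using $G\le \log n$ and Lemma~\ref{lem:Berry_Esseen} with $g(x)=|x|$ conditional on $\mathcal D$, on the high-probability event $\event_n$ of Lemma~\ref{lem:cond_moments_dkwm} the conditional variance is within $O(\log^6 n/\sqrt n)$ of $1$, so I get
\[
d_K\!\left(\mathcal L(W_m\mid\mathcal D),\,N(0,1)\right)=O\!\left(\tfrac{1}{\sqrt m}+\tfrac{\log^6 n}{\sqrt n}\right)\quad\text{on }\event_n.
\]
In parallel, applying Lemma~\ref{lem:dW_CLT} to the Moran representation of $\tilde W_n$ (with $V_j\sim\mathrm{Exp}(1)$) and the improved unconditional moment bounds yields $d_W(\tilde W_n, N(0,1))=O(\log^6 n/\sqrt n)$; the ratio-vs.-difference step for $\tilde W_n=\tilde W_{1n}/\tilde W_{2n}$ is identical to the one in the proof of Lemma~\ref{lem:fisher_asym_CLT} and contributes only $O(1/\sqrt n)$.

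Combining, the joint distribution of $W_m+\sqrt{m/(n+1)}\tilde W_n$ is $O(1/\sqrt m+\log^6 n/\sqrt n)$-close in Kolmogorov distance to $N(0,1+m/(n+1))$, by the same averaging trick used around equation~\eqref{eq:approx_final_1}. Setting the threshold $t=\sqrt m\sqrt{1+m/n}\,z_{1-\alpha}+m\{1-b_n\log(e/b_n)-\E[G(p_1)]\}$ and using the Lipschitzness of $\bar\Phi$ to absorb the discrepancy $\sqrt{(1+m/n)/(1+m/(n+1))}-1=O(1/n)$ and the mean gap $|1-b_n\log(e/b_n)-\E[G(p_1)]|=O(\log n/n)$ from Theorem~\ref{thm:fisher_DKWM}, the additional error is $O(\log^6 n/\sqrt n)$, closing the proof.

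I do not expect a serious obstacle, since every structural lemma used in Section~\ref{subsec:lem_fisher_asym_CLT} (Berry--Esseen, Wasserstein-CLT, Moran's representation, Lemma~\ref{lem:dK_normal}) applies verbatim. The only delicate point is making sure the crude pointwise bound $G\le \log n$ still suffices to keep all third- and fourth-moment error terms of order $\log^6 n/\sqrt n$; this is why the $\log^6 n$ factor in the statement matches the asymptotic case exactly. Essentially the entire proof can be written as ``repeat the argument of Lemma~\ref{lem:fisher_asym_CLT}, substituting Lemma~\ref{lem:cond_moments_dkwm} for Lemma~\ref{lem:cond_moments} and Theorem~\ref{thm:fisher_DKWM} for Theorem~\ref{thm:fisher_asym}.''
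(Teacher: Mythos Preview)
Your proposal is correct and matches the paper's own proof essentially verbatim: the paper simply observes that the argument for Lemma~\ref{lem:fisher_asym_CLT} depends only on the pointwise bound $G\le \log n$ and the four conditional moment estimates, both of which are supplied for the DKWM adjustment by \eqref{eq:G_upper_dkwm} and Lemma~\ref{lem:cond_moments_dkwm}, and then invokes Theorem~\ref{thm:fisher_DKWM} for the mean. Your write-up actually provides more detail than the paper's one-paragraph proof.
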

\begin{proof}
  Let $G(p) = -\log\lb h^{\dkwm}(p)\rb$. Note that the proof of Lemma \ref{lem:fisher_asym_CLT} only replies on two facts that (1) $-\log\lb h^{\dkwm}\circ \hat{u}^\marg_i\rb\le \log n$, and (2) Lemma \ref{lem:cond_moments} holds for the first four conditional moments. Here, both continue to hold for DKWM-adjusted p-values. Following the steps in Section \ref{subsec:lem_fisher_asym_CLT}, we can show that \eqref{eq:approx_final_1} continues to hold, i.e.,
  \begin{equation}\label{eq:approx_dkwm}
  \sup_{t\in \R}\bigg|\P_{H_0}\lb \frac{1}{\sqrt{m}}\sum_{i=1}^{m}\{G(p_i) - \E[G(p_i)]\} \ge t \rb - \bar{\Phi}\lb \frac{t}{\sqrt{1 + \frac{m}{n+1}}}\rb\bigg| = O\lb \frac{1}{\sqrt{m}} + \frac{\log^6 n}{\sqrt{n}}\rb.
\end{equation}
By Theorem \ref{thm:fisher_DKWM},
\[\E[G(p_i)] = 1 - b_n\log\lb\frac{e}{b_n}\rb + O\lb\frac{\log n}{n}\rb.\]
Using the same argument below \eqref{eq:approx_final_1} in the proof of Lemma \ref{lem:fisher_asym_CLT}, we can prove the result.
\end{proof}

\subsubsection{Effective $\alpha$-level}
\begin{thm}\label{thm:fisher_dkwm_effective_alpha}
  \begin{enumerate}[(a)]
  \item  Assume that $m = \gamma n$ for some constant $\gamma \in (0, \infty)$. The type-I error of Fisher's combination test applied to $h^\dkwm\circ \hat{u}_i^\marg$'s
    \[\P_{H_0}\lb\sum_{i=1}^{m} -2\log(h^\dkwm \circ \hat{u}_i^\marg)\ge \chi^2(2m; 1 - \alpha)\rb \le \exp\left\{-C(\gamma, \delta)\log^2 n\right\},\]
    for some constant $C(\gamma, \delta) > 0$ that only depends on $\gamma$ and $\delta$. 
  \item Assume that $m \rightarrow \infty$ and $m = o(n / \log^2 n)$. The type-I error of Fisher's combination test applied to $h^\dkwm\circ \hat{u}_i^\marg$'s
  \begin{align*}
    &\P_{H_0}\lb\sum_{i=1}^{m} -2\log(h^\dkwm \circ \hat{u}_i^\marg)\ge \chi^2(2m; 1 - \alpha)\rb = \alpha + o(1).
  \end{align*}
  \end{enumerate}
\end{thm}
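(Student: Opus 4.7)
The plan is to handle the two regimes by dispatching each to the appropriate lemma that has already been established in the excerpt. Part (b), where $m = o(n/\log^2 n)$, is asymptotically delicate and calls for the CLT-style approximation in Lemma \ref{lem:fisher_dkwm_CLT}, in exactly the spirit of Theorem \ref{thm:fisher_asym_effective_alpha_decay}. Part (a), where $m = \gamma n$, asks for a super-polynomial decay; here the CLT error $O(\log^6 n/\sqrt{n})$ is far too crude, and I would instead invoke the exponential tail bound in Lemma \ref{lem:fisher_dkwm_tail}, whose built-in $\exp\{-C(\log n)^2\}$ factor is precisely the promised rate.

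For part (b), I would pick $\alpha_n$ so that the $\chi^2$ critical value matches the standardized threshold in Lemma \ref{lem:fisher_dkwm_CLT}, i.e.
\begin{equation*}
  \sqrt{m}\,\sqrt{1 + \tfrac{m}{n}}\,z_{1-\alpha_n} + m\bigl\{1 - b_n\log(e/b_n)\bigr\} = \tfrac{1}{2}\chi^2(2m; 1-\alpha).
\end{equation*}
Using $\chi^2(2m; 1-\alpha)/2 = m + \sqrt{m}\,z_{1-\alpha} + o(\sqrt{m})$, this rearranges to
\begin{equation*}
  z_{1-\alpha_n} = \frac{z_{1-\alpha} + \sqrt{m}\,b_n\log(e/b_n) + o(1)}{\sqrt{1 + m/(n+1)}}.
\end{equation*}
Because $b_n = \Theta(1/\sqrt{n})$ and $\log(e/b_n) = \Theta(\log n)$, the corrective numerator term is $\sqrt{m}\,b_n\log(e/b_n) = \Theta\bigl(\sqrt{m\log^2 n / n}\bigr)$, which vanishes precisely under the hypothesis $m = o(n/\log^2 n)$; the denominator also tends to $1$. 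Hence $z_{1-\alpha_n}\to z_{1-\alpha}$, so $\alpha_n \to \alpha$, and Lemma \ref{lem:fisher_dkwm_CLT} yields $\P_{H_0}(\text{reject}) = \alpha_n + O(1/\sqrt{m} + \log^6 n/\sqrt{n}) = \alpha + o(1)$.

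For part (a), I would set $t \defeq \tfrac{1}{2}\chi^2(2m; 1-\alpha) - m\{1 - b_n\log(e/b_n)\}$ so that the event of interest is exactly the one controlled by Lemma \ref{lem:fisher_dkwm_tail}. Expanding, $t = m\,b_n\log(e/b_n) + \sqrt{m}\,z_{1-\alpha} + O(m\log n/n)$; with $m = \gamma n$ the first summand is $\Theta(\sqrt{n\log n})$ and dominates, so $t = \Theta(\sqrt{n\log n}) > 0$ for large $n$. Plugging into the lemma: $\min(n/m, 1)$ equals the positive constant $\min(1/\gamma, 1)$; $t^2/m = \Theta(\log n)$; and $t/\log n = \Theta(\sqrt{n/\log n})$, which exceeds $t^2/m$ once $\log^3 n < n$. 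Thus $\min(t^2/m, t/\log n) = \Theta(\log n)$, and the total exponent reduces to $-C\log^2 n - \Theta(\log n)$, bounded above by $-C(\gamma,\delta)\log^2 n$ for a suitable constant. This delivers the claim.

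The main obstacle — really the only substantive verification — is establishing in part (a) that $t$ is both positive and of the correct order $\sqrt{n\log n}$, since only then does Lemma \ref{lem:fisher_dkwm_tail} yield a nontrivial bound. Conceptually, the $\log^2 n$ decay arises because the DKWM adjustment shifts the mean of the combination statistic down by $m\,b_n\log(e/b_n) = \Theta(\sqrt{n\log n})$, which in the dense regime $m\asymp n$ is much larger than the $\sqrt{m}$ scale of the CLT fluctuations; the test statistic must thus cross a threshold well into the Bernstein tail of the conditional distribution, where the $\exp\{-C\log^2 n\}$ factor — an artifact of the event $\mathcal{E}_n$ engineered in the proof of Lemma \ref{lem:cond_moments_dkwm} — governs the rate.
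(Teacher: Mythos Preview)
Your approach is identical to the paper's: part (a) via the tail bound of Lemma \ref{lem:fisher_dkwm_tail} after centering, and part (b) via Lemma \ref{lem:fisher_dkwm_CLT} by solving for an effective level $\alpha_n$. One arithmetic slip in part (a): with $m=\gamma n$, the shift $m\,b_n\log(e/b_n)$ is $\Theta(\sqrt{n}\,\log n)$, not $\Theta(\sqrt{n\log n})$, so in fact $t^2/m=\Theta(\log^2 n)$ and $t/\log n=\Theta(\sqrt{n})$; this does not affect your conclusion, since the $\exp\{-C(\log n)^2\}$ term built into Lemma \ref{lem:fisher_dkwm_tail} already delivers the claimed rate once you have verified $t>0$.
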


\begin{proof}
Throughout the proof we suppress $H_0$ from the expectation $\E_{H_0}$ and $\P_{H_0}$ because we will only consider the global null.   
  \begin{enumerate}[(a)]
  \item Let
    \[t_{n, m} = \frac{\chi^2(2m; 1 - \alpha)}{2} - m\E[-\log(h^\dkwm \circ \hat{u}_i^\marg)].\]
    Then, by Lemma \ref{lem:fisher_dkwm_tail},
    \begin{align*}
      &\P_{H_0}\lb \sum_{i=1}^{m} -2\log(h^\dkwm \circ \hat{u}_i^\marg)\ge \chi^2(2m; 1 - \alpha)\rb\\
      & = \P_{H_0}\lb \sum_{i=1}^{m} \{-\log(h^\dkwm \circ \hat{u}_i^\marg) - \E[-\log(h^\dkwm \circ \hat{u}_i^\marg)]\}\ge t_{n, m}\rb\\
      & \le \exp\left\{-C(\log n)^2 - C\min\left\{\frac{n}{m}, 1\right\}\min\left\{\frac{t_{n,m}^2}{m}, \frac{t_{n,m}}{\log n}\right\}\right\}.
    \end{align*}
    By \eqref{eq:chi_normal_approx},
    \[\frac{\chi^2(2m; 1 - \alpha) - 2m}{2} = \sqrt{m}z_{1-\alpha} + O\lb 1\rb.\]
    By Theorem \ref{thm:fisher_DKWM},
    \begin{align*}
      t_{n, m} &= mb_{n}\log\lb \frac{e}{b_n}\rb + \sqrt{m}z_{1-\alpha} + O\lb 1 + \frac{m\log n}{n}\rb\\
               & = mb_{n}\log\lb \frac{e}{b_n}\rb + \sqrt{m}z_{1-\alpha} + O\lb\log n\rb.
    \end{align*}
    Since $m = \gamma n$, there exists a constant $C(\gamma, \delta) > 0$ that only depends on $\gamma$ and $\delta$ such that
    \[t_{n, m}\ge C(\gamma, \delta)\sqrt{n}\log n.\]
    The proof is then completed.
  \item Choose $\alpha_n\in (0, 1)$ such that
    \[\sqrt{1 + \frac{m}{n}}z_{1 - \alpha_{n}} - \sqrt{m}b_n\log\lb\frac{e}{b_n}\rb = \frac{\chi^2(2m; 1 - \alpha) - 2m}{2\sqrt{m}} \triangleq A_m.\]
    By Lemma \ref{lem:fisher_dkwm_CLT},
    \begin{align*}
      &\bigg|\P_{H_0}\lb\sum_{i=1}^{m} -2\log(h^\dkwm \circ \hat{u}_i^\marg)\ge \chi^2(2m; 1 - \alpha)\rb - \alpha_n\bigg|\\
      & = \bigg|\P_{H_0}\lb\sum_{i=1}^{m} -\log(h^\dkwm \circ \hat{u}_i^\marg)\ge \sqrt{m}\sqrt{1 + \frac{m}{n}}z_{1- \alpha_n} + m \left\{ 1 - b_n\log\lb\frac{e}{b_n}\rb\right\}\rb - \alpha_n\bigg|\\
      & = O\lb \frac{1}{\sqrt{m}} + \frac{\log^6 n}{\sqrt{n}}\rb = o(1).
    \end{align*}
    Note that $A_{m} = z_{1-\alpha} + o(1)$ and $m = o(n / \log^2 n)$,
    \[z_{1-\alpha_n} = z_{1- \alpha} + o(1) + O(\sqrt{m} b_n\log n) = z_{1 - \alpha} + o(1).\]
    This implies $\alpha_n = \alpha + o(1)$ and hence
    \[\bigg|\P_{H_0}\lb\sum_{i=1}^{m} -2\log(h^\dkwm \circ \hat{u}_i^\marg)\ge \chi^2(2m; 1 - \alpha)\rb - \alpha\bigg| = o(1).\]
  \end{enumerate}
\end{proof}

\subsection{Simes adjustment}
\subsubsection{Mean of Fisher's combination statistic}
\begin{thm}[Part (c) of Theorem~\ref{theorem:fisher-mean}]\label{thm:fisher_simes}
  Assume that $k = \lceil\zeta n\rceil$ for some $\zeta > 0$. Then
  \[\E_{H_0}[-\log (h^\simes \circ \hat{u}^\marg)] \le 1 - \zeta - (1 - \zeta)\log (1 - \zeta) + O\lb\frac{\log n}{n}\rb.\]
\end{thm}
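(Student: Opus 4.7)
The plan is to evaluate $\E_{H_0}[-\log(h^\simes \circ \hat{u}^\marg)]$ as an explicit finite sum and control it by a single judicious application of Jensen's inequality. Under the global null with continuous scores, $\hat{u}^\marg$ is uniformly distributed on $\{1/(n+1), \ldots, 1\}$, so the expectation equals $(n+1)^{-1}\sum_{j=1}^{n+1} -\log(b_j^\simes)$. Since $b_j^\simes = 1$ for $j \ge n-k+2$, only $j = 1, \ldots, n-k+1$ contribute. The $j=1$ term equals $-\log(1-\delta^{1/k})$, which---using $\delta^{1/k} = 1 - \log(1/\delta)/k + O(1/k^2)$---is $O(\log n)$ and is therefore absorbed into the $O(\log n/n)$ error after dividing by $n+1$.

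The real work is on $2 \le j \le n-k+1$. Writing $G_j := \big(\prod_{r=0}^{k-1}(1-(j-1)/(n-r))\big)^{1/k}$ so that $b_j^\simes = 1-\delta^{1/k}G_j$, I would apply Jensen's inequality to the concave function $\log$:
\[
\log G_j \,\le\, \log\Big(1 - \tfrac{(j-1)H_{n,k}}{k}\Big), \qquad H_{n,k} := \sum_{\ell=n-k+1}^{n}\tfrac{1}{\ell}.
\]
This gives $b_j^\simes \ge 1-G_j \ge (j-1)H_{n,k}/k$, whence $-\log(b_j^\simes)\le \log k - \log(j-1) - \log H_{n,k}$.

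From here it is routine: summing, applying Stirling $\log((n-k)!) = (n-k)\log(n-k) - (n-k) + O(\log n)$, and substituting $k = \zeta n + O(1)$ together with $H_{n,k} = -\log(1-\zeta) + O(1/n)$ yields
\[
\E_{H_0}[-\log(h^\simes \circ \hat{u}^\marg)]\,\le\, (1-\zeta)\Big[1 + \log\tfrac{\zeta}{(1-\zeta)\beta}\Big] + O\big(\tfrac{\log n}{n}\big),\qquad \beta := -\log(1-\zeta).
\]
To convert this into the stated form, I would invoke the elementary inequality $\log[\zeta/((1-\zeta)\beta)] \le \beta$, which exponentiates to $\zeta \le (1-\zeta)\beta e^\beta = \beta$ (using $e^\beta = 1/(1-\zeta)$), i.e., $\zeta \le -\log(1-\zeta)$---a standard fact on $(0,1)$. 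Substituting delivers the target $(1-\zeta)(1+\beta) = 1-\zeta-(1-\zeta)\log(1-\zeta)$.

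The main obstacle is the choice of bound on $G_j$. A more familiar estimate such as $\log(1-x) \le -x$ applied termwise would give $G_j \le \exp(-(j-1)H_{n,k}/k)$, whose corresponding Riemann sum approximates a dilogarithm-valued integral that does not collapse to a closed form and can be checked (e.g.\ at $\zeta \to 0$, where it limits to $\mathrm{Li}_2(1) - \mathrm{Li}_2(1/e) \approx 1.24$) to exceed the target constant $1$. The single application of Jensen proposed above is slightly looser pointwise but is exactly the right strength to make the sum collapse under Stirling into the stated closed-form upper bound.
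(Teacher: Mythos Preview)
Your proof is correct but takes a more circuitous route than the paper's. The paper obtains the bound on $b_j^\simes$ by the elementary observation that each factor satisfies $(i-r)/(n-r)\le i/n$ for $r\ge 0$ and $i\le n$, which (with $i=n+1-j$) gives directly $b_j^\simes \ge 1-\delta^{1/k}(n+1-j)/n$; after replacing $\delta^{1/k}$ by $1$ (handling the boundary term $j=1$ separately, exactly as you do), the resulting sum $\frac{1}{n}\sum_{i=k}^{n-1}-\log(1-i/n)$ is bounded by the integral $\int_\zeta^1 -\log(1-x)\,dx$, which \emph{evaluates} to $1-\zeta-(1-\zeta)\log(1-\zeta)$.

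Your Jensen step yields the sharper pointwise bound $b_j^\simes \ge (j-1)H_{n,k}/k \ge (j-1)/n$, but you then pay for this with Stirling's formula and a further inequality $\zeta \le -\log(1-\zeta)$ to reach the same constant. In effect, the extra precision gained from Jensen is exactly cancelled by the looseness of that final inequality. The paper's argument is shorter and lands directly on the target via the integral; your approach is valid and self-contained but involves more machinery for the same conclusion.
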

\begin{proof}
  Since $(i - j + 1) / (n - j + 1)\le i / n$ for any $j = 1, \ldots, k$, 
  \[h^\simes\lb 1 - \frac{i}{n+1}\rb\ge 1 - \delta^{1/k}\frac{i}{n}.\]
  Moreover,
  \[h^\simes\lb 1 - \frac{i}{n+1}\rb = 1\quad \text{if }i < k.\]
  Then
  \begin{align*}
    \E_{H_0}[-\log (h^\simes \circ \hat{u}^\marg)]
    &\le \frac{1}{n+1}\sum_{i=k}^{n}-\log\lb 1 - \delta^{1/k}\frac{i}{n}\rb\le \frac{1}{n}\sum_{i=k}^{n}-\log\lb 1 - \delta^{1/k}\frac{i}{n}\rb\\
    & \le \frac{\log (1-\delta^{1/k})}{n} + \frac{1}{n}\sum_{i=k}^{n-1}-\log\lb 1 - \frac{i}{n}\rb,
  \end{align*}
  where the last line uses the fact that $\delta \le 1$. Since the mapping $x\mapsto -\log(1 - x)$ is increasing,
  \begin{align*}
    \frac{1}{n}\sum_{i=k}^{n-1}-\log\lb 1 - \frac{i}{n}\rb &\le \int_{k/n}^{1}-\log(1 - x)dx \le \int_{\zeta}^{1}-\log(1 - x)dx\\
                                                           & \le \int_{0}^{1 - \zeta}(-\log x)dx = 1 - \zeta - (1 - \zeta)\log (1 - \zeta).
  \end{align*}
  Thus,
  \[\E_{H_0}[-\log (h^\simes \circ \hat{u}^\marg)]\le 1 - \zeta - (1 - \zeta)\log (1 - \zeta) + O\lb \frac{\log(1 - \delta^{1/k})}{n}\rb.\]
  The proof is completed by noting that
  \begin{align}
    -\log\lb 1 - \delta^{1/k}\rb & = -\log\lb 1 - \exp\left\{-\frac{\log (1 / \delta)}{k}\right\}\rb \le -\log\lb \frac{\log (1 / \delta)}{k}\rb\le \log n - \log\log\lb\frac{1}{\delta}\rb\label{eq:simes_max}\\
    & = O(\log n)\nonumber.
  \end{align}
\end{proof}

\subsubsection{Conditional variance of adjusted p-values}
\begin{lemma}\label{lem:cond_var_simes}
  There exist a universal constant $c > 0$ and a constant $n_0(\delta)$ that only depend on $\delta$ such that, for any $n\ge n_0(\delta)$,
  \begin{equation*}
    \P\lb\E_{H_0}[\log^2 (h^\simes \circ \hat{u}^\marg)\mid \D]\le \log n\rb \le 1 - \exp\left\{ \frac{cn}{\log n}\right\}.
  \end{equation*}
\end{lemma}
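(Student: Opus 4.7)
The proof parallels that of Lemma~\ref{lem:cond_moments}, but now the function $G(p)\defeq -\log(h^\simes(p))$ is no longer $O(1)$ uniformly in $p$: the minimum value $1-\delta^{1/k}=\Theta(1/n)$ is attained at $p=1/(n+1)$, so $G$ can be as large as $\Theta(\log n)$. Applying Moran's representation (Lemma~\ref{lem:moran}) exactly as in the proof of Theorem~\ref{thm:combination_test}, I would first write $\E_{H_0}[G^2(\hat{u}^\marg)\mid\mathcal{D}]$ as the ratio of $\frac{1}{n+1}\sum_{j=1}^{n+1}G^2(j/(n+1))V_j$ to $\frac{1}{n+1}\sum_{j=1}^{n+1}V_j$, where $V_1,\ldots,V_{n+1}\stackrel{\mathrm{i.i.d.}}{\sim}\mathrm{Exp}(1)$, and then control numerator and denominator separately.

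The two deterministic envelopes I would establish are $\max_{j}G(j/(n+1))=O(\log n)$, which follows from \eqref{eq:simes_max}, and $\sum_{j=1}^{n+1}G^p(j/(n+1))=O(n)$ for $p\in\{2,4\}$, which follows from \eqref{eq:bi_simes} and the standard integral comparison $\sum_{j}\log^p(n/j)\lesssim n\int_1^\infty (\log v)^p/v^2\,dv=O(n)$. The hidden constants depend only on $\delta$ and $\zeta=k/n$.

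Given these estimates, each centred summand $G^2(j/(n+1))(V_j-1)$ is sub-exponential with parameters $\lb 2G^2(j/(n+1)),\,2G^2(j/(n+1))\rb$, so the sum has aggregate variance proxy $\nu^2=4\sum_{j}G^4(j/(n+1))=O(n)$ and boundedness parameter $b=2\max_{j}G^2(j/(n+1))=O(\log^2 n)$. Bernstein's inequality then yields
\[
\P\lb\sum_{j}G^2(j/(n+1))(V_j-1)\ge t\rb\le 2\exp\left\{-\min\lb\frac{t^2}{4\nu^2},\,\frac{t}{4b}\rb\right\}.
\]
Choosing $t$ of order $n\log n$ makes the linear arm $t/(4b)=\Theta(n/\log n)$ dominate and gives a failure probability of $\exp\{-c_1 n/\log n\}$. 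Combining this with the standard tail bound $\sum_{j}V_j\ge(n+1)/2$ with probability $1-\exp\{-\Omega(n)\}$ (cf.~\eqref{eq:term3_dkwm_tail_prob}), the ratio is bounded by $(O(n)+t)/\Omega(n)=O(\log n)$, and absorbing the constant prefactor into $c$ yields the stated bound with probability $\ge 1-\exp\{-cn/\log n\}$.

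The main obstacle is the delicate trade-off inside Bernstein's inequality: with $\nu^2\asymp n$ but $b\asymp\log^2 n$, a constant-order threshold only yields the weaker rate $\exp\{-cn/\log^2 n\}$ through the quadratic (sub-Gaussian) arm, and it is precisely by allowing the threshold $t\asymp n\log n$ that the linear (sub-exponential) arm takes over and delivers the advertised $\exp\{-cn/\log n\}$. A secondary technical point is verifying that $\sum_{j}G^4(j/(n+1))$ remains $O(n)$ despite the $\log n$ blow-up of $G$ for the smallest indices; this relies on the integral comparison above, together with uniformity of the estimate in $k=\lceil\zeta n\rceil$.
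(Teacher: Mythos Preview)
Your approach is essentially the same as the paper's: both use the ratio representation from Lemma~\ref{lem:moran}, bound $\max_j G(j/(n+1)) = O(\log n)$ via \eqref{eq:simes_max} and $\frac{1}{n+1}\sum_j G^4(j/(n+1)) = O(1)$, apply Bernstein's inequality for sub-exponentials to the numerator with a threshold of order $\log n$ on the averaged scale (equivalently $n\log n$ on the summed scale, as you write), and combine with the lower-tail bound on $\sum_j V_j$. The paper obtains the fourth-moment bound from the cruder inequality $h^\simes(1-i/(n+1)) \ge 1 - i/n$ (see the proof of Theorem~\ref{thm:fisher_simes}) and an integral comparison with $\int_0^1 \log^4 x\,dx = 24$, rather than via \eqref{eq:bi_simes}, but either route gives the needed $O(1)$.

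One small slip in your final step: the phrase ``absorbing the constant prefactor into $c$'' does not quite work as stated, since $c$ sits in the probability exponent while the prefactor multiplies $\log n$. What you actually need is to fix the threshold as a specific small multiple, e.g.\ $t=(n+1)\log n/3$; then the numerator is at most $(n+1)(C_\delta + \log n/3) \le (n+1)\log n/2$ once $n\ge n_0(\delta)$, and with denominator $\ge (n+1)/2$ the ratio is at most $\log n$ exactly. This is precisely what the paper does, and it only affects the constant $c$ in the exponent and the threshold $n_0(\delta)$.
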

\begin{proof}
  Let $G(p) = -\log (h^\simes(p))$. By \eqref{eq:E_log_D},
  \[\E_{H_0}\left[\log^2 (h^\asym \circ \hat{u}^\marg)\mid \D\right] = \frac{\frac{1}{n+1}\sum_{i=1}^{n+1}G^2\left(\frac{i}{n+1}\right)V_{i}}{\frac{1}{n+1}\sum_{i=1}^{n+1}V_{i}}.\]
  By \eqref{eq:simes_max},
  \begin{equation*}
    G\lb\frac{i}{n+1}\rb\le -\log(1 - \delta^{1/k})\le \log n -  \log\log\frac{1}{\delta}.
  \end{equation*}
  Analogous to the proof of Theorem \ref{thm:fisher_simes},
  \begin{equation}
    \label{eq:G4_simes}
    \frac{1}{n+1}\sum_{i=1}^{n+1}G^4\lb\frac{i}{n+1}\rb\le \frac{\log^4(1 - \delta^{1/k})}{n} + \int_{0}^{1}(\log^4 x)dx = 24 + O_{\delta}\lb\frac{\log^4 n}{n}\rb,
  \end{equation}
  where we use $O_{\delta}$ herein to hide the dependence on $\delta$. When $n\ge n_0(\delta)$ for some sufficiently large $n_0(\delta)$ that only depends on $\delta$,
  \begin{equation}
    \label{eq:simes_max_element}
    G\lb\frac{i}{n+1}\rb\le -\log(1 - \delta^{1/k})\le 2\log n, \quad \frac{1}{n+1}\sum_{i=1}^{n+1}G^4\lb\frac{i}{n+1}\rb\le 25.
  \end{equation}
  By Bernstein's inequality \citep[e.g.,][Proposition 2.9]{wainwright2019high},
  \[\P\lb \frac{1}{n+1}\sum_{i=1}^{n+1}G^2\left(\frac{i}{n+1}\right)(V_{i} - 1)\ge t\rb\le \exp\lb-\min\left\{\frac{(n+1)t^2}{50}, \frac{(n+1)t}{8(\log n)^2}\right\}\rb.\]
  Then, there exists a universal constant $c > 0$ such that
  \[\P\lb \frac{1}{n+1}\sum_{i=1}^{n+1}G^2\left(\frac{i}{n+1}\right)(V_{i} - 1)\ge \frac{\log n}{3}\rb\le \exp\left\{-\frac{cn}{\log n}\right\}.\]
  By \eqref{eq:G4_simes} and Cauchy-Schwarz inequality,
  \[\frac{1}{n+1}\sum_{i=1}^{n+1}G^2\lb\frac{i}{n+1}\rb = O(1).\]
  Then, for any sufficiently large $n$,
  \[\P\lb \frac{1}{n+1}\sum_{i=1}^{n+1}G^2\left(\frac{i}{n+1}\right)V_{i}\ge \frac{\log n}{2}\rb\le \exp\left\{-\frac{cn}{\log n}\right\}.\]
  By \eqref{eq:term3_dkwm_tail_prob}, 
  \[\P\lb \frac{1}{n+1}\sum_{i=1}^{n+1}V_i \le \frac{n+1}{2}\rb\le \exp\lb-\frac{n+1}{16}\rb.\]
  The result is then proved by combining the above two inequalities.
\end{proof}

\subsubsection{Effective $\alpha$-level}
\begin{lemma}\label{lem:fisher_simes_tail}
  Under the global null, there exists a universal constant $C > 0$ and $n_0(\delta)$ that only depends on $\delta$, such that for any $n\ge n_0(\delta)$ and $t > 0$,
  \begin{align*}
    &\P_{H_0}\lb \sum_{i=1}^{m}\left\{-\log\lb h^{\simes} \circ \hat{u}^\marg_i\rb - \E\left[-\log\lb h^{\simes} \circ \hat{u}^\marg\rb\right]\right\}\ge t\rb\\
    & \le \exp\left\{-C\frac{n}{\log n} - C\min\left\{\frac{n}{m}, 1\right\}\min\left\{\frac{t^2}{m\log n}, \frac{t}{\log n}\right\}\right\}.
  \end{align*}
\end{lemma}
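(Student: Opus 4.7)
The proof of Lemma \ref{lem:fisher_simes_tail} will follow exactly the template of Lemma \ref{lem:fisher_dkwm_tail}, with two key numerical substitutions that reflect the different behavior of the Simes adjustment. Writing $G(p) = -\log(h^\simes(p))$ and $p_i = \hat{u}_i^\marg$, I first reuse the decomposition \eqref{eq:dkwm_tail_prob}: if $\sum_{i=1}^{m}\{G(p_i) - \E[G(p_i)]\} \ge t$, then at least one of
\begin{enumerate}[(i)]
\item $\sum_{i=1}^{m}\{G(p_i) - \E[G(p_i)\mid\D]\} \ge t/3$,
\item $\sum_{j=1}^{n+1} G(j/(n+1))(V_j - 1) \ge t(n+1)/(3m)$ (via the representation \eqref{eq:E_log_D}),
\item $\sum_{j=1}^{n+1} V_j \le (n+1)/2$
\end{enumerate}
must hold, where $V_j \stackrel{i.i.d.}{\sim}\mathrm{Exp}(1)$ are the representation variables from Lemma \ref{lem:moran}.

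For (i), I condition on $\D$ and apply Bernstein's inequality. The uniform bound $G(p_i) \le 2\log n$ from \eqref{eq:simes_max_element} serves as the almost-sure envelope, and Lemma \ref{lem:cond_var_simes} provides the conditional second-moment bound $\E[G^2(p_i)\mid \D] \le \log n$ on an event of probability at least $1 - \exp\{-cn/\log n\}$. This is precisely where both the $\log n$ factor inside the $\min$ of the final bound and the leading $\exp\{-Cn/\log n\}$ term originate, replacing the analogous $O(1)$ variance and $\exp\{-c\log^2 n\}$ bad-event probability from the DKWM proof. After combining with the complement of this good-event, piece (i) contributes $\exp\{-C\min(t^2/(m\log n),\, t/\log n)\} + \exp\{-cn/\log n\}$.

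For (ii), the $V_j - 1$ are sub-exponential with parameters $(2,2)$ and the weights are bounded by $2\log n$. The unconditional variance $\tfrac{1}{n+1}\sum_{j=1}^{n+1} G^2(j/(n+1)) = O(1)$ follows from \eqref{eq:G4_simes} and Cauchy--Schwarz. Bernstein for weighted sums of sub-exponential variables then yields a bound of the form $\exp\{-C\min(nt^2/m^2,\, nt/(m\log n))\}$; this is the source of the factor $n/m$ in the regime $n/m \le 1$. Piece (iii) contributes the harmless $\exp\{-c(n+1)/16\}$ by the argument of \eqref{eq:term3_dkwm_tail_prob}, which is absorbed into the $\exp\{-Cn/\log n\}$ term.

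The main obstacle will be purely combinatorial bookkeeping: verifying that the three exponential bounds glue together to the compact expression $\exp\{-C n/\log n - C\min(n/m,1)\min(t^2/(m\log n),\, t/\log n)\}$. Concretely, when $n/m \ge 1$ piece (i) dominates and supplies the inner $\min(t^2/(m\log n),\, t/\log n)$; when $n/m < 1$ piece (ii) dominates, and its exponent factors as $(n/m)\cdot\min(t^2/(m\log n),\, t/\log n)$ up to absolute constants, producing the outer $\min(n/m, 1)$. Adjusting the constants at the end, taking $n \ge n_0(\delta)$ large enough so that \eqref{eq:simes_max_element} holds, and absorbing all $\delta$-dependent prefactors into the universal $C$ completes the argument.
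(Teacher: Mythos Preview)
The proposal is correct and follows essentially the same approach as the paper: the same three-piece decomposition \eqref{eq:simes_tail_prob}, conditional Bernstein for piece (i) using Lemma~\ref{lem:cond_var_simes} and the envelope \eqref{eq:simes_max_element}, sub-exponential Bernstein for piece (ii), and the standard bound for piece (iii). Your discussion of how the three exponents combine into the stated form is actually more explicit than the paper's own treatment, which simply states the result after displaying the individual bounds.
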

\begin{proof}
  Let $G(p) = -\log (h^\simes(p))$ and write $p_i$ for $\hat{u}^\marg_i$. Throughout the proof we suppress $H_0$ from the expectation $\E_{H_0}$ and $\P_{H_0}$ because we will only consider the global null. Analogous to \eqref{eq:dkwm_tail_prob},
  \begin{align}
    &\P\lb \sum_{i=1}^{m}\{G(p_i) - \E[G(p_i)]\}\ge t\rb\le     \P\lb \sum_{i=1}^{m}\{G(p_i) - \E[G(p_i)\mid \D]\}\ge \frac{t}{3}\rb\nonumber\\
    & \quad + \P\lb \sum_{i=1}^{n+1}G\left(\frac{i}{n+1}\right)(V_{i} - 1)\ge \frac{t(n+1)}{3m}\rb + \P\lb \sum_{i=1}^{n+1}V_{i} \le \frac{n+1}{2}\rb\label{eq:simes_tail_prob}.
  \end{align}
  By Lemma \ref{lem:cond_var_simes}, with probability $1 - \exp\{-cn/\log n\}$, 
  \[\Var[G(p_i)\mid \D] \le \log n,\]
  when $n$ is sufficiently large. By \eqref{eq:simes_max_element},
  \[G\lb\frac{i}{n+1}\rb\le 2\log n,\]
  when $n$ is sufficiently large.  
  \begin{align*}
    \P\lb \sum_{i=1}^{m}\{G(p_i) - \E[G(p_i)\mid \D]\}\ge \frac{t}{3}\rb\le \exp\left\{-\frac{cn}{\log n}\right\} + \exp \left\{-\frac{t^2}{18m\log n}\right\} + \exp\left\{-\frac{t}{4\log n}\right\}.
  \end{align*}
  Similar to \eqref{eq:term2_dkwm_tail_prob} and \eqref{eq:term3_dkwm_tail_prob}, we have
  \begin{align*}
    &\P\lb \sum_{i=1}^{n+1}G\left(\frac{i}{n+1}\right)(V_{i} - 1)\ge \frac{t(n+1)}{3m}\rb + \P\lb \sum_{i=1}^{n+1}V_{i} \le \frac{n+1}{2}\rb\\
    & \le \exp\left\{-c\frac{n}{m}\min\left\{\frac{t^2}{m}, \frac{t}{\log n}\right\} - cn\right\},
  \end{align*}
  for some universal constant $c > 0$. The result is then proved by \eqref{eq:simes_tail_prob}.
\end{proof}

\begin{thm}\label{thm:fisher_simes_effective_alpha}
  Assume $m / \log n\rightarrow \infty$. The type-I error of Fisher's combination test applied to $h^\simes\circ \hat{u}_i^\marg$'s
  \begin{align*}
    &\P_{H_0}\lb\sum_{i=1}^{m} -2\log(h^\simes \circ \hat{u}_i^\marg)\ge \chi^2(2m; 1 - \alpha)\rb \le \exp\left\{-C(\zeta, \delta)\frac{\min\{m, n\}}{\log n}\right\},
  \end{align*}
  for some constant $C(\zeta, \delta) > 0$ that only depends on $\zeta$ and $\delta$.
\end{thm}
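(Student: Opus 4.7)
The strategy parallels the proof of Theorem~\ref{thm:fisher_dkwm_effective_alpha}(a): I would center the Fisher statistic, lower-bound the resulting threshold by a linear-in-$m$ quantity using the mean calculation in Theorem~\ref{thm:fisher_simes}, and then plug this threshold into the Bernstein-type tail bound in Lemma~\ref{lem:fisher_simes_tail}.

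First, I would introduce the centered threshold
\[
t_{n,m} \;=\; \frac{\chi^2(2m; 1 - \alpha)}{2} \;-\; m\,\E_{H_0}[-\log(h^\simes \circ \hat{u}^\marg)],
\]
so that the probability in question equals $\P_{H_0}[\sum_{i=1}^{m}\{-\log(h^\simes\circ\hat{u}_i^\marg) - \E_{H_0}[-\log(h^\simes \circ \hat{u}^\marg)]\} \ge t_{n,m}]$. The standard normal approximation $\chi^2(2m; 1-\alpha)/2 = m + \sqrt{m}\, z_{1-\alpha} + O(1)$ combined with Theorem~\ref{thm:fisher_simes} yields
\[
t_{n,m} \;\ge\; m\, d(\zeta) \;-\; O\!\left(\tfrac{m \log n}{n}\right) \;+\; \sqrt{m}\, z_{1-\alpha} \;+\; O(1),
\]
where $d(\zeta) := \zeta + (1-\zeta)\log(1-\zeta)$. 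A one-line calculus check shows $d(\zeta) > 0$ for every $\zeta \in (0,1)$, since $d(0)=0$ and $d'(\zeta) = -\log(1-\zeta) > 0$. Consequently $d(\zeta)$ is a strictly positive constant depending only on $\zeta$, and the assumption $m/\log n \to \infty$ makes the leading term $m\, d(\zeta)$ dominate the $O(m\log n/n)$ correction once $n$ is large, so that eventually $t_{n,m} \ge \tfrac{1}{2} d(\zeta)\, m$.

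Next, I would feed this bound into Lemma~\ref{lem:fisher_simes_tail} with $t = t_{n,m}$. Because $t \ge \tfrac{1}{2} d(\zeta)\, m$, one has $t^{2}/(m \log n) \ge d(\zeta)^{2}\, m/(4 \log n)$ and $t/\log n \ge d(\zeta)\, m/(2 \log n)$, so the inner $\min$ in the lemma is at least a $\zeta$-dependent constant times $m/\log n$. Multiplying by the outer factor $\min\{n/m, 1\}$ converts $m/\log n$ into $\min\{m, n\}/\log n$, and combining with the $-Cn/\log n$ term already present in the lemma produces an exponent bounded above by $-C(\zeta,\delta)\,\min\{m,n\}/\log n$, as claimed.

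The only mildly delicate step is the calculus fact $d(\zeta) > 0$: without a strictly sub-unit bound on the Simes mean $\E_{H_0}[-\log(h^\simes\circ\hat{u}^\marg)]$, no $\Theta(m)$ drift would appear in $t_{n,m}$ and the Bernstein-type tail bound would not yield the claimed exponential decay. The remainder is routine bookkeeping of constants that may depend on $\zeta$ and $\delta$, closely following the template already established for the DKWM adjustment in Theorem~\ref{thm:fisher_dkwm_effective_alpha}(a).
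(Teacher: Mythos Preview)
Your proposal is correct and follows essentially the same route as the paper's proof: define the centered threshold $t_{n,m}$, use Theorem~\ref{thm:fisher_simes} together with the chi-square normal approximation to show $t_{n,m} = m\,d(\zeta) + o(m)$, and then plug into the tail bound of Lemma~\ref{lem:fisher_simes_tail} to extract the exponent $\min\{m,n\}/\log n$. The one point where you are slightly more explicit than the paper is the calculus verification that $d(\zeta)=\zeta+(1-\zeta)\log(1-\zeta)>0$; the paper simply asserts that a positive constant $C(\zeta,\delta)$ exists, while you spell out why the drift is strictly positive.
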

\begin{proof}
  Let
  \[t_{n, m} = \frac{\chi^2(2m; 1 - \alpha)}{2} - m\E[-\log(h^\simes \circ \hat{u}_i^\marg)].\]
  Then, by Lemma \ref{lem:fisher_simes_tail},
  \begin{align*}
    &\P_{H_0}\lb \sum_{i=1}^{m} -2\log(h^\simes \circ \hat{u}_i^\marg)\ge \chi^2(2m; 1 - \alpha)\rb\\
    & = \P_{H_0}\lb \sum_{i=1}^{m} \{-\log(h^\simes \circ \hat{u}_i^\marg) - \E[-\log(h^\simes \circ \hat{u}_i^\marg)]\}\ge t_{n, m}\rb\\
    & \le \exp\left\{-C\frac{n}{\log n} - C\min\left\{\frac{n}{m}, 1\right\}\min\left\{\frac{t_{n,m}^2}{m\log n}, \frac{t_{n,m}}{\log n}\right\}\right\}.
  \end{align*}
  By \eqref{eq:chi_normal_approx},
  \[\frac{\chi^2(2m; 1 - \alpha) - 2m}{2} = \sqrt{m}z_{1-\alpha} + O\lb 1\rb.\]
  By Theorem \ref{thm:fisher_simes},
  \begin{align*}
    t_{n, m} &= m(\zeta + (1 - \zeta)\log(1 - \zeta)) + o(m)
  \end{align*}
  Thus, there exists $C(\zeta, \delta)> 0$ such that
  \[\P_{H_0}\lb \sum_{i=1}^{m} -2\log(h^\simes \circ \hat{u}_i^\marg)\ge \chi^2(2m; 1 - \alpha)\rb \le \exp\left\{-C(\zeta, \delta)\lb\frac{n}{\log n} + \min\left\{\frac{n}{m}, 1\right\}\frac{m}{\log n}\rb\right\}.\]
  
\end{proof}

\section{Numerical comparisons of different adjustment functions} \label{app:comparison}

In addition to the adjustment functions derived from the generalized Simes inequality and the DKWM inequality, we consider here another class of simultaneous bounds based on the so-called \emph{boundary crossing probability}~\cite{dempster1959generalized, durbin1973distribution, kotel1983computing, siegmund1986boundary}---the probability that $F(z)$ ever crosses $h(\hat{F}_{n}(z))$ for a fixed function $h(\cdot)$. This probability is generally difficult to compute analytically, but the special case of a linear $h(\cdot)$ is an exception.
Assuming that $F$ is the CDF of $\Unif([0, 1])$, let $\hat{F}_{n}(z)$ is the empirical CDF of $S_{1}, \ldots, S_{n}\stackrel{\mathrm{i.i.d.}}{\sim}\Unif([0, 1])$. Then,~\cite{dempster1959generalized} proved that
\begin{equation*}
  \P\left[ \hat{F}_{n}(z)\le b + \frac{1 - b}{1 - a}z, \,\, \forall z\in (0, 1)\right] = 1 - \Delta_{\mathrm{Dempster}}(a, b; n),
 \end{equation*}
 for any $a, b\in (0, 1)$, where
 \begin{equation}\label{eq:Delta_Dempster}
  \Delta_{\mathrm{Dempster}}(a, b; n) 
  \defeq a\sum_{j=0}^{\lfloor n(1 - b)\rfloor}\frac{n!}{j!(n-j)!}\left( a + \frac{1 - a}{1 - b}\frac{j}{n}\right)^{j-1}\left( 1 - a - \frac{1 - a}{1 - b}\frac{j}{n}\right)^{n-j}.
\end{equation}
 If we replace $S_i$ with $1 - S_i$, then $\hat{F}_{n}(z)$ becomes $1 - \hat{F}_{n}(1 - z)$. Further, replacing $z$ by $1 - z$ leads to
\begin{equation}  \label{eq:dempster}
  \P\left[ z \le \frac{1 - a}{1 - b}\hat{F}_{n}(z) + a, \,\, \forall z\in (0, 1)\right] = 1 - \Delta_{\mathrm{Dempster}}(a, b; n).
\end{equation}
For any pair $(a, b)$ with $\Delta_{\mathrm{Dempster}}(a, b; n) = \delta$, we obtain a function $h(z) = a + (1 - a)z / (1 - b)$ satisfying \eqref{eq:uniform_CDF}, which yields the following sequence satisfying \eqref{eq:confidence_band}:
\[b_{i} = a + \frac{1 - a}{1 - b}\frac{i}{n}.\]

Given any $a$, it is easy to compute the corresponding $b$ such that $\Delta_{\mathrm{Dempster}}(a, b; n) = \delta$ via a binary search. 
Note that this leads to adjusted p-values that cannot be lower than $b_1 = a + (1 - a) / (1 - b)n$. 
To ensure a fair comparison with the method based on the generalized Simes inequality, we choose $a$ via another binary search such that the resulting $b_1$ matches that given by the Simes inequality for a particular value of $k$.
If there exists no value of $a$ yielding the same $b_1$ as the Simes method, we set $a$ as to minimize $b_1$. 
Figure~\ref{fig:compare_bseq} compares the adjustment functions yielded by the generalized Simes inequality, the DKWM inequality, and the Dempster exact linear-boundary crossing probability with $k\in \{n/4, n/2\}$ and $n\in \{300, 1000, 3000, 10000\}$ for small marginal p-values within $[0, 0.05]$. It is clear that the Simes adjustment function is the best option in most scenarios, except when $n = 10000$ and $\hat{u}^\marg(X) > 0.03$, in which case the DKWM bound is tighter. Nonetheless, for the purpose of multiple testing, we would rarely expect  p-values above $0.03$ to be significant.

\begin{figure}[!htb]
    \centering
    \includegraphics[width=0.7\textwidth]{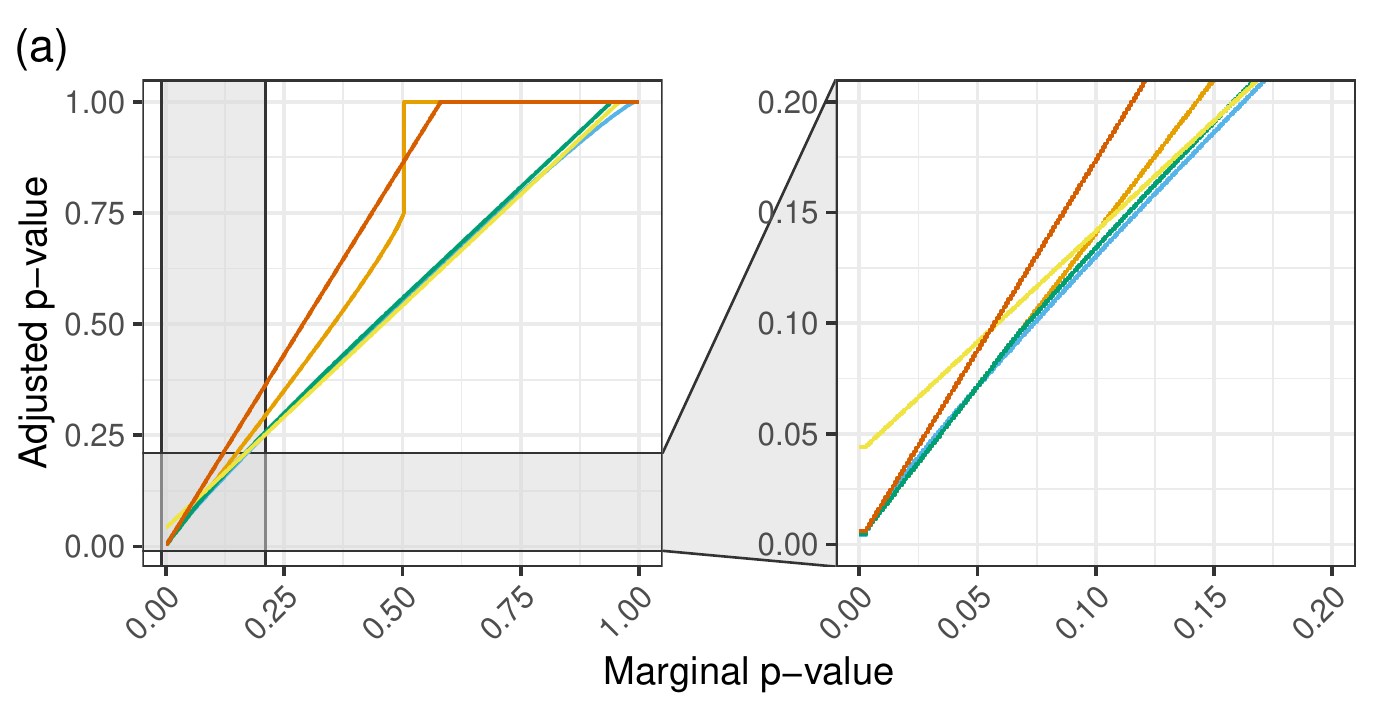}
    \includegraphics[width=0.7\textwidth]{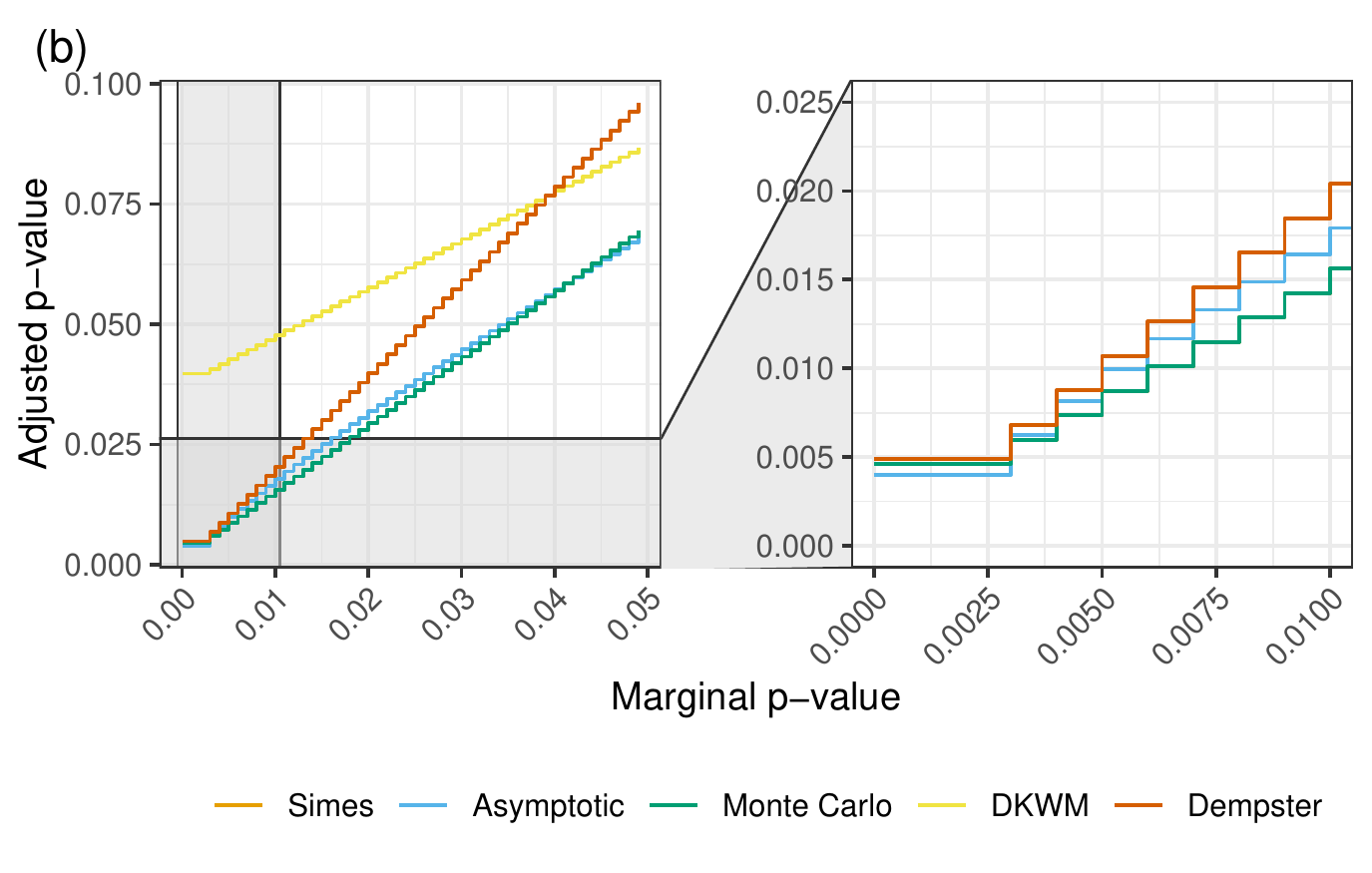}
    \caption{Comparison of different adjustment functions, with $n=1000$ and $\delta=0.1$.}
    \label{fig:compare_bseq}
\end{figure}

\begin{figure}[!htb]
    \centering
    \includegraphics[width=0.7\textwidth]{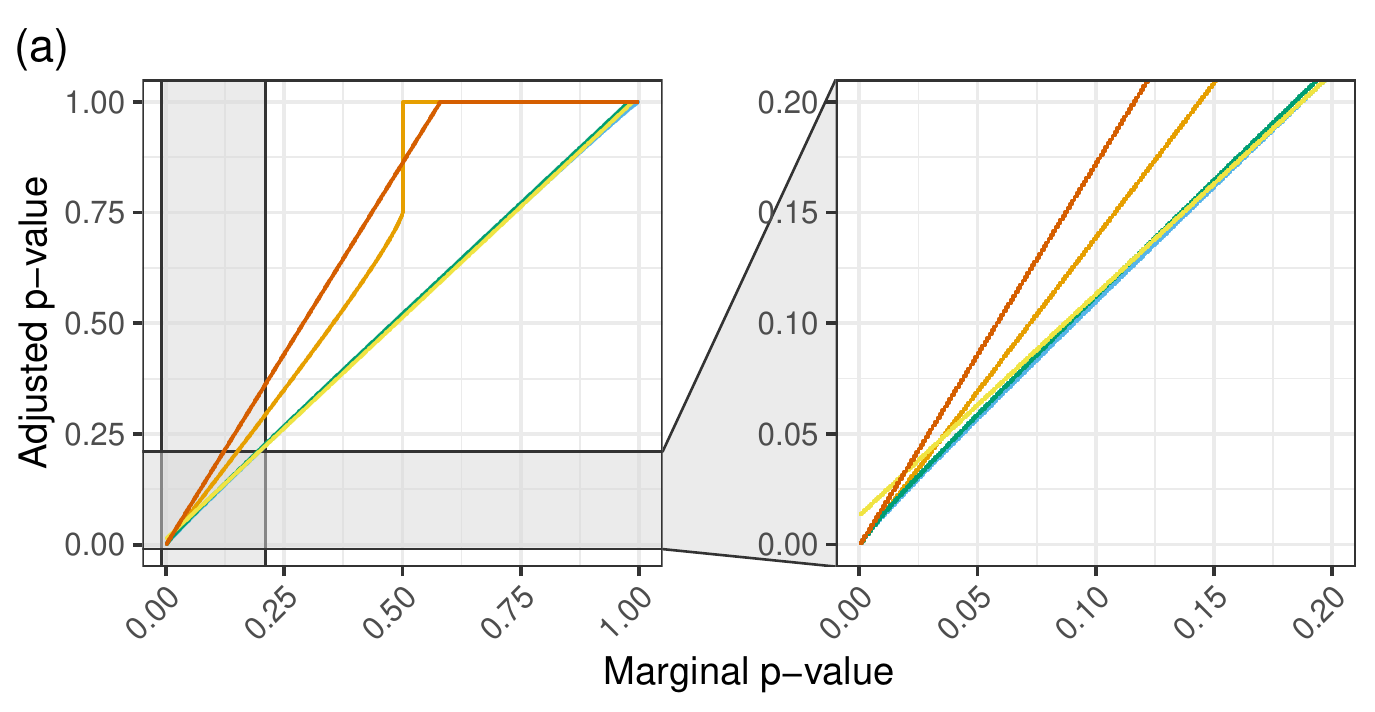}
    \includegraphics[width=0.7\textwidth]{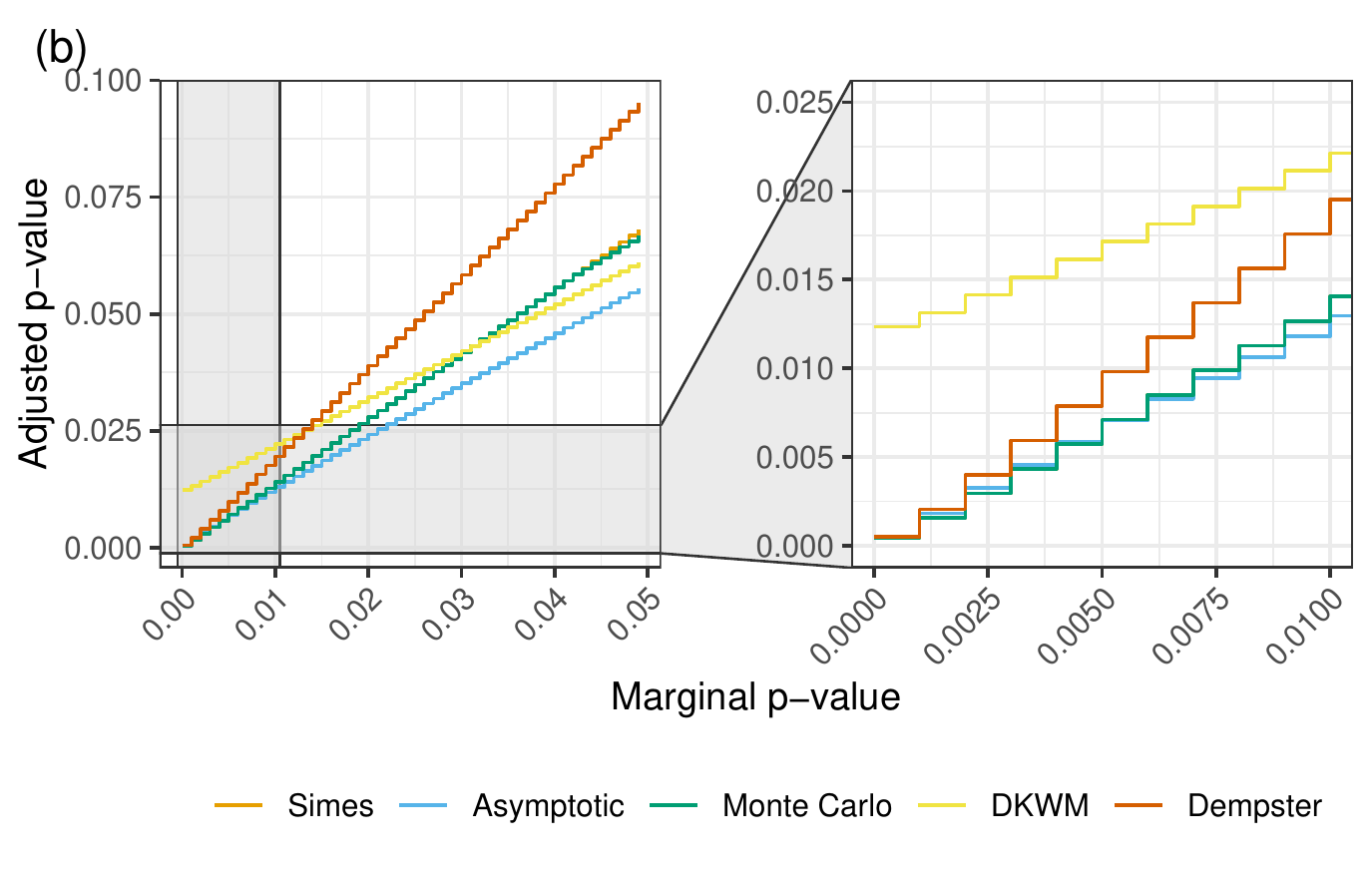}
    \caption{Comparison of different adjustment functions, with $n=10000$ and $\delta=0.1$. Other details are as in Figure~\ref{fig:compare_bseq}.}
    \label{fig:compare_bseq_1000}
\end{figure}

\begin{figure}[!htb]
    \centering
    \includegraphics[width=0.75\textwidth]{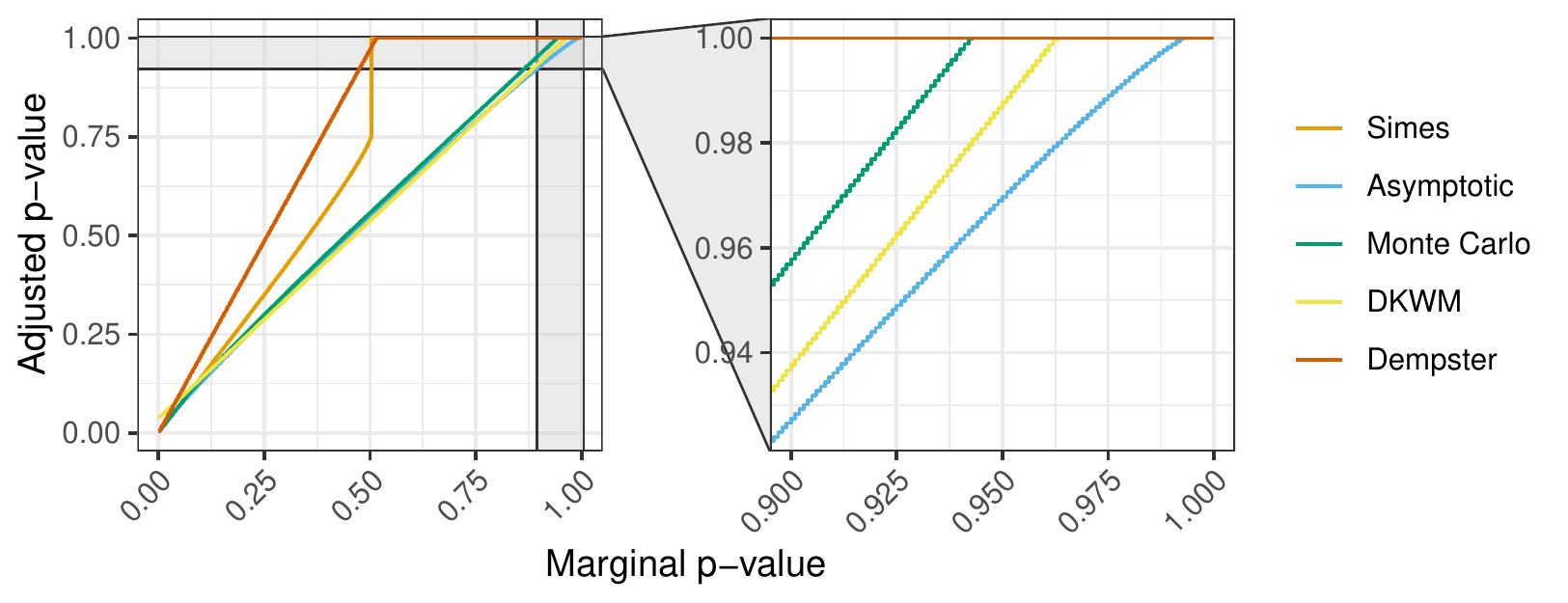}
    \caption{Comparison of different adjustment functions, with $n=1000$ and $\delta=0.1$.}
    \label{fig:compare_mcmc_opposite}
\end{figure}

\clearpage

\section{Numerical outlier detection experiments} \label{app:sim}

\subsection{Outlier detection on simulated data} \label{app:sim_data_exp}

\begin{figure}[!htb]
  \centering
  { \includegraphics[width=0.9\textwidth]{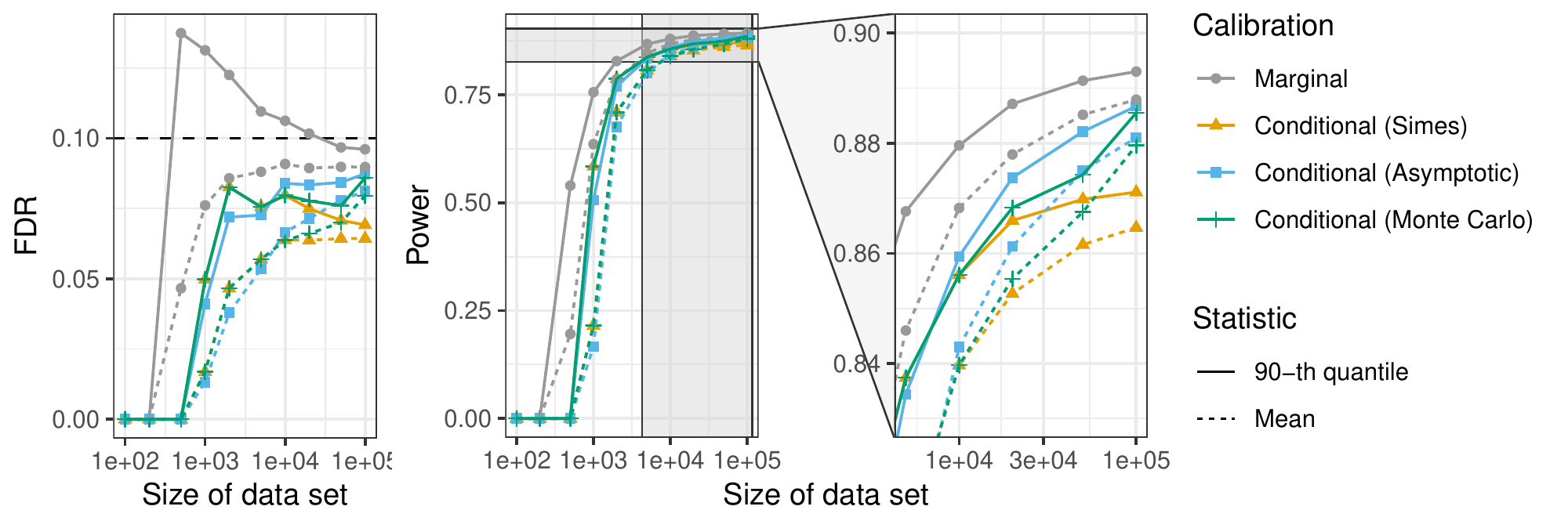}}
\caption{FDR and power in a simulated outlier detection problem as a function of the number of samples in the data set (half of which are utilized for calibration). Other details are as in Figure~\ref{fig:sim-bh}.
}
  \label{fig:sim-bh-n}
\end{figure}

\begin{figure}[!htb]
  \centering
  { \includegraphics[width=0.9\textwidth]{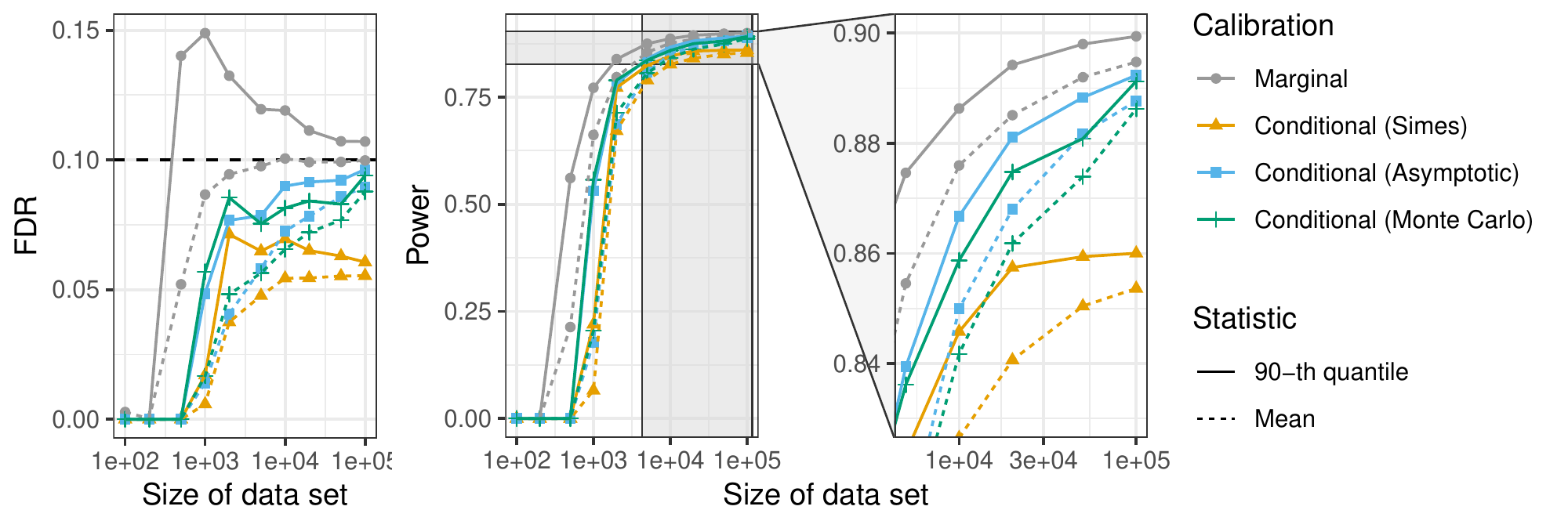}}
\caption{FDR and power in a simulated outlier detection problem, using the BH procedure with Storey's correction. Other details are as in Figure~\ref{fig:sim-bh-n}.
}
  \label{fig:sim-bh-n-storey}
\end{figure}

\begin{figure}[!htb]
  \centering
  { \includegraphics[width=0.9\textwidth]{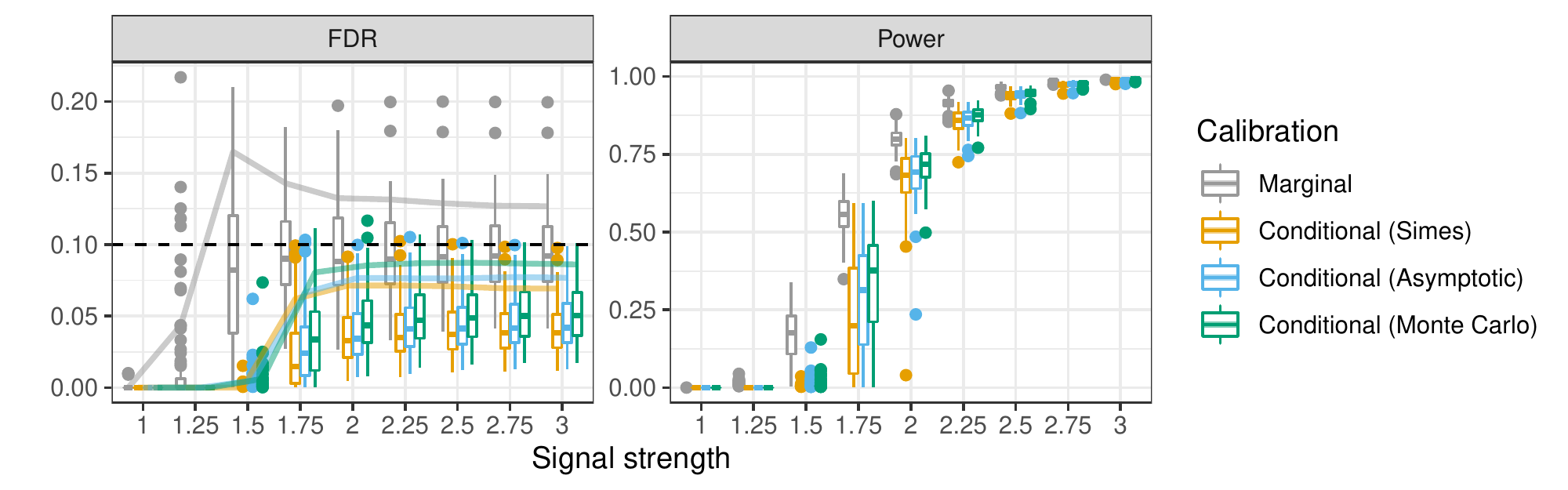}}
\caption{FDR and power in a simulated outlier detection problem, using the BH procedure with Storey's correction. Other details are as in Figure~\ref{fig:sim-bh}.
}
  \label{fig:sim-bh-storey}
\end{figure}

\begin{figure}[!htb]
  \centering
  { \includegraphics[width=0.9\textwidth]{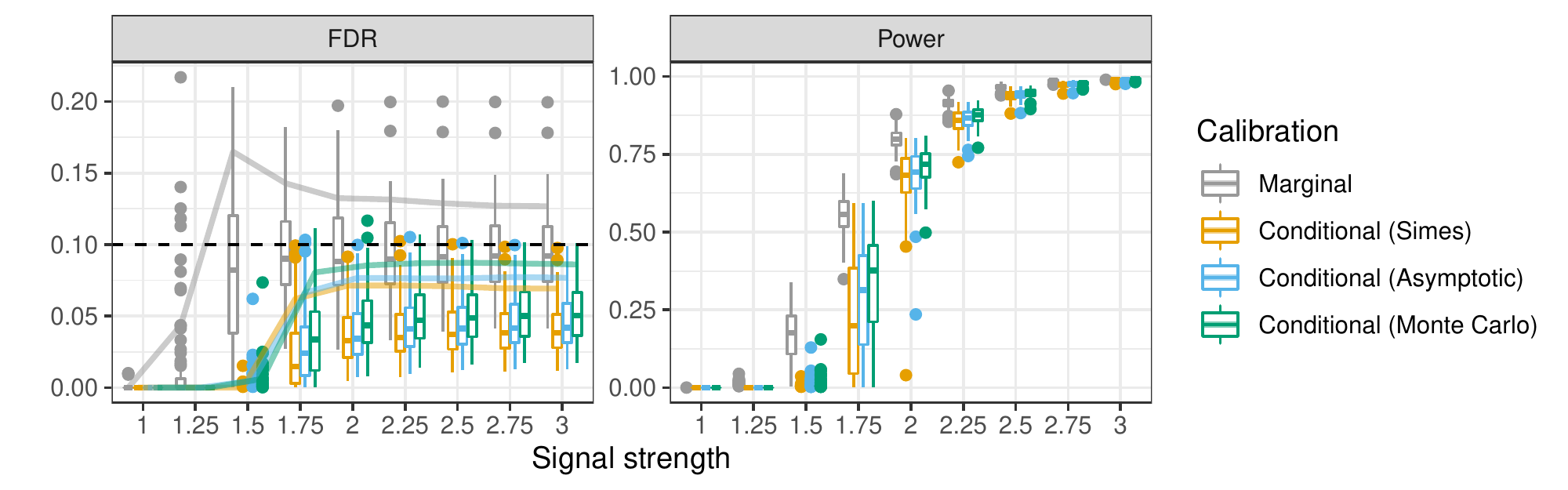}}
\caption{FDR and power in a simulated outlier detection problem, using the BH procedure with Storey's correction. The conditional calibration method is applied with $\delta = 0.25$ instead of $\delta=0.1$. Other details are as in Figure~\ref{fig:sim-bh}.
}
  \label{fig:sim-bh-delta0.25}
\end{figure}

\begin{figure}[!htb]
  \centering
  { \includegraphics[width=0.9\textwidth]{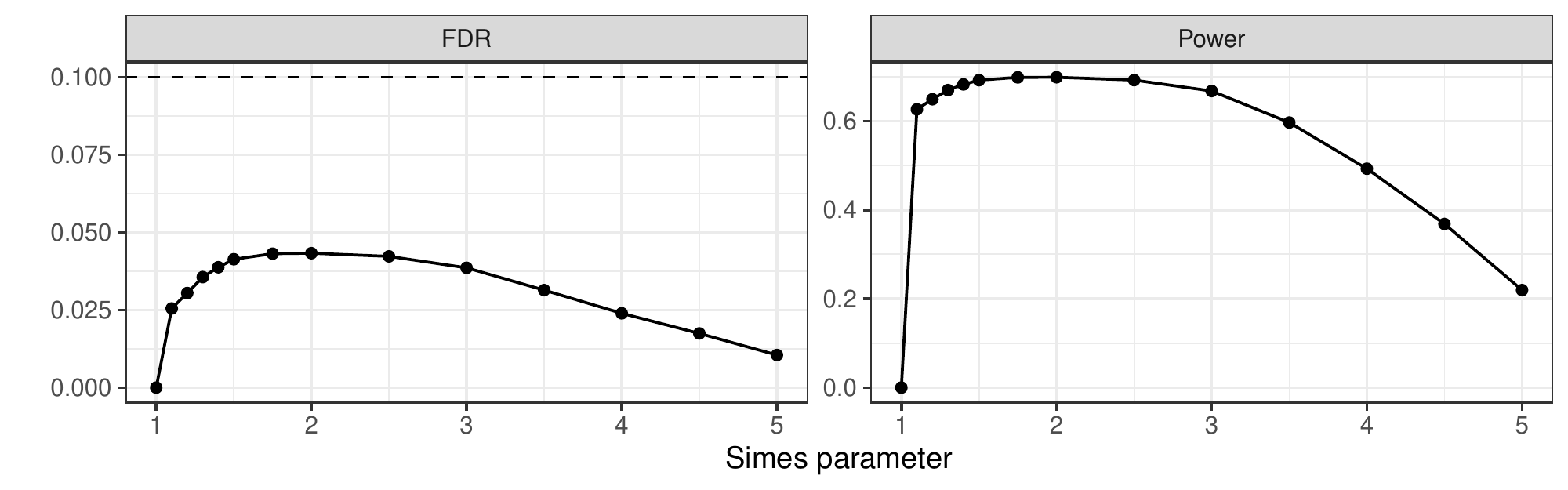}}
\caption{Performance of simultaneously calibrated conformal p-values as a function of the Simes parameter $n/k$. The signal strength is equal to 2. Other details are as in Figure~\ref{fig:sim-bh}.
}
  \label{fig:sim-simes-tune}
\end{figure}

\begin{figure}[!htb]
  \centering
  { \includegraphics[width=\textwidth]{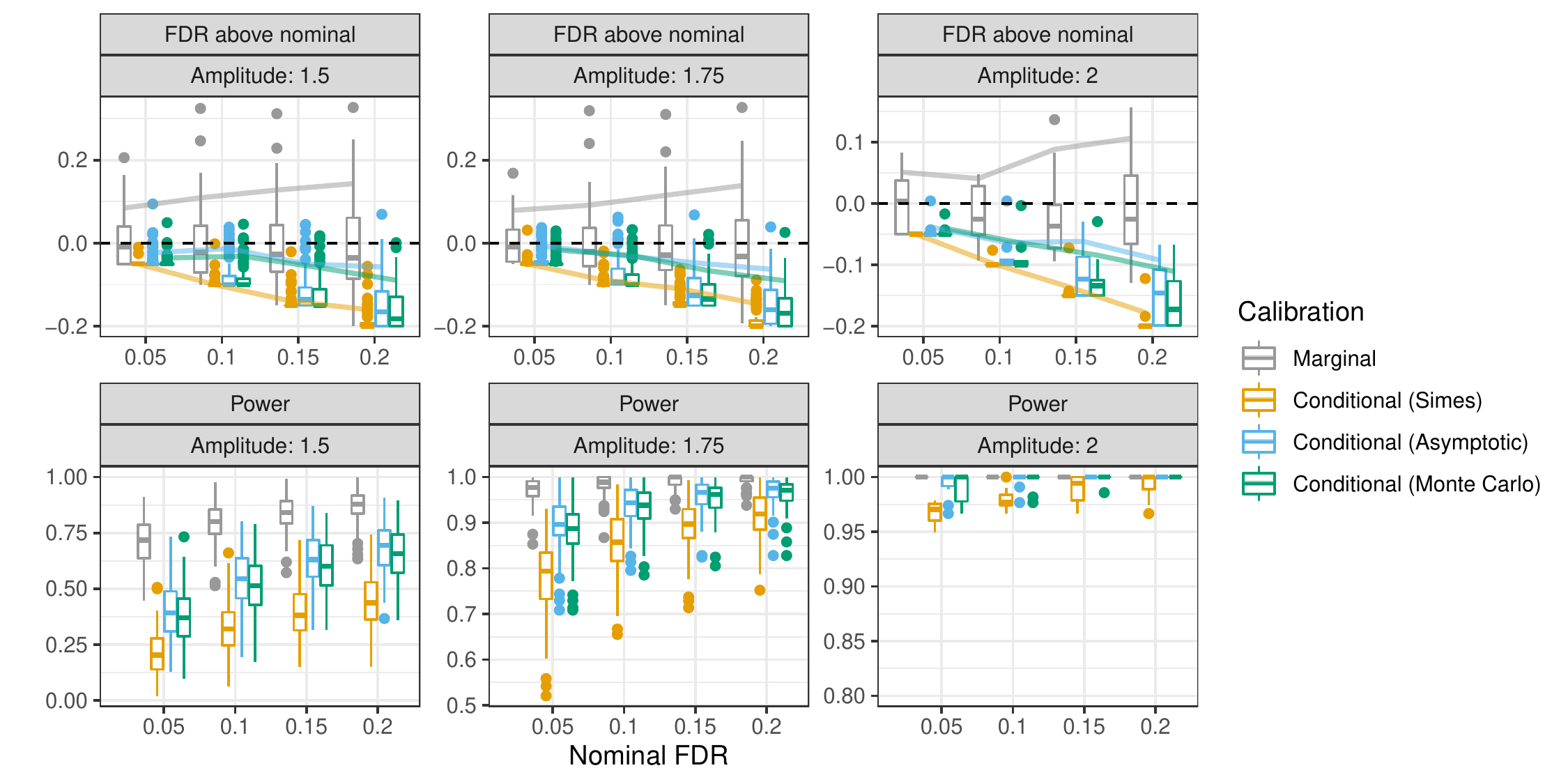}}
\caption{Performance of different methods for calibrating conformal p-values in a simulated outlier batch detection problem, for different values of the signal strength. Other details are as in Figure~\ref{fig:global-storey-large}.
}
  \label{fig:global-storey-large}
\end{figure}

\begin{figure}[!htb]
  \centering
  { \includegraphics[width=\textwidth]{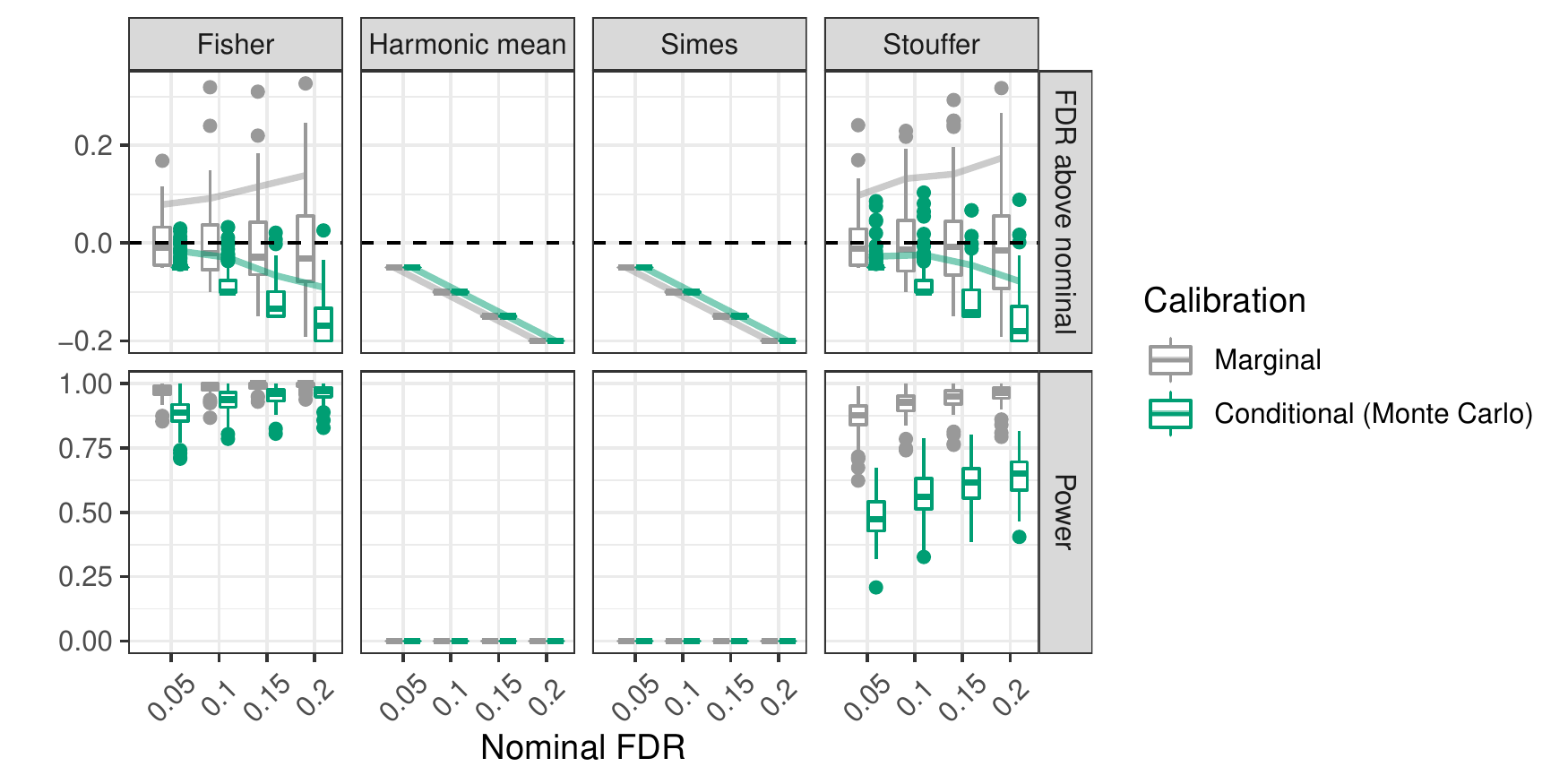}}
\caption{Performance of different methods for combining p-values from the same batch for the purpose of global testing. Other details are as in Figure~\ref{fig:global-storey}.
}
  \label{fig:global-combinations}
\end{figure}

\clearpage
\subsection{Outlier detection on real data}\label{app:real_data_exp}

% {\small
% \input{tables/data_bh_storey.tex} % {tab:data-bh-storey-long}
% }
{\small
\begingroup\fontsize{10}{12}\selectfont

\begin{longtable}[t]{lrllllllll}
\caption{Outlier detection performance on real data, using different data sets, machine learning models, and nominal FDR levels. The BH procedure is applied with Storey’s correction to control the FDR. Other details are as in Table~\ref{tab:data-bh-storey}. \label{tab:data-bh-long-storey}}\\
\toprule
\multicolumn{1}{c}{ } & \multicolumn{5}{c}{FDR} & \multicolumn{4}{c}{Power} \\
\cmidrule(l{3pt}r{3pt}){2-6} \cmidrule(l{3pt}r{3pt}){7-10}
\multicolumn{2}{c}{ } & \multicolumn{2}{c}{Mean} & \multicolumn{2}{c}{90th percentile} & \multicolumn{2}{c}{Mean} & \multicolumn{2}{c}{90-th quantile} \\
\cmidrule(l{3pt}r{3pt}){3-4} \cmidrule(l{3pt}r{3pt}){5-6} \cmidrule(l{3pt}r{3pt}){7-8} \cmidrule(l{3pt}r{3pt}){9-10}
Model & Nominal & Marg. & Cond. & Marg. & Cond. & Marg. & Cond. & Marg. & Cond.\\
\midrule
\endfirsthead
\caption[]{Outlier detection performance on real data, using different data sets, machine learning models, and nominal FDR levels. The BH procedure is applied with Storey’s correction to control the FDR. Other details are as in Table~\ref{tab:data-bh-storey}.  \textit{(continued)}}\\
\toprule
\multicolumn{1}{c}{ } & \multicolumn{5}{c}{FDR} & \multicolumn{4}{c}{Power} \\
\cmidrule(l{3pt}r{3pt}){2-6} \cmidrule(l{3pt}r{3pt}){7-10}
\multicolumn{2}{c}{ } & \multicolumn{2}{c}{Mean} & \multicolumn{2}{c}{90th percentile} & \multicolumn{2}{c}{Mean} & \multicolumn{2}{c}{90-th quantile} \\
\cmidrule(l{3pt}r{3pt}){3-4} \cmidrule(l{3pt}r{3pt}){5-6} \cmidrule(l{3pt}r{3pt}){7-8} \cmidrule(l{3pt}r{3pt}){9-10}
Model & Nominal & Marg. & Cond. & Marg. & Cond. & Marg. & Cond. & Marg. & Cond.\\
\midrule
\endhead
\midrule
\multicolumn{10}{r@{}}{\textit{(Continued on Next Page...)}}\
\endfoot
\bottomrule
\endlastfoot
\addlinespace[0.3em]
\multicolumn{10}{l}{\textbf{ALOI}}\\
\hspace{1em} & 0.05 & 0 & 0 & 0 & 0 & 0 & 0 & 0 & \vphantom{7} 0\\
\nopagebreak
\hspace{1em} & 0.10 & 0.001 & 0 & 0 & 0 & 0 & 0 & 0 & 0\\
\nopagebreak
\hspace{1em}\multirow[t]{-3}{*}{\raggedright\arraybackslash IForest} & 0.20 & 0.025 & 0.001 & 0.048 & 0 & 0 & 0 & 0 & 0\\
\cmidrule{1-10}\pagebreak[0]
\hspace{1em} & 0.05 & 0 & 0 & 0 & 0 & 0 & 0 & 0 & \vphantom{6} 0\\
\nopagebreak
\hspace{1em} & 0.10 & 0.005 & 0 & 0 & 0 & 0 & 0 & 0 & 0\\
\nopagebreak
\hspace{1em}\multirow[t]{-3}{*}{\raggedright\arraybackslash Neighbors} & 0.20 & 0.059 & 0.008 & \textcolor{red}{0.245} & 0.01 & 0.002 & 0 & 0.007 & 0\\
\cmidrule{1-10}\pagebreak[0]
\hspace{1em} & 0.05 & 0 & 0 & 0 & 0 & 0 & 0 & 0 & \vphantom{5} 0\\
\nopagebreak
\hspace{1em} & 0.10 & 0.006 & 0 & 0.006 & 0 & 0 & 0 & 0.001 & 0\\
\nopagebreak
\hspace{1em}\multirow[t]{-3}{*}{\raggedright\arraybackslash SVM} & 0.20 & 0.066 & 0.009 & \textcolor{orange}{0.212} & 0.017 & 0.003 & 0 & 0.01 & 0.002\\
\cmidrule{1-10}\pagebreak[0]
\addlinespace[0.3em]
\multicolumn{10}{l}{\textbf{Cover}}\\
\hspace{1em} & 0.05 & 0.004 & 0.001 & 0 & 0 & 0.001 & 0 & 0 & 0\\
\nopagebreak
\hspace{1em} & 0.10 & 0.017 & 0.007 & 0.037 & 0.002 & 0.003 & 0.001 & 0.005 & 0\\
\nopagebreak
\hspace{1em}\multirow[t]{-3}{*}{\raggedright\arraybackslash IForest} & 0.20 & 0.099 & 0.044 & \textcolor{red}{0.297} & 0.148 & 0.012 & 0.006 & 0.038 & 0.02\\
\cmidrule{1-10}\pagebreak[0]
\hspace{1em} & 0.05 & 0.047 & 0.038 & \textcolor{orange}{0.067} & \textcolor{orange}{0.054} & 0.948 & 0.935 & 0.964 & 0.957\\
\nopagebreak
\hspace{1em} & 0.10 & 0.096 & 0.081 & \textcolor{orange}{0.124} & \textcolor{orange}{0.106} & 0.972 & 0.968 & 0.98 & 0.976\\
\nopagebreak
\hspace{1em}\multirow[t]{-3}{*}{\raggedright\arraybackslash Neighbors} & 0.20 & 0.195 & 0.174 & \textcolor{orange}{0.231} & \textcolor{orange}{0.207} & 0.987 & 0.985 & 0.99 & 0.989\\
\cmidrule{1-10}\pagebreak[0]
\hspace{1em} & 0.05 & 0 & 0 & 0 & 0 & 0 & 0 & 0 & \vphantom{4} 0\\
\nopagebreak
\hspace{1em} & 0.10 & 0 & 0 & 0 & 0 & 0 & 0 & 0 & \vphantom{2} 0\\
\nopagebreak
\hspace{1em}\multirow[t]{-3}{*}{\raggedright\arraybackslash SVM} & 0.20 & 0 & 0 & 0 & 0 & 0 & 0 & 0 & \vphantom{2} 0\\
\cmidrule{1-10}\pagebreak[0]
\addlinespace[0.3em]
\multicolumn{10}{l}{\textbf{Credit card}}\\
\hspace{1em} & 0.05 & 0.032 & 0.016 & \textcolor{orange}{0.07} & \textcolor{orange}{0.051} & 0.149 & 0.08 & 0.358 & 0.264\\
\nopagebreak
\hspace{1em} & 0.10 & 0.09 & 0.063 & \textcolor{orange}{0.13} & \textcolor{orange}{0.105} & 0.383 & 0.277 & 0.57 & 0.518\\
\nopagebreak
\hspace{1em}\multirow[t]{-3}{*}{\raggedright\arraybackslash IForest} & 0.20 & 0.191 & 0.162 & \textcolor{orange}{0.228} & \textcolor{orange}{0.202} & 0.679 & 0.611 & 0.782 & 0.746\\
\cmidrule{1-10}\pagebreak[0]
\hspace{1em} & 0.05 & 0 & 0 & 0 & 0 & 0 & 0 & 0 & \vphantom{3} 0\\
\nopagebreak
\hspace{1em} & 0.10 & 0.005 & 0.001 & 0 & 0 & 0.001 & 0 & 0 & 0\\
\nopagebreak
\hspace{1em}\multirow[t]{-3}{*}{\raggedright\arraybackslash Neighbors} & 0.20 & 0.074 & 0.027 & \textcolor{red}{0.283} & 0.086 & 0.006 & 0.003 & 0.024 & 0.008\\
\cmidrule{1-10}\pagebreak[0]
\hspace{1em} & 0.05 & 0 & 0 & 0 & 0 & 0 & 0 & 0 & \vphantom{2} 0\\
\nopagebreak
\hspace{1em} & 0.10 & 0 & 0 & 0 & 0 & 0 & 0 & 0 & \vphantom{1} 0\\
\nopagebreak
\hspace{1em}\multirow[t]{-3}{*}{\raggedright\arraybackslash SVM} & 0.20 & 0 & 0 & 0 & 0 & 0 & 0 & 0 & \vphantom{1} 0\\
\cmidrule{1-10}\pagebreak[0]
\addlinespace[0.3em]
\multicolumn{10}{l}{\textbf{KDDCup99}}\\
\hspace{1em} & 0.05 & 0.039 & 0.02 & \textcolor{red}{0.076} & 0.047 & 0.359 & 0.216 & 0.51 & 0.465\\
\nopagebreak
\hspace{1em} & 0.10 & 0.096 & 0.059 & \textcolor{orange}{0.129} & 0.098 & 0.598 & 0.452 & 0.715 & 0.644\\
\nopagebreak
\hspace{1em}\multirow[t]{-3}{*}{\raggedright\arraybackslash IForest} & 0.20 & 0.194 & 0.131 & \textcolor{orange}{0.23} & 0.168 & 0.754 & 0.684 & 0.825 & 0.753\\
\cmidrule{1-10}\pagebreak[0]
\hspace{1em} & 0.05 & 0.006 & 0 & 0 & 0 & 0.001 & 0 & 0 & 0\\
\nopagebreak
\hspace{1em} & 0.10 & 0.03 & 0.009 & \textcolor{red}{0.147} & 0 & 0.009 & 0.002 & 0.049 & 0\\
\nopagebreak
\hspace{1em}\multirow[t]{-3}{*}{\raggedright\arraybackslash Neighbors} & 0.20 & 0.125 & 0.042 & \textcolor{red}{0.274} & 0.17 & 0.033 & 0.011 & 0.068 & 0.056\\
\cmidrule{1-10}\pagebreak[0]
\hspace{1em} & 0.05 & 0 & 0 & 0 & 0 & 0 & 0 & 0 & \vphantom{1} 0\\
\nopagebreak
\hspace{1em} & 0.10 & 0 & 0 & 0 & 0 & 0 & 0 & 0 & 0\\
\nopagebreak
\hspace{1em}\multirow[t]{-3}{*}{\raggedright\arraybackslash SVM} & 0.20 & 0 & 0 & 0 & 0 & 0 & 0 & 0 & 0\\
\cmidrule{1-10}\pagebreak[0]
\addlinespace[0.3em]
\multicolumn{10}{l}{\textbf{Mammography}}\\
\hspace{1em} & 0.05 & 0.011 & 0 & 0.037 & 0 & 0.01 & 0 & 0.023 & 0\\
\nopagebreak
\hspace{1em} & 0.10 & 0.076 & 0.002 & \textcolor{red}{0.171} & 0 & 0.059 & 0.004 & 0.146 & 0\\
\nopagebreak
\hspace{1em}\multirow[t]{-3}{*}{\raggedright\arraybackslash IForest} & 0.20 & 0.187 & 0.056 & \textcolor{red}{0.286} & 0.17 & 0.176 & 0.059 & 0.337 & 0.22\\
\cmidrule{1-10}\pagebreak[0]
\hspace{1em} & 0.05 & 0 & 0 & 0 & 0 & 0 & 0 & 0 & 0\\
\nopagebreak
\hspace{1em} & 0.10 & 0.016 & 0 & 0.027 & 0 & 0.011 & 0 & 0.021 & 0\\
\nopagebreak
\hspace{1em}\multirow[t]{-3}{*}{\raggedright\arraybackslash Neighbors} & 0.20 & 0.155 & 0.023 & \textcolor{red}{0.263} & 0.084 & 0.175 & 0.024 & 0.285 & 0.078\\
\cmidrule{1-10}\pagebreak[0]
\hspace{1em} & 0.05 & 0.007 & 0 & 0.018 & 0 & 0.003 & 0 & 0.008 & 0\\
\nopagebreak
\hspace{1em} & 0.10 & 0.066 & 0 & \textcolor{red}{0.168} & 0 & 0.04 & 0 & 0.09 & 0\\
\nopagebreak
\hspace{1em}\multirow[t]{-3}{*}{\raggedright\arraybackslash SVM} & 0.20 & 0.188 & 0.046 & \textcolor{red}{0.274} & 0.145 & 0.171 & 0.032 & 0.288 & 0.095\\
\cmidrule{1-10}\pagebreak[0]
\addlinespace[0.3em]
\multicolumn{10}{l}{\textbf{Digits}}\\
\hspace{1em} & 0.05 & 0.012 & 0 & 0.03 & 0 & 0.013 & 0 & 0.036 & 0\\
\nopagebreak
\hspace{1em} & 0.10 & 0.058 & 0.004 & \textcolor{red}{0.164} & 0 & 0.076 & 0.006 & 0.295 & 0\\
\nopagebreak
\hspace{1em}\multirow[t]{-3}{*}{\raggedright\arraybackslash IForest} & 0.20 & \textcolor{orange}{0.202} & 0.052 & \textcolor{red}{0.266} & 0.173 & 0.417 & 0.096 & 0.629 & 0.355\\
\cmidrule{1-10}\pagebreak[0]
\hspace{1em} & 0.05 & 0.006 & 0 & 0.028 & 0 & 0.033 & 0.002 & 0.049 & 0\\
\nopagebreak
\hspace{1em} & 0.10 & 0.059 & 0.006 & \textcolor{red}{0.135} & 0.01 & 0.273 & 0.035 & 0.752 & 0.03\\
\nopagebreak
\hspace{1em}\multirow[t]{-3}{*}{\raggedright\arraybackslash Neighbors} & 0.20 & 0.191 & 0.092 & \textcolor{orange}{0.238} & 0.166 & 0.841 & 0.455 & 0.99 & 0.879\\
\cmidrule{1-10}\pagebreak[0]
\hspace{1em} & 0.05 & 0.004 & 0 & 0.003 & 0 & 0.002 & 0 & 0.001 & 0\\
\nopagebreak
\hspace{1em} & 0.10 & 0.044 & 0 & \textcolor{red}{0.149} & 0 & 0.018 & 0 & 0.045 & 0\\
\nopagebreak
\hspace{1em}\multirow[t]{-3}{*}{\raggedright\arraybackslash SVM} & 0.20 & 0.175 & 0.018 & \textcolor{red}{0.264} & 0.066 & 0.227 & 0.017 & 0.468 & 0.048\\
\cmidrule{1-10}\pagebreak[0]
\addlinespace[0.3em]
\multicolumn{10}{l}{\textbf{Shuttle}}\\
\hspace{1em} & 0.05 & 0.047 & 0.031 & \textcolor{orange}{0.068} & 0.049 & 0.936 & 0.889 & 0.975 & 0.972\\
\nopagebreak
\hspace{1em} & 0.10 & 0.096 & 0.072 & \textcolor{orange}{0.122} & 0.097 & 0.974 & 0.965 & 0.981 & 0.979\\
\nopagebreak
\hspace{1em}\multirow[t]{-3}{*}{\raggedright\arraybackslash IForest} & 0.20 & 0.196 & 0.163 & \textcolor{orange}{0.228} & 0.198 & 0.981 & 0.98 & 0.984 & 0.983\\
\cmidrule{1-10}\pagebreak[0]
\hspace{1em} & 0.05 & 0.048 & 0.031 & \textcolor{orange}{0.066} & 0.047 & 0.99 & 0.951 & 0.998 & 0.991\\
\nopagebreak
\hspace{1em} & 0.10 & 0.098 & 0.07 & \textcolor{orange}{0.125} & 0.093 & 0.999 & 0.996 & 1 & 1\\
\nopagebreak
\hspace{1em}\multirow[t]{-3}{*}{\raggedright\arraybackslash Neighbors} & 0.20 & 0.199 & 0.151 & \textcolor{orange}{0.231} & 0.193 & 1 & 1 & 1 & 1\\
\cmidrule{1-10}\pagebreak[0]
\hspace{1em} & 0.05 & 0.043 & 0.021 & \textcolor{orange}{0.066} & 0.045 & 0.814 & 0.486 & 0.998 & 0.993\\
\nopagebreak
\hspace{1em} & 0.10 & 0.096 & 0.069 & \textcolor{orange}{0.127} & 0.093 & 0.999 & 0.997 & 1 & 0.999\\
\nopagebreak
\hspace{1em}\multirow[t]{-3}{*}{\raggedright\arraybackslash SVM} & 0.20 & 0.198 & 0.148 & \textcolor{orange}{0.233} & 0.182 & 1 & 1 & 1 & 1\\*
\end{longtable}
\endgroup{}
 % {tab:data-bh-storey-long}
}

\clearpage
% {\small
% \input{tables/data_bh.tex} % {tab:data-bh-long}
% }
{\small
\begingroup\fontsize{10}{12}\selectfont

\begin{longtable}[t]{lrllllllll}
\caption{Outlier detection performance on real data, using different data sets, machine learning models, and nominal FDR levels. The BH procedure is applied without Storey’s correction to control the FDR. Other details are as in Table~\ref{tab:data-bh-long-storey}. \label{tab:data-bh-long}}\\
\toprule
\multicolumn{1}{c}{ } & \multicolumn{5}{c}{FDR} & \multicolumn{4}{c}{Power} \\
\cmidrule(l{3pt}r{3pt}){2-6} \cmidrule(l{3pt}r{3pt}){7-10}
\multicolumn{2}{c}{ } & \multicolumn{2}{c}{Mean} & \multicolumn{2}{c}{90th percentile} & \multicolumn{2}{c}{Mean} & \multicolumn{2}{c}{90-th quantile} \\
\cmidrule(l{3pt}r{3pt}){3-4} \cmidrule(l{3pt}r{3pt}){5-6} \cmidrule(l{3pt}r{3pt}){7-8} \cmidrule(l{3pt}r{3pt}){9-10}
Model & Nominal & Marg. & Cond. & Marg. & Cond. & Marg. & Cond. & Marg. & Cond.\\
\midrule
\endfirsthead
\caption[]{Outlier detection performance on real data, using different data sets, machine learning models, and nominal FDR levels. The BH procedure is applied without Storey’s correction to control the FDR. Other details are as in Table~\ref{tab:data-bh-long-storey}.  \textit{(continued)}}\\
\toprule
\multicolumn{1}{c}{ } & \multicolumn{5}{c}{FDR} & \multicolumn{4}{c}{Power} \\
\cmidrule(l{3pt}r{3pt}){2-6} \cmidrule(l{3pt}r{3pt}){7-10}
\multicolumn{2}{c}{ } & \multicolumn{2}{c}{Mean} & \multicolumn{2}{c}{90th percentile} & \multicolumn{2}{c}{Mean} & \multicolumn{2}{c}{90-th quantile} \\
\cmidrule(l{3pt}r{3pt}){3-4} \cmidrule(l{3pt}r{3pt}){5-6} \cmidrule(l{3pt}r{3pt}){7-8} \cmidrule(l{3pt}r{3pt}){9-10}
Model & Nominal & Marg. & Cond. & Marg. & Cond. & Marg. & Cond. & Marg. & Cond.\\
\midrule
\endhead
\midrule
\multicolumn{10}{r@{}}{\textit{(Continued on Next Page...)}}\
\endfoot
\bottomrule
\endlastfoot
\addlinespace[0.3em]
\multicolumn{10}{l}{\textbf{ALOI}}\\
\hspace{1em} & 0.05 & 0 & 0 & 0 & 0 & 0 & 0 & 0 & \vphantom{7} 0\\
\nopagebreak
\hspace{1em} & 0.10 & 0.001 & 0 & 0 & 0 & 0 & 0 & 0 & 0\\
\nopagebreak
\hspace{1em}\multirow[t]{-3}{*}{\raggedright\arraybackslash IForest} & 0.20 & 0.032 & 0.002 & 0.08 & 0 & 0 & 0 & 0 & 0\\
\cmidrule{1-10}\pagebreak[0]
\hspace{1em} & 0.05 & 0 & 0 & 0 & 0 & 0 & 0 & 0 & \vphantom{6} 0\\
\nopagebreak
\hspace{1em} & 0.10 & 0.004 & 0 & 0 & 0 & 0 & 0 & 0 & 0\\
\nopagebreak
\hspace{1em}\multirow[t]{-3}{*}{\raggedright\arraybackslash Neighbors} & 0.20 & 0.057 & 0.008 & \textcolor{orange}{0.234} & 0.01 & 0.002 & 0 & 0.006 & 0\\
\cmidrule{1-10}\pagebreak[0]
\hspace{1em} & 0.05 & 0 & 0 & 0 & 0 & 0 & 0 & 0 & \vphantom{5} 0\\
\nopagebreak
\hspace{1em} & 0.10 & 0.007 & 0 & 0.009 & 0 & 0 & 0 & 0.001 & 0\\
\nopagebreak
\hspace{1em}\multirow[t]{-3}{*}{\raggedright\arraybackslash SVM} & 0.20 & 0.076 & 0.012 & \textcolor{orange}{0.231} & 0.019 & 0.003 & 0.001 & 0.011 & 0.002\\
\cmidrule{1-10}\pagebreak[0]
\addlinespace[0.3em]
\multicolumn{10}{l}{\textbf{Cover}}\\
\hspace{1em} & 0.05 & 0.004 & 0 & 0 & 0 & 0.001 & 0 & 0 & 0\\
\nopagebreak
\hspace{1em} & 0.10 & 0.014 & 0.006 & 0.026 & 0 & 0.002 & 0.001 & 0.003 & 0\\
\nopagebreak
\hspace{1em}\multirow[t]{-3}{*}{\raggedright\arraybackslash IForest} & 0.20 & 0.089 & 0.03 & \textcolor{red}{0.285} & 0.083 & 0.01 & 0.005 & 0.035 & 0.013\\
\cmidrule{1-10}\pagebreak[0]
\hspace{1em} & 0.05 & 0.043 & 0.034 & \textcolor{orange}{0.058} & 0.047 & 0.942 & 0.929 & 0.961 & 0.954\\
\nopagebreak
\hspace{1em} & 0.10 & 0.086 & 0.073 & \textcolor{orange}{0.109} & 0.097 & 0.97 & 0.965 & 0.978 & 0.974\\
\nopagebreak
\hspace{1em}\multirow[t]{-3}{*}{\raggedright\arraybackslash Neighbors} & 0.20 & 0.176 & 0.158 & \textcolor{orange}{0.211} & 0.191 & 0.985 & 0.983 & 0.989 & 0.987\\
\cmidrule{1-10}\pagebreak[0]
\hspace{1em} & 0.05 & 0 & 0 & 0 & 0 & 0 & 0 & 0 & \vphantom{4} 0\\
\nopagebreak
\hspace{1em} & 0.10 & 0 & 0 & 0 & 0 & 0 & 0 & 0 & \vphantom{2} 0\\
\nopagebreak
\hspace{1em}\multirow[t]{-3}{*}{\raggedright\arraybackslash SVM} & 0.20 & 0 & 0 & 0 & 0 & 0 & 0 & 0 & \vphantom{2} 0\\
\cmidrule{1-10}\pagebreak[0]
\addlinespace[0.3em]
\multicolumn{10}{l}{\textbf{Credit card}}\\
\hspace{1em} & 0.05 & 0.027 & 0.013 & \textcolor{orange}{0.065} & 0.047 & 0.125 & 0.066 & 0.332 & 0.248\\
\nopagebreak
\hspace{1em} & 0.10 & 0.082 & 0.057 & \textcolor{orange}{0.122} & 0.099 & 0.351 & 0.249 & 0.542 & 0.482\\
\nopagebreak
\hspace{1em}\multirow[t]{-3}{*}{\raggedright\arraybackslash IForest} & 0.20 & 0.173 & 0.146 & \textcolor{orange}{0.212} & 0.186 & 0.642 & 0.57 & 0.763 & 0.712\\
\cmidrule{1-10}\pagebreak[0]
\hspace{1em} & 0.05 & 0 & 0 & 0 & 0 & 0 & 0 & 0 & \vphantom{3} 0\\
\nopagebreak
\hspace{1em} & 0.10 & 0.005 & 0.001 & 0 & 0 & 0 & 0 & 0 & 0\\
\nopagebreak
\hspace{1em}\multirow[t]{-3}{*}{\raggedright\arraybackslash Neighbors} & 0.20 & 0.072 & 0.021 & \textcolor{red}{0.292} & 0.062 & 0.006 & 0.002 & 0.024 & 0.006\\
\cmidrule{1-10}\pagebreak[0]
\hspace{1em} & 0.05 & 0 & 0 & 0 & 0 & 0 & 0 & 0 & \vphantom{2} 0\\
\nopagebreak
\hspace{1em} & 0.10 & 0 & 0 & 0 & 0 & 0 & 0 & 0 & \vphantom{1} 0\\
\nopagebreak
\hspace{1em}\multirow[t]{-3}{*}{\raggedright\arraybackslash SVM} & 0.20 & 0 & 0 & 0 & 0 & 0 & 0 & 0 & \vphantom{1} 0\\
\cmidrule{1-10}\pagebreak[0]
\addlinespace[0.3em]
\multicolumn{10}{l}{\textbf{KDDCup99}}\\
\hspace{1em} & 0.05 & 0.033 & 0.017 & \textcolor{red}{0.073} & 0.045 & 0.329 & 0.188 & 0.496 & 0.465\\
\nopagebreak
\hspace{1em} & 0.10 & 0.086 & 0.055 & \textcolor{orange}{0.118} & 0.093 & 0.562 & 0.437 & 0.695 & 0.627\\
\nopagebreak
\hspace{1em}\multirow[t]{-3}{*}{\raggedright\arraybackslash IForest} & 0.20 & 0.174 & 0.124 & \textcolor{orange}{0.209} & 0.159 & 0.736 & 0.673 & 0.796 & 0.745\\
\cmidrule{1-10}\pagebreak[0]
\hspace{1em} & 0.05 & 0.006 & 0 & 0 & 0 & 0.001 & 0 & 0 & 0\\
\nopagebreak
\hspace{1em} & 0.10 & 0.028 & 0.008 & \textcolor{red}{0.136} & 0 & 0.008 & 0.002 & 0.05 & 0\\
\nopagebreak
\hspace{1em}\multirow[t]{-3}{*}{\raggedright\arraybackslash Neighbors} & 0.20 & 0.122 & 0.041 & \textcolor{red}{0.276} & 0.17 & 0.032 & 0.011 & 0.069 & 0.055\\
\cmidrule{1-10}\pagebreak[0]
\hspace{1em} & 0.05 & 0 & 0 & 0 & 0 & 0 & 0 & 0 & \vphantom{1} 0\\
\nopagebreak
\hspace{1em} & 0.10 & 0 & 0 & 0 & 0 & 0 & 0 & 0 & 0\\
\nopagebreak
\hspace{1em}\multirow[t]{-3}{*}{\raggedright\arraybackslash SVM} & 0.20 & 0 & 0 & 0 & 0 & 0 & 0 & 0 & 0\\
\cmidrule{1-10}\pagebreak[0]
\addlinespace[0.3em]
\multicolumn{10}{l}{\textbf{Mammography}}\\
\hspace{1em} & 0.05 & 0.008 & 0 & 0.011 & 0 & 0.007 & 0 & 0.006 & 0\\
\nopagebreak
\hspace{1em} & 0.10 & 0.061 & 0.002 & \textcolor{red}{0.161} & 0 & 0.045 & 0.003 & 0.125 & 0\\
\nopagebreak
\hspace{1em}\multirow[t]{-3}{*}{\raggedright\arraybackslash IForest} & 0.20 & 0.167 & 0.054 & \textcolor{red}{0.269} & 0.169 & 0.156 & 0.058 & 0.305 & 0.217\\
\cmidrule{1-10}\pagebreak[0]
\hspace{1em} & 0.05 & 0 & 0 & 0 & 0 & 0 & 0 & 0 & 0\\
\nopagebreak
\hspace{1em} & 0.10 & 0.011 & 0 & 0.01 & 0 & 0.007 & 0 & 0.005 & 0\\
\nopagebreak
\hspace{1em}\multirow[t]{-3}{*}{\raggedright\arraybackslash Neighbors} & 0.20 & 0.126 & 0.021 & \textcolor{red}{0.241} & 0.078 & 0.14 & 0.022 & 0.272 & 0.088\\
\cmidrule{1-10}\pagebreak[0]
\hspace{1em} & 0.05 & 0.003 & 0 & 0.004 & 0 & 0.001 & 0 & 0.003 & 0\\
\nopagebreak
\hspace{1em} & 0.10 & 0.053 & 0 & \textcolor{red}{0.155} & 0 & 0.033 & 0 & 0.077 & 0\\
\nopagebreak
\hspace{1em}\multirow[t]{-3}{*}{\raggedright\arraybackslash SVM} & 0.20 & 0.171 & 0.046 & \textcolor{red}{0.264} & 0.147 & 0.145 & 0.032 & 0.257 & 0.094\\
\cmidrule{1-10}\pagebreak[0]
\addlinespace[0.3em]
\multicolumn{10}{l}{\textbf{Digits}}\\
\hspace{1em} & 0.05 & 0.01 & 0 & 0.013 & 0 & 0.01 & 0 & 0.012 & 0\\
\nopagebreak
\hspace{1em} & 0.10 & 0.051 & 0.004 & \textcolor{red}{0.155} & 0 & 0.062 & 0.006 & 0.246 & 0\\
\nopagebreak
\hspace{1em}\multirow[t]{-3}{*}{\raggedright\arraybackslash IForest} & 0.20 & 0.182 & 0.054 & \textcolor{red}{0.248} & 0.174 & 0.357 & 0.099 & 0.59 & 0.326\\
\cmidrule{1-10}\pagebreak[0]
\hspace{1em} & 0.05 & 0.005 & 0 & 0.008 & 0 & 0.027 & 0.002 & 0.025 & 0\\
\nopagebreak
\hspace{1em} & 0.10 & 0.044 & 0.005 & \textcolor{orange}{0.123} & 0.008 & 0.202 & 0.031 & 0.667 & 0.027\\
\nopagebreak
\hspace{1em}\multirow[t]{-3}{*}{\raggedright\arraybackslash Neighbors} & 0.20 & 0.172 & 0.084 & \textcolor{orange}{0.217} & 0.16 & 0.791 & 0.421 & 0.982 & 0.871\\
\cmidrule{1-10}\pagebreak[0]
\hspace{1em} & 0.05 & 0.003 & 0 & 0 & 0 & 0.001 & 0 & 0 & 0\\
\nopagebreak
\hspace{1em} & 0.10 & 0.04 & 0 & \textcolor{red}{0.151} & 0 & 0.014 & 0 & 0.041 & 0\\
\nopagebreak
\hspace{1em}\multirow[t]{-3}{*}{\raggedright\arraybackslash SVM} & 0.20 & 0.158 & 0.021 & \textcolor{red}{0.255} & 0.097 & 0.154 & 0.019 & 0.39 & 0.043\\
\cmidrule{1-10}\pagebreak[0]
\addlinespace[0.3em]
\multicolumn{10}{l}{\textbf{Shuttle}}\\
\hspace{1em} & 0.05 & 0.042 & 0.029 & \textcolor{orange}{0.06} & 0.044 & 0.926 & 0.88 & 0.975 & 0.972\\
\nopagebreak
\hspace{1em} & 0.10 & 0.086 & 0.066 & \textcolor{orange}{0.11} & 0.09 & 0.972 & 0.961 & 0.981 & 0.978\\
\nopagebreak
\hspace{1em}\multirow[t]{-3}{*}{\raggedright\arraybackslash IForest} & 0.20 & 0.178 & 0.149 & \textcolor{orange}{0.212} & 0.187 & 0.98 & 0.979 & 0.983 & 0.983\\
\cmidrule{1-10}\pagebreak[0]
\hspace{1em} & 0.05 & 0.042 & 0.029 & \textcolor{orange}{0.062} & 0.042 & 0.987 & 0.925 & 0.997 & 0.99\\
\nopagebreak
\hspace{1em} & 0.10 & 0.087 & 0.065 & \textcolor{orange}{0.111} & 0.085 & 0.998 & 0.995 & 1 & 1\\
\nopagebreak
\hspace{1em}\multirow[t]{-3}{*}{\raggedright\arraybackslash Neighbors} & 0.20 & 0.178 & 0.139 & \textcolor{orange}{0.208} & 0.174 & 1 & 0.999 & 1 & 1\\
\cmidrule{1-10}\pagebreak[0]
\hspace{1em} & 0.05 & 0.038 & 0.019 & \textcolor{orange}{0.06} & 0.045 & 0.792 & 0.462 & 0.997 & 0.993\\
\nopagebreak
\hspace{1em} & 0.10 & 0.087 & 0.064 & \textcolor{orange}{0.112} & 0.086 & 0.999 & 0.996 & 1 & 0.999\\
\nopagebreak
\hspace{1em}\multirow[t]{-3}{*}{\raggedright\arraybackslash SVM} & 0.20 & 0.178 & 0.138 & \textcolor{orange}{0.208} & 0.168 & 1 & 1 & 1 & 1\\*
\end{longtable}
\endgroup{}
 % {tab:data-bh-long}
}

% \clearpage
% {\small
% \input{tables/data_global_bh_storey.tex} % {tab:data-global-long-storey}
% }
\clearpage
{\small

\begin{longtable}[t]{lrllllllll}
\caption{Outlier batch detection performance on real data, using Storey's correction to control the FDR. Other details are as in Table~\ref{tab:data-global-long}. \label{tab:data-global-long-storey}}\\
\toprule
\multicolumn{1}{c}{ } & \multicolumn{5}{c}{FDR} & \multicolumn{4}{c}{Power} \\
\cmidrule(l{3pt}r{3pt}){2-6} \cmidrule(l{3pt}r{3pt}){7-10}
\multicolumn{2}{c}{ } & \multicolumn{2}{c}{Mean} & \multicolumn{2}{c}{90th percentile} & \multicolumn{2}{c}{Mean} & \multicolumn{2}{c}{90-th quantile} \\
\cmidrule(l{3pt}r{3pt}){3-4} \cmidrule(l{3pt}r{3pt}){5-6} \cmidrule(l{3pt}r{3pt}){7-8} \cmidrule(l{3pt}r{3pt}){9-10}
Model & Nominal & Marg. & Cond. & Marg. & Cond. & Marg. & Cond. & Marg. & Cond.\\
\midrule
\endfirsthead
\caption[]{Outlier batch detection performance on real data, using Storey's correction to control the FDR. Other details are as in Table~\ref{tab:data-global-long}.  \textit{(continued)}}\\
\toprule
\multicolumn{1}{c}{ } & \multicolumn{5}{c}{FDR} & \multicolumn{4}{c}{Power} \\
\cmidrule(l{3pt}r{3pt}){2-6} \cmidrule(l{3pt}r{3pt}){7-10}
\multicolumn{2}{c}{ } & \multicolumn{2}{c}{Mean} & \multicolumn{2}{c}{90th percentile} & \multicolumn{2}{c}{Mean} & \multicolumn{2}{c}{90-th quantile} \\
\cmidrule(l{3pt}r{3pt}){3-4} \cmidrule(l{3pt}r{3pt}){5-6} \cmidrule(l{3pt}r{3pt}){7-8} \cmidrule(l{3pt}r{3pt}){9-10}
Model & Nominal & Marg. & Cond. & Marg. & Cond. & Marg. & Cond. & Marg. & Cond.\\
\midrule
\endhead
\midrule
\multicolumn{10}{r@{}}{\textit{(Continued on Next Page...)}}\
\endfoot
\bottomrule
\endlastfoot
\addlinespace[0.3em]
\multicolumn{10}{l}{\textbf{ALOI}}\\
\hspace{1em} & 0.05 & 0.029 & 0.008 & \textcolor{red}{0.1} & 0.003 & 0 & 0 & 0.002 & 0\\
\nopagebreak
\hspace{1em} & 0.10 & 0.07 & 0.016 & \textcolor{red}{0.2} & 0.081 & 0.001 & 0 & 0.004 & 0.002\\
\nopagebreak
\hspace{1em}\multirow[t]{-3}{*}{\raggedright\arraybackslash IForest} & 0.20 & 0.157 & 0.048 & \textcolor{red}{0.332} & 0.16 & 0.004 & 0.001 & 0.009 & 0.003\\
\cmidrule{1-10}\pagebreak[0]
\hspace{1em} & 0.05 & 0.038 & 0.009 & \textcolor{red}{0.108} & 0.041 & 0.023 & 0.009 & 0.04 & 0.017\\
\nopagebreak
\hspace{1em} & 0.10 & 0.079 & 0.025 & \textcolor{red}{0.176} & 0.083 & 0.049 & 0.02 & 0.078 & 0.037\\
\nopagebreak
\hspace{1em}\multirow[t]{-3}{*}{\raggedright\arraybackslash LOF} & 0.20 & 0.167 & 0.069 & \textcolor{red}{0.288} & 0.156 & 0.109 & 0.045 & 0.154 & 0.073\\
\cmidrule{1-10}\pagebreak[0]
\hspace{1em} & 0.05 & 0.035 & 0.006 & \textcolor{red}{0.1} & 0.006 & 0.003 & 0.001 & 0.007 & 0.003\\
\nopagebreak
\hspace{1em} & 0.10 & 0.069 & 0.025 & \textcolor{red}{0.173} & 0.09 & 0.006 & 0.003 & 0.012 & 0.007\\
\nopagebreak
\hspace{1em}\multirow[t]{-3}{*}{\raggedright\arraybackslash SVM} & 0.20 & 0.157 & 0.058 & \textcolor{red}{0.33} & 0.17 & 0.014 & 0.005 & 0.023 & 0.01\\
\cmidrule{1-10}\pagebreak[0]
\addlinespace[0.3em]
\multicolumn{10}{l}{\textbf{Cover}}\\
\hspace{1em} & 0.05 & 0.037 & 0.021 & \textcolor{red}{0.099} & \textcolor{red}{0.079} & 0.1 & 0.068 & 0.185 & 0.122\\
\nopagebreak
\hspace{1em} & 0.10 & 0.08 & 0.05 & \textcolor{red}{0.158} & \textcolor{orange}{0.12} & 0.184 & 0.132 & 0.333 & 0.243\\
\nopagebreak
\hspace{1em}\multirow[t]{-3}{*}{\raggedright\arraybackslash IForest} & 0.20 & 0.176 & 0.118 & \textcolor{red}{0.277} & \textcolor{orange}{0.206} & 0.332 & 0.253 & 0.535 & 0.451\\
\cmidrule{1-10}\pagebreak[0]
\hspace{1em} & 0.05 & 0.046 & 0.029 & \textcolor{red}{0.074} & \textcolor{orange}{0.056} & 1 & 1 & 1 & 1\\
\nopagebreak
\hspace{1em} & 0.10 & 0.096 & 0.068 & \textcolor{red}{0.138} & \textcolor{orange}{0.109} & 1 & 1 & 1 & 1\\
\nopagebreak
\hspace{1em}\multirow[t]{-3}{*}{\raggedright\arraybackslash LOF} & 0.20 & 0.197 & 0.147 & \textcolor{red}{0.274} & \textcolor{orange}{0.211} & 1 & 1 & 1 & 1\\
\cmidrule{1-10}\pagebreak[0]
\hspace{1em} & 0.05 & 0 & 0 & 0 & 0 & 0 & 0 & 0 & \vphantom{2} 0\\
\nopagebreak
\hspace{1em} & 0.10 & 0 & 0 & 0 & 0 & 0 & 0 & 0 & \vphantom{2} 0\\
\nopagebreak
\hspace{1em}\multirow[t]{-3}{*}{\raggedright\arraybackslash SVM} & 0.20 & 0 & 0 & 0 & 0 & 0 & 0 & 0 & \vphantom{2} 0\\
\cmidrule{1-10}\pagebreak[0]
\addlinespace[0.3em]
\multicolumn{10}{l}{\textbf{Credit card}}\\
\hspace{1em} & 0.05 & 0.04 & 0.026 & \textcolor{orange}{0.064} & 0.049 & 0.965 & 0.951 & 0.983 & 0.972\\
\nopagebreak
\hspace{1em} & 0.10 & 0.086 & 0.059 & \textcolor{orange}{0.126} & 0.094 & 0.981 & 0.973 & 0.992 & 0.987\\
\nopagebreak
\hspace{1em}\multirow[t]{-3}{*}{\raggedright\arraybackslash IForest} & 0.20 & 0.179 & 0.131 & \textcolor{red}{0.251} & 0.184 & 0.992 & 0.988 & 0.998 & 0.996\\
\cmidrule{1-10}\pagebreak[0]
\hspace{1em} & 0.05 & 0.039 & 0.022 & \textcolor{red}{0.1} & \textcolor{red}{0.085} & 0.034 & 0.022 & 0.055 & 0.037\\
\nopagebreak
\hspace{1em} & 0.10 & 0.081 & 0.048 & \textcolor{red}{0.179} & \textcolor{orange}{0.11} & 0.062 & 0.043 & 0.097 & 0.066\\
\nopagebreak
\hspace{1em}\multirow[t]{-3}{*}{\raggedright\arraybackslash LOF} & 0.20 & 0.17 & 0.112 & \textcolor{red}{0.309} & \textcolor{orange}{0.208} & 0.122 & 0.084 & 0.178 & 0.135\\
\cmidrule{1-10}\pagebreak[0]
\hspace{1em} & 0.05 & 0 & 0 & 0 & 0 & 0 & 0 & 0 & \vphantom{1} 0\\
\nopagebreak
\hspace{1em} & 0.10 & 0 & 0 & 0 & 0 & 0 & 0 & 0 & \vphantom{1} 0\\
\nopagebreak
\hspace{1em}\multirow[t]{-3}{*}{\raggedright\arraybackslash SVM} & 0.20 & 0 & 0 & 0 & 0 & 0 & 0 & 0 & \vphantom{1} 0\\
\cmidrule{1-10}\pagebreak[0]
\addlinespace[0.3em]
\multicolumn{10}{l}{\textbf{KDDCup99}}\\
\hspace{1em} & 0.05 & 0.044 & 0.019 & \textcolor{red}{0.077} & 0.043 & 0.998 & 0.993 & 1 & 0.999\\
\nopagebreak
\hspace{1em} & 0.10 & 0.091 & 0.044 & \textcolor{red}{0.145} & 0.08 & 1 & 0.998 & 1 & 1\\
\nopagebreak
\hspace{1em}\multirow[t]{-3}{*}{\raggedright\arraybackslash IForest} & 0.20 & 0.191 & 0.099 & \textcolor{red}{0.267} & 0.166 & 1 & 1 & 1 & 1\\
\cmidrule{1-10}\pagebreak[0]
\hspace{1em} & 0.05 & 0.045 & 0.021 & \textcolor{red}{0.094} & \textcolor{red}{0.076} & 0.064 & 0.03 & 0.103 & 0.052\\
\nopagebreak
\hspace{1em} & 0.10 & 0.086 & 0.038 & \textcolor{red}{0.17} & 0.089 & 0.108 & 0.053 & 0.164 & 0.087\\
\nopagebreak
\hspace{1em}\multirow[t]{-3}{*}{\raggedright\arraybackslash LOF} & 0.20 & 0.182 & 0.074 & \textcolor{red}{0.287} & 0.165 & 0.19 & 0.096 & 0.266 & 0.146\\
\cmidrule{1-10}\pagebreak[0]
\hspace{1em} & 0.05 & 0 & 0 & 0 & 0 & 0 & 0 & 0 & 0\\
\nopagebreak
\hspace{1em} & 0.10 & 0 & 0 & 0 & 0 & 0 & 0 & 0 & 0\\
\nopagebreak
\hspace{1em}\multirow[t]{-3}{*}{\raggedright\arraybackslash SVM} & 0.20 & 0 & 0 & 0 & 0 & 0 & 0 & 0 & 0\\
\cmidrule{1-10}\pagebreak[0]
\addlinespace[0.3em]
\multicolumn{10}{l}{\textbf{Mammography}}\\
\hspace{1em} & 0.05 & 0.034 & 0.005 & \textcolor{orange}{0.066} & 0.019 & 0.476 & 0.228 & 0.628 & 0.394\\
\nopagebreak
\hspace{1em} & 0.10 & 0.069 & 0.014 & \textcolor{orange}{0.116} & 0.03 & 0.599 & 0.334 & 0.742 & 0.521\\
\nopagebreak
\hspace{1em}\multirow[t]{-3}{*}{\raggedright\arraybackslash IForest} & 0.20 & 0.138 & 0.035 & \textcolor{orange}{0.215} & 0.067 & 0.728 & 0.467 & 0.844 & 0.658\\
\cmidrule{1-10}\pagebreak[0]
\hspace{1em} & 0.05 & 0.032 & 0.01 & \textcolor{orange}{0.067} & 0.032 & 0.429 & 0.202 & 0.575 & 0.35\\
\nopagebreak
\hspace{1em} & 0.10 & 0.066 & 0.018 & \textcolor{orange}{0.114} & 0.047 & 0.561 & 0.314 & 0.691 & 0.485\\
\nopagebreak
\hspace{1em}\multirow[t]{-3}{*}{\raggedright\arraybackslash LOF} & 0.20 & 0.14 & 0.04 & \textcolor{orange}{0.201} & 0.087 & 0.697 & 0.457 & 0.814 & 0.619\\
\cmidrule{1-10}\pagebreak[0]
\hspace{1em} & 0.05 & 0.008 & 0 & 0.022 & 0 & 0.34 & 0.144 & 0.437 & 0.225\\
\nopagebreak
\hspace{1em} & 0.10 & 0.02 & 0.002 & 0.048 & 0.005 & 0.451 & 0.228 & 0.546 & 0.332\\
\nopagebreak
\hspace{1em}\multirow[t]{-3}{*}{\raggedright\arraybackslash SVM} & 0.20 & 0.043 & 0.009 & 0.083 & 0.023 & 0.57 & 0.341 & 0.655 & 0.45\\
\cmidrule{1-10}\pagebreak[0]
\addlinespace[0.3em]
\multicolumn{10}{l}{\textbf{Digits}}\\
\hspace{1em} & 0.05 & 0.04 & 0.006 & \textcolor{red}{0.074} & 0.017 & 0.924 & 0.673 & 0.986 & 0.842\\
\nopagebreak
\hspace{1em} & 0.10 & 0.084 & 0.016 & \textcolor{red}{0.141} & 0.033 & 0.968 & 0.814 & 0.995 & 0.926\\
\nopagebreak
\hspace{1em}\multirow[t]{-3}{*}{\raggedright\arraybackslash IForest} & 0.20 & 0.176 & 0.038 & \textcolor{red}{0.268} & 0.075 & 0.991 & 0.911 & 1 & 0.981\\
\cmidrule{1-10}\pagebreak[0]
\hspace{1em} & 0.05 & 0.045 & 0.007 & \textcolor{red}{0.082} & 0.019 & 0.999 & 0.977 & 1 & 1\\
\nopagebreak
\hspace{1em} & 0.10 & 0.093 & 0.019 & \textcolor{red}{0.151} & 0.038 & 1 & 0.994 & 1 & 1\\
\nopagebreak
\hspace{1em}\multirow[t]{-3}{*}{\raggedright\arraybackslash LOF} & 0.20 & 0.191 & 0.046 & \textcolor{red}{0.277} & 0.084 & 1 & 0.999 & 1 & 1\\
\cmidrule{1-10}\pagebreak[0]
\hspace{1em} & 0.05 & 0.049 & 0.007 & \textcolor{red}{0.091} & 0.021 & 0.824 & 0.511 & 0.896 & 0.665\\
\nopagebreak
\hspace{1em} & 0.10 & \textcolor{orange}{0.104} & 0.02 & \textcolor{red}{0.166} & 0.045 & 0.905 & 0.674 & 0.956 & 0.8\\
\nopagebreak
\hspace{1em}\multirow[t]{-3}{*}{\raggedright\arraybackslash SVM} & 0.20 & \textcolor{orange}{0.206} & 0.049 & \textcolor{red}{0.302} & 0.098 & 0.957 & 0.81 & 0.985 & 0.9\\
\cmidrule{1-10}\pagebreak[0]
\addlinespace[0.3em]
\multicolumn{10}{l}{\textbf{Shuttle}}\\
\hspace{1em} & 0.05 & 0.046 & 0.021 & \textcolor{red}{0.077} & 0.04 & 1 & 1 & 1 & 1\\
\nopagebreak
\hspace{1em} & 0.10 & 0.094 & 0.047 & \textcolor{red}{0.142} & 0.087 & 1 & 1 & 1 & 1\\
\nopagebreak
\hspace{1em}\multirow[t]{-3}{*}{\raggedright\arraybackslash IForest} & 0.20 & 0.194 & 0.102 & \textcolor{red}{0.281} & 0.161 & 1 & 1 & 1 & 1\\
\cmidrule{1-10}\pagebreak[0]
\hspace{1em} & 0.05 & 0.045 & 0.019 & \textcolor{red}{0.082} & 0.04 & 1 & 1 & 1 & 1\\
\nopagebreak
\hspace{1em} & 0.10 & 0.095 & 0.041 & \textcolor{red}{0.158} & 0.081 & 1 & 1 & 1 & 1\\
\nopagebreak
\hspace{1em}\multirow[t]{-3}{*}{\raggedright\arraybackslash LOF} & 0.20 & 0.193 & 0.094 & \textcolor{red}{0.287} & 0.166 & 1 & 1 & 1 & 1\\
\cmidrule{1-10}\pagebreak[0]
\hspace{1em} & 0.05 & 0.038 & 0.013 & \textcolor{orange}{0.066} & 0.026 & 1 & 1 & 1 & 1\\
\nopagebreak
\hspace{1em} & 0.10 & 0.085 & 0.034 & \textcolor{orange}{0.121} & 0.059 & 1 & 1 & 1 & 1\\
\nopagebreak
\hspace{1em}\multirow[t]{-3}{*}{\raggedright\arraybackslash SVM} & 0.20 & 0.18 & 0.083 & \textcolor{red}{0.244} & 0.137 & 1 & 1 & 1 & 1\\*
\end{longtable}
 % {tab:data-global-long-storey}
}

% \clearpage
% {\small
% \input{tables/data_global_bh.tex} % {tab:data-global-long}
% }
\clearpage
{\small

\begin{longtable}[t]{lrllllllll}
\caption{Outlier batch detection performance on real data, using different data sets, machine learning models, and nominal FDR levels. Other details are as in Table~\ref{tab:data-global-long-storey}. \label{tab:data-global-long}}\\
\toprule
\multicolumn{1}{c}{ } & \multicolumn{5}{c}{FDR} & \multicolumn{4}{c}{Power} \\
\cmidrule(l{3pt}r{3pt}){2-6} \cmidrule(l{3pt}r{3pt}){7-10}
\multicolumn{2}{c}{ } & \multicolumn{2}{c}{Mean} & \multicolumn{2}{c}{90th percentile} & \multicolumn{2}{c}{Mean} & \multicolumn{2}{c}{90-th quantile} \\
\cmidrule(l{3pt}r{3pt}){3-4} \cmidrule(l{3pt}r{3pt}){5-6} \cmidrule(l{3pt}r{3pt}){7-8} \cmidrule(l{3pt}r{3pt}){9-10}
Model & Nominal & Marg. & Cond. & Marg. & Cond. & Marg. & Cond. & Marg. & Cond.\\
\midrule
\endfirsthead
\caption[]{Outlier batch detection performance on real data, using different data sets, machine learning models, and nominal FDR levels. Other details are as in Table~\ref{tab:data-global-long-storey}.  \textit{(continued)}}\\
\toprule
\multicolumn{1}{c}{ } & \multicolumn{5}{c}{FDR} & \multicolumn{4}{c}{Power} \\
\cmidrule(l{3pt}r{3pt}){2-6} \cmidrule(l{3pt}r{3pt}){7-10}
\multicolumn{2}{c}{ } & \multicolumn{2}{c}{Mean} & \multicolumn{2}{c}{90th percentile} & \multicolumn{2}{c}{Mean} & \multicolumn{2}{c}{90-th quantile} \\
\cmidrule(l{3pt}r{3pt}){3-4} \cmidrule(l{3pt}r{3pt}){5-6} \cmidrule(l{3pt}r{3pt}){7-8} \cmidrule(l{3pt}r{3pt}){9-10}
Model & Nominal & Marg. & Cond. & Marg. & Cond. & Marg. & Cond. & Marg. & Cond.\\
\midrule
\endhead
\midrule
\multicolumn{10}{r@{}}{\textit{(Continued on Next Page...)}}\
\endfoot
\bottomrule
\endlastfoot
\addlinespace[0.3em]
\multicolumn{10}{l}{\textbf{ALOI}}\\
\hspace{1em} & 0.05 & 0.027 & 0.009 & \textcolor{red}{0.1} & 0.034 & 0 & 0 & 0.002 & 0\\
\nopagebreak
\hspace{1em} & 0.10 & 0.067 & 0.021 & \textcolor{red}{0.191} & 0.096 & 0.001 & 0 & 0.004 & 0.002\\
\nopagebreak
\hspace{1em}\multirow[t]{-3}{*}{\raggedright\arraybackslash IForest} & 0.20 & 0.157 & 0.059 & \textcolor{red}{0.341} & 0.186 & 0.004 & 0.001 & 0.008 & 0.004\\
\cmidrule{1-10}\pagebreak[0]
\hspace{1em} & 0.05 & 0.032 & 0.01 & \textcolor{red}{0.093} & \textcolor{orange}{0.054} & 0.021 & 0.009 & 0.038 & 0.018\\
\nopagebreak
\hspace{1em} & 0.10 & 0.07 & 0.025 & \textcolor{red}{0.154} & 0.084 & 0.044 & 0.02 & 0.066 & 0.033\\
\nopagebreak
\hspace{1em}\multirow[t]{-3}{*}{\raggedright\arraybackslash LOF} & 0.20 & 0.152 & 0.07 & \textcolor{red}{0.279} & 0.154 & 0.097 & 0.045 & 0.138 & 0.069\\
\cmidrule{1-10}\pagebreak[0]
\hspace{1em} & 0.05 & 0.034 & 0.006 & \textcolor{red}{0.1} & 0.001 & 0.003 & 0.001 & 0.007 & 0.004\\
\nopagebreak
\hspace{1em} & 0.10 & 0.07 & 0.027 & \textcolor{red}{0.186} & 0.096 & 0.006 & 0.003 & 0.012 & 0.007\\
\nopagebreak
\hspace{1em}\multirow[t]{-3}{*}{\raggedright\arraybackslash SVM} & 0.20 & 0.154 & 0.064 & \textcolor{red}{0.294} & 0.19 & 0.013 & 0.006 & 0.022 & 0.012\\
\cmidrule{1-10}\pagebreak[0]
\addlinespace[0.3em]
\multicolumn{10}{l}{\textbf{Cover}}\\
\hspace{1em} & 0.05 & 0.031 & 0.02 & \textcolor{red}{0.086} & \textcolor{orange}{0.071} & 0.091 & 0.065 & 0.172 & 0.113\\
\nopagebreak
\hspace{1em} & 0.10 & 0.072 & 0.046 & \textcolor{red}{0.143} & \textcolor{orange}{0.111} & 0.168 & 0.126 & 0.309 & 0.234\\
\nopagebreak
\hspace{1em}\multirow[t]{-3}{*}{\raggedright\arraybackslash IForest} & 0.20 & 0.155 & 0.109 & \textcolor{red}{0.243} & 0.189 & 0.304 & 0.24 & 0.506 & 0.427\\
\cmidrule{1-10}\pagebreak[0]
\hspace{1em} & 0.05 & 0.04 & 0.027 & \textcolor{orange}{0.067} & 0.05 & 1 & 1 & 1 & 1\\
\nopagebreak
\hspace{1em} & 0.10 & 0.086 & 0.063 & \textcolor{orange}{0.121} & 0.095 & 1 & 1 & 1 & 1\\
\nopagebreak
\hspace{1em}\multirow[t]{-3}{*}{\raggedright\arraybackslash LOF} & 0.20 & 0.176 & 0.138 & \textcolor{orange}{0.234} & 0.19 & 1 & 1 & 1 & 1\\
\cmidrule{1-10}\pagebreak[0]
\hspace{1em} & 0.05 & 0 & 0 & 0 & 0 & 0 & 0 & 0 & \vphantom{2} 0\\
\nopagebreak
\hspace{1em} & 0.10 & 0 & 0 & 0 & 0 & 0 & 0 & 0 & \vphantom{2} 0\\
\nopagebreak
\hspace{1em}\multirow[t]{-3}{*}{\raggedright\arraybackslash SVM} & 0.20 & 0 & 0 & 0 & 0 & 0 & 0 & 0 & \vphantom{2} 0\\
\cmidrule{1-10}\pagebreak[0]
\addlinespace[0.3em]
\multicolumn{10}{l}{\textbf{Credit card}}\\
\hspace{1em} & 0.05 & 0.035 & 0.025 & \textcolor{orange}{0.058} & 0.042 & 0.963 & 0.951 & 0.983 & 0.972\\
\nopagebreak
\hspace{1em} & 0.10 & 0.077 & 0.056 & \textcolor{orange}{0.116} & 0.087 & 0.98 & 0.973 & 0.992 & 0.986\\
\nopagebreak
\hspace{1em}\multirow[t]{-3}{*}{\raggedright\arraybackslash IForest} & 0.20 & 0.16 & 0.124 & \textcolor{orange}{0.223} & 0.17 & 0.992 & 0.988 & 0.998 & 0.995\\
\cmidrule{1-10}\pagebreak[0]
\hspace{1em} & 0.05 & 0.035 & 0.021 & \textcolor{red}{0.097} & \textcolor{red}{0.085} & 0.031 & 0.022 & 0.047 & 0.037\\
\nopagebreak
\hspace{1em} & 0.10 & 0.072 & 0.047 & \textcolor{red}{0.168} & \textcolor{orange}{0.107} & 0.057 & 0.04 & 0.087 & 0.062\\
\nopagebreak
\hspace{1em}\multirow[t]{-3}{*}{\raggedright\arraybackslash LOF} & 0.20 & 0.153 & 0.103 & \textcolor{red}{0.276} & 0.2 & 0.11 & 0.08 & 0.159 & 0.12\\
\cmidrule{1-10}\pagebreak[0]
\hspace{1em} & 0.05 & 0 & 0 & 0 & 0 & 0 & 0 & 0 & \vphantom{1} 0\\
\nopagebreak
\hspace{1em} & 0.10 & 0 & 0 & 0 & 0 & 0 & 0 & 0 & \vphantom{1} 0\\
\nopagebreak
\hspace{1em}\multirow[t]{-3}{*}{\raggedright\arraybackslash SVM} & 0.20 & 0 & 0 & 0 & 0 & 0 & 0 & 0 & \vphantom{1} 0\\
\cmidrule{1-10}\pagebreak[0]
\addlinespace[0.3em]
\multicolumn{10}{l}{\textbf{KDDCup99}}\\
\hspace{1em} & 0.05 & 0.039 & 0.019 & \textcolor{orange}{0.067} & 0.043 & 0.998 & 0.994 & 1 & 0.999\\
\nopagebreak
\hspace{1em} & 0.10 & 0.08 & 0.043 & \textcolor{orange}{0.126} & 0.079 & 0.999 & 0.998 & 1 & 1\\
\nopagebreak
\hspace{1em}\multirow[t]{-3}{*}{\raggedright\arraybackslash IForest} & 0.20 & 0.171 & 0.099 & \textcolor{orange}{0.238} & 0.158 & 1 & 1 & 1 & 1\\
\cmidrule{1-10}\pagebreak[0]
\hspace{1em} & 0.05 & 0.043 & 0.021 & \textcolor{red}{0.095} & \textcolor{red}{0.08} & 0.06 & 0.032 & 0.093 & 0.052\\
\nopagebreak
\hspace{1em} & 0.10 & 0.08 & 0.039 & \textcolor{red}{0.168} & 0.09 & 0.101 & 0.056 & 0.143 & 0.087\\
\nopagebreak
\hspace{1em}\multirow[t]{-3}{*}{\raggedright\arraybackslash LOF} & 0.20 & 0.165 & 0.075 & \textcolor{red}{0.267} & 0.158 & 0.176 & 0.1 & 0.246 & 0.148\\
\cmidrule{1-10}\pagebreak[0]
\hspace{1em} & 0.05 & 0 & 0 & 0 & 0 & 0 & 0 & 0 & 0\\
\nopagebreak
\hspace{1em} & 0.10 & 0 & 0 & 0 & 0 & 0 & 0 & 0 & 0\\
\nopagebreak
\hspace{1em}\multirow[t]{-3}{*}{\raggedright\arraybackslash SVM} & 0.20 & 0 & 0 & 0 & 0 & 0 & 0 & 0 & 0\\
\cmidrule{1-10}\pagebreak[0]
\addlinespace[0.3em]
\multicolumn{10}{l}{\textbf{Mammography}}\\
\hspace{1em} & 0.05 & 0.035 & 0.007 & \textcolor{orange}{0.066} & 0.022 & 0.482 & 0.259 & 0.63 & 0.433\\
\nopagebreak
\hspace{1em} & 0.10 & 0.069 & 0.018 & \textcolor{orange}{0.111} & 0.044 & 0.606 & 0.376 & 0.736 & 0.56\\
\nopagebreak
\hspace{1em}\multirow[t]{-3}{*}{\raggedright\arraybackslash IForest} & 0.20 & 0.14 & 0.045 & \textcolor{orange}{0.209} & 0.079 & 0.735 & 0.517 & 0.846 & 0.694\\
\cmidrule{1-10}\pagebreak[0]
\hspace{1em} & 0.05 & 0.032 & 0.011 & \textcolor{orange}{0.062} & 0.035 & 0.435 & 0.232 & 0.577 & 0.373\\
\nopagebreak
\hspace{1em} & 0.10 & 0.067 & 0.022 & \textcolor{orange}{0.108} & 0.05 & 0.571 & 0.354 & 0.694 & 0.52\\
\nopagebreak
\hspace{1em}\multirow[t]{-3}{*}{\raggedright\arraybackslash LOF} & 0.20 & 0.142 & 0.05 & 0.194 & 0.093 & 0.705 & 0.505 & 0.803 & 0.656\\
\cmidrule{1-10}\pagebreak[0]
\hspace{1em} & 0.05 & 0.012 & 0.001 & 0.028 & 0.002 & 0.382 & 0.189 & 0.472 & 0.272\\
\nopagebreak
\hspace{1em} & 0.10 & 0.027 & 0.004 & 0.056 & 0.013 & 0.497 & 0.29 & 0.585 & 0.39\\
\nopagebreak
\hspace{1em}\multirow[t]{-3}{*}{\raggedright\arraybackslash SVM} & 0.20 & 0.057 & 0.015 & 0.1 & 0.038 & 0.62 & 0.418 & 0.69 & 0.515\\
\cmidrule{1-10}\pagebreak[0]
\addlinespace[0.3em]
\multicolumn{10}{l}{\textbf{Digits}}\\
\hspace{1em} & 0.05 & 0.035 & 0.007 & \textcolor{orange}{0.057} & 0.018 & 0.92 & 0.719 & 0.985 & 0.876\\
\nopagebreak
\hspace{1em} & 0.10 & 0.075 & 0.019 & \textcolor{orange}{0.118} & 0.037 & 0.966 & 0.851 & 0.994 & 0.952\\
\nopagebreak
\hspace{1em}\multirow[t]{-3}{*}{\raggedright\arraybackslash IForest} & 0.20 & 0.156 & 0.047 & \textcolor{orange}{0.223} & 0.081 & 0.99 & 0.935 & 1 & 0.987\\
\cmidrule{1-10}\pagebreak[0]
\hspace{1em} & 0.05 & 0.04 & 0.008 & \textcolor{orange}{0.071} & 0.02 & 0.999 & 0.984 & 1 & 1\\
\nopagebreak
\hspace{1em} & 0.10 & 0.083 & 0.022 & \textcolor{red}{0.137} & 0.041 & 1 & 0.996 & 1 & 1\\
\nopagebreak
\hspace{1em}\multirow[t]{-3}{*}{\raggedright\arraybackslash LOF} & 0.20 & 0.169 & 0.055 & \textcolor{red}{0.242} & 0.092 & 1 & 1 & 1 & 1\\
\cmidrule{1-10}\pagebreak[0]
\hspace{1em} & 0.05 & 0.043 & 0.009 & \textcolor{red}{0.078} & 0.025 & 0.811 & 0.55 & 0.883 & 0.689\\
\nopagebreak
\hspace{1em} & 0.10 & 0.089 & 0.023 & \textcolor{red}{0.138} & 0.048 & 0.898 & 0.712 & 0.94 & 0.813\\
\nopagebreak
\hspace{1em}\multirow[t]{-3}{*}{\raggedright\arraybackslash SVM} & 0.20 & 0.179 & 0.056 & \textcolor{red}{0.251} & 0.104 & 0.953 & 0.841 & 0.979 & 0.912\\
\cmidrule{1-10}\pagebreak[0]
\addlinespace[0.3em]
\multicolumn{10}{l}{\textbf{Shuttle}}\\
\hspace{1em} & 0.05 & 0.041 & 0.02 & \textcolor{orange}{0.068} & 0.042 & 1 & 1 & 1 & 1\\
\nopagebreak
\hspace{1em} & 0.10 & 0.084 & 0.046 & \textcolor{red}{0.131} & 0.084 & 1 & 1 & 1 & 1\\
\nopagebreak
\hspace{1em}\multirow[t]{-3}{*}{\raggedright\arraybackslash IForest} & 0.20 & 0.172 & 0.103 & \textcolor{orange}{0.236} & 0.155 & 1 & 1 & 1 & 1\\
\cmidrule{1-10}\pagebreak[0]
\hspace{1em} & 0.05 & 0.041 & 0.019 & \textcolor{red}{0.073} & 0.039 & 1 & 1 & 1 & 1\\
\nopagebreak
\hspace{1em} & 0.10 & 0.084 & 0.042 & \textcolor{red}{0.132} & 0.078 & 1 & 1 & 1 & 1\\
\nopagebreak
\hspace{1em}\multirow[t]{-3}{*}{\raggedright\arraybackslash LOF} & 0.20 & 0.17 & 0.095 & \textcolor{orange}{0.237} & 0.161 & 1 & 1 & 1 & 1\\
\cmidrule{1-10}\pagebreak[0]
\hspace{1em} & 0.05 & 0.034 & 0.014 & \textcolor{orange}{0.056} & 0.027 & 1 & 1 & 1 & 1\\
\nopagebreak
\hspace{1em} & 0.10 & 0.074 & 0.035 & \textcolor{orange}{0.11} & 0.059 & 1 & 1 & 1 & 1\\
\nopagebreak
\hspace{1em}\multirow[t]{-3}{*}{\raggedright\arraybackslash SVM} & 0.20 & 0.16 & 0.085 & \textcolor{orange}{0.22} & 0.134 & 1 & 1 & 1 & 1\\*
\end{longtable}
 % {tab:data-global-long}
}

\end{document}